\documentclass[final,3p]{elsarticle}
\usepackage{lineno,hyperref}
\usepackage{amsmath}
\usepackage{amssymb}
\usepackage[linesnumbered,ruled,vlined]{algorithm2e}
\usepackage{caption}
\usepackage{mathtools}
\usepackage{changes}
\usepackage{multirow}
\usepackage{verbatim}
\usepackage{mathrsfs}
\usepackage{graphicx}
\usepackage{subcaption}
\usepackage{amsmath,amsthm,bm,mathrsfs}

\theoremstyle{plain}
\newtheorem{theorem}{Theorem}[section]

\newtheorem{proposition}[theorem]{Proposition}
\newtheorem{assumption}[theorem]{Assumption}

\theoremstyle{definition}

\theoremstyle{remark}
\newtheorem{remark}{Remark}

\modulolinenumbers[5]

\journal{ArXiv.org}









\bibliographystyle{elsarticle-num}

\begin{document}

\begin{frontmatter}

\title{A Unified Low-rank ADI Framework with Shared Linear Solves for Simultaneously Solving Multiple Lyapunov, Sylvester, and Riccati Equations}
\author[uz]{Umair~Zulfiqar}
\author[shu]{Zhong-Yi Huang}
\author[qs]{Qiu-Yan~Song\corref{mycorrespondingauthor}}
\cortext[mycorrespondingauthor]{Corresponding author}
\ead{qysong@hbnu.edu.cn}
\author[shu]{Zhi-Yuan Gao}
\address[uz]{School of Electronic Information and Electrical Engineering, Yangtze University, Jingzhou, Hubei, 434023, China}
\address[shu]{School of Mechatronic Engineering and Automation, Shanghai University, Shanghai, 200444, China}
\address[qs]{Huangshi Key Laboratory of Metaverse and Virtual Simulation, School of Mathematics and Statistics, Hubei Normal University, Huangshi, Hubei 435002, China.}
\begin{abstract}
The alternating direction implicit (ADI) methods are computationally efficient and numerically effective tools for computing low-rank solutions of large-scale linear matrix equations. It is known in the literature that the low-rank ADI method for Lyapunov equations is a Petrov–Galerkin projection algorithm that implicitly performs model order reduction. It recursively enforces interpolation at the mirror images of the ADI shifts and places the poles of the reduced-order models at the ADI shifts. In this paper, we show that the low-rank ADI methods for Sylvester and Riccati equations are also Petrov–Galerkin projection algorithms that implicitly perform model order reduction. These methods likewise enforce interpolation at the mirror images of the ADI shifts; however, they do not place the poles at the mirror images of the interpolation points. Instead, their pole placement ensures that the projected Sylvester and Riccati equations they implicitly solve admit a unique solution.

By observing that the ADI methods for Lyapunov, Sylvester, and Riccati equations differ only in pole placement and not in their interpolatory nature, we show that the shifted linear solves—which constitute the bulk of the computational cost—can be shared. The pole-placement step involves only small-scale operations and is therefore inexpensive. We propose a unified ADI framework that requires only two shifted linear solves per iteration to simultaneously solve six Lyapunov equations, one Sylvester equation, and ten Riccati equations, thus substantially increasing the return on investment for the computational cost spent on the linear solves. All operations needed to extract the individual solutions from these shared linear solves are small-scale and inexpensive. Three numerical examples are presented to demonstrate the effectiveness of the proposed ADI framework.
\end{abstract}

\begin{keyword}
ADI\sep Interpolation\sep Low-rank\sep Lyapunov Equation\sep Pole placement\sep Riccati Equation\sep Sylvester Equation
\end{keyword}

\end{frontmatter}

\section{Problem Setting}
Let $G_1(s)$ and $G_2(s)$ be linear time-invariant (LTI) systems with state-space realizations:
\begin{align}
G_1(s)&=C_1(sE_1-A_1)^{-1}B_1+D_1,\label{eq1}\\
G_2(s)&=C_2(sE_2-A_2)^{-1}B_2+D_2.\label{eq2}
\end{align}The matrices in \eqref{eq1} and \eqref{eq2} have the following dimensions: $A_1\in\mathbb{R}^{n_1\times n_1}$, $B_1\in\mathbb{R}^{n_1\times m_1}$, $C_1\in\mathbb{R}^{p_1\times n_1}$, $D_1\in\mathbb{R}^{p_1\times m_1}$, $E_1\in\mathbb{R}^{n_1\times n_1}$, $A_2\in\mathbb{R}^{n_2\times n_2}$, $B_2\in\mathbb{R}^{n_2\times m_2}$, $C_2\in\mathbb{R}^{p_2\times n_2}$, $D_2\in\mathbb{R}^{p_2\times m_2}$, and $E_2\in\mathbb{R}^{n_2\times n_2}$.

The corresponding state-space equations are:
\begin{align}
E_1\dot{x}_1(t)&=A_1x_1(t)+B_1u_1(t),\label{ss1}\\
y_1(t)&=C_1x_1(t)+D_1u_1(t),\label{ss2}\\
E_2\dot{x}_2(t)&=A_2x_2(t)+B_2u_2(t),\label{ss3}\\
y_2(t)&=C_2x_2(t)+D_2u_2(t).\label{ss4}
\end{align}Here, $x_1(t) \in \mathbb{R}^{n_1\times 1}$, $u_1(t) \in \mathbb{R}^{m_1\times 1}$, and $y_1(t) \in \mathbb{R}^{p_1\times 1}$ are the state, input, and output vectors for $(A_1,B_1,C_1,D_1,E_1)$. Similarly, $x_2(t) \in \mathbb{R}^{n_2\times 1}$, $u_2(t) \in \mathbb{R}^{m_2\times 1}$, and $y_2(t) \in \mathbb{R}^{p_2\times 1}$ are the state, input, and output vectors for $(A_2,B_2,C_2,D_2,E_2)$. 
\begin{assumption}\label{assum}
The following assumptions are made throughout the paper:
\begin{enumerate}
  \item The matrices $E_1$ and $E_2$ are invertible.
  \item $G_1(s)$ and $G_2(s)$ are stable; that is, all eigenvalues of $E_1^{-1}A_1$ and $E_2^{-1}A_2$ have negative real parts.
  \item\label{assum_3} The state dimensions $n_1$ and $n_2$ are large, while the input and output dimensions satisfy $m_1, p_1 \ll n_1$ and $m_2, p_2 \ll n_2$.
\end{enumerate}
\end{assumption}
This paper focuses on low-rank approximations of several linear matrix equations associated with the state-space realizations in \eqref{ss1}–\eqref{ss4}, as described below.

The controllability Gramian $P_1$ for the realization in \eqref{ss1}–\eqref{ss2} and the observability Gramian $Q_2$ for the realization in \eqref{ss3}–\eqref{ss4} are the unique solutions of the Lyapunov equations:
\begin{align}
A_1P_1E_1^\top+E_1P_1A_1^\top+B_1B_1^\top&=0,\label{lyap_p}\\
A_2^\top Q_2E_2+E_2^\top Q_2A_2+C_2^\top C_2&=0.\label{lyap_q}
\end{align}
These Lyapunov equations are central to balanced truncation (BT) \cite{moore2003principal}, one of the most widely used model order reduction (MOR) methods.

Let $S_1\in\mathbb{R}^{m_1\times m_1}$ and $S_2\in\mathbb{R}^{p_2\times p_2}$ be indefinite matrices, and let $P_s$ and $Q_s$ be the unique solutions of the Lyapunov equations:
\begin{align}
A_1P_sE_1^\top+E_1P_sA_1^\top+B_1S_1B_1^\top&=0,\label{lyap_p_ldl}\\
A_2^\top Q_sE_2+E_2^\top Q_sA_2+C_2^\top S_2C_2&=0,\label{lyap_q_ldl}
\end{align}where $B_1S_1B_1^\top$ and $C_2^\top S_2C_2$ are indefinite. Lyapunov equations of the form \eqref{lyap_p_ldl} and \eqref{lyap_q_ldl} arise in frequency-limited BT (FLBT) \cite{benner2016frequency} and time-limited BT (TLBT) \cite{kurschner2018balanced}.

If $D_1$ and $D_2$ are square and invertible, and the matrices $A_1-B_1D_1^{-1}C_1$ and $A_2-B_2D_2^{-1}C_2$ are Hurwitz, then $P_{\mathrm{mp}}$ and $Q_{\mathrm{mp}}$ uniquely solve the Lyapunov equations:
\begin{align}
(A_1-B_1D_1^{-1}C_1)P_{\mathrm{mp}}E_1^\top+E_1P_{\mathrm{mp}}(A_1-B_1D_1^{-1}C_1)^\top+B_1(D_1^\top D_1)^{-1}B_1^\top&=0,\label{lyap_p_mp}\\
(A_2-B_2D_2^{-1}C_2)^\top Q_{\mathrm{mp}}E_2+E_2^\top Q_{\mathrm{mp}}(A_2-B_2D_2^{-1}C_2)+C_2^\top (D_2 D_2^\top)^{-1}C_2&=0.\label{lyap_q_mp}
\end{align}
These Lyapunov equations appear in relative-error BT \cite{zhou1995frequency} for preserving the minimum phase properties of $G_1(s)$ and $G_2(s)$.

Assuming $m_1=p_2$, the Sylvester equation
\begin{align}
A_1X_{\mathrm{sylv}}E_2+E_1X_{\mathrm{sylv}}A_2+B_1C_2=0\label{x_sylv}
\end{align} 
has a unique solution $X_{\mathrm{sylv}}$. A special case of this Sylvester equation arises in cross Gramian-based MOR \cite{baur2008cross}.
 
The Riccati equations for Kalman filtering and the Linear Quadratic Regulator (LQR) \cite{zhou1998essentials} associated with \eqref{eq1} and \eqref{eq2}, respectively, are:
\begin{align}
A_1P_{\mathrm{ricc}}E_1^\top+E_1P_{\mathrm{ricc}}A_1^\top+B_1B_1^\top-E_1P_{\mathrm{ricc}}C_1^\top C_1P_{\mathrm{ricc}}E_1^\top&=0,\label{ricc_p}\\
A_2^\top Q_{\mathrm{ricc}}E_2+E_2^\top Q_{\mathrm{ricc}}A_2+C_2^\top C_2-E_2^\top Q_{\mathrm{ricc}}B_2B_2^\top Q_{\mathrm{ricc}}E_2&=0, \label{ricc_q}
\end{align}where $A_1-E_1P_{\mathrm{ricc}}C_1^\top C_1$ and $A_2-B_2B_2^\top Q_{\mathrm{ricc}}E_2$ are Hurwitz. These Riccati equations also appear in LQG BT \cite{jonckheere2003new}.

For $\mathcal{H}_\infty$ filtering and control \cite{fortuna2021optimal} with $\gamma_1>0$ and $\gamma_2>0$, the corresponding Riccati equations are:
\begin{align}
A_1P_{\infty}E_1^\top+E_1P_{\infty}A_1^\top+B_1B_1^\top-(1-\gamma_1^{-2})E_1P_{\infty}C_1^\top C_1P_{\infty}E_1^\top&=0,\label{ricc_p_inf}\\
A_2^\top Q_{\infty}E_2+E_2^\top Q_{\infty}A_2+C_2^\top C_2-(1-\gamma_2^{-2})E_2^\top Q_{\infty}B_2B_2^\top Q_{\infty}E_2&=0, \label{ricc_q_inf}
\end{align}with $A_1-E_1P_{\infty}C_1^\top C_1$ and $A_2-B_2B_2^\top Q_{\infty}E_2$ Hurwitz. These Riccati equations also appear in $\mathcal{H}_\infty$ BT \cite{mustafa2002controller}.

If $D_1+D_1^\top$ and $D_2+D_2^\top$ are square and invertible, the positive-real lemma \cite{green1988balanced} applied to $G_1(s)$ and $G_2(s)$ yields the Riccati equations:
\begin{align}
A_1P_{\mathrm{pr}}E_1^\top+E_1P_{\mathrm{pr}}A_1^\top+(B_1-E_1P_{\mathrm{pr}}C_1^\top)(D_1+D_1^\top)^{-1}(B_1-E_1P_{\mathrm{pr}}C_1^\top)^\top=0,\label{ricc_p_pr}\\
A_2^\top Q_{\mathrm{pr}}E_2+E_2^\top Q_{\mathrm{pr}}A_2+(C_2-B_2^\top Q_{\mathrm{pr}}E_2)^\top(D_2+D_2^\top)^{-1}(C_2-B_2^\top Q_{\mathrm{pr}}E_2)=0,\label{ricc_q_pr}
\end{align}where $A_1-B_1(D_1+D_1^\top)^{-1}C_1-E_1P_{\mathrm{pr}}C_1^\top(D_1+D_1^\top)^{-1} C_1$ and $A_2-B_2(D_2+D_2^\top)^{-1}C_2-B_2(D_2+D_2^\top)^{-1}B_2^\top Q_{\mathrm{pr}}E_2$ are Hurwitz. These Riccati equations are also used in positive-real BT \cite{phillips2002guaranteed}.

If $I-D_1D_1^\top$ and $I-D_2^\top D_2$ are invertible, the bounded-real lemma \cite{phillips2002guaranteed} applied to $G_1(s)$ and $G_2(s)$ yields the Riccati equations:
\begin{align}
A_1P_{\mathrm{br}}E_1^\top+E_1P_{\mathrm{br}}A_1^\top+B_1B_1^\top+(E_1P_{\mathrm{br}}C_1^\top+B_1D_1^\top)(I-D_1D_1^\top)^{-1}(E_1P_{\mathrm{br}}C_1^\top+B_1D_1^\top)^\top&=0,\label{ricc_p_br}\\
A_2^\top Q_{\mathrm{br}}E_2+E_2^\top Q_{\mathrm{br}}A_2+C_2^\top C_2+(B_2^\top Q_{\mathrm{br}}E_2+D_2^\top C_2)^\top(I-D_2^\top D_2)^{-1}(B_2^\top Q_{\mathrm{br}}E_2+D_2^\top C_2)&=0,\label{ricc_q_br}
\end{align}
with $A_1+B_1D_1^\top(I-D_1D_1^\top)^{-1}C_1-E_1P_{\mathrm{br}}C_1^\top(I-D_1D_1^\top)^{-1} C_1$ and $A_2+B_2(I-D_2^\top D_2)^{-1}D_2^\top C_2-B_2(I-D_2^\top D_2)^{-1}B_2^\top Q_{\mathrm{br}}E_2$ Hurwitz. These Riccati equations also appear in bounded-real BT \cite{phillips2002guaranteed}.

Assuming $A_1=A_2$, $B_1=B_2$, $C_1=C_2$, $D_1=D_2$, $E_1=E_2$, and $D_1^\top D_1$ and $D_2D_2^\top$ invertible, the Riccati equations for computing spectral factorizations of $G_1(s)$ and $G_2(s)$ \cite{zhoubook} are given by:
\begin{align}
A_1 P_{\mathrm{sf}}E_1^\top+E_1 P_{\mathrm{sf}}A_1^\top+\big(B_1-E_1P_{\mathrm{sf}}(Q_2B_1+C_1^\top D_1)\big)(D_1^\top D_1)^{-1}\big(B_1-E_1P_{\mathrm{sf}}(Q_2B_1+C_1^\top D_1)\big)^\top=0,\label{ricc_p_sf}\\
A_2^\top Q_{\mathrm{sf}}E_2+E_2^\top Q_{\mathrm{sf}}A_2+\big(C_2-(C_2P_1+D_2B_2^\top)Q_{\mathrm{sf}}E_2\big)^\top(D_2D_2^\top)^{-1}\big(C_2-(C_2P_1+D_2B_2^\top)Q_{\mathrm{sf}}E_2\big)=0,\label{ricc_q_sf}
\end{align}where the matrices
\[A_1-B_1(D_1^\top D_1)^{-1}\big(B_1^\top Q_2+D_1^\top C_1\big)-E_1P_{\mathrm{sf}}\big(B_1^\top Q_2+D_1^\top C_1\big)^\top(D_1^\top D_1)^{-1}\big(B_1^\top Q_2+D_1^\top C_1\big)\]
and
\[A_2-\big(P_1C_2^\top+B_2 D_2^\top\big)(D_2 D_2^\top)^{-1}C_2- \big(P_1C_2^\top+B_2D_2^\top\big)(D_2 D_2^\top)^{-1}\big(P_1C_2^\top+B_2D_2^\top\big)^\top Q_{\mathrm{sf}}E_2\]
are Hurwitz. These Riccati equations are also used in Stochastic BT (BST) \cite{green1988balanced}.

In large-scale settings, solving the linear matrix equations \eqref{lyap_p}–\eqref{ricc_q_sf} is computationally expensive. However, under assumption \ref{assum} (\ref{assum_3}), their solutions are typically low-rank. This allows efficient computation via Krylov-subspace or alternating direction implicit (ADI) methods; see \cite{benner2013numerical} for a survey. This paper presents a single ADI algorithm to compute low-rank solutions of \eqref{lyap_p}–\eqref{ricc_q_sf} simultaneously.

The motivation for solving several linear matrix equations simultaneously with shared linear solves is as follows. BT generally provides higher accuracy than positive-real BT and bounded-real BT; however, it does not guarantee that the resulting reduced-order model (ROM) is passive. Thus, \cite{phillips2002guaranteed} recommends first computing a ROM using BT and checking its passivity. If the ROM is not passive, one should then apply positive-real BT or bounded-real BT. Since BT often does yield a passive ROM, \cite{phillips2002guaranteed} argues that accepting the risk of BT producing a non-passive ROM—and thereby potentially wasting significant computational effort—is reasonable given the likelihood of achieving good accuracy. The results in this paper eliminate this issue entirely, as the solutions to the Riccati equations \eqref{ricc_p}–\eqref{ricc_q_br} can be obtained efficiently from the solutions of the Lyapunov equations \eqref{lyap_p} and \eqref{lyap_q} already computed for BT. A second motivation for solving multiple linear matrix equations together is that reducing $G_1(s)$ or $G_2(s)$ using BT and then designing a controller for the ROM does not account for closed-loop stability. Relative-error BT, LQG BT, and $\mathcal{H}_\infty$ BT explicitly incorporate closed-loop stability in their performance criteria. Again, the results of this paper overcome this limitation of BT, since the solutions of \eqref{lyap_p_mp}–\eqref{ricc_q_br} can be extracted from the solutions of \eqref{lyap_p} and \eqref{lyap_q} computed during BT. Consequently, reusing the solutions of the Lyapunov equations \eqref{lyap_p} and \eqref{lyap_q} to obtain the remaining solutions \eqref{lyap_p_ldl}–\eqref{ricc_q_sf} significantly reduces the computational effort needed for analyzing $G_1(s)$ or $G_2(s)$ and for carrying out various design tasks associated with these dynamical systems.
\section{Literature Review}
This section briefly reviews the ADI methods used to compute low-rank approximations of Lyapunov, Sylvester, and Riccati equations. It also summarizes the interpolation theory relevant to the subsequent discussion. The key mathematical notations used throughout the paper are listed in Table \ref{tab00}.
\begin{table}[!t]
\centering
\caption{Mathematical Notations}\label{tab00}
\begin{tabular}{|c|c|}\hline
Notation & Meaning\\\hline
$Z^*$ & Hermitian transpose of matrix $Z$\\
$Z^{-*}$ & Hermitian transpose of matrix $Z^{-1}$\\
Re($Z$) & Real part of matrix $Z$\\
Im($Z$) & Imaginary part of matrix $Z$\\
$Z(:,k)$ & $k^{th}$ column of matrix $Z$\\
$Z(k,:)$ & $k^{th}$ row of matrix $Z$\\
tr($Z$) & Trace of matrix $Z$\\
$\lambda_i(Z)$& The eigenvalues of matrix $Z$\\
span$\{Z\}$ & Span of the columns of matrix $Z$  \\
Ran$(Z)$ & Range of matrix $Z$  \\
orth$(Z)$ & Orthonormal basis for range of matrix $Z$\\
$j$ & $\sqrt{-1}$\\
$\alpha\in\mathbb{R}_{-}$& Negative real number $\alpha$\\
$\alpha\in\mathbb{C}_{-}$& The real part of $\alpha$ is negative\\
$\land$&Logical and operator\\\hline
\end{tabular}
\end{table}
\subsection{Cholesky Factor ADI Method for Lyapunov Equations (CF-ADI)}
The CF-ADI algorithm \cite{wachspress1988iterative,benner2013efficient,benner2013reformulated} computes a low-rank solution of \eqref{lyap_p}, approximating $P_1$ as $P_1\approx Z_1^{(k)}\big(Z_1^{(k)}\big)^*$. The corresponding residual $R_1^{(k)}$ is given by:
\begin{align}
R_1^{(k)}=A_1Z_1^{(k)}\big(Z_1^{(k)}\big)^*E_1^\top+E_1Z_1^{(k)}\big(Z_1^{(k)}\big)^*A_1^\top+B_1B_1^\top.
\end{align}
\begin{algorithm}
\DontPrintSemicolon
\caption{CF-ADI}\label{cfadi_alg}
\KwIn{Matrices of Lyapunov equation \eqref{lyap_p}: $A_1, B_1, E_1$; ADI shifts: $\{\alpha_i\}_{i=1}^k \in \mathbb{C}_{-}$.}
\KwOut{Approximation of $P_1$: $P_1 \approx Z_1^{(k)}\big(Z_1^{(k)}\big)^*$; Residual: $R_1^{(k)} = B_{\perp,k}^{\mathrm{lyap}} \big(B_{\perp,k}^{\mathrm{lyap}}\big)^{*}$.}

Initialize: $Z_1^{(0)} = [\hspace*{0.15cm}]$, \ $B_{\perp,0}^{\mathrm{lyap}} = B_1$.\;

\For{$i = 1$ \KwTo $k$}{
    Solve $(A_1 + \alpha_i E_1) v_i^{\mathrm{lyap}} = B_{\perp,i-1}^{\mathrm{lyap}}$ for $v_i^{\mathrm{lyap}}$.\;
    
    Expand $Z_1^{(i)} = \left[ Z_1^{(i-1)} \quad \sqrt{-2\operatorname{Re}(\alpha_i)} \, v_i^{\mathrm{lyap}} \right]$.\;
    
    Update $B_{\perp,i}^{\mathrm{lyap}} = B_{\perp,i-1}^{\mathrm{lyap}} - 2\operatorname{Re}(\alpha_i) \, E_1 v_i^{\mathrm{lyap}}$.\;
}
\end{algorithm}
\begin{remark}
CF-ADI can also compute a low-rank solution of \eqref{lyap_p_ldl}, approximating $P_s$ as $P_s\approx Z_1^{(k)}\big(I \otimes S_1\big)\big(Z_1^{(k)}\big)^*$. The residual $R_{s_1}^{(k)}$,
\begin{align}
R_{s_1}^{(k)}=A_1Z_1^{(k)}\big(I \otimes S_1\big)\big(Z_1^{(k)}\big)^*E_1^\top+E_1Z_1^{(k)}\big(I \otimes S_1\big)\big(Z_1^{(k)}\big)^*A_1^\top+B_1S_1B_1^\top,
\end{align} can be computed as $R_{s_1}^{(k)}=B_{\perp,k}^{\mathrm{lyap}}S_1\big(B_{\perp,k}^{\mathrm{lyap}}\big)^*$. Although the original $LDL^T$-ADI formulation in \cite{lang2015benefits} differs slightly, the approximation $P_s\approx Z_1^{(k)}\big(I \otimes S_1\big)\big(Z_1^{(k)}\big)^*$ and the residual expression $R_{s_1}^{(k)}=B_{\perp,k}^{\mathrm{lyap}}S_1\big(B_{\perp,k}^{\mathrm{lyap}}\big)^*$ are equivalent after a rearrangement of variables.
\end{remark}
Since $m_1 \ll n_1$, matrix norms such as the $L_2$-norm and Frobenius norm of $R_1$ and $R_{s_1}$ can be computed efficiently.
\subsection{Factorized ADI Method for Sylvester Equations (FADI)}
The FADI algorithm \cite{benner2009adi,benner2014computing} computes a low-rank solution of \eqref{x_sylv}, approximating $X_{\mathrm{sylv}}$ as $X_{\mathrm{sylv}}\approx V_{\mathrm{fadi}}^{(k)} D_{\mathrm{fadi}}^{(k)} \big(W_{\mathrm{fadi}}^{(k)}\big)^*$. The corresponding residual $R_{\mathrm{sylv}}^{(k)}$ is defined by:
\begin{align}
R_{\mathrm{sylv}}^{(k)}=A_1V_{\mathrm{fadi}}^{(k)} D_{\mathrm{fadi}}^{(k)} \big(W_{\mathrm{fadi}}^{(k)}\big)^*E_2+E_1V_{\mathrm{fadi}}^{(k)} D_{\mathrm{fadi}}^{(k)} \big(W_{\mathrm{fadi}}^{(k)}\big)^*A_2+B_1C_2.
\end{align}
\begin{algorithm}
\DontPrintSemicolon
\caption{FADI}\label{fadi_alg}
\KwIn{Matrices of Sylvester Equation \eqref{x_sylv}: $A_1$, $B_1$, $E_1$, $A_2$, $C_2$, $E_2$; \\
       ADI shifts: $\{\alpha_i\}_{i=1}^k$ and $\{\beta_i\}_{i=1}^k$ such that $\alpha_i \neq -\beta_i$, $\operatorname{Re}(\alpha_i) \neq 0$, and $\operatorname{Re}(\beta_i) \neq 0$.}
\KwOut{Approximation of $X_{\mathrm{sylv}}$: $X_{\mathrm{sylv}}\approx V_{\mathrm{fadi}}^{(k)} D_{\mathrm{fadi}}^{(k)} \big(W_{\mathrm{fadi}}^{(k)}\big)^*$; Residual: $R_{\mathrm{sylv}}^{(k)} = B_{\perp,k}^{\mathrm{sylv}} C_{\perp,k}^{\mathrm{sylv}}$.}

Initialize: $B_{\perp,0}^{\mathrm{sylv}} = B_1$, $V_{\mathrm{fadi}}^{(0)} = [\hspace*{0.15cm}]$, $C_{\perp,0}^{\mathrm{sylv}} = C_2$, $W_{\mathrm{fadi}}^{(0)} = [\hspace*{0.15cm}]$, $D_{\mathrm{fadi}}^{(0)} = [\hspace*{0.15cm}]$.\;

\For{$i = 1$ \KwTo $k$}{
    Solve $(A_1 + \alpha_i E_1) v_i^{\mathrm{sylv}} = B_{\perp,i-1}^{\mathrm{sylv}}$ for $v_i^{\mathrm{sylv}}$.\; \label{fadi_step_3}
    
    Solve $(A_2^\top + \overline{\beta_i} E_2^\top) w_i^{\mathrm{sylv}} = (C_{\perp,i-1}^{\mathrm{sylv}})^*$ for $w_i^{\mathrm{sylv}}$.\; \label{fadi_step_4}
    
    Expand $V_{\mathrm{fadi}}^{(i)} = [V_{\mathrm{fadi}}^{(i-1)} \quad v_i^{\mathrm{sylv}}]$, $W_{\mathrm{fadi}}^{(i)} = [W_{\mathrm{fadi}}^{(i-1)} \quad w_i^{\mathrm{sylv}}]$, and $D_{\mathrm{fadi}}^{(i)} = \operatorname{blkdiag}\bigl(D_{\mathrm{fadi}}^{(i-1)}, -(\alpha_i + \beta_i) I\bigr)$.\;
    
    Update $B_{\perp,i}^{\mathrm{sylv}} = B_{\perp,i-1}^{\mathrm{sylv}} - (\alpha_i + \beta_i) E_1 v_i^{\mathrm{sylv}}$ and $C_{\perp,i}^{\mathrm{sylv}} = C_{\perp,i-1}^{\mathrm{sylv}} - (\alpha_i + \beta_i) (w_i^{\mathrm{sylv}})^* E_2$.\;
}
\end{algorithm}
\subsection{ADI-type Method for Riccati Equations (RADI)}
The RADI \cite{benner2018radi,bertram2024family} algorithm computes a low-rank solution of \eqref{ricc_p}, approximating $P_{\mathrm{ricc}}$ as $P_{\mathrm{ricc}}\approx V_{\mathrm{radi}}^{(k)} \hat{P}_{\mathrm{ricc}}^{(k)} \big(V_{\mathrm{radi}}^{(k)}\big)^{*}$. The corresponding residual $R_{\mathrm{p,ricc}}^{(k)}$ is defined by:
\begin{align}
R_{\mathrm{p,ricc}}^{(k)}&=A_1V_{\mathrm{radi}}^{(k)} \hat{P}_{\mathrm{ricc}}^{(k)} \big(V_{\mathrm{radi}}^{(k)}\big)^{*}E_1^\top+E_1V_{\mathrm{radi}}^{(k)} \hat{P}_{\mathrm{ricc}}^{(k)} \big(V_{\mathrm{radi}}^{(k)}\big)^{*}A_1^\top+B_1B_1^\top\nonumber\\
&\hspace*{4cm}-E_1V_{\mathrm{radi}}^{(k)} \hat{P}_{\mathrm{ricc}}^{(k)} \big(V_{\mathrm{radi}}^{(k)}\big)^{*}C_1^\top C_1V_{\mathrm{radi}}^{(k)} \hat{P}_{\mathrm{ricc}}^{(k)} \big(V_{\mathrm{radi}}^{(k)}\big)^{*}E_1^\top.
\end{align}
\begin{algorithm}
\caption{RADI}\label{radi_alg}
\DontPrintSemicolon
\KwIn{Matrices of Riccati equation \eqref{ricc_p}: $A_1$, $B_1$, $C_1$, $E_1$; ADI shifts: $\{\alpha_i\}_{i=1}^k \in \mathbb{C}_{-}$.}
\KwOut{Approximation of $P_{\mathrm{ricc}}$: $P_{\mathrm{ricc}} \approx V_{\mathrm{radi}}^{(k)} \hat{P}_{\mathrm{ricc}}^{(k)} \big(V_{\mathrm{radi}}^{(k)}\big)^{*}$; Residual: $R_{\mathrm{p,ricc}}^{(k)} = B_{\perp,k}^{\mathrm{ricc}} (B_{\perp,k}^{\mathrm{ricc}})^{*}$.}

Initialize: $V_{\mathrm{radi}}^{(0)} = [\hspace*{0.15cm}]$, $\hat{P}_{\mathrm{ricc}}^{(0)} = [\hspace*{0.15cm}]$, $B_{\perp,0}^{\mathrm{ricc}} = B_1$.\;

\For{$i = 1$ \KwTo $k$}{
    Solve $\Big( A_1 - E_1 V_{\mathrm{radi}}^{(i-1)} \hat{P}_{\mathrm{ricc}}^{(i-1)} \big(V_{\mathrm{radi}}^{(i-1)}\big)^{*} C_1^\top C_1 + \alpha_i E_1 \Big) v_i^{\mathrm{ricc}} = B_{\perp,i-1}^{\mathrm{ricc}}$ for $v_i^{\mathrm{ricc}}$.\; \label{radi_step6}
    
    Compute $\hat{p}_{\mathrm{ricc}}^{(i)} = -2\operatorname{Re}(\alpha_i) \big[ I + (v_i^{\mathrm{ricc}})^* C_1^\top C_1 v_i^{\mathrm{ricc}} \big]^{-1}$.\;
    
    Expand $V_{\mathrm{radi}}^{(i)} = [V_{\mathrm{radi}}^{(i-1)} \quad v_i^{\mathrm{ricc}}]$ and $\hat{P}_{\mathrm{ricc}}^{(i)} = \operatorname{blkdiag}\bigl(\hat{P}_{\mathrm{ricc}}^{(i-1)},\, \hat{p}_{\mathrm{ricc}}^{(i)}\bigr)$.\;
    
    Update $B_{\perp,i}^{\mathrm{ricc}} = B_{\perp,i-1}^{\mathrm{ricc}} - 2\operatorname{Re}(\alpha_i) E_1 v_i^{\mathrm{ricc}} \big[ I + (v_i^{\mathrm{ricc}})^* C_1^\top C_1 v_i^{\mathrm{ricc}} \big]^{-1}$.\;
}
\end{algorithm}
\begin{remark}
In large-scale settings, explicitly forming $V_{\mathrm{radi}}^{(i-1)} \hat{P}_{\mathrm{ricc}}^{(i-1)} \big(V_{\mathrm{radi}}^{(i-1)}\big)^{*}$
in Step~\ref{radi_step6} of RADI is infeasible due to the size of the matrices $(n_1 \times n_1)$. It is therefore recommended in \cite{benner2018radi} to rewrite the linear system in Step~\ref{radi_step6} as $\mathcal{A} v_i^{\text{ricc}} = \mathcal{B}$, where the right-hand side has $m_1 + p_1$ columns instead of $m_1$, using the Sherman–Morrison–Woodbury (SMW) formula; see \cite{benner2018radi,golub2013matrix} for details. As a result, the linear solves in RADI are more expensive than in CF-ADI, whose right-hand sides have only $m_1$ columns.
\end{remark}
\subsection{Interpolation Theory}
Let ${\sigma_1, \dots, \sigma_k}$ and ${\mu_1, \dots, \mu_k}$ be two sets of interpolation points. Define the projection matrices:
\begin{align}
    V_1 &= \left[ (\sigma_1 E_1 - A_1)^{-1} B_1, \dots, (\sigma_k E_1 - A_1)^{-1} B_1 \right], \label{v_kry}\\
    W_2 &= \left[ (\overline{\mu_1} E_2^\top - A_2^\top)^{-1} C_2^\top, \dots, (\overline{\mu_k} E_2^\top - A_2^\top)^{-1} C_2^\top \right].\label{w_kry}
\end{align}
These matrices satisfy the Sylvester equations:
\begin{align}
    A_1 V_1 - E_1 V_1 S_v + B_1 L_v &= 0, \label{v_kry_sylv}\\
    A_2^\top W_2 - E_2^\top W_2 S_w + C_2^\top L_w &= 0,\label{w_kry_sylv}
\end{align}
where
\begin{align}
    S_v &= \operatorname{blkdiag}(\sigma_1I, \dots, \sigma_kI), \quad L_v = \begin{bmatrix}I& \dots& I\end{bmatrix}, \label{SvLv_kry}\\
    S_w &= \operatorname{blkdiag}(\overline{\mu_1}I, \dots, \overline{\mu_k}I), \quad L_w = \begin{bmatrix}I& \dots& I\end{bmatrix}.\label{SwLw_kry}
\end{align}
Compute the reduced matrices as:
\begin{align}
    \hat{E}_1 &= W_1^* E_1 V_1, \quad
    \hat{A}_1 = W_1^* A_1 V_1, \quad
    \hat{B}_1 = W_1^* B_1, \\
    \hat{E}_2 &= W_2^* E_2 V_2, \quad
    \hat{A}_2 = W_2^* A_2 V_2, \quad
    \hat{C}_2 = C_2 V_2,
\end{align}
with $W_1 \in \mathbb{C}^{n_1 \times k m_1}$ and $V_2 \in \mathbb{C}^{n_2 \times k p_2}$.

Assuming $V_1$, $W_1$, $V_2$, and $W_2$ have full column rank and that $\hat{E}_1$ and $\hat{E}_2$ are invertible, the following interpolation conditions hold for $i = 1, \dots, k$:
\begin{align}
    (\sigma_i E_1 - A_1)^{-1} B_1 &= V_1 (\sigma_i \hat{E}_1 - \hat{A}_1)^{-1} \hat{B}_1, \label{eq_int1}\\
    C_2 (\mu_i E_2 - A_2)^{-1} &= \hat{C}_2 (\mu_i \hat{E}_2 - \hat{A}_2)^{-1} W_2^*;\label{eq_int2}
\end{align}cf. \cite{beattie2017chapter}.

Moreover,
\begin{align}
\hat{A}_1 = \hat{E}_1 S_v - \hat{B}_1 L_v
    \quad \mathrm{and} \quad
    \hat{A}_2 = S_w^* \hat{E}_2 - L_w^\top \hat{C}_2.\end{align}
Thus, $\hat{A}_1$ can be computed without explicitly constructing $W_1$, as it is parameterized by $\hat{E}_1$ and $\hat{B}_1$; see \cite{wolfthesis,panzerthesis}. Similarly, $\hat{A}_2$ can be computed without $V_2$, being parameterized by $\hat{E}_2$ and $\hat{C}_2$.

Define the matrices $B_\perp$ and $C_\perp$ as:
\begin{align}
    B_\perp &= B_1 - E_1 V_1 \hat{E}_1^{-1} \hat{B}_1, \\
    C_\perp &= C_2 - \hat{C}_2 \hat{E}_2^{-1} W_2^* E_2.
\end{align}
Then $V_1$ and $W_2$ also satisfy the Sylvester equations:
\begin{align}
    A_1 V_1 - E_1 V_1 \hat{E}_1^{-1} \hat{A}_1 + B_\perp L_v &= 0, \label{v_kry_sylv_2}\\
    A_2^\top W_2 - E_2^\top W_2 \hat{E}_2^{-*}\hat{A}_2^* + C_\perp^* L_w &= 0;\label{w_kry_sylv_2}
\end{align}cf. \cite{wolfthesis,panzerthesis}.
\subsection{Recursive Interpolation Framework}
The recursive interpolation framework in \cite{panzer2013greedy} iteratively enforces the interpolation conditions \eqref{eq_int1} and \eqref{eq_int2}. The reduced matrices $\hat{E}_1$, $\hat{A}_1$, $\hat{B}_1$, $\hat{E}_2$, $\hat{A}_2$, and $\hat{C}_2$ are constructed recursively, and the projection matrices $V_1$ and $W_2$ are updated in the same manner.

Initialize the algorithm with:
\begin{align}
    B_\perp^{(0)} &= B_1, \quad
    V_1^{(0)} = [\hspace*{0.15cm}], \quad
    S_v^{(0)} = [\hspace*{0.15cm}], \quad
    L_v^{(0)} = [\hspace*{0.15cm}], \quad
    \hat{E}_1^{(0)} = [\hspace*{0.15cm}], \quad
    \hat{A}_1^{(0)} = [\hspace*{0.15cm}], \quad
    \hat{B}_1^{(0)} = [\hspace*{0.15cm}], \nonumber\\
    C_\perp^{(0)} &= C_2, \quad
    W_2^{(0)} = [\hspace*{0.15cm}], \quad
    S_w^{(0)} = [\hspace*{0.15cm}], \quad
    L_w^{(0)} = [\hspace*{0.15cm}], \quad
    \hat{E}_2^{(0)} = [\hspace*{0.15cm}], \quad
    \hat{A}_2^{(0)} = [\hspace*{0.15cm}], \quad
    \hat{C}_2^{(0)} = [\hspace*{0.15cm}].
\end{align}
For $i = 1, \dots, k$, set $s_v^{(i)}=\sigma_i I$, $l_v^{(i)}=I$, $s_w^{(i)}=\overline{\mu_i} I$, and $l_w^{(i)}=I$. Then compute:
\begin{align}
    v_i &= (\sigma_i E_1 - A_1)^{-1} B_\perp^{(i-1)}l_v^{(i)},\label{recc_v}\\
    w_i &= (\overline{\mu_i} E_2^\top - A_2^\top)^{-1} \big(C_\perp^{(i-1)}\big)^*l_w^{(i)},\label{recc_w}
\end{align}
with
\begin{align}
    B_\perp^{(i)} &= B_\perp^{(i-1)} - E_1 v_i (\hat{e}_1^{(i)})^{-1} \hat{b}_1^{(i)}, \label{b_perp_rec}\\
    C_\perp^{(i)} &= C_\perp^{(i-1)} - \hat{c}_2^{(i)} (\hat{e}_2^{(i)})^{-1} w_i^* E_2.\label{c_perp_rec}
\end{align}
Here, $\hat{e}_1^{(i)}\in\mathbb{C}^{m_1\times m_1}$, $\hat{b}_1^{(i)}\in\mathbb{C}^{m_1\times m_1}$, $\hat{e}_2^{(i)}\in\mathbb{C}^{p_2\times p_2}$, $\hat{c}_2^{(i)}\in\mathbb{C}^{p_2\times p_2}$ are free parameters, with $\hat{e}_1^{(i)}$ and $\hat{e}_2^{(i)}$ invertible. Moreover, $v_i$ and $w_i$ satisfy the Sylvester equations:
\begin{align}
A_1v_i-E_1v_is_v^{(i)}+B_\perp^{(i-1)}l_v^{(i)}&=0,\\
A_2^\top w_i-E_2^\top w_i s_w^{(i)}+(C_\perp^{(i-1)})^*l_w^{(i)}&=0.
\end{align}
Setting $\hat{a}_1^{(i)}=\hat{e}_1^{(i)}s_v^{(i)}-\hat{b}_1^{(i)}l_v^{(i)}$ and $\hat{a}_2^{(i)}=(s_w^{(i)})^*\hat{e}_2^{(i)}-(l_w^{(i)})^\top \hat{c}_2^{(i)}$, the matrices are updated recursively as:
\begin{align}
    V_1^{(i)} &= \left[ V_1^{(i-1)} \quad v_i \right], &S_v^{(i)} &= \begin{bmatrix}
        S_v^{(i-1)} & (\hat{E}_1^{(i-1)})^{-1}\hat{B}_1^{(i-1)} l_v^{(i)} \\
        0 & s_v^{(i)}
    \end{bmatrix},& 
    L_v^{(i)}&= \left[ L_v^{(i-1)} \quad l_v^{(i)} \right], \nonumber\\
    \hat{E}_1^{(i)} &= \begin{bmatrix}
        \hat{E}_1^{(i-1)} & 0 \\
        0 & \hat{e}_1^{(i)}
    \end{bmatrix},& 
     \hat{A}_1^{(i)} &= \begin{bmatrix}
        \hat{A}_1^{(i-1)} & 0 \\
        -\hat{b}_1^{(i)}L_v^{(i-1)} & \hat{a}_1^{(i)}
    \end{bmatrix},&
    \hat{B}_1^{(i)}& = \begin{bmatrix}
        \hat{B}_1^{(i-1)} \\
        \hat{b}_1^{(i)}
    \end{bmatrix}, \label{rec1}\\
    W_2^{(i)}& = \left[ W_2^{(i-1)} \quad w_i \right],&
    S_w^{(i)} &= \begin{bmatrix}
        S_w^{(i-1)} & (\hat{E}_2^{(i-1)})^{-*}(\hat{C}_2^{(i-1)})^* l_w^{(i)} \\
        0 & s_w^{(i)}
    \end{bmatrix}, &
    L_w^{(i)} &= \left[ L_w^{(i-1)} \quad l_w^{(i)} \right],\nonumber\\
    \hat{E}_2^{(i)} &= \begin{bmatrix}
        \hat{E}_2^{(i-1)} & 0 \\
        0 & \hat{e}_2^{(i)}
    \end{bmatrix},&
    \hat{A}_2^{(i)}&=\begin{bmatrix}\hat{A}_2^{(i-1)}& -(L_w^{(i-1)})^\top \hat{c}_2^{(i)}\\0 &\hat{a}_2^{(i)}\end{bmatrix},&
    \hat{C}_2^{(i)} &= \left[ \hat{C}_2^{(i-1)} \quad \hat{c}_2^{(i)} \right].\label{rec2}
\end{align}

The matrices $B_\perp^{(i)}$, $C_\perp^{(i)}$, $\hat{A}_1^{(i)}$, and $\hat{A}_2^{(i)}$ can also be written as:
\begin{align}
    B_\perp^{(i)} &= B_1 - E_1 V_1^{(i)} (\hat{E}_1^{(i)})^{-1} \hat{B}_1^{(i)}, \\
    C_\perp^{(i)} &= C_2 - \hat{C}_2^{(i)} (\hat{E}_2^{(i)})^{-1} (W_2^{(i)})^* E_2,\\
        \hat{A}_1^{(i)} &= \hat{E}_1^{(i)} S_v^{(i)} - \hat{B}_1^{(i)} L_v^{(i)}, \\
    \hat{A}_2^{(i)} &= (S_w^{(i)})^* \hat{E}_2^{(i)}- (L_w^{(i)})^\top \hat{C}_2^{(i)}.
\end{align}
The projection matrices satisfy the Sylvester equations:
\begin{align}
    A_1 V_1^{(i)} - E_1 V_1^{(i)} S_v^{(i)} + B_1 L_v^{(i)} &= 0, \\
    A_1 V_1^{(i)} - E_1 V_1^{(i)} (\hat{E}_1^{(i)})^{-1} \hat{A}_1^{(i)} + B_\perp^{(i)} L_v^{(i)} &= 0, \\
    A_2^\top W_2^{(i)} - E_2^\top W_2^{(i)} S_w^{(i)} + C_2^\top L_w^{(i)} &= 0, \\
    A_2^\top W_2^{(i)} - E_2^\top W_2^{(i)} (\hat{E}_2^{(i)})^{-*}(\hat{A}_2^{(i)})^* + (C_\perp^{(i)})^* L_w^{(i)} &= 0.
\end{align}
The block triangular structure of $S_v^{(i)}$ and $S_w^{(i)}$ implies that their eigenvalues are $\sigma_i$ (with multiplicity $m_1$) and $\overline{\mu}_i$ (with multiplicity $p_2$), respectively. Given that the interpolation points are the eigenvalues of these matrices, and assuming $V_1^{(i)}$ and $W_2^{(i)}$ have full column rank, the following interpolation conditions hold:
\begin{align}
    (\sigma_i E_1 - A_1)^{-1} B_1 &= V_1^{(i)} (\sigma_i \hat{E}_1^{(i)} - \hat{A}_1^{(i)})^{-1} \hat{B}_1^{(i)}, \label{rec_int_1}\\
    C_2 (\mu_i E_2 - A_2)^{-1} &= \hat{C}_2^{(i)} (\mu_i \hat{E}_2^{(i)} - \hat{A}_2^{(i)})^{-1} (W_2^{(i)})^*, \label{rec_int_2}
\end{align}
for $i = 1, \dots, k$.
\subsection{CF-ADI as a Recursive Interpolation Algorithm}
It is noted in \cite{wolfthesis,wolf2016adi} that CF-ADI can be interpreted as a recursive interpolation algorithm.
Let $\alpha_i$ be the shifts used in CF-ADI. By setting $\sigma_i = -\alpha_i$, $s_v^{(i)}=\sigma_i I$, and $l_v^{(i)}=-I$ within the recursive interpolation framework, we define the matrix $q_{v,i}^{\mathrm{lyap}}$ as the solution to the Lyapunov equation:
\begin{align}
(-s_v^{(i)})^* q_{v,i}^{\mathrm{lyap}} + q_{v,i}^{\mathrm{lyap}} (-s_v^{(i)}) + (l_v^{(i)})^\top l_v^{(i)}= 0.
\end{align}
Now, set $\hat{e}_1^{(i)} = I$ and $\hat{b}_1^{(i)} = - (q_{v,i}^{\mathrm{lyap}})^{-1}(l_v^{(i)})^\top$ in \eqref{rec1}.
As shown in \cite{wolfthesis}, this setting satisfies the Lyapunov equation:
\[
    (-S_v^{(i)})^* Q_{v,i}^{\mathrm{lyap}} + Q_{v,i}^{\mathrm{lyap}} (-S_v^{(i)}) + (L_v^{(i)})^\top L_v^{(i)} = 0,
\]
where
\[
    Q_{v,i}^{\mathrm{lyap}} = \operatorname{blkdiag}\left( q_{v,1}^{\mathrm{lyap}}, \dots, q_{v,i}^{\mathrm{lyap}} \right).
\]
Since $q_{v,i}^{\mathrm{lyap}} = -\frac{1}{2\operatorname{Re}(\alpha_i)} I$, $(Q_{v,i}^{\mathrm{lyap}})^{-1} = \operatorname{blkdiag}\left( -2\operatorname{Re}(\alpha_1) I, \dots, -2\operatorname{Re}(\alpha_i) I \right)$, and $\hat{a}_1^{(i)}=s_v^{(i)}-\hat{b}_1^{(i)}l_v^{(i)}=-(q_{v,1}^{\mathrm{lyap}})^{-1}(s_v^{(i)})^*q_{v,1}^{\mathrm{lyap}}$, it follows that $(Q_{v,i}^{\mathrm{lyap}})^{-1}$ satisfies the projected Lyapunov equation:
\begin{align}
    \hat{A}_1^{(i)} (Q_{v,i}^{\mathrm{lyap}})^{-1} (\hat{E}_1^{(i)})^\top+ \hat{E}_1^{(i)}(Q_{v,i}^{\mathrm{lyap}})^{-1} (\hat{A}_1^{(i)})^* + \hat{B}_1^{(i)} (\hat{B}_1^{(i)})^* = 0,
\end{align}with
\[
\hat{E}_1^{(i)}=I,\quad  \hat{A}_1^{(i)}=-(Q_{v,i}^{\mathrm{lyap}})^{-1}(S_v^{(i)})^*Q_{v,i}^{\mathrm{lyap}},\quad \hat{B}_1^{(i)}=(Q_{v,i}^{\mathrm{lyap}})^{-1} (L_v^{(i)})^\top.
\]
Using $V_1^{(i)}$ from \eqref{rec1}, a projection-based approximation of $P_1$ is given by:
\[
    P_1 \approx V_1^{(i)} (Q_{v,i}^{\mathrm{lyap}})^{-1} (V_1^{(i)})^*.
\]
This approximation is identical to the one produced by the CF-ADI method, where:
\[
    Z_1^{(k)} = V_1^{(k)} \sqrt{(Q_{v,k}^{\mathrm{lyap}})^{-1}}\quad \mathrm{and}\quad B_{\perp,k}^{\mathrm{lyap}}=B_\perp^{(k-1)}-E_1v_k(q_{v,k}^{\mathrm{lyap}})^{-1}(l_v^{(k)})^\top.
\]
Finally, if $V_1^{(i)}$ in \eqref{rec1} has full column rank, the interpolation condition \eqref{rec_int_1} holds. Hence, the CF-ADI method interpolates at the mirror images of the ADI shifts $\alpha_i$.
\section{Main Work}
This section first demonstrates that FADI and RADI, like CF-ADI, are recursive interpolation algorithms that perform interpolation at the mirror images of the ADI shifts. It then shows how CF-ADI can be extended to solve Sylvester and Riccati equations, resulting in a unified ADI algorithm. The core property enabling this unification is the pole-placement property of CF-ADI, which allows the eigenvalues of the projected matrices to be placed at specific locations to solve multiple Lyapunov, Sylvester, and Riccati equations simultaneously. Furthermore, it is shown that the ROMs implicitly constructed by the proposed unified ADI algorithm can be accumulated. These ROMs preserve important properties of the original systems $G_1(s)$ and $G_2(s)$, such as stability, passivity, and the minimum-phase property. Finally, the self-generating shift strategies, in which the unified ADI algorithm produces the ADI shifts without user involvement, are discussed.
\subsection{FADI as a Recursive Interpolation Algorithm}
This subsection shows that FADI performs recursive interpolation at $-\alpha_i$ and $-\beta_i$.
\begin{proposition}\label{prop1}
Let $\{\alpha_i\}_{i=1}^{k}\in\mathbb{C}$ and $\{\beta_i\}_{i=1}^{k}\in\mathbb{C}$ be the ADI shifts used in FADI, with $\alpha_i\neq -\beta_i$. Define $s_v^{(i)}=-\alpha_i I$, $l_v^{(i)}=-I$, $s_w^{(i)}=-\overline{\beta}_i I$, and $l_w^{(i)}=-I$. Set the free parameters $\hat{e}_1^{(i)}$, $\hat{b}_1^{(i)}$, $\hat{e}_2^{(i)}$, and $\hat{c}_2^{(i)}$ as:
\begin{align}
\hat{e}_1^{(i)}=I,\quad \hat{b}_1^{(i)}=(d_{\mathrm{sylv}}^{(i)})^{-1}(l_w^{(i)})^\top,\quad \hat{e}_2^{(i)}=I,\quad \hat{c}_2^{(i)}=l_v^{(i)}(d_{\mathrm{sylv}}^{(i)})^{-1},\label{fadi_free}
\end{align}where $d_{\mathrm{sylv}}^{(i)}$ solves the Sylvester equation:
\begin{align}
   (-s_w^{(i)})^* d_{\mathrm{sylv}}^{(i)} + d_{\mathrm{sylv}}^{(i)} (-s_v^{(i)}) +(l_w^{(i)})^\top l_v^{(i)}  = 0.\label{eq_sylv}
\end{align}
Then the following hold:
\begin{enumerate}
  \item\label{1_of_prop1} $\hat{a}_1^{(i)}=\hat{e}_1^{(i)}s_v^{(i)}-\hat{b}_1^{(i)}l_v^{(i)}=-(d_{\mathrm{sylv}}^{(i)})^{-1}(s_w^{(i)})^*d_{\mathrm{sylv}}^{(i)}$.
  \item\label{2_of_prop1} $\hat{a}_2^{(i)}=(s_w^{(i)})^*\hat{e}_2^{(i)}-(l_w^{(i)})^\top\hat{c}_2^{(i)}=-d_{\mathrm{sylv}}^{(i)}s_v^{(i)}(d_{\mathrm{sylv}}^{(i)})^{-1}$.
  \item\label{3_of_prop1} $\hat{a}_1(d_{\mathrm{sylv}}^{(i)})^{-1}\hat{e}_2^{(i)}+\hat{e}_1^{(i)}(d_{\mathrm{sylv}}^{(i)})^{-1}\hat{a}_2^{(i)}+\hat{b}_1^{(i)}\hat{c}_2^{(i)}=0$.
  \item $d_{\mathrm{sylv}}^{(i)} = -\frac{1}{\alpha_i + \beta_i} I$ and $(d_{\mathrm{sylv}}^{(i)})^{-1}=-(\alpha_i + \beta_i)I$.
\end{enumerate}
\end{proposition}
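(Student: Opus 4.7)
The proof is a direct computation using the defining Sylvester equation \eqref{eq_sylv} for $d_{\mathrm{sylv}}^{(i)}$ together with the prescribed choices of the free parameters in \eqref{fadi_free}. My strategy is to dispatch item (4) first as it fixes the scalar structure, then derive items (1) and (2) by a single cancellation each, and finally obtain item (3) by substituting (1) and (2) into the left-hand side.

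First I would address item (4). Substituting $s_v^{(i)}=-\alpha_i I$, $s_w^{(i)}=-\overline{\beta}_i I$, and $l_v^{(i)}=l_w^{(i)}=-I$ into \eqref{eq_sylv}, one has $(-s_w^{(i)})^*=\beta_i I$, $(-s_v^{(i)})=\alpha_i I$, and $(l_w^{(i)})^\top l_v^{(i)}=I$, so \eqref{eq_sylv} collapses to the scalar identity $(\alpha_i+\beta_i)\,d_{\mathrm{sylv}}^{(i)}+I=0$. Since $\alpha_i\neq -\beta_i$ by hypothesis, this yields $d_{\mathrm{sylv}}^{(i)}=-\tfrac{1}{\alpha_i+\beta_i}I$ and hence $(d_{\mathrm{sylv}}^{(i)})^{-1}=-(\alpha_i+\beta_i)I$.

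Next I would handle items (1) and (2). The key observation is that \eqref{eq_sylv} can be rewritten as $(l_w^{(i)})^\top l_v^{(i)}=(s_w^{(i)})^*d_{\mathrm{sylv}}^{(i)}+d_{\mathrm{sylv}}^{(i)}s_v^{(i)}$. Plugging $\hat{e}_1^{(i)}=I$ and $\hat{b}_1^{(i)}=(d_{\mathrm{sylv}}^{(i)})^{-1}(l_w^{(i)})^\top$ into the definition $\hat{a}_1^{(i)}=\hat{e}_1^{(i)}s_v^{(i)}-\hat{b}_1^{(i)}l_v^{(i)}$ gives $\hat{a}_1^{(i)}=s_v^{(i)}-(d_{\mathrm{sylv}}^{(i)})^{-1}\bigl[(s_w^{(i)})^*d_{\mathrm{sylv}}^{(i)}+d_{\mathrm{sylv}}^{(i)}s_v^{(i)}\bigr]$, where the two occurrences of $s_v^{(i)}$ cancel to leave the stated $-(d_{\mathrm{sylv}}^{(i)})^{-1}(s_w^{(i)})^*d_{\mathrm{sylv}}^{(i)}$. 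Item (2) follows from the mirror computation: substituting $\hat{e}_2^{(i)}=I$ and $\hat{c}_2^{(i)}=l_v^{(i)}(d_{\mathrm{sylv}}^{(i)})^{-1}$ into $\hat{a}_2^{(i)}=(s_w^{(i)})^*\hat{e}_2^{(i)}-(l_w^{(i)})^\top \hat{c}_2^{(i)}$ and using the same rewrite of $(l_w^{(i)})^\top l_v^{(i)}$, but now factoring $(d_{\mathrm{sylv}}^{(i)})^{-1}$ on the right, the $(s_w^{(i)})^*$ terms cancel and the remainder is $-d_{\mathrm{sylv}}^{(i)}s_v^{(i)}(d_{\mathrm{sylv}}^{(i)})^{-1}$.

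Finally, for item (3) I would substitute the closed forms from (1) and (2) termwise: $\hat{a}_1^{(i)}(d_{\mathrm{sylv}}^{(i)})^{-1}\hat{e}_2^{(i)}=-(d_{\mathrm{sylv}}^{(i)})^{-1}(s_w^{(i)})^*$ and $\hat{e}_1^{(i)}(d_{\mathrm{sylv}}^{(i)})^{-1}\hat{a}_2^{(i)}=-s_v^{(i)}(d_{\mathrm{sylv}}^{(i)})^{-1}$. For the third summand, $\hat{b}_1^{(i)}\hat{c}_2^{(i)}=(d_{\mathrm{sylv}}^{(i)})^{-1}(l_w^{(i)})^\top l_v^{(i)}(d_{\mathrm{sylv}}^{(i)})^{-1}$, and expanding $(l_w^{(i)})^\top l_v^{(i)}$ by the rewritten Sylvester equation yields $(d_{\mathrm{sylv}}^{(i)})^{-1}(s_w^{(i)})^*+s_v^{(i)}(d_{\mathrm{sylv}}^{(i)})^{-1}$, which exactly cancels the first two summands. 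I do not anticipate any genuine obstacle: the result is a line-by-line verification. The only subtle point, and the place where a sign or conjugation slip could occur, is the bookkeeping between $(\cdot)^*$ and $(\cdot)^\top$ in the definition $s_w^{(i)}=-\overline{\beta}_i I$, which must be respected consistently so that $(-s_w^{(i)})^*=\beta_i I$ in the scalar computation for item (4).
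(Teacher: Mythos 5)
Your proposal is correct and takes essentially the same approach as the paper's Appendix A: both reduce items (1)--(3) to pre/post-multiplying the defining Sylvester equation \eqref{eq_sylv} by $(d_{\mathrm{sylv}}^{(i)})^{-1}$ (your rewriting of $(l_w^{(i)})^\top l_v^{(i)}$ and subsequent cancellation is that same manipulation, merely phrased as a substitution), and both obtain item (4) by plugging in the scalar choices. Reordering item (4) to the front is a cosmetic difference; the algebra is identical.
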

\begin{proof}
The proof is given in Appendix A.
\end{proof}
\begin{proposition}\label{prop2}
Let Proposition~\ref{prop1} hold. Define $S_v^{(i)}$, $L_v^{(i)}$, $\hat{E}_1^{(i)}$, $\hat{A}_1^{(i)}$, $\hat{B}_1^{(i)}$, $S_w^{(i)}$, $L_w^{(i)}$, $\hat{E}_2^{(i)}$, $\hat{A}_2^{(i)}$, and $\hat{C}_2^{(i)}$ using the recursive formulas \eqref{rec1} and \eqref{rec2}. Let $D_{\mathrm{sylv}}^{(i)} = \mathrm{blkdiag}\left(-\frac{1}{\alpha_1 + \beta_1} I,\cdots,-\frac{1}{\alpha_i + \beta_i} I\right)$.
Then the following hold:
\begin{enumerate}
\item $\hat{B}_1^{(i)}=(D_{\mathrm{sylv}}^{(i)})^{-1}(L_w^{(i)})^\top$ and $\hat{C}_2^{(i)}=L_v^{(i)}(D_{\mathrm{sylv}}^{(i)})^{-1}$.
\item The matrix $D_{\mathrm{sylv}}^{(i)}$ satisfies the Sylvester equation:
\begin{align}(-S_w^{(i)})^* D_{\mathrm{sylv}}^{(i)} + D_{\mathrm{sylv}}^{(i)} (-S_v^{(i)}) +(L_w^{(i)})^\top L_v^{(i)}  = 0.\end{align}
  \item\label{3_of_prop2} $\hat{A}_1^{(i)}=\hat{E}_1^{(i)}S_v^{(i)}-\hat{B}_1^{(i)}L_v^{(i)}=-(D_{\mathrm{sylv}}^{(i)})^{-1}(S_w^{(i)})^*D_{\mathrm{sylv}}^{(i)}$.
  \item\label{4_of_prop2}  $\hat{A}_2^{(i)}=(S_w^{(i)})^*\hat{E}_2^{(i)}-(L_w^{(i)})^\top\hat{C}_2^{(i)}=-D_{\mathrm{sylv}}^{(i)}S_v^{(i)}(D_{\mathrm{sylv}}^{(i)})^{-1}$.
  \item\label{5_of_prop2} The following projected Sylvester equation holds:
      \begin{align}
      \hat{A}_1(D_{\mathrm{sylv}}^{(i)})^{-1}\hat{E}_2^{(i)}+\hat{E}_1^{(i)}(D_{\mathrm{sylv}}^{(i)})^{-1}\hat{A}_2^{(i)}+\hat{B}_1^{(i)}\hat{C}_2^{(i)}=0.\label{D_fadi_proj}
      \end{align}
  \end{enumerate}
\end{proposition}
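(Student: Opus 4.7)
The natural approach is induction on $i$, with Proposition~\ref{prop1} providing both the base case and the per-step building blocks. For $i=1$ all block matrices collapse to the single-block objects of Proposition~\ref{prop1}, so every claim holds by direct inspection. For the inductive step I assume the five assertions at level $i-1$ and combine them with the recursion \eqref{rec1}--\eqref{rec2} and the item-wise identities of Proposition~\ref{prop1}.

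The plan is to dispatch the claims in the order 1, 2, 3, 4, 5. Claim~1 is immediate from the block-diagonal structure $D_{\mathrm{sylv}}^{(i)}=\mathrm{blkdiag}(D_{\mathrm{sylv}}^{(i-1)},d_{\mathrm{sylv}}^{(i)})$ together with $L_w^{(i)}=[L_w^{(i-1)}\ l_w^{(i)}]$, $L_v^{(i)}=[L_v^{(i-1)}\ l_v^{(i)}]$, the stacking rules for $\hat{B}_1^{(i)}$ and $\hat{C}_2^{(i)}$ in \eqref{rec1}--\eqref{rec2}, the inductive hypothesis, and the choice \eqref{fadi_free}. For Claim~2 I would write out the $2\times 2$ block form of the Sylvester equation using the block-triangular shapes of $S_v^{(i)}$ and $S_w^{(i)}$. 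The $(1,1)$ block is the inductive hypothesis, the $(2,2)$ block is \eqref{eq_sylv}, and the two off-diagonal blocks reduce, after substituting the inductive-step expressions for the coupling terms $(\hat{E}_1^{(i-1)})^{-1}\hat{B}_1^{(i-1)}l_v^{(i)}$ and $(\hat{E}_2^{(i-1)})^{-*}(\hat{C}_2^{(i-1)})^*l_w^{(i)}$, to identities that collapse using Claim~1 at level $i-1$ together with $\hat{E}_1^{(i-1)}=\hat{E}_2^{(i-1)}=I$.

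Claims~3 and~4 are then obtained algebraically from Claim~2. Solving \eqref{D_fadi_proj}-type Sylvester identity for $D_{\mathrm{sylv}}^{(i)}S_v^{(i)}$ gives $S_v^{(i)}=(D_{\mathrm{sylv}}^{(i)})^{-1}(L_w^{(i)})^\top L_v^{(i)}-(D_{\mathrm{sylv}}^{(i)})^{-1}(S_w^{(i)})^*D_{\mathrm{sylv}}^{(i)}$, and substitution into $\hat{A}_1^{(i)}=\hat{E}_1^{(i)}S_v^{(i)}-\hat{B}_1^{(i)}L_v^{(i)}$, using $\hat{E}_1^{(i)}=I$ and Claim~1, produces the stated formula. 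Claim~4 is analogous, starting from $\hat{A}_2^{(i)}=(S_w^{(i)})^*\hat{E}_2^{(i)}-(L_w^{(i)})^\top\hat{C}_2^{(i)}$. Finally, Claim~5 is verified by plugging Claims~1, 3, 4 into the left-hand side of \eqref{D_fadi_proj}, using $\hat{E}_1^{(i)}=\hat{E}_2^{(i)}=I$, and observing that after conjugating by $D_{\mathrm{sylv}}^{(i)}$ the resulting expression is precisely the Sylvester equation of Claim~2, hence zero.

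The only step that requires genuine care is the off-diagonal block bookkeeping in Claim~2, because the off-diagonal blocks of $S_v^{(i)}$ and $S_w^{(i)}$ are exactly where the compatibility between the Krylov recursion and the block-diagonal ansatz for $D_{\mathrm{sylv}}^{(i)}$ is tested. Everything else is a direct consequence of Proposition~\ref{prop1} applied block-wise, so I expect no additional obstacle beyond verifying that the two off-diagonal blocks vanish under the inductive hypothesis.
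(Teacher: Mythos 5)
Your proposal is correct and takes essentially the same route as the paper's Appendix~B: claim~1 by direct block-stacking from \eqref{fadi_free}, claim~2 by writing the Sylvester residual in $2\times 2$ block form after substituting $\hat{E}_1^{(i-1)}=\hat{E}_2^{(i-1)}=I$ and claim~1 into the off-diagonal entries of $S_v^{(i)},S_w^{(i)}$, whereupon the off-diagonal blocks cancel exactly and the diagonal blocks are the level-$(i-1)$ equation and \eqref{eq_sylv}, and claims~3--5 by pre-/post-multiplying the block Sylvester equation by $(D_{\mathrm{sylv}}^{(i)})^{-1}$, mirroring items~(1)--(3) of Proposition~\ref{prop1}. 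The one cosmetic difference is that you package claim~1 inside a joint induction, whereas the paper establishes it directly by concatenation and reserves the recursion for claim~2; the substance is identical.
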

\begin{proof}
The proof is given in Appendix B.
\end{proof}
Using the recursive formulas \eqref{b_perp_rec} and \eqref{c_perp_rec}, the matrices $B_\perp^{(i)}$ and $C_\perp^{(i)}$ are given by:
\begin{align}
B_\perp^{(i)}&=B_\perp^{(i-1)}-(\alpha_i+\beta_i)E_1v_i,\\
C_\perp^{(i)}&=C_\perp^{(i-1)}-(\alpha_i+\beta_i)w_i^*E_2.
\end{align}
Furthermore, from \eqref{recc_v} and \eqref{recc_w}, $v_i$ and $w_i$ are computed as:
\begin{align}
v_i&=(A_1+\alpha_iE_1)^{-1}B_\perp^{(i-1)}\\
w_i&=(A_2^\top+\overline{\beta}_iE_2^\top)^{-1}(C_\perp^{(i-1)})^*.
\end{align}
Thus, when the free parameters $\hat{e}_1^{(i)}$, $\hat{b}_1^{(i)}$, $\hat{e}_2^{(i)}$, and $\hat{c}_2^{(i)}$ are selected according to Proposition \ref{prop1} in the recursive interpolation framework, we obtain:
\[
V_{\mathrm{fadi}}^{(i)} = V_1^{(i)}, \quad W_{\mathrm{fadi}}^{(i)} = W_2^{(i)}, \quad B_{\perp,i-1}^\mathrm{sylv}=B_\perp^{(i-1)}, \quad C_{\perp,i-1}^\mathrm{sylv}=C_\perp^{(i-1)}, \quad D_{\mathrm{fadi}}^{(i)} = (D_{\mathrm{sylv}}^{(i)})^{-1}.
\] Consequently, the projection-based approximation $V_1^{(k)} (D_{\mathrm{sylv}}^{(k)})^{-1} (W_2^{(k)})^*$ from the recursive interpolation framework matches the FADI approximation:
\[
    X_{\mathrm{sylv}} \approx V_1^{(k)} \big(D_{\mathrm{sylv}}^{(k)}\big)^{-1} \big(W_2^{(k)}\big)^*=V_{\mathrm{fadi}}^{(k)}D_{\mathrm{fadi}}^{(k)}\big(W_{\mathrm{fadi}}^{(k)}\big)^*.
\]
Assuming $V_1^{(i)}$ in \eqref{recc_v} and $W_2^{(i)}$ in \eqref{recc_w} have full column rank, the interpolation conditions \eqref{rec_int_1} and \eqref{rec_int_2} hold, meaning FADI interpolates at the mirror images of the ADI shifts $\alpha_i$ and $\beta_i$.
\subsection{RADI as a Recursive Interpolation Algorithm}
We now demonstrate that RADI also performs recursive interpolation at the mirror images of the ADI shifts $\alpha_i$. Although RADI's recursive interpolation framework differs slightly from that in \cite{panzer2013greedy}, the overall approach remains conceptually similar.
\begin{theorem}\label{th_radi}
Assume all variables in Algorithm \ref{radi_alg} are well-defined. For $i=1,\dots,k$, define the following quantities:
\begin{align}
s_{v,\mathrm{ricc}}^{(i)}&=-\alpha_i I,\quad l_{v,\mathrm{ricc}}^{(i)}=-I,\quad \hat{e}_1^{(i)}=I, \quad \hat{b}_1^{(i)}=\hat{p}_{\mathrm{ricc}}^{(i)}(l_{v,\mathrm{ricc}}^{(i)})^\top,\quad \hat{a}_1^{(i)}=\hat{e}_1^{(i)}s_{v,\mathrm{ricc}}^{(i)}-\hat{b}_1^{(i)}l_{v,\mathrm{ricc}}^{(i)},\quad \hat{c}_1^{(i)}=C_1v_i^{\mathrm{ricc}}\nonumber\\
 S_{v,\mathrm{ricc}}^{(i)} &= \begin{bmatrix}
        S_{v,\mathrm{ricc}}^{(i-1)} & \hat{P}_{\mathrm{ricc}}^{(i-1)}\Big((L_{v,\mathrm{ricc}}^{(i-1)})^\top l_{v,\mathrm{ricc}}^{(i)}\ + (\hat{C}_1^{(i-1)})^* \hat{c}_1^{(i)}\Big) \\
        0 & s_{v,\mathrm{ricc}}^{(i)}\end{bmatrix},\quad L_{v,\mathrm{ricc}}^{(i)}=\begin{bmatrix}L_{v,\mathrm{ricc}}^{(i-1)}&l_{v,\mathrm{ricc}}^{(i)}\end{bmatrix},\nonumber\\
\hat{E}_1^{(i)}&=\begin{bmatrix}\hat{E}_1^{(i-1)}&0\\0& \hat{e}_1^{(i)}\end{bmatrix}=I,\quad
\hat{A}_1^{(i)}=\begin{bmatrix}\hat{A}_1^{(i-1)}&\hat{P}_{\mathrm{ricc}}^{(i-1)}(\hat{C}_1^{(i-1)})^* \hat{c}_1^{(i)}\\-\hat{b}_1^{(i)}L_{v,\mathrm{ricc}}^{(i-1)}&\hat{a}_1^{(i)}\end{bmatrix},\nonumber\\ \hat{B}_1^{(i)}&=\begin{bmatrix}\hat{B}_1^{(i-1)}\\\hat{b}_1^{(i)}\end{bmatrix}=\begin{bmatrix}\hat{P}_{\mathrm{ricc}}^{(i-1)}(L_{v,\mathrm{ricc}}^{(i-1)})^\top\\\hat{p}_{\mathrm{ricc}}^{(i)}(l_{v,\mathrm{ricc}}^{(i)})^\top\end{bmatrix}=\hat{P}_{\mathrm{ricc}}^{(i)}(L_{v,\mathrm{ricc}}^{(i)})^\top,\nonumber\\
\hat{C}_1^{(i)}&=\begin{bmatrix}\hat{C}_1^{(i-1)}&\hat{c}_1^{(i)}\end{bmatrix}=\begin{bmatrix}\hat{C}_1^{(i-1)}&C_1v_i^{\mathrm{ricc}}\end{bmatrix}=C_1\begin{bmatrix}V_{\mathrm{radi}}^{(i-1)}&v_i^{\mathrm{ricc}}\end{bmatrix}=C_1V_{\mathrm{radi}}^{(i)}.\label{radi_free}
\end{align}
Then the following statements hold:
\begin{enumerate}
  \item $(\hat{p}_{\mathrm{ricc}}^{(i)})^{-1}$ solves the Lyapunov equation:
  \begin{align}
  (-s_{v,\mathrm{ricc}}^{(i)})^*(\hat{p}_{\mathrm{ricc}}^{(i)})^{-1}+(\hat{p}_{\mathrm{ricc}}^{(i)})^{-1}(-s_{v,\mathrm{ricc}}^{(i)})+(l_{v,\mathrm{ricc}}^{(i)})^\top l_{v,\mathrm{ricc}}^{(i)}+(\hat{c}_1^{(i)})^*\hat{c}_1^{(i)}=0.\label{lyap_p_ricc}
  \end{align}
  \item $(\hat{P}_{\mathrm{ricc}}^{(i)})^{-1}$ solves the Lyapunov equation:
  \begin{align}
  (-S_{v,\mathrm{ricc}}^{(i)})^*(\hat{P}_{\mathrm{ricc}}^{(i)})^{-1}+(\hat{P}_{\mathrm{ricc}}^{(i)})^{-1}(-S_{v,\mathrm{ricc}}^{(i)})+(L_{v,\mathrm{ricc}}^{(i)})^\top L_{v,\mathrm{ricc}}^{(i)}+(\hat{C}_1^{(i)})^* \hat{C}_1^{(i)}=0.\label{lyap_p_ricc_recc}
  \end{align}
  \item\label{th_radi_2} $\hat{A}_1^{(i)}=\hat{E}_1^{(i)}S_{v,\mathrm{ricc}}^{(i)}-\hat{B}_1^{(i)}L_{v,\mathrm{ricc}}^{(i)}=\hat{P}_{\mathrm{ricc}}^{(i)}\Big[(-S_{v,\mathrm{ricc}}^{(i)})^*+(\hat{C}_1^{(i)})^* \hat{C}_1^{(i)}\hat{P}_{\mathrm{ricc}}^{(i)}\Big](\hat{P}_{\mathrm{ricc}}^{(i)})^{-1}$.
  \item The matrix $\hat{A}_1^{(i)}-\hat{P}_{\mathrm{ricc}}^{(i)}(\hat{C}_1^{(i)})^* \hat{C}_1^{(i)}$ is Hurwitz.
  \item\label{th_radi_3} $\hat{P}_{\mathrm{ricc}}^{(i)}$ is a stabilizing solution to the projected Riccati equation:
  \begin{align}
  \hat{A}_1^{(i)}\hat{P}_{\mathrm{ricc}}^{(i)}(\hat{E}_1^{(i)})^\top+\hat{E}_1^{(i)}\hat{P}_{\mathrm{ricc}}^{(i)}(\hat{A}_1^{(i)})^*+\hat{B}_1^{(i)}(\hat{B}_1^{(i)})^*
  -\hat{E}_1^{(i)}\hat{P}_{\mathrm{ricc}}^{(i)}(\hat{C}_1^{(i)})^*\hat{C}_1^{(i)}\hat{P}_{\mathrm{ricc}}^{(i)}(\hat{E}_1^{(i)})^\top=0.\label{proj_ricc}
  \end{align}
  \item $B_{\perp,i}^{\mathrm{ricc}}=B_1-E_1V_{\mathrm{radi}}^{(i)}\hat{P}_{\mathrm{ricc}}^{(i)}(L_{v,\mathrm{ricc}}^{(i)})^\top$.
  \item $V_\mathrm{radi}^{(i)}$ satisfies the Sylvester equations:
  \begin{align}
  A_1V_\mathrm{radi}^{(i)}-E_1V_\mathrm{radi}^{(i)} S_{v,\mathrm{ricc}}^{(i)}+B_1L_{v,\mathrm{ricc}}^{(i)}&=0,\label{radi_sylv1}\\
  A_1V_\mathrm{radi}^{(i)}-E_1V_\mathrm{radi}^{(i)} (\hat{E}_1^{(i)})^{-1}\hat{A}_1^{(i)}+B_{\perp,i}^{\mathrm{ricc}}L_{v,\mathrm{ricc}}^{(i)}&=0.\label{radi_sylv2}
  \end{align}
\end{enumerate}
\end{theorem}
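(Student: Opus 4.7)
The plan is to prove all seven statements jointly by induction on $i$, since they are tightly interlocked through the recursive definitions in \eqref{radi_free} and the update rules of Algorithm \ref{radi_alg}. The base case $i=1$ is a straightforward verification using the explicit formulas for $\hat{p}_{\mathrm{ricc}}^{(1)}$ and $v_1^{\mathrm{ricc}}$. In the inductive step, I would establish the items roughly in the listed order, letting each feed into the next.

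For item 1, substituting $s_{v,\mathrm{ricc}}^{(i)}=-\alpha_iI$ and $l_{v,\mathrm{ricc}}^{(i)}=-I$ into \eqref{lyap_p_ricc} reduces it to $2\operatorname{Re}(\alpha_i)(\hat{p}_{\mathrm{ricc}}^{(i)})^{-1}=-I-(v_i^{\mathrm{ricc}})^*C_1^\top C_1v_i^{\mathrm{ricc}}$, which is precisely the RADI definition of $\hat{p}_{\mathrm{ricc}}^{(i)}$. Item 2 assembles item 1 across $j=1,\dots,i$ using the block-diagonal form of $\hat{P}_{\mathrm{ricc}}^{(i)}$ and the upper-triangular form of $S_{v,\mathrm{ricc}}^{(i)}$; the off-diagonal blocks produced by $(-S_{v,\mathrm{ricc}}^{(i)})^*(\hat{P}_{\mathrm{ricc}}^{(i)})^{-1}+(\hat{P}_{\mathrm{ricc}}^{(i)})^{-1}(-S_{v,\mathrm{ricc}}^{(i)})$ are balanced against $(\hat{C}_1^{(i)})^*\hat{C}_1^{(i)}$ and $(L_{v,\mathrm{ricc}}^{(i)})^\top L_{v,\mathrm{ricc}}^{(i)}$ thanks to the choice of the top-right block of $S_{v,\mathrm{ricc}}^{(i)}$. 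For item 3, I would first note that $\hat{B}_1^{(i)}=\hat{P}_{\mathrm{ricc}}^{(i)}(L_{v,\mathrm{ricc}}^{(i)})^\top$ is immediate from the block definitions, then verify $\hat{A}_1^{(i)}=\hat{E}_1^{(i)}S_{v,\mathrm{ricc}}^{(i)}-\hat{B}_1^{(i)}L_{v,\mathrm{ricc}}^{(i)}$ by block multiplication, using the inductive hypothesis on the $(1,1)$ block and checking that the extra $\hat{P}_{\mathrm{ricc}}^{(i-1)}(\hat{C}_1^{(i-1)})^*\hat{c}_1^{(i)}$ term in the $(1,2)$ block of $S_{v,\mathrm{ricc}}^{(i)}$ is precisely what survives after cancellation against $\hat{B}_1^{(i-1)}l_{v,\mathrm{ricc}}^{(i)}$; the second identity in item 3 is then obtained by left-multiplying the Lyapunov equation of item 2 by $\hat{P}_{\mathrm{ricc}}^{(i)}$ and rearranging. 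Item 4 follows from item 3, since $\hat{A}_1^{(i)}-\hat{P}_{\mathrm{ricc}}^{(i)}(\hat{C}_1^{(i)})^*\hat{C}_1^{(i)}=\hat{P}_{\mathrm{ricc}}^{(i)}(-S_{v,\mathrm{ricc}}^{(i)})^*(\hat{P}_{\mathrm{ricc}}^{(i)})^{-1}$ is similar to $(-S_{v,\mathrm{ricc}}^{(i)})^*$, whose eigenvalues are the conjugates $\overline{\alpha}_j\in\mathbb{C}_-$.

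For item 5, I would multiply the Lyapunov equation of item 2 by $\hat{P}_{\mathrm{ricc}}^{(i)}$ on both sides and substitute the two forms of $\hat{A}_1^{(i)}$ from item 3 (using that $\hat{P}_{\mathrm{ricc}}^{(i)}$ is Hermitian, being the inverse of a Hermitian block-diagonal matrix) together with $\hat{B}_1^{(i)}(\hat{B}_1^{(i)})^*=\hat{P}_{\mathrm{ricc}}^{(i)}(L_{v,\mathrm{ricc}}^{(i)})^\top L_{v,\mathrm{ricc}}^{(i)}\hat{P}_{\mathrm{ricc}}^{(i)}$; the quadratic terms combine to produce \eqref{proj_ricc}. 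Item 6 follows by rewriting the RADI update $B_{\perp,i}^{\mathrm{ricc}}=B_{\perp,i-1}^{\mathrm{ricc}}+E_1v_i^{\mathrm{ricc}}\hat{p}_{\mathrm{ricc}}^{(i)}$ in terms of $\hat{b}_1^{(i)}=\hat{p}_{\mathrm{ricc}}^{(i)}(l_{v,\mathrm{ricc}}^{(i)})^\top$ and combining with the inductive hypothesis $B_{\perp,i-1}^{\mathrm{ricc}}=B_1-E_1V_{\mathrm{radi}}^{(i-1)}\hat{P}_{\mathrm{ricc}}^{(i-1)}(L_{v,\mathrm{ricc}}^{(i-1)})^\top$. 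For item 7, I would split \eqref{radi_sylv1} into block columns: the first block columns give the inductive hypothesis, while the last block column reduces to Step \ref{radi_step6} of Algorithm \ref{radi_alg} once $(V_{\mathrm{radi}}^{(i-1)})^*C_1^\top C_1v_i^{\mathrm{ricc}}$ is identified with $(\hat{C}_1^{(i-1)})^*\hat{c}_1^{(i)}$ and item 6 is invoked. Equation \eqref{radi_sylv2} then follows from \eqref{radi_sylv1} by substituting items 3 and 6.

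The main obstacle is the nonstandard off-diagonal coupling $\hat{P}_{\mathrm{ricc}}^{(i-1)}(\hat{C}_1^{(i-1)})^*\hat{c}_1^{(i)}$ that appears in both $S_{v,\mathrm{ricc}}^{(i)}$ and $\hat{A}_1^{(i)}$ and that distinguishes this framework from the Panzer recurrence used for CF-ADI in Section 2.6. One must verify that this term is chosen precisely so that the factorization $\hat{A}_1^{(i)}=\hat{E}_1^{(i)}S_{v,\mathrm{ricc}}^{(i)}-\hat{B}_1^{(i)}L_{v,\mathrm{ricc}}^{(i)}$ still holds, so that the projected Riccati equation \eqref{proj_ricc} closes, and so that the first Sylvester equation \eqref{radi_sylv1} remains consistent with the Riccati-modified linear solve in Step \ref{radi_step6}. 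This coherent interplay among items 3, 5, and 7 is the delicate bookkeeping at the heart of the proof and is what forces the joint inductive argument rather than a sequential one.
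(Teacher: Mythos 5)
Your proposal is correct and follows essentially the same route as the paper's Appendix~C: direct substitution of $s_{v,\mathrm{ricc}}^{(i)},l_{v,\mathrm{ricc}}^{(i)},\hat{c}_1^{(i)}$ to recover the RADI formula for item~1; block-triangular decomposition with recursion on $i$ for items~2 and~7; pre-/post-multiplication by $\hat{P}_{\mathrm{ricc}}^{(i)}$ (or its inverse) to move between items~2, 3, and~5; the eigenvalue argument from the block-triangular $S_{v,\mathrm{ricc}}^{(i)}$ for item~4; and the telescoping recursion on $B_{\perp,i}^{\mathrm{ricc}}$ for item~6. Your framing as a single joint induction is a tidier way to express the dependency structure that the paper handles somewhat more sequentially, and your explicit verification that the chosen off-diagonal block of $S_{v,\mathrm{ricc}}^{(i)}$ makes the factorization $\hat{A}_1^{(i)}=\hat{E}_1^{(i)}S_{v,\mathrm{ricc}}^{(i)}-\hat{B}_1^{(i)}L_{v,\mathrm{ricc}}^{(i)}$ close is a step the paper leaves implicit, but otherwise the substance matches.
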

\begin{proof}
The proof is given in Appendix C.
\end{proof}
Due to the block triangular structure of $S_{v,\mathrm{ricc}}^{(i)}$, its eigenvalues are $-\alpha_i$ with multiplicity $m_1$. Based on the connection between Sylvester equations and rational interpolation established in \cite{gallivan2004sylvester}, if $V_\mathrm{radi}^{(i)}$ has full column rank, it enforces interpolation at $-\alpha_i$ (see \cite{gallivan2004sylvester} for details). Therefore, RADI recursively enforces the interpolation conditions:
\[
C_1(-\alpha_iE_1-A_1)^{-1}B_1=\hat{C}_1^{(i)}(-\alpha_i\hat{E}_1^{(i)}-\hat{A}_1^{(i)})^{-1}\hat{B}_1^{(i)},
\]
where $\hat{E}_1^{(i)}$, $\hat{A}_1^{(i)}$, $\hat{B}_1^{(i)}$, and $\hat{C}_1^{(i)}$ are as defined in Theorem \ref{th_radi}.
\subsection{A Unified ADI Framework}
In this subsection, we show that CF-ADI can be used to solve the linear matrix equations \eqref{lyap_p}–\eqref{ricc_q_sf} simultaneously, yielding a unified ADI framework.
\subsubsection{Unifying CF-ADI and FADI}
Let $\{\alpha_i\}_{i=1}^{k}\in\mathbb{C}_{-}$ and $\{\beta_i\}_{i=1}^{k}\in\mathbb{C}_{-}$ be the ADI shifts used in CF-ADI to approximate \eqref{lyap_p} and \eqref{lyap_q}, respectively. Define $s_{v,\mathrm{lyap}}^{(i)}=-\alpha_i I$, $l_{v,\mathrm{lyap}}^{(i)}=-I$, $s_{w,\mathrm{lyap}}^{(i)}=-\overline{\beta}_i I$, and $l_{w,\mathrm{lyap}}^{(i)}=-I$. Then $v_i^{\mathrm{lyap}}$ and $w_i^{\mathrm{lyap}}$ are computed as:
\begin{align}
v_i^{\mathrm{lyap}}=(A_1+\alpha_iE_1)^{-1}B_{\perp,i-1}^{\mathrm{lyap}},\label{lin_solve_1}\\
A_1v_i^{\mathrm{lyap}}-E_1v_i^{\mathrm{lyap}}s_{v,\mathrm{lyap}}^{(i)}+B_{\perp,i-1}^{\mathrm{lyap}}l_{v,\mathrm{lyap}}^{(i)}=0,\label{v_lyap}\\
w_i^{\mathrm{lyap}}=(A_2^\top+\overline{\beta}_iE_2^\top)^{-1}(C_{\perp,i-1}^{\mathrm{lyap}})^*,\label{lin_solve_2}\\
A_2^\top w_i^{\mathrm{lyap}}-E_2^\top w_i^{\mathrm{lyap}}s_{w,\mathrm{lyap}}^{(i)}+(C_{\perp,i-1}^{\mathrm{lyap}})^*l_{w,\mathrm{lyap}}^{(i)}=0,\label{w_lyap}
\end{align}
where
\begin{align}
B_{\perp,i}^{\mathrm{lyap}}&=B_{\perp,i-1}^{\mathrm{lyap}}-2\mathrm{Re}(\alpha_i)E_1v_i^{\mathrm{lyap}},& B_{\perp,0}^{\mathrm{lyap}}&=B_1,\\
C_{\perp,i}^{\mathrm{lyap}}&=C_{\perp,i-1}^{\mathrm{lyap}}-2\mathrm{Re}(\beta_i)(w_i^{\mathrm{lyap}})^*E_2,& C_{\perp,0}^{\mathrm{lyap}}&=C_2.
\end{align}
Define $V_{\mathrm{lyap}}^{(i)}$, $\hat{P}_{\mathrm{lyap}}^{(i)}$, $S_{v,\mathrm{lyap}}^{(i)}$, $L_{v,\mathrm{lyap}}^{(i)}$, $W_{\mathrm{lyap}}^{(i)}$, $\hat{Q}_{\mathrm{lyap}}^{(i)}$, $S_{w,\mathrm{lyap}}^{(i)}$ and $L_{w,\mathrm{lyap}}^{(i)}$ as:
\begin{align}
V_{\mathrm{lyap}}^{(i)}&=\begin{bmatrix}v_1^{\mathrm{lyap}}&\cdots&v_i^{\mathrm{lyap}}\end{bmatrix},& \hat{P}_{\mathrm{lyap}}^{(i)}&=\mathrm{blkdiag}(-2\mathrm{Re}(\alpha_1) I,\cdots,-2\mathrm{Re}(\alpha_i) I),\label{V_lyap}\\
S_{v,\mathrm{lyap}}^{(i)}&=\begin{bmatrix}S_{v,\mathrm{lyap}}^{(i-1)}&\hat{P}_{\mathrm{lyap}}^{(i-1)}\big(L_{v,\mathrm{lyap}}^{(i-1)}\big)^\top l_{v,\mathrm{lyap}}^{(i)}\\0&s_{v,\mathrm{lyap}}^{(i)}\end{bmatrix},& L_{v,\mathrm{lyap}}^{(i)}&=\begin{bmatrix}l_{v,\mathrm{lyap}}^{(1)}&\cdots&l_{v,\mathrm{lyap}}^{(i)}\end{bmatrix},\\
W_{\mathrm{lyap}}^{(i)}&=\begin{bmatrix}w_1^{\mathrm{lyap}}&\cdots&w_i^{\mathrm{lyap}}\end{bmatrix},& \hat{Q}_{\mathrm{lyap}}^{(i)}&=\mathrm{blkdiag}(-2\mathrm{Re}(\beta_1) I,\cdots,-2\mathrm{Re}(\beta_i) I),\label{W_lyap}\\
S_{w,\mathrm{lyap}}^{(i)}&=\begin{bmatrix}S_{w,\mathrm{lyap}}^{(i-1)}&\hat{Q}_{\mathrm{lyap}}^{(i-1)}\big(L_{w,\mathrm{lyap}}^{(i-1)}\big)^\top l_{w,\mathrm{lyap}}^{(i)}\\0&s_{w,\mathrm{lyap}}^{(i)}\end{bmatrix},& L_{w,\mathrm{lyap}}^{(i)}&=\begin{bmatrix}l_{w,\mathrm{lyap}}^{(1)}&\cdots&l_{w,\mathrm{lyap}}^{(i)}\end{bmatrix}.
\end{align}
The projected reduced-order matrices are:
\begin{align}
\hat{E}_{1,i}^{\mathrm{lyap}}&=I,\quad
\hat{A}_{1,i}^{\mathrm{lyap}}=-\hat{P}_{\mathrm{lyap}}^{(i)}\big(S_{v,\mathrm{lyap}}^{(i)}\big)^*(\hat{P}_{\mathrm{lyap}}^{(i)})^{-1}
=\begin{bmatrix}\hat{A}_{1,i-1}^{\mathrm{lyap}}&0\\-2\mathrm{Re}(\alpha_i)L_{v,\mathrm{lyap}}^{(i-1)}&\overline{\alpha}_iI\end{bmatrix},\\
\hat{B}_{1,i}^{\mathrm{lyap}}&=\hat{P}_{\mathrm{lyap}}^{(i)}\big(L_{v,\mathrm{lyap}}^{(i)}\big)^\top=\begin{bmatrix}\hat{B}_{1,i-1}^{\mathrm{lyap}}\\2\mathrm{Re}(\alpha_i)I\end{bmatrix},\quad \hat{C}_{1,i}^{\mathrm{lyap}}=C_1V_{\mathrm{lyap}}^{(i)}=\begin{bmatrix}\hat{C}_{1,i-1}^{\mathrm{lyap}}&C_1v_i^{\mathrm{lyap}}\end{bmatrix},\\
\hat{E}_{2,i}^{\mathrm{lyap}}&=I,\quad \hat{A}_{2,i}^{\mathrm{lyap}}=-\big(\hat{Q}_{\mathrm{lyap}}^{(i)}\big)^{-1}S_{w,\mathrm{lyap}}^{(i)}\hat{Q}_{\mathrm{lyap}}^{(i)}=\begin{bmatrix}\hat{A}_{2,i-1}^{\mathrm{lyap}}&-2\mathrm{Re}(\beta_i)\big(L_{w,\mathrm{lyap}}^{(i-1)}\big)^\top\\0&\overline{\beta}_iI\end{bmatrix},\\ \hat{B}_{2,i}^{\mathrm{lyap}}&=\big(W_{\mathrm{lyap}}^{(i)}\big)^*B_2=\begin{bmatrix}\hat{B}_{2,i-1}^{\mathrm{lyap}}\\(w_i^{\mathrm{lyap}})^*B_2\end{bmatrix}, \quad \hat{C}_{2,i}^{\mathrm{lyap}}=L_{w,\mathrm{lyap}}^{(i)}\hat{Q}_{\mathrm{lyap}}^{(i)}=\begin{bmatrix}\hat{C}_{2,i-1}^{\mathrm{lyap}}&2\mathrm{Re}(\beta_i)I\end{bmatrix}.
\end{align}
As shown in \cite{wolfthesis}, $\hat{P}_{\mathrm{lyap}}^{(i)}$ and $\hat{Q}_{\mathrm{lyap}}^{(i)}$ solve the projected Lyapunov equations:
\begin{align}
 \hat{A}_{1,i}^{\mathrm{lyap}}\hat{P}_{\mathrm{lyap}}^{(i)}\big(\hat{E}_{1,i}^{\mathrm{lyap}}\big)^\top+\hat{E}_{1,i}^{\mathrm{lyap}}\hat{P}_{\mathrm{lyap}}^{(i)}\big( \hat{A}_{1,i}^{\mathrm{lyap}}\big)^*+\hat{B}_{1,i}^{\mathrm{lyap}}\big(\hat{B}_{1,i}^{\mathrm{lyap}}\big)^*&=0,\label{P_proj_adi}\\
\big(\hat{A}_{2,i}^{\mathrm{lyap}}\big)^*\hat{Q}_{\mathrm{lyap}}^{(i)}\hat{E}_{2,i}^{\mathrm{lyap}}+\big(\hat{E}_{2,i}^{\mathrm{lyap}}\big)^\top\hat{Q}_{\mathrm{lyap}}^{(i)}\hat{A}_{2,i}^{\mathrm{lyap}}+\big(\hat{C}_{2,i}^{\mathrm{lyap}}\big)^*\hat{C}_{2,i}^{\mathrm{lyap}}&=0.\label{Q_proj_adi}
\end{align}
Moreover, $V_{\mathrm{lyap}}^{(i)}$ and $W_{\mathrm{lyap}}^{(i)}$ uniquely satisfy the Sylvester equations:
\begin{align}
A_1V_{\mathrm{lyap}}^{(i)}-E_1V_{\mathrm{lyap}}^{(i)}S_{v,\mathrm{lyap}}^{(i)}+B_1L_{v,\mathrm{lyap}}^{(i)}&=0,\label{cfadi_v_sylv1}\\
A_1V_{\mathrm{lyap}}^{(i)}-E_1V_{\mathrm{lyap}}^{(i)}\big(\hat{E}_{1,i}^{\mathrm{lyap}}\big)^{-1} \hat{A}_{1,i}^{\mathrm{lyap}}+B_{\perp,i}^{\mathrm{lyap}}L_{v,\mathrm{lyap}}^{(i)}&=0,\label{cfadi_v_sylv2}\\
A_2^\top W_{\mathrm{lyap}}^{(i)}-E_2^\top W_{\mathrm{lyap}}^{(i)}S_{w,\mathrm{lyap}}^{(i)}+C_2^\top L_{w,\mathrm{lyap}}^{(i)}&=0,\label{cfadi_w_sylv1}\\
A_2^\top W_{\mathrm{lyap}}^{(i)}-E_2^\top W_{\mathrm{lyap}}^{(i)}\big(\hat{E}_{2,i}^{\mathrm{lyap}}\big)^{-\top}\big(\hat{A}_{2,i}^{\mathrm{lyap}}\big)^*+\big(C_{\perp,i}^{\mathrm{lyap}}\big)^*L_{w,\mathrm{lyap}}^{(i)}&=0.\label{cfadi_w_sylv2}
\end{align}
As shown in \cite{wolfthesis}, the projection-based approximations of $P_1$ and $Q_2$, identical to the ones produced by CF-ADI, are given by $P_1 \approx V_{\mathrm{lyap}}^{(i)} \hat{P}_{\mathrm{lyap}}^{(i)} \big(V_{\mathrm{lyap}}^{(i)}\big)^*$ and $Q_2 \approx W_{\mathrm{lyap}}^{(i)} \hat{Q}_{\mathrm{lyap}}^{(i)} \big(W_{\mathrm{lyap}}^{(i)}\big)^*$, with respective residuals $B_{\perp,i}^{\mathrm{lyap}}\big(B_{\perp,i}^{\mathrm{lyap}}\big)^*$ and $\big(C_{\perp,i}^{\mathrm{lyap}}\big)^* C_{\perp,i}^{\mathrm{lyap}}$.

Assuming that $m_1=p_2$, define
\begin{align}
s_{v,\mathrm{sylv}}^{(i)}&=-\alpha_i I,& l_{v,\mathrm{sylv}}^{(i)}&=-I, \nonumber\\
S_{v,\mathrm{sylv}}^{(i)}&=\begin{bmatrix} S_{v,\mathrm{sylv}}^{(i-1)}&D_{\mathrm{fadi}}^{(i-1)}(L_{w,\mathrm{sylv}}^{(i)})^\top l_{v,\mathrm{sylv}}^{(i)}\\0&s_{v,\mathrm{sylv}}^{(i)}\end{bmatrix}, & L_{v,\mathrm{sylv}}^{(i)}&=\begin{bmatrix}L_{v,\mathrm{sylv}}^{(i-1)}&l_{v,\mathrm{sylv}}^{(i)}\end{bmatrix},\\
s_{w,\mathrm{sylv}}^{(i)}&=-\overline{\beta}_iI,& l_{w,\mathrm{sylv}}^{(i)}&=-I,\nonumber\\
S_{w,\mathrm{sylv}}^{(i)}&=\begin{bmatrix}S_{w,\mathrm{sylv}}^{(i-1)}&(D_{\mathrm{fadi}}^{(i-1)})^*(L_{v,\mathrm{sylv}}^{(i)})^\top l_{w,\mathrm{sylv}}^{(i)}\\0&s_{w,\mathrm{sylv}}^{(i)}\end{bmatrix}, & L_{w,\mathrm{sylv}}^{(i)}&=\begin{bmatrix}L_{w,\mathrm{sylv}}^{(i-1)}&l_{w,\mathrm{sylv}}^{(i)}\end{bmatrix}.
\end{align}
Recall that $V_{\mathrm{fadi}}^{(i)}$ and $W_{\mathrm{fadi}}^{(i)}$ solve the following Sylvester equations:
\begin{align}
A_1V_{\mathrm{fadi}}^{(i)}-E_1V_{\mathrm{fadi}}^{(i)}S_{v,\mathrm{sylv}}^{(i)}+B_1L_{v,\mathrm{sylv}}^{(i)}&=0,\\
A_1V_{\mathrm{fadi}}^{(i)}+E_1V_{\mathrm{fadi}}^{(i)}D_{\mathrm{fadi}}^{(i)}\big(S_{w,\mathrm{sylv}}^{(i)}\big)^*\big(D_{\mathrm{fadi}}^{(i)}\big)^{-1}+B_{\perp,i}^{\mathrm{sylv}}L_{v,\mathrm{sylv}}^{(i)}&=0,\\
A_2^\top W_{\mathrm{fadi}}^{(i)}-E_2^\top W_{\mathrm{fadi}}^{(i)}S_{w,\mathrm{sylv}}^{(i)}+C_2^\top L_{w,\mathrm{sylv}}^{(i)}&=0,\\
A_2^\top W_{\mathrm{fadi}}^{(i)}+E_2^\top W_{\mathrm{fadi}}^{(i)}\big(D_{\mathrm{fadi}}^{(i)}\big)^*\big(S_{v,\mathrm{sylv}}^{(i)}\big)^*\big(D_{\mathrm{fadi}}^{(i)}\big)^{-*}+\big(C_{\perp,i}^{\mathrm{sylv}}\big)^*L_{w,\mathrm{sylv}}^{(i)}&=0.
\end{align} 

Define $T_{v,\mathrm{sylv}}^{(i)}$ and $T_{w,\mathrm{sylv}}^{(i)}$ by:
\begin{align}
T_{v,\mathrm{sylv}}^{(i)}&=\begin{bmatrix}T_{v,\mathrm{sylv}}^{(i-1)}&t_{1,v,\mathrm{sylv}}^{(i)}\\0&t_{2,v,\mathrm{sylv}}^{(i)}\end{bmatrix},&\quad\mathrm{and}&&
T_{w,\mathrm{sylv}}^{(i)}&=\begin{bmatrix}T_{w,\mathrm{sylv}}^{(i-1)}&t_{1,w,\mathrm{sylv}}^{(i)}\\0&t_{2,w,\mathrm{sylv}}^{(i)}\end{bmatrix},\label{TvTw_sylv}
\end{align}where
\begin{align}
t_{v,\mathrm{sylv}}^{(i)}&=\begin{bmatrix}t_{1,v,\mathrm{sylv}}^{(i)}\\t_{2,v,\mathrm{sylv}}^{(i)}\end{bmatrix}=\begin{bmatrix}\hat{A}_{1,i-1}^{\mathrm{lyap}}+\overline{\alpha}_iI&0\\-2\mathrm{Re}(\alpha_i)L_{v,\mathrm{lyap}}^{(i-1)}&2\mathrm{Re}(\alpha_i)I\end{bmatrix}^{-1}\begin{bmatrix}\hat{B}_{1,i-1}^{\mathrm{lyap}}-T_{v,\mathrm{sylv}}^{(i-1)}D_{\mathrm{fadi}}^{(i-1)}(L_{w,\mathrm{sylv}}^{(i-1)})^\top\\2\mathrm{Re}(\alpha_i)I\end{bmatrix},\\
t_{w,\mathrm{sylv}}^{(i)}&=\begin{bmatrix}t_{1,w,\mathrm{sylv}}^{(i)}\\t_{2,w,\mathrm{sylv}}^{(i)}\end{bmatrix}=\begin{bmatrix}(\hat{A}_{2,i-1}^{\mathrm{lyap}})^*+\overline{\beta}_iI&0\\-2\mathrm{Re}(\beta_i)L_{w,\mathrm{lyap}}^{(i-1)}&2\mathrm{Re}(\beta_i)I\end{bmatrix}^{-1}\begin{bmatrix}(\hat{C}_{2,i-1}^{\mathrm{lyap}})^*-T_{w,\mathrm{sylv}}^{(i-1)}(D_{\mathrm{fadi}}^{(i-1)})^*(L_{v,\mathrm{sylv}}^{(i-1)})^\top\\2\mathrm{Re}(\beta_i)I\end{bmatrix},
\end{align}
which uniquely solves the following Sylvester equations:
\begin{align}
\begin{bmatrix}\hat{A}_{1,i-1}^{\mathrm{lyap}}&0\\-2\mathrm{Re}(\alpha_i)L_{v,\mathrm{lyap}}^{(i-1)}&\overline{\alpha}_iI\end{bmatrix}\begin{bmatrix}t_{1,v,\mathrm{sylv}}^{(i)}\\t_{2,v,\mathrm{sylv}}^{(i)}\end{bmatrix}&-\begin{bmatrix}t_{1,v,\mathrm{sylv}}^{(i)}\\t_{2,v,\mathrm{sylv}}^{(i)}\end{bmatrix}s_{v,\mathrm{sylv}}^{(i)}\nonumber\\
&+\begin{bmatrix}\hat{B}_{1,i-1}^{\mathrm{lyap}}-T_{v,\mathrm{sylv}}^{(i-1)}D_{\mathrm{fadi}}^{(i-1)}(L_{w,\mathrm{sylv}}^{(i-1)})^\top\\2\mathrm{Re}(\alpha_i)I\end{bmatrix} l_{v,\mathrm{sylv}}^{(i)}=0,\label{tv_sylv}\\
\begin{bmatrix}\hat{A}_{2,i-1}^{\mathrm{lyap}}&-2\mathrm{Re}(\beta_i)(L_{w,\mathrm{lyap}}^{(i-1)})^\top\\0&\overline{\beta}_iI\end{bmatrix}^*\begin{bmatrix}t_{1,w,\mathrm{sylv}}^{(i)}\\t_{2,w,\mathrm{sylv}}^{(i)}\end{bmatrix}&-\begin{bmatrix}t_{1,w,\mathrm{sylv}}^{(i)}\\t_{2,w,\mathrm{sylv}}^{(i)}\end{bmatrix}s_{w,\mathrm{sylv}}^{(i)}\nonumber\\
&+\begin{bmatrix}(\hat{C}_{2,i-1}^{\mathrm{lyap}})^*-T_{w,\mathrm{sylv}}^{(i-1)}(D_{\mathrm{fadi}}^{(i-1)})^*(L_{v,\mathrm{sylv}}^{(i-1)})^\top\\2\mathrm{Re}(\beta_i)I\end{bmatrix}l_{w,\mathrm{sylv}}^{(i)}=0.\label{tw_sylv}
\end{align}
Alternatively, rather than computing recursively, one may obtain \( T_{v,\mathrm{sylv}}^{(i)} \) and \( T_{w,\mathrm{sylv}}^{(i)} \) directly by solving the Sylvester equations
\begin{align}
\hat{A}_{1,i}^{\mathrm{lyap}}T_{v,\mathrm{sylv}}^{(i)}-T_{v,\mathrm{sylv}}^{(i)}S_{v,\mathrm{sylv}}^{(i)}+\hat{B}_{1,i}^{\mathrm{lyap}}L_{v,\mathrm{sylv}}^{(i)}&=0,\\
\big(\hat{A}_{2,i}^{\mathrm{lyap}}\big)^*T_{w,\mathrm{sylv}}^{(i)}-T_{w,\mathrm{sylv}}^{(i)}S_{w,\mathrm{sylv}}^{(i)}+\big(\hat{C}_{2,i}^{\mathrm{lyap}}\big)^*L_{w,\mathrm{sylv}}^{(i)}&=0.
\end{align} 
The next theorem shows how \( v_i^{\mathrm{sylv}} \), \( w_i^{\mathrm{sylv}} \), \( V_{\mathrm{fadi}}^{(i)} \), and \( W_{\mathrm{fadi}}^{(i)} \) in FADI can be extracted from \( V_{\mathrm{lyap}}^{(i)} \) and \( W_{\mathrm{lyap}}^{(i)} \).
\begin{theorem}
Let $\{\alpha_i\}_{i=1}^{k} \subset \mathbb{C}_{-}$ and $\{\beta_i\}_{i=1}^{k} \subset \mathbb{C}_{-}$ be the ADI shifts used in CF-ADI to approximate \eqref{lyap_p} and \eqref{lyap_q}, respectively. Assume $m_1 = p_2$ and that all variables in Algorithm~\ref{fadi_alg} are well defined. Then $v_i^{\mathrm{sylv}}$, $w_i^{\mathrm{sylv}}$, $V_{\mathrm{fadi}}^{(i)}$, and $W_{\mathrm{fadi}}^{(i)}$ generated by FADI for the same shifts can be extracted from $V_{\mathrm{lyap}}^{(i)}$ and $W_{\mathrm{lyap}}^{(i)}$ as follows:
\begin{align}
v_i^{\mathrm{sylv}}=V_{\mathrm{lyap}}^{(i)}t_{v,\mathrm{sylv}}^{(i)},\quad w_i^{\mathrm{sylv}}=W_{\mathrm{lyap}}^{(i)}t_{w,\mathrm{sylv}}^{(i)},\quad V_{\mathrm{fadi}}^{(i)}=V_{\mathrm{lyap}}^{(i)}T_{v,\mathrm{sylv}}^{(i)},\quad W_{\mathrm{fadi}}^{(i)}=W_{\mathrm{lyap}}^{(i)}T_{w,\mathrm{sylv}}^{(i)}.
\end{align}
\end{theorem}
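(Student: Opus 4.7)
The plan is to prove the matrix identity $V_{\mathrm{fadi}}^{(i)}=V_{\mathrm{lyap}}^{(i)}T_{v,\mathrm{sylv}}^{(i)}$ by showing that the right-hand side satisfies the Sylvester equation that uniquely characterizes $V_{\mathrm{fadi}}^{(i)}$. The single-column identity $v_i^{\mathrm{sylv}}=V_{\mathrm{lyap}}^{(i)}t_{v,\mathrm{sylv}}^{(i)}$ then follows by reading off the last block column, since $t_{v,\mathrm{sylv}}^{(i)}$ is the last block column of $T_{v,\mathrm{sylv}}^{(i)}$ and $v_i^{\mathrm{sylv}}$ is the last block column of $V_{\mathrm{fadi}}^{(i)}$. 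The dual statements for $W_{\mathrm{fadi}}^{(i)}$ and $w_i^{\mathrm{sylv}}$ fall out by a parallel argument on the observability side.

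The first step is an auxiliary identity $L_{v,\mathrm{lyap}}^{(i)}T_{v,\mathrm{sylv}}^{(i)}=L_{v,\mathrm{sylv}}^{(i)}$, which I would prove by induction on $i$. The block-triangular structure of $T_{v,\mathrm{sylv}}^{(i)}$ reduces the inductive step to checking its last block column, which amounts to verifying $L_{v,\mathrm{lyap}}^{(i-1)}t_{1,v,\mathrm{sylv}}^{(i)}+l_{v,\mathrm{lyap}}^{(i)}t_{2,v,\mathrm{sylv}}^{(i)}=l_{v,\mathrm{sylv}}^{(i)}$. This pops out of the lower block row of \eqref{tv_sylv} once one uses $s_{v,\mathrm{sylv}}^{(i)}=-\alpha_i I$, the cancellation $\overline{\alpha}_i+\alpha_i=2\operatorname{Re}(\alpha_i)$, division by the nonzero scalar $2\operatorname{Re}(\alpha_i)$, and $l_{v,\mathrm{lyap}}^{(i)}=-I$. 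The companion identity $L_{w,\mathrm{lyap}}^{(i)}T_{w,\mathrm{sylv}}^{(i)}=L_{w,\mathrm{sylv}}^{(i)}$ follows from the lower block row of \eqref{tw_sylv} by the same mechanism.

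The second step multiplies the Sylvester equation $\hat{A}_{1,i}^{\mathrm{lyap}}T_{v,\mathrm{sylv}}^{(i)}-T_{v,\mathrm{sylv}}^{(i)}S_{v,\mathrm{sylv}}^{(i)}+\hat{B}_{1,i}^{\mathrm{lyap}}L_{v,\mathrm{sylv}}^{(i)}=0$ on the left by $E_1V_{\mathrm{lyap}}^{(i)}$, eliminates the term $E_1V_{\mathrm{lyap}}^{(i)}\hat{A}_{1,i}^{\mathrm{lyap}}$ using \eqref{cfadi_v_sylv2} together with $\hat{E}_{1,i}^{\mathrm{lyap}}=I$, invokes the auxiliary identity from the first step, and uses the decomposition $B_1=B_{\perp,i}^{\mathrm{lyap}}+E_1V_{\mathrm{lyap}}^{(i)}\hat{B}_{1,i}^{\mathrm{lyap}}$ to collapse the two residual terms into $B_1L_{v,\mathrm{sylv}}^{(i)}$. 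The net effect is that $\tilde V:=V_{\mathrm{lyap}}^{(i)}T_{v,\mathrm{sylv}}^{(i)}$ satisfies $A_1\tilde V-E_1\tilde V S_{v,\mathrm{sylv}}^{(i)}+B_1L_{v,\mathrm{sylv}}^{(i)}=0$, which is exactly the Sylvester equation satisfied by $V_{\mathrm{fadi}}^{(i)}$. Because the spectrum of $E_1^{-1}A_1$ lies in $\mathbb{C}_-$ by Assumption~\ref{assum} while the eigenvalues of $S_{v,\mathrm{sylv}}^{(i)}$ are $\{-\alpha_1,\dots,-\alpha_i\}\subset\mathbb{C}_+$, the two spectra are disjoint and the solution is unique, forcing $\tilde V=V_{\mathrm{fadi}}^{(i)}$. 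Running the same argument with \eqref{cfadi_w_sylv2} and the Sylvester equation for $T_{w,\mathrm{sylv}}^{(i)}$ yields $W_{\mathrm{fadi}}^{(i)}=W_{\mathrm{lyap}}^{(i)}T_{w,\mathrm{sylv}}^{(i)}$.

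I expect the main friction point to be the inductive bookkeeping in the first step, where the lower block row of \eqref{tv_sylv} looks bulky before the cancellation $\overline{\alpha}_i+\alpha_i=2\operatorname{Re}(\alpha_i)$ is performed. Once that calculation is isolated, the rest of the proof is a chain of substitutions into the Sylvester identities already assembled in this subsection, and no new spectral or uniqueness machinery is needed beyond the standard disjoint-spectrum criterion.
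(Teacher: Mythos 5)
Your proposal is correct, and it takes a genuinely different route from the paper's proof in Appendix~D. The paper proves the single-block identity $v_i^{\mathrm{sylv}}=V_{\mathrm{lyap}}^{(i)}t_{v,\mathrm{sylv}}^{(i)}$ first, by a direct (and fairly long) expansion showing $(A_1+\alpha_iE_1)V_{\mathrm{lyap}}^{(i)}t_{v,\mathrm{sylv}}^{(i)}=B_1-E_1V_{\mathrm{lyap}}^{(i-1)}T_{v,\mathrm{sylv}}^{(i-1)}D_{\mathrm{fadi}}^{(i-1)}(L_{w,\mathrm{sylv}}^{(i-1)})^\top$, and then identifies the right-hand side with $B_{\perp,i-1}^{\mathrm{sylv}}$ by invoking the matrix identity $V_{\mathrm{fadi}}^{(i-1)}=V_{\mathrm{lyap}}^{(i-1)}T_{v,\mathrm{sylv}}^{(i-1)}$ from the previous step; the two identities are thus established in a joint induction, with the base cases $i=1,2$ worked out explicitly. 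You instead prove the matrix identity directly: you left-multiply the direct-form Sylvester equation $\hat{A}_{1,i}^{\mathrm{lyap}}T_{v,\mathrm{sylv}}^{(i)}-T_{v,\mathrm{sylv}}^{(i)}S_{v,\mathrm{sylv}}^{(i)}+\hat{B}_{1,i}^{\mathrm{lyap}}L_{v,\mathrm{sylv}}^{(i)}=0$ by $E_1V_{\mathrm{lyap}}^{(i)}$, use \eqref{cfadi_v_sylv2} and the decomposition $B_1=B_{\perp,i}^{\mathrm{lyap}}+E_1V_{\mathrm{lyap}}^{(i)}\hat{B}_{1,i}^{\mathrm{lyap}}$ to collapse the result to the Sylvester equation satisfied by $V_{\mathrm{fadi}}^{(i)}$, and then invoke uniqueness from the spectral separation between $E_1^{-1}A_1\subset\mathbb{C}_-$ and $\sigma(S_{v,\mathrm{sylv}}^{(i)})=\{-\alpha_1,\dots,-\alpha_i\}\subset\mathbb{C}_+$. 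This cleanly decouples the two conclusions and avoids the intertwined double induction. The price is that you need the auxiliary lemma $L_{v,\mathrm{lyap}}^{(i)}T_{v,\mathrm{sylv}}^{(i)}=L_{v,\mathrm{sylv}}^{(i)}$; the paper instead obtains the equivalent identity $L_{v,\mathrm{lyap}}^{(i)}=L_{v,\mathrm{sylv}}^{(i)}(T_{v,\mathrm{sylv}}^{(i)})^{-1}$ as a \emph{consequence} of this theorem, in Proposition~\ref{prop_x_sylv}. Your derivation of the lemma from the lower block row of \eqref{tv_sylv}—using $\overline{\alpha}_i+\alpha_i=2\operatorname{Re}(\alpha_i)\neq 0$ and $l_{v,\mathrm{lyap}}^{(i)}=l_{v,\mathrm{sylv}}^{(i)}=-I$—is correct and does not depend on the theorem, so there is no circularity. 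One thing worth making explicit if you write this up: the direct-form Sylvester equation you multiply against is itself equivalent to the stack of recursive equations \eqref{tv_sylv} only when the $(2,1)$ block annihilates, which again requires $L_{v,\mathrm{lyap}}^{(i-1)}T_{v,\mathrm{sylv}}^{(i-1)}=L_{v,\mathrm{sylv}}^{(i-1)}$; so the auxiliary lemma must be established \emph{before} you appeal to the direct form, which is indeed the order in which you present the steps.
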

\begin{proof}
The proof is given in Appendix D.
\end{proof}
Note that \( V_{\mathrm{lyap}}^{(i)} \) and \( W_{\mathrm{lyap}}^{(i)} \) differ from \( V_{\mathrm{fadi}}^{(i)} \) and \( W_{\mathrm{fadi}}^{(i)} \) only by scaling factors: the former use \( 2\operatorname{Re}(\alpha_i) \) and \( 2\operatorname{Re}(\beta_i) \), whereas the latter use \( \alpha_i + \beta_i \). The matrices \( t_{v,\mathrm{sylv}}^{(i)} \) and \( t_{w,\mathrm{sylv}}^{(i)} \) effectively expunge the \( 2\operatorname{Re}(\alpha_i) \) and \( 2\operatorname{Re}(\beta_i) \) scalings from \( V_{\mathrm{lyap}}^{(i)} \) and \( W_{\mathrm{lyap}}^{(i)} \), respectively, and knit the \( \alpha_i + \beta_i \) scaling into them to produce \( v_i^{\mathrm{sylv}} \) and \( w_i^{\mathrm{sylv}} \). Moreover, \( v_i^{\mathrm{sylv}} \) and \( w_i^{\mathrm{sylv}} \) contain the full history of previous shifted linear solves—namely, \( (v_1^{\mathrm{sylv}}, \dots, v_{i-1}^{\mathrm{sylv}}) \) and \( (w_1^{\mathrm{sylv}}, \dots, w_{i-1}^{\mathrm{sylv}}) \)—due to the terms \( B_{\perp,i-1}^{\mathrm{sylv}} \) and \( C_{\perp,i-1}^{\mathrm{sylv}} \). Thus, all the previous history of \( v_i^{\mathrm{lyap}} \) and \( w_i^{\mathrm{lyap}} \) is used to retrieve \( v_i^{\mathrm{sylv}} \) and \( w_i^{\mathrm{sylv}} \). From a computational perspective, $t_{v,\mathrm{sylv}}^{(i)}$ is computed via one shifted linear solve with $\hat{A}_{1,i}^{\mathrm{lyap}}\in\mathbb{C}^{im_1\times im_1}$ and $m_1$ right-hand sides. Similarly, $t_{w,\mathrm{sylv}}^{(i)}$ is computed via one shifted linear solve with $\hat{A}_{2,i}^{\mathrm{lyap}}\in\mathbb{C}^{ip_2\times ip_2}$ and $p_2$ right-hand sides. Since $im_1\ll n_1$ and $ip_2\ll n_2$, these solves are much cheaper than computing $v_i^{\mathrm{sylv}}$ and $w_i^{\mathrm{sylv}}$ in FADI. Thus, extracting $v_i^{\mathrm{sylv}}$ and $w_i^{\mathrm{sylv}}$ from $V_{\mathrm{lyap}}^{(i)}$ and $W_{\mathrm{lyap}}^{(i)}$ offers notable memory savings for large-scale problems.

Due to the block triangular recursive structure of \( T_{v,\mathrm{sylv}}^{(i)} \), its invertibility is guaranteed provided that \( t_{2,v,\mathrm{sylv}}^{(i)} \) is invertible. Proposition \ref{tv_sylv_inv} gives the conditions for the invertibility of \( t_{2,v,\mathrm{sylv}}^{(i)} \), and hence for \( T_{v,\mathrm{sylv}}^{(i)} \).

\begin{proposition}\label{tv_sylv_inv}
Partition $t_{1,v,\mathrm{sylv}}^{(i)}$ as $\begin{bmatrix}t_{v,1}\\\vdots\\t_{v,i-1}\end{bmatrix}$, and define the matrix \( M = \sum_{k=1}^{i-1} t_{v,k} \). Then \( t_{2,v,\mathrm{sylv}}^{(i)} \) is invertible if the matrix \( M \) has no eigenvalue equal to 1.
\end{proposition}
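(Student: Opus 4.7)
The plan is to extract an explicit closed form for $t_{2,v,\mathrm{sylv}}^{(i)}$ directly from the defining Sylvester equation \eqref{tv_sylv}, and then observe that the invertibility condition reduces to a statement about the eigenvalues of $M$.

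First I would read off the second block row of the Sylvester equation \eqref{tv_sylv}. Substituting $s_{v,\mathrm{sylv}}^{(i)} = -\alpha_i I$ and $l_{v,\mathrm{sylv}}^{(i)} = -I$, the second block row becomes
\begin{align*}
-2\mathrm{Re}(\alpha_i)\,L_{v,\mathrm{lyap}}^{(i-1)} t_{1,v,\mathrm{sylv}}^{(i)} + \overline{\alpha}_i t_{2,v,\mathrm{sylv}}^{(i)} + \alpha_i t_{2,v,\mathrm{sylv}}^{(i)} - 2\mathrm{Re}(\alpha_i) I = 0,
\end{align*}
which simplifies (using $\alpha_i+\overline{\alpha}_i = 2\mathrm{Re}(\alpha_i)$) to
\begin{align*}
2\mathrm{Re}(\alpha_i)\bigl(t_{2,v,\mathrm{sylv}}^{(i)} - L_{v,\mathrm{lyap}}^{(i-1)} t_{1,v,\mathrm{sylv}}^{(i)} - I\bigr) = 0.
\end{align*}
Since the shifts satisfy $\mathrm{Re}(\alpha_i)\neq 0$ (ADI shifts lie in $\mathbb{C}_{-}$), this yields the closed form
\begin{align*}
t_{2,v,\mathrm{sylv}}^{(i)} = I + L_{v,\mathrm{lyap}}^{(i-1)} t_{1,v,\mathrm{sylv}}^{(i)}.
\end{align*}

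Next I would use the explicit block structure $L_{v,\mathrm{lyap}}^{(i-1)} = \begin{bmatrix}l_{v,\mathrm{lyap}}^{(1)}&\cdots&l_{v,\mathrm{lyap}}^{(i-1)}\end{bmatrix} = \begin{bmatrix}-I&\cdots&-I\end{bmatrix}$, together with the partition $t_{1,v,\mathrm{sylv}}^{(i)} = \begin{bmatrix} t_{v,1}\\ \vdots\\ t_{v,i-1}\end{bmatrix}$ from the statement. Multiplying out,
\begin{align*}
L_{v,\mathrm{lyap}}^{(i-1)} t_{1,v,\mathrm{sylv}}^{(i)} = -\sum_{k=1}^{i-1} t_{v,k} = -M,
\end{align*}
so $t_{2,v,\mathrm{sylv}}^{(i)} = I - M$.

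Finally, invertibility of $t_{2,v,\mathrm{sylv}}^{(i)} = I-M$ is equivalent to $1$ not being an eigenvalue of $M$, which gives the claim. The invertibility of $T_{v,\mathrm{sylv}}^{(i)}$ then follows from its block upper-triangular recursive structure \eqref{TvTw_sylv}, as noted in the paragraph preceding the proposition. There is no real obstacle here: the only subtlety is justifying the division by $2\mathrm{Re}(\alpha_i)$, which is harmless under the standing assumption that the ADI shifts have nonzero real part.
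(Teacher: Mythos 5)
Your proof is correct and follows essentially the same route as the paper's Appendix E: establish $t_{2,v,\mathrm{sylv}}^{(i)} = I + L_{v,\mathrm{lyap}}^{(i-1)} t_{1,v,\mathrm{sylv}}^{(i)} = I - M$, then note that $I - M$ is invertible iff $1 \notin \lambda(M)$. The only difference is that you re-derive the identity $t_{2,v,\mathrm{sylv}}^{(i)} = I + L_{v,\mathrm{lyap}}^{(i-1)} t_{1,v,\mathrm{sylv}}^{(i)}$ from the second block row of the Sylvester equation \eqref{tv_sylv}, whereas the paper simply recalls it (it is established in Appendix D); your version is slightly more self-contained but otherwise identical in substance.
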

\begin{proof}
The proof is given in Appendix E.
\end{proof}

The invertibility condition in Proposition \ref{tv_sylv_inv} is not restrictive in practice. In our extensive numerical experiments, we never encountered a case where \( T_{v,\mathrm{sylv}}^{(i)} \) was singular. Since \( T_{w,\mathrm{sylv}}^{(i)} \) is the dual of \( T_{v,\mathrm{sylv}}^{(i)} \), its invertibility follows by an analogous argument, which we omit for brevity.

Define  
\[
\tilde{X}_{\mathrm{sylv}}^{(i)} = \big(T_{w,\mathrm{sylv}}^{(i)}\big)^{-*} \big(D_{\mathrm{fadi}}^{(i)}\big)^{-1} \big(T_{v,\mathrm{sylv}}^{(i)}\big)^{-1}.
\]  
It follows directly that  
\[
X_{\mathrm{sylv}}\approx V_{\mathrm{fadi}}^{(i)} D_{\mathrm{fadi}}^{(i)} \big(W_{\mathrm{fadi}}^{(i)}\big)^* 
= V_{\mathrm{lyap}}^{(i)} \big(\tilde{X}_{\mathrm{sylv}}^{(i)}\big)^{-1} \big(W_{\mathrm{lyap}}^{(i)}\big)^*.
\]  
The next proposition shows that \( \tilde{X}_{\mathrm{sylv}}^{(i)} \) can be computed from \( S_{v,\mathrm{lyap}}^{(i)} \), \( L_{v,\mathrm{lyap}}^{(i)} \), \( S_{w,\mathrm{lyap}}^{(i)} \), and \( L_{w,\mathrm{lyap}}^{(i)} \), all of which are determined solely by the ADI shifts \( \alpha_i \) and \( \beta_i \). Consequently, the ADI shifts \( \alpha_i \) and \( \beta_i \) suffice to compute \( \tilde{X}_{\mathrm{sylv}}^{(i)} \).
\begin{proposition}\label{prop_x_sylv}
The following statements are true:
\begin{enumerate}
\item $S_{v,\mathrm{lyap}}^{(i)}=T_{v,\mathrm{sylv}}^{(i)}S_{v,\mathrm{sylv}}^{(i)}\big(T_{v,\mathrm{sylv}}^{(i)}\big)^{-1}$ and $L_{v,\mathrm{lyap}}^{(i)}=L_{v,\mathrm{sylv}}^{(i)}\big(T_{v,\mathrm{sylv}}^{(i)}\big)^{-1}$.
\item $S_{w,\mathrm{lyap}}^{(i)}=T_{w,\mathrm{sylv}}^{(i)}S_{w,\mathrm{sylv}}^{(i)}\big(T_{w,\mathrm{sylv}}^{(i)}\big)^{-1}$ and $L_{w,\mathrm{lyap}}^{(i)}=L_{w,\mathrm{sylv}}^{(i)}\big(T_{w,\mathrm{sylv}}^{(i)}\big)^{-1}$.
  \item $\tilde{X}_{\mathrm{sylv}}^{(i)}$ solves the following Sylvester equation:
\begin{align}
-\big(S_{w,\mathrm{lyap}}^{(i)}\big)^*\tilde{X}_{\mathrm{sylv}}^{(i)}-\tilde{X}_{\mathrm{sylv}}^{(i)}S_{v,\mathrm{lyap}}^{(i)}+(L_{w,\mathrm{lyap}}^{(i)})^\top L_{v,\mathrm{lyap}}^{(i)}=0.\label{X_t_sylv}
\end{align}
\item \label{prop_x_sylv_2}By setting the free parameters $\hat{B}_1^{(i)}=\big(\tilde{X}_{\mathrm{sylv}}^{(i)}\big)^{-1}\big(L_{w,\mathrm{lyap}}^{(i)}\big)^\top$ and $\hat{E}_1^{(i)}=I$, the matrix $\hat{A}_1^{(i)}=\hat{E}_1^{(i)}S_{v,\mathrm{lyap}}^{(i)}-\hat{B}_1^{(i)}L_{v,\mathrm{lyap}}^{(i)}=-\big(\tilde{X}_{\mathrm{sylv}}^{(i)}\big)^{-1}\big(S_{w,\mathrm{lyap}}^{(i)}\big)^*\tilde{X}_{\mathrm{sylv}}^{(i)}$.
\item \label{prop_x_sylv_3}By setting the free parameters $\hat{C}_2^{(i)}=L_{v,\mathrm{lyap}}^{(i)}\big(\tilde{X}_{\mathrm{sylv}}^{(i)}\big)^{-1}$ and $\hat{E}_2^{(i)}=I$, the matrix $\hat{A}_2^{(i)}=\big(S_{w,\mathrm{lyap}}^{(i)}\big)^*\hat{E}_2^{(i)}-\big(L_{w,\mathrm{lyap}}^{(i)}\big)^\top\hat{C}_2^{(i)}=-\tilde{X}_{\mathrm{sylv}}^{(i)}S_{v,\mathrm{lyap}}^{(i)}\big(\tilde{X}_{\mathrm{sylv}}^{(i)}\big)^{-1}$.
\item \label{prop_x_sylv_4}$\big(\tilde{X}_{\mathrm{sylv}}^{(i)}\big)^{-1}$ solves the following projected Sylvester equation:
\begin{align}
\hat{A}_1^{(i)}\big(\tilde{X}_{\mathrm{sylv}}^{(i)}\big)^{-1}\hat{E}_2^{(i)}+\hat{E}_1^{(i)}\big(\tilde{X}_{\mathrm{sylv}}^{(i)}\big)^{-1}\hat{A}_2^{(i)}+\hat{B}_1^{(i)}\hat{C}_2^{(i)}=0.
\end{align}
\end{enumerate}
\end{proposition}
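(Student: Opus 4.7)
The plan is to prove the six parts in order, since each part builds on the previous ones. For parts 1 and 2, I would exploit the fact that both $V_{\mathrm{lyap}}^{(i)}$ and $V_{\mathrm{fadi}}^{(i)}$ solve Sylvester equations with the same left-hand data and that they are related by $V_{\mathrm{fadi}}^{(i)} = V_{\mathrm{lyap}}^{(i)} T_{v,\mathrm{sylv}}^{(i)}$ (established in the preceding theorem). Substituting this identity into $A_1 V_{\mathrm{fadi}}^{(i)} - E_1 V_{\mathrm{fadi}}^{(i)} S_{v,\mathrm{sylv}}^{(i)} + B_1 L_{v,\mathrm{sylv}}^{(i)} = 0$ and subtracting the equation $A_1 V_{\mathrm{lyap}}^{(i)} - E_1 V_{\mathrm{lyap}}^{(i)} S_{v,\mathrm{lyap}}^{(i)} + B_1 L_{v,\mathrm{lyap}}^{(i)} = 0$ post-multiplied by $T_{v,\mathrm{sylv}}^{(i)}$ yields $E_1 V_{\mathrm{lyap}}^{(i)}\bigl(S_{v,\mathrm{lyap}}^{(i)} T_{v,\mathrm{sylv}}^{(i)} - T_{v,\mathrm{sylv}}^{(i)} S_{v,\mathrm{sylv}}^{(i)}\bigr) + B_1\bigl(L_{v,\mathrm{sylv}}^{(i)} - L_{v,\mathrm{lyap}}^{(i)} T_{v,\mathrm{sylv}}^{(i)}\bigr) = 0$, from which both identities in part 1 follow. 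As a sanity check, I would verify the claim inductively using the block structures of $T_{v,\mathrm{sylv}}^{(i)}$, $S_{v,\mathrm{lyap}}^{(i)}$, $S_{v,\mathrm{sylv}}^{(i)}$; the bottom row of the defining Sylvester equation \eqref{tv_sylv} for $t_{v,\mathrm{sylv}}^{(i)}$ already yields $L_{v,\mathrm{lyap}}^{(i-1)} t_{1,v,\mathrm{sylv}}^{(i)} - t_{2,v,\mathrm{sylv}}^{(i)} = -I$, which is precisely the inductive step. Part 2 is established analogously by duality.

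For part 3, I would transfer the Sylvester identity satisfied by $D_{\mathrm{sylv}}^{(i)} = (D_{\mathrm{fadi}}^{(i)})^{-1}$ from Proposition~\ref{prop2}(\ref{5_of_prop2}) through the similarity relations in parts 1--2. Concretely, starting from $(S_{w,\mathrm{sylv}}^{(i)})^* D_{\mathrm{sylv}}^{(i)} + D_{\mathrm{sylv}}^{(i)} S_{v,\mathrm{sylv}}^{(i)} = (L_{w,\mathrm{sylv}}^{(i)})^\top L_{v,\mathrm{sylv}}^{(i)}$, I pre-multiply by $(T_{w,\mathrm{sylv}}^{(i)})^{-*}$ and post-multiply by $(T_{v,\mathrm{sylv}}^{(i)})^{-1}$, and use part 1 to rewrite $S_{v,\mathrm{sylv}}^{(i)} (T_{v,\mathrm{sylv}}^{(i)})^{-1} = (T_{v,\mathrm{sylv}}^{(i)})^{-1} S_{v,\mathrm{lyap}}^{(i)}$, part 2 to rewrite $(T_{w,\mathrm{sylv}}^{(i)})^{-*}(S_{w,\mathrm{sylv}}^{(i)})^* = (S_{w,\mathrm{lyap}}^{(i)})^* (T_{w,\mathrm{sylv}}^{(i)})^{-*}$, and the transpose of the $L$-relation in part 2 (noting $L_{w,\mathrm{sylv}}^{(i)}$ is real, so $(T_{w,\mathrm{sylv}}^{(i)})^{-*}(L_{w,\mathrm{sylv}}^{(i)})^\top = (L_{w,\mathrm{lyap}}^{(i)})^\top$). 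The resulting identity is precisely \eqref{X_t_sylv}. The main obstacle here is the careful bookkeeping of Hermitian transposes and ensuring that the $T_w^{-*}$ factor interacts correctly with the transposed $L_w^\top$ block.

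Parts 4 and 5 are then direct algebraic substitutions: for part 4, expand $\hat{A}_1^{(i)} = S_{v,\mathrm{lyap}}^{(i)} - (\tilde{X}_{\mathrm{sylv}}^{(i)})^{-1}(L_{w,\mathrm{lyap}}^{(i)})^\top L_{v,\mathrm{lyap}}^{(i)}$ and substitute the identity $(L_{w,\mathrm{lyap}}^{(i)})^\top L_{v,\mathrm{lyap}}^{(i)} = (S_{w,\mathrm{lyap}}^{(i)})^* \tilde{X}_{\mathrm{sylv}}^{(i)} + \tilde{X}_{\mathrm{sylv}}^{(i)} S_{v,\mathrm{lyap}}^{(i)}$ from part 3 to collapse the expression to $-(\tilde{X}_{\mathrm{sylv}}^{(i)})^{-1}(S_{w,\mathrm{lyap}}^{(i)})^* \tilde{X}_{\mathrm{sylv}}^{(i)}$; part 5 is the dual computation. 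Finally, for part 6, I would substitute the closed-form expressions for $\hat{A}_1^{(i)}$, $\hat{A}_2^{(i)}$, $\hat{B}_1^{(i)}$, $\hat{C}_2^{(i)}$, $\hat{E}_1^{(i)} = \hat{E}_2^{(i)} = I$ into the projected Sylvester equation and multiply from both sides by $\tilde{X}_{\mathrm{sylv}}^{(i)}$; this reduces the claim to $-(S_{w,\mathrm{lyap}}^{(i)})^* \tilde{X}_{\mathrm{sylv}}^{(i)} - \tilde{X}_{\mathrm{sylv}}^{(i)} S_{v,\mathrm{lyap}}^{(i)} + (L_{w,\mathrm{lyap}}^{(i)})^\top L_{v,\mathrm{lyap}}^{(i)} = 0$, which is exactly \eqref{X_t_sylv} from part 3. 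Invertibility of $\tilde{X}_{\mathrm{sylv}}^{(i)}$ follows from the invertibility of $T_{v,\mathrm{sylv}}^{(i)}$, $T_{w,\mathrm{sylv}}^{(i)}$ (via Proposition~\ref{tv_sylv_inv}) and $D_{\mathrm{fadi}}^{(i)}$.
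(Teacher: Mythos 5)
Your proof is correct and follows essentially the same route as the paper's Appendix~F: post/pre-multiply the FADI Sylvester relations by $T_{v,\mathrm{sylv}}^{(i)}$, $T_{w,\mathrm{sylv}}^{(i)}$ and their inverses/adjoints and compare with the CF-ADI Sylvester relations for parts 1--3, then reduce parts 4--6 to part 3 by the same substitutions the paper borrows from Proposition~\ref{prop1}. The only differences are cosmetic improvements in rigor—your explicit subtraction argument and the observation that $L_{w,\mathrm{sylv}}^{(i)}$, $L_{w,\mathrm{lyap}}^{(i)}$ are real, which the paper uses silently when it replaces $\big(T_{w,\mathrm{sylv}}^{(i)}\big)^{-*}\big(L_{w,\mathrm{sylv}}^{(i)}\big)^\top$ by $\big(L_{w,\mathrm{lyap}}^{(i)}\big)^\top$.
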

\begin{proof}
The proof is given in Appendix F.
\end{proof}
It follows directly that
\begin{align}
B_{\perp,i}^{\mathrm{sylv}}&=B_1-E_1V_{\mathrm{fadi}}^{(i)}D_{\mathrm{fadi}}^{(i)}\big(L_{w,\mathrm{sylv}}^{(i)}\big)^\top=B_1-E_1V_{\mathrm{lyap}}^{(i)}\big(\tilde{X}_{\mathrm{sylv}}^{(i)}\big)^{-1}\big(L_{w,\mathrm{lyap}}^{(i)}\big)^\top,\\
C_{\perp,i}^{\mathrm{sylv}}&=C_2-L_{v,\mathrm{sylv}}^{(i)}D_{\mathrm{fadi}}^{(i)}\big(W_{\mathrm{fadi}}^{(i)}\big)^*E_2=C_2-L_{v,\mathrm{lyap}}^{(i)}\big(\tilde{X}_{\mathrm{sylv}}^{(i)}\big)^{-1}\big(W_{\mathrm{lyap}}^{(i)}\big)^*E_2.
\end{align}
\begin{remark}
From a control theory perspective, Proposition \ref{prop_x_sylv} tells us that the (implicit) free parameters $\hat{B}_1^{(i)}$ and $\hat{C}_2^{(i)}$ in CF-ADI (identified in \cite{wolf2016adi}) can always be reset to place the poles of $\hat{A}_1^{(i)}$ and $\hat{A}_2^{(i)}$ at $\beta_i$ and $\alpha_i$, respectively, since the pairs $\big(-S_{v,\mathrm{lyap}}^{(i)},L_{v,\mathrm{lyap}}^{(i)}\big)$ and $\big(-S_{w,\mathrm{lyap}}^{(i)},L_{w,\mathrm{lyap}}^{(i)}\big)$ are always observable, satisfying the pole-placement condition \cite{rugh}. By placing the poles at these locations in the $s$-plane, the FADI-based approximation $X_{\mathrm{sylv}}\approx V_{\mathrm{fadi}}^{(i)} D_{\mathrm{fadi}}^{(i)} \big(W_{\mathrm{fadi}}^{(i)}\big)^* $ can be obtained using CF-ADI.
\end{remark}
\subsubsection{Unifying CF-ADI and RADI}
Define \(T_{v,\mathrm{ricc}}^{(i)}\) as 
\begin{align}
T_{v,\mathrm{ricc}}^{(i)}&=\begin{bmatrix}T_{v,\mathrm{ricc}}^{(i-1)}&t_{1,v,\mathrm{ricc}}^{(i)}\\0&t_{2,v,\mathrm{ricc}}^{(i)}\end{bmatrix},\label{Tv_ricc}
\end{align}where
\begin{align}
t_{v,\mathrm{ricc}}^{(i)}&=\begin{bmatrix}t_{1,v,\mathrm{ricc}}^{(i)}\\t_{2,v,\mathrm{ricc}}^{(i)}\end{bmatrix}\nonumber\\
&=\begin{bmatrix}\hat{A}_{1,i-1}^{\mathrm{lyap}}+\alpha_iI-T_{v,\mathrm{ricc}}^{(i-1)}\hat{P}_{\mathrm{ricc}}^{(i-1)}(V_{\mathrm{radi}}^{(i-1)})^*C_1^\top C_1V_{\mathrm{lyap}}^{(i-1)}&-T_{v,\mathrm{ricc}}^{(i-1)}\hat{P}_{\mathrm{ricc}}^{(i-1)}(V_{\mathrm{radi}}^{(i-1)})^*C_1^\top C_1v_{i}^{\mathrm{lyap}}\\-2\mathrm{Re}(\alpha_i)L_{v,\mathrm{lyap}}^{(i-1)}&2\mathrm{Re}(\alpha_i)I\end{bmatrix}^{-1}\nonumber\\
&\hspace*{9cm}\begin{bmatrix}\hat{B}_{1,i-1}^{\mathrm{lyap}}-T_{v,\mathrm{ricc}}^{(i-1)}\hat{P}_{\mathrm{ricc}}^{(i-1)}(L_{v,\mathrm{lyap}}^{(i-1)})^\top\\2\mathrm{Re}(\alpha_i)I\end{bmatrix}.
\end{align}
$t_{v,\mathrm{ricc}}^{(i)}$ uniquely solves the Sylvester equation
\begin{align}
\begin{bmatrix}\hat{A}_{1,i-1}^{\mathrm{lyap}}-T_{v,\mathrm{ricc}}^{(i-1)}\hat{P}_{\mathrm{ricc}}^{(i-1)}(V_{\mathrm{radi}}^{(i-1)})^*C_1^\top C_1V_{\mathrm{lyap}}^{(i-1)}&-T_{v,\mathrm{ricc}}^{(i-1)}\hat{P}_{\mathrm{ricc}}^{(i-1)}(V_{\mathrm{radi}}^{(i-1)})^*C_1^\top C_1v_{i}^{\mathrm{lyap}}\\-2\mathrm{Re}(\alpha_i)L_{v,\mathrm{lyap}}^{(i-1)}&\overline{\alpha}_iI\end{bmatrix}\begin{bmatrix}t_{1,v,\mathrm{ricc}}^{(i)}\\t_{2,v,\mathrm{ricc}}^{(i)}\end{bmatrix}\nonumber\\
-\begin{bmatrix}t_{1,v,\mathrm{ricc}}^{(i)}\\t_{2,v,\mathrm{ricc}}^{(i)}\end{bmatrix}s_{v,\mathrm{lyap}}^{(i)}+\begin{bmatrix}\hat{B}_{1,i-1}^{\mathrm{lyap}}-T_{v,\mathrm{ricc}}^{(i-1)}\hat{P}_{\mathrm{ricc}}^{(i-1)}(L_{v,\mathrm{lyap}}^{(i-1)})^\top\\2\mathrm{Re}(\alpha_i)I\end{bmatrix}l_{v,\mathrm{lyap}}^{(i)}=0.
\label{tv_ricc}\end{align}
Rather than constructing \(T_{v,\mathrm{ricc}}^{(i)}\) recursively, one may also compute it directly by solving the Sylvester equation  
\begin{align}
\hat{A}_{1,i}^{\mathrm{lyap}}T_{v,\mathrm{ricc}}^{(i)}-T_{v,\mathrm{ricc}}^{(i)}S_{v,\mathrm{ricc}}^{(i)}+\hat{B}_{1,i}^{\mathrm{lyap}}L_{v,\mathrm{ricc}}^{(i)}&=0.\label{Tv_ricc_sylv}
\end{align}
However, unlike \(S_{v,\mathrm{sylv}}^{(i)}\), which depends only on the ADI shifts \(\alpha_i\) and \(\beta_i\), the matrix \(S_{v,\mathrm{ricc}}^{(i)}\) depends on \(v_i^{\mathrm{ricc}}\) itself. Consequently, \eqref{Tv_ricc_sylv} cannot be used to compute \(T_{v,\mathrm{ricc}}^{(i)}\) without already knowing \(v_i^{\mathrm{ricc}}\).

The shifted linear solves \(v_i^{\mathrm{ricc}}\) in Step (\ref{radi_step6}) of Algorithm \ref{radi_alg} may at first appear quite different from \(v_i^{\mathrm{lyap}}\). However, the Sylvester equations \eqref{radi_sylv1} and \eqref{radi_sylv2} share a very similar structure with \eqref{cfadi_v_sylv1} and \eqref{cfadi_v_sylv2}, suggesting that \(v_i^{\mathrm{ricc}}\) can in fact be obtained from \(V_{\mathrm{lyap}}^{(i)}\). This is confirmed by the following theorem.
\begin{theorem}\label{th_uni_radi}
Let $\{\alpha_i\}_{i=1}^{k}\in\mathbb{C}_{-}$ be the ADI shifts used in CF-ADI to approximate \eqref{lyap_p}, and assume all variables in Algorithm \ref{radi_alg} are well defined. Then $v_i^{\mathrm{ricc}}$ and $V_{\mathrm{radi}}^{(i)}$ in RADI for the same shifts can be extracted from $V_{\mathrm{lyap}}^{(i)}$ as:
\begin{align}
v_i^{\mathrm{ricc}}&=V_{\mathrm{lyap}}^{(i)}t_{v,\mathrm{ricc}}^{(i)},& V_{\mathrm{radi}}^{(i)}&=V_{\mathrm{lyap}}^{(i)}T_{v,\mathrm{ricc}}^{(i)}.\nonumber
\end{align}
\end{theorem}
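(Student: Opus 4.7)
The plan is to induct on $i$, establishing the stronger joint statement $V_{\mathrm{radi}}^{(i)}=V_{\mathrm{lyap}}^{(i)}T_{v,\mathrm{ricc}}^{(i)}$, from which $v_i^{\mathrm{ricc}}=V_{\mathrm{lyap}}^{(i)}t_{v,\mathrm{ricc}}^{(i)}$ follows by reading off the last block column via $V_{\mathrm{radi}}^{(i)}=[V_{\mathrm{radi}}^{(i-1)},\,v_i^{\mathrm{ricc}}]$, $V_{\mathrm{lyap}}^{(i)}=[V_{\mathrm{lyap}}^{(i-1)},\,v_i^{\mathrm{lyap}}]$, and the block form \eqref{Tv_ricc}. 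For the base case $i=1$, $V_{\mathrm{radi}}^{(0)}$ and $T_{v,\mathrm{ricc}}^{(0)}$ are empty, so the Riccati correction term in Step~\ref{radi_step6} vanishes and the RADI linear solve collapses to $(A_1+\alpha_1E_1)v_1^{\mathrm{ricc}}=B_1$, the same system as the CF-ADI solve \eqref{lin_solve_1} at $i=1$, giving $v_1^{\mathrm{ricc}}=v_1^{\mathrm{lyap}}$; meanwhile \eqref{tv_ricc} reduces to $2\mathrm{Re}(\alpha_1)\,t_{v,\mathrm{ricc}}^{(1)}=2\mathrm{Re}(\alpha_1)I$, so $t_{v,\mathrm{ricc}}^{(1)}=I$ and the claim holds.

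For the inductive step, the reduction above leaves the single identity $v_i^{\mathrm{ricc}}=V_{\mathrm{lyap}}^{(i-1)}t_{1,v,\mathrm{ricc}}^{(i)}+v_i^{\mathrm{lyap}}t_{2,v,\mathrm{ricc}}^{(i)}$ to establish under the hypothesis $V_{\mathrm{radi}}^{(i-1)}=V_{\mathrm{lyap}}^{(i-1)}T_{v,\mathrm{ricc}}^{(i-1)}$. Since $v_i^{\mathrm{ricc}}$ is the unique solution of the RADI linear system (assumed well-defined), it suffices to verify that $V_{\mathrm{lyap}}^{(i)}t_{v,\mathrm{ricc}}^{(i)}$ satisfies the same system. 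I would apply the operator $\mathcal{L}:=A_1-E_1V_{\mathrm{radi}}^{(i-1)}\hat{P}_{\mathrm{ricc}}^{(i-1)}(V_{\mathrm{radi}}^{(i-1)})^*C_1^\top C_1+\alpha_iE_1$ to $V_{\mathrm{lyap}}^{(i)}t_{v,\mathrm{ricc}}^{(i)}$, splitting it into its action on $V_{\mathrm{lyap}}^{(i-1)}$ and on $v_i^{\mathrm{lyap}}$. Using \eqref{cfadi_v_sylv2} to rewrite $A_1V_{\mathrm{lyap}}^{(i-1)}$ as $E_1V_{\mathrm{lyap}}^{(i-1)}\hat{A}_{1,i-1}^{\mathrm{lyap}}-B_{\perp,i-1}^{\mathrm{lyap}}L_{v,\mathrm{lyap}}^{(i-1)}$, using \eqref{lin_solve_1} to rewrite $(A_1+\alpha_iE_1)v_i^{\mathrm{lyap}}$ as $B_{\perp,i-1}^{\mathrm{lyap}}$, and substituting the inductive hypothesis together with $C_1V_{\mathrm{radi}}^{(i-1)}=\hat{C}_{1,i-1}^{\mathrm{lyap}}T_{v,\mathrm{ricc}}^{(i-1)}$ into the Riccati correction, the result decomposes into an $E_1V_{\mathrm{lyap}}^{(i-1)}$-range term whose coefficient is exactly the first block row of the system \eqref{tv_ricc} applied to $t_{v,\mathrm{ricc}}^{(i)}$, plus a $B_{\perp,i-1}^{\mathrm{lyap}}$-range term with coefficient $-L_{v,\mathrm{lyap}}^{(i-1)}t_{1,v,\mathrm{ricc}}^{(i)}+t_{2,v,\mathrm{ricc}}^{(i)}$. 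The defining equation \eqref{tv_ricc} collapses these coefficients to $\hat{B}_{1,i-1}^{\mathrm{lyap}}-T_{v,\mathrm{ricc}}^{(i-1)}\hat{P}_{\mathrm{ricc}}^{(i-1)}(L_{v,\mathrm{lyap}}^{(i-1)})^\top$ and $I$ respectively. Finally, the identity $B_{\perp,i-1}^{\mathrm{lyap}}=B_1-E_1V_{\mathrm{lyap}}^{(i-1)}\hat{B}_{1,i-1}^{\mathrm{lyap}}$, combined with item 6 of Theorem~\ref{th_radi} and $L_{v,\mathrm{ricc}}^{(i-1)}=L_{v,\mathrm{lyap}}^{(i-1)}$, recovers $B_{\perp,i-1}^{\mathrm{ricc}}$ exactly, completing the induction.

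The main obstacle is not any single algebraic manipulation but the bookkeeping needed to confirm that the $2\times 2$ block coefficient matrix in \eqref{tv_ricc}—which at first glance looks ad hoc—is precisely the representation of $\mathcal{L}$ in the CF-ADI basis $[V_{\mathrm{lyap}}^{(i-1)},\,v_i^{\mathrm{lyap}}]$. The nonlinear Riccati correction $-E_1V_{\mathrm{radi}}^{(i-1)}\hat{P}_{\mathrm{ricc}}^{(i-1)}(V_{\mathrm{radi}}^{(i-1)})^*C_1^\top C_1$, which distinguishes RADI from CF-ADI and FADI, is what produces the off-diagonal blocks $-T_{v,\mathrm{ricc}}^{(i-1)}\hat{P}_{\mathrm{ricc}}^{(i-1)}(V_{\mathrm{radi}}^{(i-1)})^*C_1^\top C_1V_{\mathrm{lyap}}^{(i-1)}$ and $-T_{v,\mathrm{ricc}}^{(i-1)}\hat{P}_{\mathrm{ricc}}^{(i-1)}(V_{\mathrm{radi}}^{(i-1)})^*C_1^\top C_1v_i^{\mathrm{lyap}}$ of that representation once $V_{\mathrm{radi}}^{(i-1)}$ is replaced by $V_{\mathrm{lyap}}^{(i-1)}T_{v,\mathrm{ricc}}^{(i-1)}$ via the inductive hypothesis. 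Keeping the factors $T_{v,\mathrm{ricc}}^{(i-1)}$, $\hat{P}_{\mathrm{ricc}}^{(i-1)}$, and $\hat{C}_{1,i-1}^{\mathrm{lyap}}$ properly aligned through this substitution—and matching the resulting structure against the first block row of \eqref{tv_ricc}—is the only genuinely delicate step.
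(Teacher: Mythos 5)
Your proposal follows essentially the same route as the paper's Appendix G: induct on $i$, verify that $V_{\mathrm{lyap}}^{(i)}t_{v,\mathrm{ricc}}^{(i)}$ satisfies the RADI shifted linear system for $v_i^{\mathrm{ricc}}$ (invoking well-posedness for uniqueness), split the action of the operator into its pieces on $V_{\mathrm{lyap}}^{(i-1)}$ and $v_i^{\mathrm{lyap}}$ via the Sylvester relation \eqref{cfadi_v_sylv2} and the CF-ADI solve \eqref{lin_solve_1}, and collapse via the defining relation \eqref{tv_ricc} to land on $B_{\perp,i-1}^{\mathrm{ricc}}$ using item 6 of Theorem~\ref{th_radi} and the inductive hypothesis. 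Your observation that the $2\times2$ block in \eqref{tv_ricc} is exactly the representation of the RADI operator in the basis $[V_{\mathrm{lyap}}^{(i-1)},\,v_i^{\mathrm{lyap}}]$ is an accurate and useful summary of what the paper's computation is implicitly doing.
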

\begin{proof}
The proof is given in Appendix G.
\end{proof}
Theorem \ref{th_uni_radi} reveals that computing \(v_i^{\mathrm{ricc}}\) via the SMW formula—as done in \cite{benner2018radi}, which increases the number of columns in the right-hand side of the linear solves from \(m_1\) to \(m_1 + p_1\)—is not strictly necessary. Instead, \(v_i^{\mathrm{ricc}}\) can be fetched from \(V_{\mathrm{lyap}}^{(i)}\) by computing \(t_{v,\mathrm{ricc}}^{(i)}\), which is inexpensive. Thus, by using the result of Theorem \ref{th_uni_radi}, the computation of \(v_i^{\mathrm{ricc}}\) in RADI could potentially be made as efficient as that of \(v_i^{\mathrm{lyap}}\) in CF-ADI. The benefit of using the SMW formula—rather than avoiding it as in \cite{lin2015new}, as noted in \cite{benner2018radi}—is that the Kalman filter gain  
\[
K_{\mathrm{kalman}}^{(i)} = E_1V_{\mathrm{radi}}^{(i)}\hat{P}_{\mathrm{radi}}^{(i)}\big(V_{\mathrm{radi}}^{(i)}\big)^*C_1^\top
\]  
can be updated recursively via  
\[
K_{\mathrm{kalman}}^{(i)} = \sum_{l=1}^{i} E_1 v_l^{\mathrm{ricc}} \hat{p}^{(l)}_{\mathrm{ricc}} (v_l^{\mathrm{ricc}})^* C_1^\top.
\]  
Consequently, if the sole aim of solving the Riccati equation \eqref{ricc_p} is to compute the Kalman filter gain, storing \(V_{\mathrm{radi}}^{(i)}\) and \(\hat{P}_{\mathrm{radi}}^{(i)}\) is unnecessary, which reduces memory usage. It is worth noting that Theorem \ref{th_uni_radi} shows the SMW formula can be avoided while still updating the Kalman gain recursively without storing \(V_{\mathrm{radi}}^{(i)}\) and \(\hat{P}_{\mathrm{radi}}^{(i)}\). However, implementing the RADI algorithm with CF-ADI still requires storing \( V_{\mathrm{lyap}}^{(i)} \).

Due to the block triangular recursive structure of \(T_{v,\mathrm{ricc}}^{(i)}\), the invertibility of \(T_{v,\mathrm{ricc}}^{(i)}\) is guaranteed if \(t_{2,v,\mathrm{ricc}}^{(i)}\) is invertible. The invertibility condition can be derived similarly to Proposition \ref{tv_sylv_inv}, which is omitted here for brevity.

Define \(\tilde{P}_{\mathrm{ricc}}^{(i)}\) as  
\[
\tilde{P}_{\mathrm{ricc}}^{(i)}=T_{v,\mathrm{ricc}}^{(i)}\hat{P}_{\mathrm{ricc}}^{(i)}\big(T_{v,\mathrm{ricc}}^{(i)}\big)^*.
\]
It follows directly that
\[
P_{\mathrm{ricc}}\approx V_{\mathrm{radi}}^{(i)}\hat{P}_{\mathrm{ricc}}^{(i)}\big(V_{\mathrm{radi}}^{(i)}\big)^*=V_{\mathrm{lyap}}^{(i)}\tilde{P}_{\mathrm{ricc}}^{(i)}\big(V_{\mathrm{lyap}}^{(i)}\big)^*.
\]
\begin{theorem}\label{prop_pt_ricc}
Assuming the invertibility of \(T_{v,\mathrm{ricc}}^{(i)}\), the following statements hold:
\begin{enumerate}
  \item $S_{v,\mathrm{lyap}}^{(i)}=T_{v,\mathrm{ricc}}^{(i)}S_{v,\mathrm{ricc}}^{(i)}\big(T_{v,\mathrm{ricc}}^{(i)}\big)^{-1}$ and $L_{v,\mathrm{lyap}}^{(i)}=L_{v,\mathrm{ricc}}^{(i)}\big(T_{v,\mathrm{ricc}}^{(i)}\big)^{-1}$.
\item $\big(\tilde{P}_{\mathrm{ricc}}^{(i)}\big)^{-1}$ solves the following Lyapunov equation:
\begin{align}
-\big(S_{v,\mathrm{lyap}}^{(i)}\big)^*\big(\tilde{P}_{\mathrm{ricc}}^{(i)}\big)^{-1}-\big(\tilde{P}_{\mathrm{ricc}}^{(i)}\big)^{-1}S_{v,\mathrm{lyap}}^{(i)}+\big(L_{v,\mathrm{lyap}}^{(i)}\big)^\top L_{v,\mathrm{lyap}}^{(i)}+\big(V_{\mathrm{lyap}}^{(i)}\big)^*C_1^\top C_1V_{\mathrm{lyap}}^{(i)}=0.\label{P_t_ricc}
\end{align}
\item \label{prop_pt_ricc_3}By setting the free parameters $\hat{B}_1^{(i)}=\tilde{P}_{\mathrm{ricc}}^{(i)}\big(L_{v,\mathrm{lyap}}^{(i)}\big)^\top$ and $\hat{E}_1^{(i)}=I$, the matrix $\hat{A}_1^{(i)}=\hat{E}_1^{(i)}S_{v,\mathrm{lyap}}^{(i)}-\hat{B}_1^{(i)}L_{v,\mathrm{lyap}}^{(i)}=
    \tilde{P}_{\mathrm{ricc}}^{(i)}\Big(-\big(S_{v,\mathrm{lyap}}^{(i)}\big)^*+\big(\hat{C}_1^{(i)}\big)^*\hat{C}_1^{(i)}\Big)\big(\tilde{P}_{\mathrm{ricc}}^{(i)}\big)^{-1}$, wherein $\hat{C}_1^{(i)}=C_1V_{\mathrm{lyap}}^{(i)}$.
\item \label{prop_pt_ricc_4}$\tilde{P}_{\mathrm{ricc}}^{(i)}$ solves the following projected Riccati equation:
\begin{align}
\hat{A}_1^{(i)}\tilde{P}_{\mathrm{ricc}}^{(i)}\big(\hat{E}_1^{(i)})^\top+\hat{E}_1^{(i)}\tilde{P}_{\mathrm{ricc}}^{(i)}\big(\hat{A}_1^{(i)}\big)^*+\hat{B}_1^{(i)}\big(\hat{B}_1^{(i)}\big)^*
-\hat{E}_1^{(i)}\tilde{P}_{\mathrm{ricc}}^{(i)}\big(\hat{C}_1^{(i)}\big)^*\hat{C}_1^{(i)}\tilde{P}_{\mathrm{ricc}}^{(i)}\big(\hat{E}_1^{(i)})^\top=0.\end{align}
\end{enumerate}
\end{theorem}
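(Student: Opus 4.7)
The plan is to prove the four items sequentially, with each building on the previous, using the factorization $V_{\mathrm{radi}}^{(i)} = V_{\mathrm{lyap}}^{(i)} T_{v,\mathrm{ricc}}^{(i)}$ from Theorem~\ref{th_uni_radi} as the central bridge between the RADI-centered quantities and the CF-ADI-centered ones.

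For Item~1, I would argue by induction on $i$. The base case $i=1$ collapses to $T_{v,\mathrm{ricc}}^{(1)} = I$, $S_{v,\mathrm{lyap}}^{(1)} = S_{v,\mathrm{ricc}}^{(1)} = -\alpha_1 I$, and $L_{v,\mathrm{lyap}}^{(1)} = L_{v,\mathrm{ricc}}^{(1)} = -I$, so both claimed relations are trivial. For the inductive step, exploit the $2\times 2$ block-triangular structures of $T_{v,\mathrm{ricc}}^{(i)}$ in \eqref{Tv_ricc}, of $S_{v,\mathrm{lyap}}^{(i)}$, and of $S_{v,\mathrm{ricc}}^{(i)}$, to split $T_{v,\mathrm{ricc}}^{(i)} S_{v,\mathrm{ricc}}^{(i)} = S_{v,\mathrm{lyap}}^{(i)} T_{v,\mathrm{ricc}}^{(i)}$ into four block equations: the $(1,1)$ block is the induction hypothesis, the $(2,1)$ block is trivially zero, the $(2,2)$ block holds since $s_{v,\mathrm{lyap}}^{(i)} = s_{v,\mathrm{ricc}}^{(i)} = -\alpha_i I$ commutes with $t_{2,v,\mathrm{ricc}}^{(i)}$, and the $(1,2)$ block reduces exactly to the defining Sylvester equation \eqref{tv_ricc} for $t_{v,\mathrm{ricc}}^{(i)}$. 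The companion relation $L_{v,\mathrm{lyap}}^{(i)} = L_{v,\mathrm{ricc}}^{(i)}(T_{v,\mathrm{ricc}}^{(i)})^{-1}$ is checked analogously by induction on the block columns.

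For Item~2, I would apply the congruence transformation by $(T_{v,\mathrm{ricc}}^{(i)})^{-*}$ on the left and $(T_{v,\mathrm{ricc}}^{(i)})^{-1}$ on the right to the Lyapunov equation \eqref{lyap_p_ricc_recc} for $(\hat{P}_{\mathrm{ricc}}^{(i)})^{-1}$. Item~1 lets me rewrite $(T_{v,\mathrm{ricc}}^{(i)})^{-*}(S_{v,\mathrm{ricc}}^{(i)})^* = (S_{v,\mathrm{lyap}}^{(i)})^* (T_{v,\mathrm{ricc}}^{(i)})^{-*}$ and $L_{v,\mathrm{ricc}}^{(i)}(T_{v,\mathrm{ricc}}^{(i)})^{-1} = L_{v,\mathrm{lyap}}^{(i)}$, while the identity $\hat{C}_1^{(i)} = C_1 V_{\mathrm{radi}}^{(i)} = C_1 V_{\mathrm{lyap}}^{(i)} T_{v,\mathrm{ricc}}^{(i)}$ gives $(T_{v,\mathrm{ricc}}^{(i)})^{-*}(\hat{C}_1^{(i)})^*\hat{C}_1^{(i)}(T_{v,\mathrm{ricc}}^{(i)})^{-1} = (V_{\mathrm{lyap}}^{(i)})^* C_1^\top C_1 V_{\mathrm{lyap}}^{(i)}$. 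Recognizing the inner factor $(T_{v,\mathrm{ricc}}^{(i)})^{-*}(\hat{P}_{\mathrm{ricc}}^{(i)})^{-1}(T_{v,\mathrm{ricc}}^{(i)})^{-1}$ as $(\tilde{P}_{\mathrm{ricc}}^{(i)})^{-1}$ assembles \eqref{P_t_ricc}. For Item~3, substituting $\hat{E}_1^{(i)} = I$ and $\hat{B}_1^{(i)} = \tilde{P}_{\mathrm{ricc}}^{(i)}(L_{v,\mathrm{lyap}}^{(i)})^\top$ into $\hat{A}_1^{(i)} = \hat{E}_1^{(i)} S_{v,\mathrm{lyap}}^{(i)} - \hat{B}_1^{(i)} L_{v,\mathrm{lyap}}^{(i)}$ yields $\hat{A}_1^{(i)} = S_{v,\mathrm{lyap}}^{(i)} - \tilde{P}_{\mathrm{ricc}}^{(i)}(L_{v,\mathrm{lyap}}^{(i)})^\top L_{v,\mathrm{lyap}}^{(i)}$, and then \eqref{P_t_ricc} is used to eliminate $(L_{v,\mathrm{lyap}}^{(i)})^\top L_{v,\mathrm{lyap}}^{(i)}$ in favor of the remaining terms, producing the claimed form (with $\hat{C}_1^{(i)} = C_1 V_{\mathrm{lyap}}^{(i)}$).

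For Item~4, I would exploit that $\tilde{P}_{\mathrm{ricc}}^{(i)}$ is Hermitian (as a congruence of the Hermitian $\hat{P}_{\mathrm{ricc}}^{(i)}$), and compute $\hat{A}_1^{(i)}\tilde{P}_{\mathrm{ricc}}^{(i)} + \tilde{P}_{\mathrm{ricc}}^{(i)}(\hat{A}_1^{(i)})^*$ using the representation from Item~3. Adding $\hat{B}_1^{(i)}(\hat{B}_1^{(i)})^* = \tilde{P}_{\mathrm{ricc}}^{(i)}(L_{v,\mathrm{lyap}}^{(i)})^\top L_{v,\mathrm{lyap}}^{(i)} \tilde{P}_{\mathrm{ricc}}^{(i)}$ and subtracting $\tilde{P}_{\mathrm{ricc}}^{(i)}(\hat{C}_1^{(i)})^*\hat{C}_1^{(i)}\tilde{P}_{\mathrm{ricc}}^{(i)}$, the cross terms involving $S_{v,\mathrm{lyap}}^{(i)}$, $(\hat{C}_1^{(i)})^* \hat{C}_1^{(i)}$, and $(L_{v,\mathrm{lyap}}^{(i)})^\top L_{v,\mathrm{lyap}}^{(i)}$ collapse into $\tilde{P}_{\mathrm{ricc}}^{(i)}$ times the Lyapunov equation \eqref{P_t_ricc} times $\tilde{P}_{\mathrm{ricc}}^{(i)}$, which vanishes by Item~2. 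The main obstacle will be the $(1,2)$-block verification in Item~1, because the off-diagonal block of $S_{v,\mathrm{ricc}}^{(i)}$ contains the extra $(\hat{C}_1^{(i-1)})^*\hat{c}_1^{(i)}$ contribution absent from $S_{v,\mathrm{lyap}}^{(i)}$, and reconciling it against the defining equation for $t_{v,\mathrm{ricc}}^{(i)}$ requires careful matching; once Item~1 is in hand, Items~2–4 are essentially algebraic rearrangement.
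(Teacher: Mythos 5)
Your proposal is correct overall, but for Item~1 it takes a genuinely different route from the paper. The paper's Appendix~H proves Item~1 globally rather than blockwise: it substitutes the identity $V_{\mathrm{radi}}^{(i)} = V_{\mathrm{lyap}}^{(i)}T_{v,\mathrm{ricc}}^{(i)}$ from Theorem~\ref{th_uni_radi} into the Sylvester equation \eqref{radi_sylv1}, post-multiplies by $\big(T_{v,\mathrm{ricc}}^{(i)}\big)^{-1}$, and then argues by ``uniqueness of the solution to \eqref{cfadi_v_sylv1}'' that the resulting coefficient pair $\big(T_{v,\mathrm{ricc}}^{(i)}S_{v,\mathrm{ricc}}^{(i)}(T_{v,\mathrm{ricc}}^{(i)})^{-1},\,L_{v,\mathrm{ricc}}^{(i)}(T_{v,\mathrm{ricc}}^{(i)})^{-1}\big)$ must coincide with $\big(S_{v,\mathrm{lyap}}^{(i)},\,L_{v,\mathrm{lyap}}^{(i)}\big)$. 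Your proposal avoids this identifiability step and instead verifies $T_{v,\mathrm{ricc}}^{(i)}S_{v,\mathrm{ricc}}^{(i)} = S_{v,\mathrm{lyap}}^{(i)}T_{v,\mathrm{ricc}}^{(i)}$ and $L_{v,\mathrm{lyap}}^{(i)}T_{v,\mathrm{ricc}}^{(i)} = L_{v,\mathrm{ricc}}^{(i)}$ directly by block induction, with the $(1,2)$-block reducing to the defining Sylvester equation \eqref{tv_ricc} for $t_{v,\mathrm{ricc}}^{(i)}$ (and, for the $L$-relation, to $t_{2,v,\mathrm{ricc}}^{(i)} = I + L_{v,\mathrm{lyap}}^{(i-1)}t_{1,v,\mathrm{ricc}}^{(i)}$ as established in Appendix~G). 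Your route is more elementary and does not hinge on recovering Sylvester coefficients from a solution; the tradeoff is the blockwise bookkeeping you correctly flag as the main burden, particularly the reconciliation of the $(\hat{C}_1^{(i-1)})^*\hat{c}_1^{(i)}$ term in the $(1,2)$-block of $S_{v,\mathrm{ricc}}^{(i)}$ using $\hat{c}_1^{(i)} = C_1V_{\mathrm{lyap}}^{(i)}t_{v,\mathrm{ricc}}^{(i)}$. For Items~2--4 your proposal matches the paper: Item~2 is the congruence of \eqref{lyap_p_ricc_recc} by $(T_{v,\mathrm{ricc}}^{(i)})^{-*}$ and $(T_{v,\mathrm{ricc}}^{(i)})^{-1}$ (the paper carries this out explicitly in Appendix~H), and Items~3--4 are the same algebraic rearrangements the paper invokes by cross-reference to parts \ref{th_radi_2} and \ref{th_radi_3} of Theorem~\ref{th_radi}.
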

\begin{proof}
The proof is given in Appendix H.
\end{proof}
It follows directly that
\begin{align}
B_{\perp,i}^{\mathrm{ricc}}&=B_1-E_1V_{\mathrm{radi}}^{(i)}\hat{P}_{\mathrm{ricc}}^{(i)}\big(L_{v,\mathrm{ricc}}^{(i)}\big)^\top=B_1-E_1V_{\mathrm{lyap}}^{(i)}\tilde{P}_{\mathrm{ricc}}^{(i)}\big(L_{v,\mathrm{lyap}}^{(i)}\big)^\top.
\end{align}
\begin{remark}
Again, from a control theory perspective, Theorem \ref{prop_pt_ricc} tells us that the (implicit) free parameter \(\hat{B}_1^{(i)}\) in CF-ADI can always be reset to place the poles of \(\hat{A}_1^{(i)} - \tilde{P}_{\mathrm{ricc}}^{(i)} \big(\hat{C}_1^{(i)}\big)^* \hat{C}_1^{(i)}\) at \(\overline{\alpha}_i\), since the pair \(\big(-S_{v,\mathrm{lyap}}^{(i)}, L_{v,\mathrm{lyap}}^{(i)}\big)\) is always observable, satisfying the pole-placement condition. By placing the poles at these locations in the \(s\)-plane, the RADI-based approximation \(V_{\mathrm{radi}}^{(i)} \hat{P}_{\mathrm{ricc}}^{(i)} \big(V_{\mathrm{radi}}^{(i)}\big)^*\) can be obtained using CF-ADI.
\end{remark}
\subsubsection{Computing the Remaining Equations with CF-ADI}
The remaining linear matrix equations can first be rewritten as standard Lyapunov and Riccati equations and then solved using CF-ADI, as shown in the sequel.

Let us define \(A_{1,\mathrm{mp}}\) and \(B_{1,\mathrm{mp}}\) as $A_{1,\mathrm{mp}}=A_1-B_1D_1^{-1}C_1$ and $B_{1,\mathrm{mp}}=B_1D_1^{-1}$.
Then the Lyapunov equation \eqref{lyap_p_mp} can be rewritten as
\begin{align}
A_{1,\mathrm{mp}}P_{\mathrm{mp}}E_1^\top+E_1P_{\mathrm{mp}}A_{1,\mathrm{mp}}^\top+B_{1,\mathrm{mp}}B_{1,\mathrm{mp}}^\top=0.
\end{align}A low-rank approximation of \(P_{\mathrm{mp}}\) can then be obtained via CF-ADI using the ADI shifts \(\alpha_i\) as $P_{\mathrm{mp}}\approx V_{\mathrm{mp}}^{(i)}\hat{P}_{\mathrm{lyap}}^{(i)}\big(V_{\mathrm{mp}}^{(i)}\big)^*$, where $V_{\mathrm{mp}}^{(i)}=\begin{bmatrix}V_{\mathrm{mp}}^{(i-1)}&v_i^{\mathrm{mp}}\end{bmatrix}$.

Let us define \(A_{2,\mathrm{mp}}\) and \(C_{2,\mathrm{mp}}\) as $A_{2,\mathrm{mp}}=A_2-B_2D_2^{-1}C_2$ and $C_{2,\mathrm{mp}}=D_2^{-1}C_2$.
Then the Lyapunov equation \eqref{lyap_q_mp} can be rewritten as
\begin{align}
A_{1,\mathrm{mp}}^\top Q_{\mathrm{mp}}E_2+E_2^\top Q_{\mathrm{mp}}A_{2,\mathrm{mp}}+C_{2,\mathrm{mp}}^\top C_{2,\mathrm{mp}}=0.
\end{align}
A low-rank approximation of \(Q_{\mathrm{mp}}\) can then be obtained via CF-ADI using the ADI shifts \(\beta_i\) as $Q_{\mathrm{mp}}\approx W_{\mathrm{mp}}^{(i)}\hat{Q}_{\mathrm{lyap}}^{(i)}\big(W_{\mathrm{mp}}^{(i)}\big)^*$, where $W_{\mathrm{mp}}^{(i)}=\begin{bmatrix}W_{\mathrm{mp}}^{(i-1)}&w_i^{\mathrm{mp}}\end{bmatrix}$.

A low-rank approximation of \(Q_{\mathrm{ricc}}\) can be obtained using RADI with ADI shifts \(\beta_i\) as  $Q_{\mathrm{ricc}}=W_{\mathrm{radi}}^{(i)}\hat{Q}_{\mathrm{ricc}}^{(i)}\big(W_{\mathrm{radi}}^{(i)}\big)^*$, where $W_{\mathrm{radi}}^{(i)}=\begin{bmatrix}W_{\mathrm{radi}}^{(i-1)}&w_i^{\mathrm{ricc}}\end{bmatrix}$, and $\hat{Q}_{\mathrm{ricc}}^{(i)}=\mathrm{blkdiag}\big(\hat{Q}_{\mathrm{ricc}}^{(i-1)},\hat{q}_{\mathrm{ricc}}^{(i)}\big)$. The matrix \(\big(\hat{q}_{\mathrm{ricc}}^{(i)}\big)^{-1}\) is the solution to the following Lyapunov equation:
\begin{align}
-\big(s_{w,\mathrm{lyap}}^{(i)}\big)^*\big(\hat{q}_{\mathrm{ricc}}^{(i)}\big)^{-1}-\big(\hat{q}_{\mathrm{ricc}}^{(i)}\big)^{-1}s_{w,\mathrm{lyap}}^{(i)}+\big(l_{w,\mathrm{lyap}}^{(i)}\big)^\top l_{w,\mathrm{lyap}}^{(i)}+\big(w_i^{\mathrm{ricc}}\big)^* B_2 B_2^\top w_i^{\mathrm{ricc}}=0.\label{q_ricc}
\end{align}

Let us define \(C_{1,\infty} = \sqrt{1 - \gamma_1^{-2}}\, C_1\) and \(B_{2,\infty} = \sqrt{1 - \gamma_2^{-2}}\, B_2\). Then the Riccati equations \eqref{ricc_p_inf} and \eqref{ricc_q_inf} can be rewritten as
\begin{align}
A_1P_{\infty}E_1^\top +E_1P_{\infty}A_1^\top +B_1B_1^\top -E_1P_\infty C_{1,\infty}^\top C_{1,\infty}P_\infty E_1^\top=0,\\
A_2^\top Q_{\infty}E_2 +E_2^\top Q_{\infty}A_2 +C_2^\top C_2 -E_2^\top Q_\infty B_{2,\infty} B_{2,\infty}^\top P_\infty E_2=0.
\end{align}
Low-rank approximations of \(P_{\infty}\) and \(Q_{\infty}\) can then be obtained via RADI using ADI shifts \(\alpha_i\) and \(\beta_i\), respectively: $P_{\infty}=V_{\infty}^{(i)}\hat{P}_{\infty}^{(i)}\big(V_{\infty}^{(i)}\big)^*$ and $Q_{\infty}=W_{\infty}^{(i)}\hat{Q}_{\infty}^{(i)}\big(W_{\infty}^{(i)}\big)^*$, where $V_{\infty}^{(i)}=\begin{bmatrix}V_{\infty}^{(i-1)}&v_i^{\infty}\end{bmatrix}$, $\hat{P}_{\infty}^{(i)}=\mathrm{blkdiag}\big(\hat{P}_{\infty}^{(i-1)},\hat{p}_{\infty}^{(i)}\big)$, $W_{\infty}^{(i)}=\begin{bmatrix}W_{\infty}^{(i-1)}&w_i^{\infty}\end{bmatrix}$, and $\hat{Q}_{\infty}^{(i)}=\mathrm{blkdiag}\big(\hat{Q}_{\infty}^{(i-1)},\hat{q}_{\infty}^{(i)}\big)$. The matrices \(\big(\hat{p}_{\infty}^{(i)}\big)^{-1}\) and \(\big(\hat{q}_{\infty}^{(i)}\big)^{-1}\) are the solutions to the following Lyapunov equations:
\begin{align}
-\big(s_{v,\mathrm{lyap}}^{(i)}\big)^*\big(\hat{p}_{\infty}^{(i)}\big)^{-1}-\big(\hat{p}_{\infty}^{(i)}\big)^{-1}s_{v,\mathrm{lyap}}^{(i)}+\big(l_{v,\mathrm{lyap}}^{(i)}\big)^\top l_{v,\mathrm{lyap}}^{(i)}+(1-\gamma_1^{-2})\big(v_i^{\infty}\big)^* C_1^\top C_1v_i^{\infty}=0,\label{p_inf}\\
-\big(s_{w,\mathrm{lyap}}^{(i)}\big)^*\big(\hat{q}_{\infty}^{(i)}\big)^{-1}-\big(\hat{q}_{\infty}^{(i)}\big)^{-1}s_{w,\mathrm{lyap}}^{(i)}+\big(l_{w,\mathrm{lyap}}^{(i)}\big)^\top l_{w,\mathrm{lyap}}^{(i)}+(1-\gamma_2^{-2})\big(w_i^{\infty}\big)^* B_2 B_2^\top w_i^{\infty}=0.\label{q_inf}
\end{align}
Assuming that $D_1+D_1^\top >0$ and $D_2+D_2^\top >0$, define the following matrices:
\begin{align}
A_{1,\mathrm{pr}}=A_1-B_1(D_1+D_1^\top)^{-1}C_1,\quad B_{1,\mathrm{pr}}=B_1(D_1+D_1^\top)^{-\frac{1}{2}},\quad C_{1,\mathrm{pr}}=(D_1+D_1^\top)^{-\frac{1}{2}}C_1,\\
A_{2,\mathrm{pr}}=A_2-B_2(D_2+D_2^\top)^{-1}C_2,\quad B_{2,\mathrm{pr}}=B_2(D_2+D_2^\top)^{-\frac{1}{2}},\quad C_{2,\mathrm{pr}}=(D_2+D_2^\top)^{-\frac{1}{2}}C_2.
\end{align}
Then the Riccati equations \eqref{ricc_p_pr} and \eqref{ricc_q_pr} can be rewritten as
\begin{align}
A_{1,\mathrm{pr}}P_{\mathrm{pr}}E_1^\top+E_1P_{\mathrm{pr}}A_{1,\mathrm{pr}}^\top + B_{1,\mathrm{pr}}B_{1,\mathrm{pr}}^\top+E_1P_{\mathrm{pr}}C_{1,\mathrm{pr}}^\top C_{1,\mathrm{pr}}P_{\mathrm{pr}}E_1^\top=0,\\
A_{2,\mathrm{pr}}^\top Q_{\mathrm{pr}}E_2+E_2^\top Q_{\mathrm{pr}}A_{2,\mathrm{pr}} + C_{2,\mathrm{pr}}^\top C_{2,\mathrm{pr}}+E_2^\top Q_{\mathrm{pr}}B_{2,\mathrm{pr}} B_{2,\mathrm{pr}}^\top Q_{\mathrm{pr}}E_2=0.
\end{align}
Low-rank approximations of \(P_{\mathrm{pr}}\) and \(Q_{\mathrm{pr}}\) can then be obtained via RADI using ADI shifts \(\alpha_i\) and \(\beta_i\), respectively: $P_{\mathrm{pr}} = V_{\mathrm{pr}}^{(i)} \hat{P}_{\mathrm{pr}}^{(i)} \big(V_{\mathrm{pr}}^{(i)}\big)^*$ and
$Q_{\mathrm{pr}} = W_{\mathrm{pr}}^{(i)} \hat{Q}_{\mathrm{pr}}^{(i)} \big(W_{\mathrm{pr}}^{(i)}\big)^*$,
where $V_{\mathrm{pr}}^{(i)} = \begin{bmatrix} V_{\mathrm{pr}}^{(i-1)} & v_i^{\mathrm{pr}} \end{bmatrix}$, 
$\hat{P}_{\mathrm{pr}}^{(i)} = \mathrm{blkdiag}\big(\hat{P}_{\mathrm{pr}}^{(i-1)}, \hat{p}_{\mathrm{pr}}^{(i)}\big)$,
$W_{\mathrm{pr}}^{(i)} = \begin{bmatrix} W_{\mathrm{pr}}^{(i-1)} & w_i^{\mathrm{pr}} \end{bmatrix}$, and
$\hat{Q}_{\mathrm{pr}}^{(i)} = \mathrm{blkdiag}\big(\hat{Q}_{\mathrm{pr}}^{(i-1)}, \hat{q}_{\mathrm{pr}}^{(i)}\big)$. The matrices \(\big(\hat{p}_{\mathrm{pr}}^{(i)}\big)^{-1}\) and \(\big(\hat{q}_{\mathrm{pr}}^{(i)}\big)^{-1}\) are the solutions to the following Lyapunov equations:
\begin{align}
-\big(s_{v,\mathrm{lyap}}^{(i)}\big)^*\big(\hat{p}_{\mathrm{pr}}^{(i)}\big)^{-1}-\big(\hat{p}_{\mathrm{pr}}^{(i)}\big)^{-1}s_{v,\mathrm{lyap}}^{(i)}+\big(l_{v,\mathrm{lyap}}^{(i)}\big)^\top l_{v,\mathrm{lyap}}^{(i)}-\big(v_i^{\mathrm{pr}}\big)^* C_1^\top (D_1+D_1^\top)^{-1}C_1v_i^{\mathrm{pr}}=0,\label{p_pr}\\
-\big(s_{w,\mathrm{lyap}}^{(i)}\big)^*\big(\hat{q}_{\mathrm{pr}}^{(i)}\big)^{-1}-\big(\hat{q}_{\mathrm{pr}}^{(i)}\big)^{-1}s_{w,\mathrm{lyap}}^{(i)}+\big(l_{w,\mathrm{lyap}}^{(i)}\big)^\top l_{w,\mathrm{lyap}}^{(i)}-\big(w_i^{\mathrm{pr}}\big)^* B_2(D_2+D_2^\top)^{-1} B_2^\top w_i^{\mathrm{pr}}=0.\label{q_pr}
\end{align}
Assuming that $I-D_1D_1^\top> 0$ and $I-D_2^\top D_2 > 0$, define the following matrices:
\begin{align}
A_{1,\mathrm{br}}&=A_1+B_1D_1^\top (I-D_1D_1^\top)^{-1}C_1,\quad B_{1,\mathrm{br}}=B_1\big(I+D_1^\top(I-D_1D_1^\top)^{-1}D_1\big)^{\frac{1}{2}},\nonumber\\
C_{1,\mathrm{br}}&=(I-D_1D_1^\top)^{-\frac{1}{2}}C_1,\nonumber\\
A_{2,\mathrm{br}}&=A_2+B_2(I-D_2^\top D_2)^{-1}D_2^\top C_2,\quad B_{2,\mathrm{br}}=B_2\big(I-D_2^\top D_2\big)^{-\frac{1}{2}},\nonumber\\
C_{2,\mathrm{br}}&=\big(I+D_2(I-D_2^\top D_2)^{-1}D_2^\top\big)^{\frac{1}{2}}C_2.
\end{align}
Then the Riccati equations \eqref{ricc_p_br} and \eqref{ricc_q_br} can be rewritten as
\begin{align}
A_{1,\mathrm{br}}P_{\mathrm{br}}E_1^\top+E_1P_{\mathrm{br}}A_{1,\mathrm{br}}^\top + B_{1,\mathrm{br}}B_{1,\mathrm{br}}^\top+E_1P_{\mathrm{br}}C_{1,\mathrm{br}}^\top C_{1,\mathrm{br}}P_{\mathrm{br}}E_1^\top=0,\\
A_{2,\mathrm{br}}^\top Q_{\mathrm{br}}E_2+E_2^\top Q_{\mathrm{br}}A_{2,\mathrm{br}} + C_{2,\mathrm{br}}^\top C_{2,\mathrm{br}}+E_2^\top Q_{\mathrm{br}}B_{2,\mathrm{br}} B_{2,\mathrm{br}}^\top Q_{\mathrm{br}}E_2=0.
\end{align}
Low-rank approximations of \(P_{\mathrm{br}}\) and \(Q_{\mathrm{br}}\) can then be obtained via RADI using ADI shifts \(\alpha_i\) and \(\beta_i\), respectively: $P_{\mathrm{br}} = V_{\mathrm{br}}^{(i)} \hat{P}_{\mathrm{br}}^{(i)} \big(V_{\mathrm{br}}^{(i)}\big)^*$ and
$Q_{\mathrm{br}} = W_{\mathrm{br}}^{(i)} \hat{Q}_{\mathrm{br}}^{(i)} \big(W_{\mathrm{br}}^{(i)}\big)^*$, with  
$V_{\mathrm{br}}^{(i)} = \begin{bmatrix} V_{\mathrm{br}}^{(i-1)} & v_i^{\mathrm{br}} \end{bmatrix}$,
$\hat{P}_{\mathrm{br}}^{(i)} = \mathrm{blkdiag}\big(\hat{P}_{\mathrm{br}}^{(i-1)}, \hat{p}_{\mathrm{br}}^{(i)}\big)$,
$W_{\mathrm{br}}^{(i)} = \begin{bmatrix} W_{\mathrm{br}}^{(i-1)} & w_i^{\mathrm{br}} \end{bmatrix}$, and
$\hat{Q}_{\mathrm{br}}^{(i)} = \mathrm{blkdiag}\big(\hat{Q}_{\mathrm{br}}^{(i-1)}, \hat{q}_{\mathrm{br}}^{(i)}\big)$. The matrices \(\big(\hat{p}_{\mathrm{br}}^{(i)}\big)^{-1}\) and \(\big(\hat{q}_{\mathrm{br}}^{(i)}\big)^{-1}\) are the solutions to the following Lyapunov equations: 
\begin{align}
-\big(s_{v,\mathrm{lyap}}^{(i)}\big)^*\big(\hat{p}_{\mathrm{br}}^{(i)}\big)^{-1}-\big(\hat{p}_{\mathrm{br}}^{(i)}\big)^{-1}s_{v,\mathrm{lyap}}^{(i)}+\big(l_{v,\mathrm{lyap}}^{(i)}\big)^\top l_{v,\mathrm{lyap}}^{(i)}-\big(v_i^{\mathrm{br}}\big)^* C_1^\top (I-D_1D_1^\top)^{-1}C_1v_i^{\mathrm{br}}=0,\label{p_br}\\
-\big(s_{w,\mathrm{lyap}}^{(i)}\big)^*\big(\hat{q}_{\mathrm{br}}^{(i)}\big)^{-1}-\big(\hat{q}_{\mathrm{br}}^{(i)}\big)^{-1}s_{w,\mathrm{lyap}}^{(i)}+\big(l_{w,\mathrm{lyap}}^{(i)}\big)^\top l_{w,\mathrm{lyap}}^{(i)}-\big(w_i^{\mathrm{br}}\big)^* B_2(I-D_2^\top D_2)^{-1} B_2^\top w_i^{\mathrm{br}}=0.\label{q_br}
\end{align}
Assuming that $D_1^\top D_1>0$ and $D_2D_2^\top>0$, define the following matrices:
\begin{align}
A_{1,\mathrm{sf}}&=A_1-B_1(D_1^\top D_1)^{-1}\Big(B_1^\top Q_2+D_1^\top C_1\Big),\quad
B_{1,\mathrm{sf}}=B_1(D_1^\top D_1)^{-\frac{1}{2}},\nonumber\\
C_{1,\mathrm{sf}}&=(D_1^\top D_1)^{-\frac{1}{2}}\Big(B_1^\top Q_2+D_1^\top C_1\Big),\\
A_{2,\mathrm{sf}}&=A_2-\Big(P_1C_2^\top+B_2D_2^\top\Big)\big(D_2D_2^\top\big)^{-1}C_2,\quad B_{2,\mathrm{sf}}=\Big(P_1C_2^\top+B_2D_2^\top\Big)\big(D_2D_2^\top\big)^{-\frac{1}{2}},\nonumber\\
C_{2,\mathrm{sf}}&=\big(D_2D_2^\top\big)^{-\frac{1}{2}}C_2.
\end{align}
Then the Riccati equations \eqref{ricc_p_sf} and \eqref{ricc_q_sf} can be rewritten as
\begin{align}
A_{1,\mathrm{sf}}P_{\mathrm{sf}}E_1^\top+E_1P_{\mathrm{sf}}A_{1,\mathrm{sf}}^\top + B_{1,\mathrm{sf}}B_{1,\mathrm{sf}}^\top+E_1P_{\mathrm{sf}}C_{1,\mathrm{sf}}^\top C_{1,\mathrm{sf}}P_{\mathrm{sf}}E_1^\top=0,\\
A_{2,\mathrm{sf}}^\top Q_{\mathrm{sf}}E_2+E_2^\top Q_{\mathrm{sf}}A_{2,\mathrm{sf}} + C_{2,\mathrm{sf}}^\top C_{2,\mathrm{sf}}+E_2^\top Q_{\mathrm{sf}}B_{2,\mathrm{sf}} B_{2,\mathrm{sf}}^\top Q_{\mathrm{sf}}E_2=0.
\end{align}
Low-rank approximations of \(P_{\mathrm{sf}}\) and \(Q_{\mathrm{sf}}\) can then be obtained via RADI using ADI shifts \(\alpha_i\) and \(\beta_i\), respectively: $P_{\mathrm{sf}} = V_{\mathrm{sf}}^{(i)} \hat{P}_{\mathrm{sf}}^{(i)} \big(V_{\mathrm{sf}}^{(i)}\big)^*$ and
$Q_{\mathrm{sf}} = W_{\mathrm{sf}}^{(i)} \hat{Q}_{\mathrm{sf}}^{(i)} \big(W_{\mathrm{sf}}^{(i)}\big)^*$,  
with $V_{\mathrm{sf}}^{(i)} = \begin{bmatrix} V_{\mathrm{sf}}^{(i-1)} & v_i^{\mathrm{sf}} \end{bmatrix}$, 
$\hat{P}_{\mathrm{sf}}^{(i)} = \mathrm{blkdiag}\big(\hat{P}_{\mathrm{sf}}^{(i-1)}, \hat{p}_{\mathrm{sf}}^{(i)}\big)$,
$W_{\mathrm{sf}}^{(i)} = \begin{bmatrix} W_{\mathrm{sf}}^{(i-1)} & w_i^{\mathrm{sf}} \end{bmatrix}$, and
$\hat{Q}_{\mathrm{sf}}^{(i)} = \mathrm{blkdiag}\big(\hat{Q}_{\mathrm{sf}}^{(i-1)}, \hat{q}_{\mathrm{sf}}^{(i)}\big)$.
The matrices \(\big(\hat{p}_{\mathrm{sf}}^{(i)}\big)^{-1}\) and \(\big(\hat{q}_{\mathrm{sf}}^{(i)}\big)^{-1}\) are the solutions to the following Lyapunov equations:  
\begin{align}
-\big(s_{v,\mathrm{lyap}}^{(i)}\big)^*\big(\hat{p}_{\mathrm{sf}}^{(i)}\big)^{-1}-\big(\hat{p}_{\mathrm{sf}}^{(i)}\big)^{-1}s_{v,\mathrm{lyap}}^{(i)}+\big(l_{v,\mathrm{lyap}}^{(i)}\big)^\top l_{v,\mathrm{lyap}}^{(i)}-\big(v_i^{\mathrm{sf}}\big)^* C_{1,\mathrm{sf}}^\top C_{1,\mathrm{sf}}P_{\mathrm{sf}}v_i^{\mathrm{sf}}=0,\\
-\big(s_{w,\mathrm{lyap}}^{(i)}\big)^*\big(\hat{q}_{\mathrm{sf}}^{(i)}\big)^{-1}-\big(\hat{q}_{\mathrm{sf}}^{(i)}\big)^{-1}s_{w,\mathrm{lyap}}^{(i)}+\big(l_{w,\mathrm{lyap}}^{(i)}\big)^\top l_{w,\mathrm{lyap}}^{(i)}-\big(w_i^{\mathrm{sf}}\big)^* B_{2,\mathrm{sf}} B_{2,\mathrm{sf}}^\top w_i^{\mathrm{sf}}=0.
\end{align}
Define the following matrices:
\begin{align}
T_{v,\mathrm{mp}}^{(i)}&=\begin{bmatrix}T_{v,\mathrm{mp}}^{(i-1)}&t_{1,v,\mathrm{mp}}^{(i)}\\0&t_{2,v,\mathrm{mp}}^{(i)}\end{bmatrix},& T_{w,\mathrm{mp}}^{(i)}&=\begin{bmatrix}T_{w,\mathrm{mp}}^{(i-1)}&t_{1,w,\mathrm{mp}}^{(i)}\\0&t_{2,w,\mathrm{mp}}^{(i)}\end{bmatrix}, &
T_{w,\mathrm{ricc}}^{(i)}&=\begin{bmatrix}T_{w,\mathrm{ricc}}^{(i-1)}&t_{1,w,\mathrm{ricc}}^{(i)}\\0&t_{2,w,\mathrm{ricc}}^{(i)}\end{bmatrix},\label{TvTw1}\\
T_{v,\infty}^{(i)}&=\begin{bmatrix}T_{v,\infty}^{(i-1)}&t_{1,v,\infty}^{(i)}\\0&t_{2,v,\infty}^{(i)}\end{bmatrix},&
T_{w,\infty}^{(i)}&=\begin{bmatrix}T_{w,\infty}^{(i-1)}&t_{1,w,\infty}^{(i)}\\0&t_{2,w,\infty}^{(i)}\end{bmatrix},&
T_{v,\mathrm{pr}}^{(i)}&=\begin{bmatrix}T_{v,\mathrm{pr}}^{(i-1)}&t_{1,v,\mathrm{pr}}^{(i)}\\0&t_{2,v,\mathrm{pr}}^{(i)}\end{bmatrix},\label{TvTw2}\\
T_{w,\mathrm{pr}}^{(i)}&=\begin{bmatrix}T_{w,\mathrm{pr}}^{(i-1)}&t_{1,w,\mathrm{pr}}^{(i)}\\0&t_{2,w,\mathrm{pr}}^{(i)}\end{bmatrix},& T_{v,\mathrm{br}}^{(i)}&=\begin{bmatrix}T_{v,\mathrm{br}}^{(i-1)}&t_{1,v,\mathrm{br}}^{(i)}\\0&t_{2,v,\mathrm{br}}^{(i)}\end{bmatrix},&
T_{w,\mathrm{br}}^{(i)}&=\begin{bmatrix}T_{w,\mathrm{br}}^{(i-1)}&t_{1,w,\mathrm{br}}^{(i)}\\0&t_{2,w,\mathrm{br}}^{(i)}\end{bmatrix},\label{TvTw3}\\
T_{v,\mathrm{sf}}^{(i)}&=\begin{bmatrix}T_{v,\mathrm{sf}}^{(i-1)}&t_{1,v,\mathrm{sf}}^{(i)}\\0&t_{2,v,\mathrm{sf}}^{(i)}\end{bmatrix},&
T_{w,\mathrm{sf}}^{(i)}&=\begin{bmatrix}T_{w,\mathrm{sf}}^{(i-1)}&t_{1,w,\mathrm{sf}}^{(i)}\\0&t_{2,w,\mathrm{sf}}^{(i)}\end{bmatrix},\label{TvTw4}
\end{align}
where  
\begin{align}
t_{v,\mathrm{mp}}^{(i)}&=\begin{bmatrix}t_{1,v,\mathrm{mp}}^{(i)}\\t_{2,v,\mathrm{mp}}^{(i)}\end{bmatrix},&
t_{w,\mathrm{mp}}^{(i)}&=\begin{bmatrix}t_{1,w,\mathrm{mp}}^{(i)}\\t_{2,w,\mathrm{mp}}^{(i)}\end{bmatrix},&
t_{w,\mathrm{ricc}}^{(i)}&=\begin{bmatrix}t_{1,w,\mathrm{ricc}}^{(i)}\\t_{2,w,\mathrm{ricc}}^{(i)}\end{bmatrix},&
t_{v,\infty}^{(i)}&=\begin{bmatrix}t_{1,v,\infty}^{(i)}\\t_{2,v,\infty}^{(i)}\end{bmatrix},\nonumber\\
t_{v,\infty}^{(i)}&=\begin{bmatrix}t_{1,v,\infty}^{(i)}\\t_{2,v,\infty}^{(i)}\end{bmatrix},&
t_{w,\infty}^{(i)}&=\begin{bmatrix}t_{1,w,\infty}^{(i)}\\t_{2,w,\infty}^{(i)}\end{bmatrix},&
t_{v,\mathrm{pr}}^{(i)}&=\begin{bmatrix}t_{1,v,\mathrm{pr}}^{(i)}\\t_{2,v,\mathrm{pr}}^{(i)}\end{bmatrix},&
t_{w,\mathrm{pr}}^{(i)}&=\begin{bmatrix}t_{1,w,\mathrm{pr}}^{(i)}\\t_{2,w,\mathrm{pr}}^{(i)}\end{bmatrix},\nonumber\\
t_{v,\mathrm{br}}^{(i)}&=\begin{bmatrix}t_{1,v,\mathrm{br}}^{(i)}\\t_{2,v,\mathrm{br}}^{(i)}\end{bmatrix},&
t_{w,\mathrm{br}}^{(i)}&=\begin{bmatrix}t_{1,w,\mathrm{br}}^{(i)}\\t_{2,w,\mathrm{br}}^{(i)}\end{bmatrix},&
t_{v,\mathrm{sf}}^{(i)}&=\begin{bmatrix}t_{1,v,\mathrm{sf}}^{(i)}\\t_{2,v,\mathrm{sf}}^{(i)}\end{bmatrix},&
t_{w,\mathrm{sf}}^{(i)}&=\begin{bmatrix}t_{1,w,\mathrm{sf}}^{(i)}\\t_{2,w,\mathrm{sf}}^{(i)}\end{bmatrix},\nonumber
\end{align}
solve the following Sylvester equations:
\begin{align}
&\Big(\hat{A}_{1,i}^{\mathrm{lyap}}-\hat{B}_{1,i}^{\mathrm{lyap}}D_1^{-1}C_1V_{\mathrm{lyap}}^{(i)}\Big)t_{v,\mathrm{mp}}^{(i)}-t_{v,\mathrm{mp}}^{(i)}s_{v,\mathrm{lyap}}^{(i)}\nonumber\\ &\hspace*{6cm}+\Big(\hat{B}_{1,i}^{\mathrm{lyap}}D_1^{-1}-\begin{bmatrix}T_{v,\mathrm{mp}}^{(i-1)}\hat{P}_{\mathrm{lyap}}^{(i-1)}(L_{v,\mathrm{lyap}}^{(i-1)})^\top\\0\end{bmatrix}\Big)l_{v,\mathrm{lyap}}^{(i)}=0,\label{tv_mp}\\
&\Big(\big(\hat{A}_{2,i}^{\mathrm{lyap}}\big)^*-\big(\hat{C}_{2,i}^{\mathrm{lyap}}\big)^*D_2^{-\top}B_2^\top W_{\mathrm{lyap}}^{(i)}\Big)t_{w,\mathrm{mp}}^{(i)}-t_{w,\mathrm{mp}}^{(i)}s_{w,\mathrm{lyap}}^{(i)}\nonumber\\ 
&\hspace*{6cm}+\Big(\big(\hat{C}_{2,i}^{\mathrm{lyap}}\big)^*D_1^{-\top}-\begin{bmatrix}T_{w,\mathrm{mp}}^{(i-1)}\hat{Q}_{\mathrm{lyap}}^{(i-1)}(L_{w,\mathrm{lyap}}^{(i-1)})^\top\\0\end{bmatrix}\Big)l_{w,\mathrm{lyap}}^{(i)}=0,\label{tw_mp}\\
&\Bigg(\big(\hat{A}_{2,i}^{\mathrm{lyap}}\big)^\top-\begin{bmatrix}T_{w,\mathrm{ricc}}^{(i-1)}\hat{Q}_{\mathrm{ricc}}^{(i-1)}\big(W_{\mathrm{radi}}^{(i-1)}\big)^\top B_2B_2^\top W_{\mathrm{lyap}}^{(i)}\\0\end{bmatrix}\Bigg)t_{w,\mathrm{ricc}}^{(i)}-t_{w,\mathrm{ricc}}^{(i)}s_{w,\mathrm{lyap}}^{(i)}\nonumber\\
&\hspace*{6cm}+\Bigg((\hat{C}_{2,i}^{\mathrm{lyap}})^\top-\begin{bmatrix}T_{w,\mathrm{ricc}}^{(i-1)}\hat{Q}_{\mathrm{ricc}}^{(i-1)}\big(L_{w,\mathrm{lyap}}^{(i-1)}\big)^\top\\0\end{bmatrix}\Bigg)l_{w,\mathrm{lyap}}^{(i)}=0,\label{tw_ricc}\\
&\Bigg(\hat{A}_{1,i}^{\mathrm{lyap}}-\begin{bmatrix}(1-\gamma_1^2)T_{v,\infty}^{(i-1)}\hat{P}_{\infty}^{(i-1)}\big(V_{\infty}^{(i-1)}\big)^*C_1^\top C_1V_{\mathrm{lyap}}^{(i)}\\0\end{bmatrix}\Bigg)t_{v,\infty}^{(i)}-t_{v,\infty}^{(i)}s_{v,\mathrm{lyap}}^{(i)}\nonumber\\
&\hspace*{6cm}+\Bigg(\hat{B}_{1,i}^{\mathrm{lyap}}-\begin{bmatrix}T_{v,\infty}^{(i-1)}\hat{P}_{\infty}^{(i-1)}(L_{v,\mathrm{lyap}}^{(i-1)})^\top\\0\end{bmatrix}\Bigg)l_{v,\mathrm{lyap}}^{(i)}=0,\label{tv_inf}\\
&\Bigg(\big(\hat{A}_{2,i}^{\mathrm{lyap}}\big)^*-\begin{bmatrix}(1-\gamma_2^2)T_{w,\infty}^{(i-1)}\hat{Q}_{\infty}^{(i-1)}\big(W_{\infty}^{(i-1)}\big)^*B_2 B_2^\top W_{\mathrm{lyap}}^{(i)}\\0\end{bmatrix}\Bigg)t_{w,\infty}^{(i)}-t_{w,\infty}^{(i)}s_{w,\mathrm{lyap}}^{(i)}\nonumber\\
&\hspace*{6cm}+\Bigg(\big(\hat{C}_{2,i}^{\mathrm{lyap}}\big)^*-\begin{bmatrix}T_{w,\infty}^{(i-1)}\hat{Q}_{\infty}^{(i-1)}(L_{w,\mathrm{lyap}}^{(i-1)})^\top\\0\end{bmatrix}\Bigg)l_{w,\mathrm{lyap}}^{(i)}=0,\label{tw_inf}
\end{align}
\begin{align}
&\Bigg(\hat{A}_{1,i}^{\mathrm{lyap}}-\hat{B}_{1,i}^{\mathrm{lyap}}(D_1+D_1^\top)^{-1}C_1V_{\mathrm{lyap}}^{(i)}+
\begin{bmatrix}T_{v,\mathrm{pr}}^{(i-1)}\hat{P}_{\mathrm{pr}}^{(i-1)}\big(V_{\mathrm{pr}}^{(i-1)}\big)^*C_1^\top(D_1+D_1^\top)^{-1}C_1 V_{\mathrm{lyap}}^{(i)}\\0\end{bmatrix}\Bigg)t_{v,\mathrm{pr}}^{(i)}\nonumber\\
&\hspace*{3cm}-t_{v,\mathrm{pr}}^{(i)}s_{v,\mathrm{lyap}}^{(i)}+\Bigg(\hat{B}_{1,i}^{\mathrm{lyap}}(D_1+D_1^\top)^{-\frac{1}{2}}-\begin{bmatrix}T_{v,\mathrm{pr}}^{(i-1)}\hat{P}_{\mathrm{pr}}^{(i-1)}(L_{v,\mathrm{lyap}}^{(i-1)})^\top\\0\end{bmatrix}\Bigg)l_{v,\mathrm{lyap}}^{(i)}=0,\label{tv_pr}\\
&\Bigg(\big(\hat{A}_{2,i}^{\mathrm{lyap}}\big)^*-\big(\hat{C}_{2,i}^{\mathrm{lyap}}\big)^*(D_2+D_2^\top)^{-1}B_2^\top W_{\mathrm{lyap}}^{(i)}+
\begin{bmatrix}T_{w,\mathrm{pr}}^{(i-1)}\hat{Q}_{\mathrm{pr}}^{(i-1)}\big(W_{\mathrm{pr}}^{(i-1)}\big)^*B_2(D_2+D_2^\top)^{-1}B_2^\top W_{\mathrm{lyap}}^{(i)}\\0\end{bmatrix}\Bigg)t_{w,\mathrm{pr}}^{(i)}\nonumber\\
&\hspace*{2cm}-t_{w,\mathrm{pr}}^{(i)}s_{w,\mathrm{lyap}}^{(i)}+\Bigg(\big(\hat{C}_{2,i}^{\mathrm{lyap}}\big)^*(D_2+D_2^\top)^{-\frac{1}{2}}-\begin{bmatrix}T_{w,\mathrm{pr}}^{(i-1)}\hat{Q}_{\mathrm{pr}}^{(i-1)}(L_{w,\mathrm{lyap}}^{(i-1)})^\top\\0\end{bmatrix}\Bigg)l_{w,\mathrm{lyap}}^{(i)}=0,\label{tw_pr}\\
&\Bigg(\hat{A}_{1,i}^{\mathrm{lyap}}+\hat{B}_{1,i}^{\mathrm{lyap}}D_1^\top(I-D_1D_1^\top)^{-1}C_1V_{\mathrm{lyap}}^{(i)}
+\begin{bmatrix}T_{v,\mathrm{br}}^{(i-1)}\hat{P}_{\mathrm{br}}^{(i-1)}\big(V_{\mathrm{br}}^{(i-1)}\big)^*C_1^\top(I-D_1D_1^\top)^{-1} C_1V_{\mathrm{lyap}}^{(i)}\\0\end{bmatrix} \Bigg)t_{v,\mathrm{br}}^{(i)}\nonumber\\
&\hspace*{3cm}-t_{v,\mathrm{br}}^{(i)}s_{v,\mathrm{lyap}}^{(i)}+\Bigg(\hat{B}_{1,i}^{\mathrm{lyap}}\big(I+D_1^\top(I-D_1D_1^\top)^{-1}D_1\big)^{\frac{1}{2}}\nonumber\\
&\hspace*{9cm}-\begin{bmatrix}T_{v,\mathrm{br}}^{(i-1)}\hat{P}_{\mathrm{br}}^{(i-1)}(L_{v,\mathrm{lyap}}^{(i-1)})^\top\\0\end{bmatrix}\Bigg)l_{v,\mathrm{lyap}}^{(i)}=0,\label{tv_br}\\
&\Bigg(\big(\hat{A}_{2,i}^{\mathrm{lyap}}\big)^*+\big(\hat{C}_{2,i}^{\mathrm{lyap}}\big)^*D_2(I-D_2^\top D_2)^{-1}B_2^\top W_{\mathrm{lyap}}^{(i)}\nonumber\\
&\hspace*{1.5cm}+\begin{bmatrix}T_{w,\mathrm{br}}^{(i-1)}\hat{Q}_{\mathrm{br}}^{(i-1)}\big(W_{\mathrm{br}}^{(i-1)}\big)^*B_2(I-D_2^\top D_2)^{-1} B_2^\top W_{\mathrm{lyap}}^{(i)}\\0\end{bmatrix} \Bigg)t_{w,\mathrm{br}}^{(i)}-t_{w,\mathrm{br}}^{(i)}s_{w,\mathrm{lyap}}^{(i)}\nonumber\\
&\hspace*{2.5cm}+\Bigg(\big(\hat{C}_{2,i}^{\mathrm{lyap}}\big)^*\big(I+D_2(I-D_2^\top D_2)^{-1}D_2^\top\big)^{\frac{1}{2}}
-\begin{bmatrix}T_{w,\mathrm{br}}^{(i-1)}\hat{Q}_{\mathrm{br}}^{(i-1)}(L_{w,\mathrm{lyap}}^{(i-1)})^\top\\0\end{bmatrix}\Bigg)l_{w,\mathrm{lyap}}^{(i)}=0,\label{tw_br}\\
&\Bigg(\hat{A}_{1,i}^{\mathrm{lyap}}-\hat{B}_{1,i}^{\mathrm{lyap}}(D_1^\top D_1)^{-1}\big(B_1^\top Q_2+D_1^\top C_1\big)V_{\mathrm{lyap}}^{(i)}
+\begin{bmatrix}T_{v,\mathrm{sf}}^{(i-1)}\hat{P}_{\mathrm{sf}}^{(i-1)}\big(V_{\mathrm{sf}}^{(i-1)}\big)^*C_{1,\mathrm{sf}}^\top C_{1,\mathrm{sf}}V_{\mathrm{lyap}}^{(i)}\\0\end{bmatrix}\Bigg)t_{v,\mathrm{sf}}^{(i)}\nonumber\\
&\hspace*{3cm}-t_{v,\mathrm{sf}}^{(i)}s_{v,\mathrm{lyap}}^{(i)}+\Bigg(\hat{B}_{1,i}^{\mathrm{lyap}}(D_1^\top D_1)^{-\frac{1}{2}}-\begin{bmatrix}T_{v,\mathrm{sf}}^{(i-1)}\hat{P}_{\mathrm{sf}}^{(i-1)}(L_{v,\mathrm{lyap}}^{(i-1)})^\top\\0\end{bmatrix}\Bigg)l_{v,\mathrm{lyap}}^{(i)}=0,\label{tv_sf}\\
&\Bigg(\big(\hat{A}_{2,i}^{\mathrm{lyap}}\big)^*-\big(\hat{C}_{2,i}^{\mathrm{lyap}}\big)^*(D_2D_2^\top)^{-1}\big(C_2^\top P_1+D_2 B_2\big)W_{\mathrm{lyap}}^{(i)}
+\begin{bmatrix}T_{w,\mathrm{sf}}^{(i-1)}\hat{Q}_{\mathrm{sf}}^{(i-1)}\big(W_{\mathrm{sf}}^{(i-1)}\big)^*B_{2,\mathrm{sf}} B_{2,\mathrm{sf}}^\top W_{\mathrm{lyap}}^{(i)}\\0\end{bmatrix}\Bigg)t_{w,\mathrm{sf}}^{(i)}\nonumber\\
&\hspace*{3cm}-t_{w,\mathrm{sf}}^{(i)}s_{w,\mathrm{lyap}}^{(i)}+\Bigg(\big(\hat{B}_{2,i}^{\mathrm{lyap}}\big)^*(D_2 D_2^\top)^{-\frac{1}{2}}-\begin{bmatrix}T_{w,\mathrm{sf}}^{(i-1)}\hat{Q}_{\mathrm{sf}}^{(i-1)}(L_{w,\mathrm{lyap}}^{(i-1)})^\top\\0\end{bmatrix}\Bigg)l_{w,\mathrm{lyap}}^{(i)}=0.\label{tw_sf}
\end{align}
\begin{theorem}
Let $\{\alpha_i\}_{i=1}^{k}\in\mathbb{C}_{-}$ and $\{\beta_i\}_{i=1}^{k}\in\mathbb{C}_{-}$ be the ADI shifts used in CF-ADI to approximate \eqref{lyap_p} and \eqref{lyap_q}, respectively. Then $v_i^{\mathrm{mp}}$, $w_i^{\mathrm{mp}}$, $v_i^{\infty}$, $w_i^{\infty}$, $v_i^{\mathrm{pr}}$, $w_i^{\mathrm{pr}}$, $v_i^{\mathrm{br}}$, $w_i^{\mathrm{br}}$, $v_i^{\mathrm{sf}}$, $w_i^{\mathrm{sf}}$, $w_i^{\mathrm{ricc}}$ can be extracted from $V_{\mathrm{lyap}}^{(i)}$ and $W_{\mathrm{lyap}}^{(i)}$ as follows:
\begin{align}
v_i^{\mathrm{mp}}&=V_{\mathrm{lyap}}^{(i)}t_{v,\mathrm{mp}}^{(i)},& v_i^{\infty}&=V_{\mathrm{lyap}}^{(i)}t_{v,\infty}^{(i)},& v_i^{\mathrm{pr}}&=V_{\mathrm{lyap}}^{(i)}t_{v,\mathrm{pr}}^{(i)},& v_i^{\mathrm{br}}&=V_{\mathrm{lyap}}^{(i)}t_{v,\mathrm{br}}^{(i)},& v_i^{\mathrm{sf}}&=V_{\mathrm{lyap}}^{(i)}t_{v,\mathrm{sf}}^{(i)},\nonumber\\
w_i^{\mathrm{mp}}&=W_{\mathrm{lyap}}^{(i)}t_{w,\mathrm{mp}}^{(i)},& w_i^{\infty}&=W_{\mathrm{lyap}}^{(i)}t_{w,\infty}^{(i)},& w_i^{\mathrm{pr}}&=W_{\mathrm{lyap}}^{(i)}t_{w,\mathrm{pr}}^{(i)},& w_i^{\mathrm{br}}&=W_{\mathrm{lyap}}^{(i)}t_{w,\mathrm{br}}^{(i)},&  w_i^{\mathrm{sf}}&=W_{\mathrm{lyap}}^{(i)}t_{w,\mathrm{sf}}^{(i)},\nonumber\\
w_i^{\mathrm{ricc}}&=W_{\mathrm{lyap}}^{(i)}t_{w,\mathrm{ricc}}^{(i)}.\nonumber
\end{align}
Furthermore, $V_{\mathrm{mp}}^{(i)}$, $W_{\mathrm{mp}}^{(i)}$, $V_{\infty}^{(i)}$, $W_{\infty}^{(i)}$, $V_{\mathrm{pr}}^{(i)}$, $W_{\mathrm{pr}}^{(i)}$, $V_{\mathrm{br}}^{(i)}$, $W_{\mathrm{br}}^{(i)}$, $V_{\mathrm{sf}}^{(i)}$, $W_{\mathrm{sf}}^{(i)}$, and $W_{\mathrm{radi}}^{(i)}$ can be obtained from $V_{\mathrm{lyap}}^{(i)}$ and $W_{\mathrm{lyap}}^{(i)}$ as follows:
\begin{align}
V_{\mathrm{mp}}^{(i)}&=V_{\mathrm{lyap}}^{(i)}T_{v,\mathrm{mp}}^{(i)},& V_{\infty}^{(i)}&=V_{\mathrm{lyap}}^{(i)}T_{v,\infty}^{(i)},& V_{\mathrm{pr}}^{(i)}&=V_{\mathrm{lyap}}^{(i)}T_{v,\mathrm{pr}}^{(i)},& V_{\mathrm{br}}^{(i)}&=V_{\mathrm{lyap}}^{(i)}T_{v,\mathrm{br}}^{(i)},& V_{\mathrm{sf}}^{(i)}&=V_{\mathrm{lyap}}^{(i)}T_{v,\mathrm{sf}}^{(i)},\nonumber\\
W_{\mathrm{mp}}^{(i)}&=W_{\mathrm{lyap}}^{(i)}T_{w,\mathrm{mp}}^{(i)},& W_{\infty}^{(i)}&=W_{\mathrm{lyap}}^{(i)}T_{w,\infty}^{(i)},& W_{\mathrm{pr}}^{(i)}&=W_{\mathrm{lyap}}^{(i)}T_{w,\mathrm{pr}}^{(i)},& W_{\mathrm{br}}^{(i)}&=W_{\mathrm{lyap}}^{(i)}T_{w,\mathrm{br}}^{(i)},& W_{\mathrm{sf}}^{(i)}&=W_{\mathrm{lyap}}^{(i)}T_{w,\mathrm{sf}}^{(i)},\nonumber\\
W_{\mathrm{radi}}^{(i)}&=W_{\mathrm{lyap}}^{(i)}T_{w,\mathrm{ricc}}^{(i)}.\nonumber
\end{align}
\end{theorem}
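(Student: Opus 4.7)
The plan is to prove all eleven identities by strong induction on $i$, reusing the template developed for Theorem~\ref{th_uni_radi}. The base case $i=1$ follows by direct substitution once each block matrix $T_{v,\cdot}^{(1)}$ (and $T_{w,\cdot}^{(1)}$) collapses to its leading $t$-block. For the inductive step, I assume both the vector identities $v_j^{\cdot}=V_{\mathrm{lyap}}^{(j)}t_{v,\cdot}^{(j)}$, $w_j^{\cdot}=W_{\mathrm{lyap}}^{(j)}t_{w,\cdot}^{(j)}$ and the matrix identities $V_{\cdot}^{(j)}=V_{\mathrm{lyap}}^{(j)}T_{v,\cdot}^{(j)}$, $W_{\cdot}^{(j)}=W_{\mathrm{lyap}}^{(j)}T_{w,\cdot}^{(j)}$ hold for all $j\le i-1$. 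The matrix identity at step $i$ then follows immediately from the block partitioning in \eqref{TvTw1}--\eqref{TvTw4} once the vector identity at step $i$ has been established, so the substantive work is to verify each vector identity.

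For each case I start from the defining shifted linear system in the CF-ADI (or RADI) iteration applied to the rewritten matrix equation. Using $v_i^{\mathrm{mp}}$ as a prototype: it solves $(A_{1,\mathrm{mp}}+\alpha_iE_1)\,v_i^{\mathrm{mp}}=B_{\perp,i-1}^{\mathrm{mp}}$ with $B_{\perp,i-1}^{\mathrm{mp}}=B_{1,\mathrm{mp}}-E_1V_{\mathrm{mp}}^{(i-1)}\hat{P}_{\mathrm{lyap}}^{(i-1)}(L_{v,\mathrm{lyap}}^{(i-1)})^{\top}$. I substitute the ansatz $v_i^{\mathrm{mp}}=V_{\mathrm{lyap}}^{(i)}t_{v,\mathrm{mp}}^{(i)}$, replace $A_1V_{\mathrm{lyap}}^{(i)}$ via \eqref{cfadi_v_sylv2} (noting $\hat{E}_{1,i}^{\mathrm{lyap}}=I$), and rewrite $V_{\mathrm{mp}}^{(i-1)}$ as $V_{\mathrm{lyap}}^{(i-1)}T_{v,\mathrm{mp}}^{(i-1)}$ by the induction hypothesis. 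Because $V_{\mathrm{lyap}}^{(i-1)}$ sits as the leading columns of $V_{\mathrm{lyap}}^{(i)}$, the feedback term factors cleanly through $V_{\mathrm{lyap}}^{(i)}$; cancelling the common left factor $E_1V_{\mathrm{lyap}}^{(i)}$ (legitimate since $E_1$ is invertible and $V_{\mathrm{lyap}}^{(i)}$ has full column rank by assumption) leaves exactly the small Sylvester equation \eqref{tv_mp} that determines $t_{v,\mathrm{mp}}^{(i)}$ uniquely.

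All other cases follow the same three-move template: substitute the ansatz, apply \eqref{cfadi_v_sylv1}--\eqref{cfadi_v_sylv2} or the dual \eqref{cfadi_w_sylv1}--\eqref{cfadi_w_sylv2}, and invoke the induction hypothesis on the matching $V_{\cdot}^{(i-1)}$ or $W_{\cdot}^{(i-1)}$. For the purely Lyapunov cases ($v_i^{\mathrm{mp}}$, $w_i^{\mathrm{mp}}$) the feedback term is linear and the derivation is essentially a carbon copy of the CF-ADI recursion for the shifted matrix $A_{j,\mathrm{mp}}$. For the Riccati-type cases ($w_i^{\mathrm{ricc}}$ and the pairs with subscripts $\infty$, $\mathrm{pr}$, $\mathrm{br}$, $\mathrm{sf}$), the shifted operator carries an extra nonlinear feedback of the form $\pm E_1V_{\cdot}^{(i-1)}\hat{P}_{\cdot}^{(i-1)}(V_{\cdot}^{(i-1)})^{*}\mathcal{C}^{\top}\mathcal{C}$ with sign and weighting fixed by the equation; substituting $V_{\cdot}^{(i-1)}=V_{\mathrm{lyap}}^{(i-1)}T_{v,\cdot}^{(i-1)}$ converts this term into precisely the bracketed block appearing in \eqref{tv_inf}--\eqref{tw_sf}, and the argument closes as before. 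The dual $w_i^{\cdot}$ statements are obtained by the same reasoning applied with the adjoint Sylvester identities for $W_{\mathrm{lyap}}^{(i)}$.

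The main difficulty is not conceptual but bookkeeping. Each of the eleven variants must be checked so that the algebraic rearrangement yields exactly the coefficient matrices, signs, transposes, and feedback weightings displayed in \eqref{tv_mp}--\eqref{tw_sf}. In particular, the distinction between $V_{\mathrm{lyap}}^{(i-1)}$ (appearing in the feedback) and $V_{\mathrm{lyap}}^{(i)}$ (appearing in the ansatz) has to be reconciled by embedding $T_{v,\cdot}^{(i-1)}$ as $\begin{bmatrix} T_{v,\cdot}^{(i-1)} \\ 0 \end{bmatrix}$ so that $V_{\mathrm{lyap}}^{(i-1)}T_{v,\cdot}^{(i-1)}=V_{\mathrm{lyap}}^{(i)}\begin{bmatrix} T_{v,\cdot}^{(i-1)} \\ 0 \end{bmatrix}$, and the constant terms such as $\hat{B}_{1,i}^{\mathrm{lyap}}D_1^{-1}$ or $\hat{B}_{1,i}^{\mathrm{lyap}}(D_1+D_1^{\top})^{-1/2}$ must be matched against the corresponding $B_{1,\cdot}L_{v,\mathrm{lyap}}^{(i)}$ produced when invoking \eqref{cfadi_v_sylv1}. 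Once this indexing and scaling is handled uniformly, the proof for each of the eleven identities amounts to a single column-rank cancellation on $E_1V_{\mathrm{lyap}}^{(i)}$ or $E_2^{\top}W_{\mathrm{lyap}}^{(i)}$.
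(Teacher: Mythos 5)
Your proposal matches the paper's approach: the paper explicitly states that the proof is ``similar to that of Theorem~\ref{th_uni_radi} and is therefore omitted,'' and you correctly re-deploy the Appendix~G template (expand $V_{\mathrm{lyap}}^{(i)}t$ blockwise, use the Sylvester identities \eqref{cfadi_v_sylv1}--\eqref{cfadi_w_sylv2}, invoke the induction hypothesis on $V_{\cdot}^{(i-1)}=V_{\mathrm{lyap}}^{(i-1)}T_{v,\cdot}^{(i-1)}$, and match against the small-scale Sylvester equations \eqref{tv_mp}--\eqref{tw_sf}) for each of the eleven variants. The one framing nuance worth noting is that Appendix~G proceeds in the verification direction—defining $t$ by its Sylvester equation and checking that $V_{\mathrm{lyap}}^{(i)}t$ satisfies the shifted linear system—whereas your phrasing (``cancelling the common left factor $E_1V_{\mathrm{lyap}}^{(i)}$ leaves exactly the small Sylvester equation'') reads as the derivation direction, which presupposes $v_i^{\cdot}\in\operatorname{span}(V_{\mathrm{lyap}}^{(i)})$; since the algebra is reversible the two are equivalent, but the verification direction avoids even the appearance of circularity.
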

\begin{proof}
The proof is similar to that of Theorem~\ref{th_uni_radi} and is therefore omitted for brevity.
\end{proof}
The Riccati equations \eqref{ricc_p_sf} and \eqref{ricc_q_sf} involve \(Q_2\) and \(P_1\), which are not feasible to compute in a large-scale setting. If, however, ADI-based approximations of \(Q_2\) and \(P_1\) are available—namely, $Q_2 \approx W_{\mathrm{lyap}}^{(i)}\hat{Q}_{\mathrm{lyap}}^{(i)}\big(W_{\mathrm{lyap}}^{(i)}\big)^*$ and
$P_1 \approx V_{\mathrm{lyap}}^{(i)}\hat{P}_{\mathrm{lyap}}^{(i)}\big(V_{\mathrm{lyap}}^{(i)}\big)^*$, then we can approximate \(P_{\mathrm{sf}}\) and \(Q_{\mathrm{sf}}\) as \(P_{\mathrm{sf}} \approx \mathcal{P}_{\mathrm{sf}}^{(i)}\) and \(Q_{\mathrm{sf}} \approx \mathcal{Q}_{\mathrm{sf}}^{(i)}\) by solving the following Riccati equations:
\begin{align}
&A_1 \mathcal{P}_{\mathrm{sf}}^{(i)}E_1^\top+E_1 \mathcal{P}_{\mathrm{sf}}^{(i)}A_1^\top+\Bigg(B_1-E_1\mathcal{P}_{\mathrm{sf}}^{(i)}\Big(W_{\mathrm{lyap}}^{(i)}\hat{Q}_{\mathrm{lyap}}^{(i)}\big(W_{\mathrm{lyap}}^{(i)}\big)^*B_1+C_1^\top D_1\Big)\Bigg)\big(D_1^\top D_1\big)^{-1}\nonumber\\
&\hspace*{6cm}\times\Bigg(B_1-E_1\mathcal{P}_{\mathrm{sf}}^{(i)}\Big(W_{\mathrm{lyap}}^{(i)}\hat{Q}_{\mathrm{lyap}}^{(i)}\big(W_{\mathrm{lyap}}^{(i)}\big)^*B_1+C_1^\top D_1\Big)\Bigg)^*=0,\label{sf_ricc_p}\\
&A_2^\top \mathcal{Q}_{\mathrm{sf}}^{(i)}E_2+E_2^\top \mathcal{Q}_{\mathrm{sf}}^{(i)}A_2+\Bigg(C_2-\Big(C_2V_{\mathrm{lyap}}^{(i)}\hat{P}_{\mathrm{lyap}}^{(i)}\big(V_{\mathrm{lyap}}^{(i)}\big)^*+D_2B_2^\top\Big)\mathcal{Q}_{\mathrm{sf}}^{(i)}E_2\Bigg)^*\big(D_2D_2^\top\big)^{-1}\nonumber\\
&\hspace*{6cm}\times\Bigg(C_2-\Big(C_2V_{\mathrm{lyap}}^{(i)}\hat{P}_{\mathrm{lyap}}^{(i)}\big(V_{\mathrm{lyap}}^{(i)}\big)^*+D_2B_2^\top\Big)\mathcal{Q}_{\mathrm{sf}}^{(i)}E_2\Bigg)=0.\label{sf_ricc_q}
\end{align}
RADI-based approximations of \(\mathcal{P}_{\mathrm{sf}}^{(i)}\) and \(\mathcal{Q}_{\mathrm{sf}}^{(i)}\) can still be extracted from \(V_{\mathrm{lyap}}^{(i)}\) and \(W_{\mathrm{lyap}}^{(i)}\) along similar lines by redefining \(A_{1,\mathrm{sf}}\), \(C_{1,\mathrm{sf}}\), \(A_{2,\mathrm{sf}}\), and \(B_{2,\mathrm{sf}}\) as follows:
\begin{align}
A_{1,\mathrm{sf}}&=A_1-B_1(D_1^\top D_1)^{-1}\Big(B_1^\top W_{\mathrm{lyap}}^{(i)}\hat{Q}_{\mathrm{lyap}}^{(i)}\big(W_{\mathrm{lyap}}^{(i)}\big)^*+D_1^\top C_1\Big),\nonumber\\
C_{1,\mathrm{sf}}&=\big(D_1^\top D_1\big)^{-\frac{1}{2}}\Big(B_1^\top W_{\mathrm{lyap}}^{(i)}\hat{Q}_{\mathrm{lyap}}^{(i)}\big(W_{\mathrm{lyap}}^{(i)}\big)^*+D_1^\top C_1\Big),\\
A_{2,\mathrm{sf}}&=A_2-\Big(V_{\mathrm{lyap}}^{(i)}\hat{P}_{\mathrm{lyap}}^{(i)}\big(V_{\mathrm{lyap}}^{(i)}\big)^*C_2^\top+B_2D_2^\top\Big)\big(D_2D_2^\top\big)^{-1}C_2,\nonumber\\ B_{2,\mathrm{sf}}&=\Big(V_{\mathrm{lyap}}^{(i)}\hat{P}_{\mathrm{lyap}}^{(i)}\big(V_{\mathrm{lyap}}^{(i)}\big)^*C_2^\top+B_2D_2^\top\Big)\big(D_2D_2^\top\big)^{-\frac{1}{2}}.
\end{align}
However, their simultaneous computation with \(Q_2\) and \(P_1\) is tricky: whenever the approximations $Q_2 \approx W_{\mathrm{lyap}}^{(i)}\hat{Q}_{\mathrm{lyap}}^{(i)}\big(W_{\mathrm{lyap}}^{(i)}\big)^*$ and
$P_1 \approx V_{\mathrm{lyap}}^{(i)}\hat{P}_{\mathrm{lyap}}^{(i)}\big(V_{\mathrm{lyap}}^{(i)}\big)^*$ are updated, \(\mathcal{P}_{\mathrm{sf}}^{(i)}\) and \(\mathcal{Q}_{\mathrm{sf}}^{(i)}\) change as well, so the previous approximations \(\mathcal{P}_{\mathrm{sf}}^{(i-1)}\) and \(\mathcal{Q}_{\mathrm{sf}}^{(i-1)}\) must be discarded. Moreover, the residuals of the Riccati equations \eqref{sf_ricc_p} and \eqref{sf_ricc_q} can only be used to assess the quality of the approximations \(P_{\mathrm{sf}} \approx \mathcal{P}_{\mathrm{sf}}^{(i)}\) and \(Q_{\mathrm{sf}} \approx \mathcal{Q}_{\mathrm{sf}}^{(i)}\) once the residuals of the Lyapunov equations \eqref{lyap_q} and \eqref{lyap_p} have dropped significantly—indicating that the approximations of \(Q_2\) and \(P_1\) are already accurate.

RADI-based approximations of \(\mathcal{P}_{\mathrm{sf}}^{(i)}\) and \(\mathcal{Q}_{\mathrm{sf}}^{(i)}\) can be obtained from \(V_{\mathrm{lyap}}^{(i)}\) and \(W_{\mathrm{lyap}}^{(i)}\) (similar to how the approximation of \(P_{\mathrm{ricc}}\) is obtained using Theorem~\ref{prop_pt_ricc}) as
\[\mathcal{P}_{\mathrm{sf}}^{(i)}\approx V_{\mathrm{lyap}}^{(i)} \mathcal{T}_{v,\mathrm{sf}}^{(i)}\hat{\mathcal{P}}_{\mathrm{sf}}^{(i)}\big(\mathcal{T}_{v,\mathrm{sf}}^{(i)}\big)^*\big(V_{\mathrm{lyap}}^{(i)}\big)^*\quad
\text{and}\quad\mathcal{Q}_{\mathrm{sf}}^{(i)}\approx W_{\mathrm{lyap}}^{(i)} \mathcal{T}_{w,\mathrm{sf}}^{(i)}\hat{\mathcal{Q}}_{\mathrm{sf}}^{(i)}\big(\mathcal{T}_{w,\mathrm{sf}}^{(i)}\big)^*\big(W_{\mathrm{lyap}}^{(i)}\big)^*,\]
where \(\mathcal{T}_{v,\mathrm{sf}}^{(i)}\), \(\big(\hat{\mathcal{P}}_{\mathrm{sf}}^{(i)}\big)^{-1}\), \(\mathcal{T}_{w,\mathrm{sf}}^{(i)}\), and \(\big(\hat{\mathcal{Q}}_{\mathrm{sf}}^{(i)}\big)^{-1}\) solve the following linear matrix equations:
\begin{align}
&\hat{A}_{1,i}^{\mathrm{lyap}}-\hat{B}_{1,i}^{\mathrm{lyap}}(D_1^\top D_1)^{-1}\Big(B_1^\top W_{\mathrm{lyap}}^{(i)}\hat{Q}_{\mathrm{lyap}}^{(i)}\big(W_{\mathrm{lyap}}^{(i)}\big)^*V_{\mathrm{lyap}}^{(i)}+D_1^\top C_1V_{\mathrm{lyap}}^{(i)}\Big)\mathcal{T}_{v,\mathrm{sf}}^{(i)}-\mathcal{T}_{v,\mathrm{sf}}^{(i)}S_{v,\mathrm{lyap}}^{(i)}\nonumber\\
&\hspace*{10cm}+\hat{B}_{1,i}^{\mathrm{lyap}}(D_1^\top D_1)^{-\frac{1}{2}}L_{v,\mathrm{lyap}}^{(i)}=0,\label{Tv_sf}
\end{align}
\begin{align}
&-\big(S_{v,\mathrm{lyap}}^{(i)}\big)^*\big(\hat{\mathcal{P}}_{\mathrm{sf}}^{(i)}\big)^{-1}-\big(\hat{\mathcal{P}}_{\mathrm{sf}}^{(i)}\big)^{-1}S_{v,\mathrm{lyap}}^{(i)}+\big(L_{v,\mathrm{lyap}}^{(i)}\big)^\top L_{v,\mathrm{lyap}}^{(i)}\nonumber\\
&\hspace*{3cm}-\big(\mathcal{T}_{v,\mathrm{sf}}^{(i)}\big)^*\big(V_{\mathrm{lyap}}^{(i)}\big)^*\Big(B_1^\top W_{\mathrm{lyap}}^{(i)}\hat{Q}_{\mathrm{lyap}}^{(i)}\big(W_{\mathrm{lyap}}^{(i)}\big)^*+D_1^\top C_1\Big)^*\big(D_1^\top D_1\big)^{-1}\nonumber\\
&\hspace*{7cm}\times\Big(B_1^\top W_{\mathrm{lyap}}^{(i)}\hat{Q}_{\mathrm{lyap}}^{(i)}\big(W_{\mathrm{lyap}}^{(i)}\big)^*+D_1^\top C_1\Big)V_{\mathrm{lyap}}^{(i)}\mathcal{T}_{v,\mathrm{sf}}^{(i)}=0,\label{P_sf}\\
&\big(\hat{A}_{2,i}^{\mathrm{lyap}}\big)^*-\big(\hat{C}_{2,i}^{\mathrm{lyap}}\big)^*(D_2 D_2^\top)^{-1}\Big(C_2 V_{\mathrm{lyap}}^{(i)}\hat{P}_{\mathrm{lyap}}^{(i)}\big(V_{\mathrm{lyap}}^{(i)}\big)^*W_{\mathrm{lyap}}^{(i)}+D_2 B_2^\top W_{\mathrm{lyap}}^{(i)}\Big)\mathcal{T}_{w,\mathrm{sf}}^{(i)}-\mathcal{T}_{w,\mathrm{sf}}^{(i)}S_{w,\mathrm{lyap}}^{(i)}\nonumber\\
&\hspace*{10cm}+\big(\hat{C}_{2,i}^{\mathrm{lyap}}\big)^*(D_2 D_2^\top)^{-\frac{1}{2}}L_{w,\mathrm{lyap}}^{(i)}=0,\label{Tw_sf}\\
&-\big(S_{w,\mathrm{lyap}}^{(i)}\big)^*\big(\hat{\mathcal{Q}}_{\mathrm{sf}}^{(i)}\big)^{-1}-\big(\hat{\mathcal{Q}}_{\mathrm{sf}}^{(i)}\big)^{-1}S_{w,\mathrm{lyap}}^{(i)}+\big(L_{w,\mathrm{lyap}}^{(i)}\big)^\top L_{w,\mathrm{lyap}}^{(i)}\nonumber\\
&\hspace*{3cm}-\big(\mathcal{T}_{w,\mathrm{sf}}^{(i)}\big)^*\big(W_{\mathrm{lyap}}^{(i)}\big)^*\Big(C_2 V_{\mathrm{lyap}}^{(i)}\hat{P}_{\mathrm{lyap}}^{(i)}\big(V_{\mathrm{lyap}}^{(i)}\big)^*+D_2 B_2^\top\Big)^*\big(D_2 D_2^\top\big)^{-1}\nonumber\\
&\hspace*{7cm}\times\Big(C_2 V_{\mathrm{lyap}}^{(i)}\hat{P}_{\mathrm{lyap}}^{(i)}\big(V_{\mathrm{lyap}}^{(i)}\big)^*+D_2 B_2^\top\Big)W_{\mathrm{lyap}}^{(i)}\mathcal{T}_{w,\mathrm{sf}}^{(i)}=0.\label{Q_sf}
\end{align}These approximations must be recomputed each time the approximations of \(Q_2\) and \(P_1\) are updated. Overall, the simultaneous approximation of \(P_{\mathrm{sf}}\) and \(Q_{\mathrm{sf}}\) together with \(Q_2\) and \(P_1\) is less elegant than the Riccati equations considered earlier, as it loses the recursive accumulation of matrices.
\subsection{Realification of Matrices in Unified ADI Framework}
In the recursive interpolation framework, there is no theoretical requirement to use one interpolation point at a time. As long as the pairs  
$\big(-(s_{v}^{(i)})^*,(l_v^{(i)})^*\big)$ and $\big(-(s_w^{(i)})^*,(l_w^{(i)})^*\big)$ are observable, $s_v^{(i)}$ and $s_w^{(i)}$ may contain multiple interpolation points simultaneously \cite{wolfthesis}. This property can be exploited in CF-ADI to keep the matrices $V_{\mathrm{lyap}}^{(i)}$ and $W_{\mathrm{lyap}}^{(i)}$ real-valued even when the shifts are complex-valued. Keeping most matrices in CF-ADI real is beneficial both in terms of practical utility and computational efficiency \cite{benner2013efficient}.

Assume that for every complex-valued shift $\alpha_i$ and $\beta_i$, the next shifts are $\alpha_{i+1} = \overline{\alpha}_i$ and $\beta_{i+1} = \overline{\beta}_i$. Then the simplest—and widely used—choice in rational interpolation literature \cite{beattie2017chapter} for the pairs $(s_{v,\mathrm{lyap}}^{(i)},l_{v,\mathrm{lyap}}^{(i)})$, $(s_{w,\mathrm{lyap}}^{(i)},l_{w,\mathrm{lyap}}^{(i)})$, $(s_{v,\mathrm{sylv}}^{(i)},l_{v,\mathrm{sylv}}^{(i)})$, $(s_{w,\mathrm{sylv}}^{(i)},l_{w,\mathrm{sylv}}^{(i)})$, $(s_{v,\mathrm{ricc}}^{(i)},l_{v,\mathrm{sylv}}^{(i)})$, and $(s_{w,\mathrm{ricc}}^{(i)},l_{w,\mathrm{ricc}}^{(i)})$ is as follows:
\begin{align}
s_{v,\mathrm{lyap}}^{(i)}&=s_{v,\mathrm{sylv}}^{(i)}=s_{v,\mathrm{ricc}}^{(i)}=\begin{bmatrix}-\mathrm{Re}(\alpha_i)I&\mathrm{Im}(-\alpha_i)I\\-\mathrm{Im}(-\alpha_i)I&-\mathrm{Re}(\alpha_i)I\end{bmatrix},&
l_{v,\mathrm{lyap}}^{(i)}&=l_{v,\mathrm{sylv}}^{(i)}=l_{v,\mathrm{ricc}}^{(i)}=\begin{bmatrix}-I&0\end{bmatrix},\nonumber\\
s_{w,\mathrm{lyap}}^{(i)}&=s_{w,\mathrm{sylv}}^{(i)}=s_{w,\mathrm{ricc}}^{(i)}=\begin{bmatrix}-\mathrm{Re}(\beta_i)I&\mathrm{Im}(\beta_i)I\\-\mathrm{Im}(\beta_i)I&-\mathrm{Re}(\beta_i)I\end{bmatrix},&
l_{w,\mathrm{lyap}}^{(i)}&=l_{w,\mathrm{sylv}}^{(i)}=l_{w,\mathrm{ricc}}^{(i)}=\begin{bmatrix}-I&0\end{bmatrix}.\nonumber
\end{align}
However, for consistency with existing literature—and to leverage certain analytical expressions found there—we adopt the same implicit choices used in \cite{benner2013reformulated,benner2014computing,benner2018radi}.

When $\alpha_i\in\mathbb{R}_{-}$ and $\beta_i\in\mathbb{R}_{-}$, the matrices $s_{v,\mathrm{lyap}}^{(i)}$, $l_{v,\mathrm{lyap}}^{(i)}$, $s_{w,\mathrm{lyap}}^{(i)}$, and $l_{w,\mathrm{lyap}}^{(i)}$ are (implicitly) set in \cite{benner2013reformulated} as follows:
\begin{align}
s_{v,\mathrm{lyap}}^{(i)}&=-\alpha_iI,& l_{v,\mathrm{lyap}}^{(i)}&=-\sqrt{-2\alpha_i}I,& s_{w,\mathrm{lyap}}^{(i)}&=-\beta_iI,& l_{w,\mathrm{lyap}}^{(i)}&=-\sqrt{-2\beta_i}I.\label{sv_cf_r}
\end{align}
When $\alpha_i\in\mathbb{C}_{-}$ and $\beta_i\in\mathbb{C}_{-}$, define  
$\phi_v = \sqrt{-\mathrm{Re}(\alpha_i)}$,  
$\delta_v = \frac{\mathrm{Re}(\alpha_i)}{\mathrm{Im}(\alpha_i)}$,  
$\phi_w = \sqrt{-\mathrm{Re}(\beta_i)}$, and  
$\delta_w = \frac{\mathrm{Re}(\beta_i)}{\mathrm{Im}(\beta_i)}$.  
Then, as in \cite{benner2013reformulated}, the matrices are (implicitly) set as:
\begin{align}
s_{v,\mathrm{lyap}}^{(i)}&=t_v^{-1}\begin{bmatrix}-\alpha_i&0\\0&-\overline{\alpha}_i\end{bmatrix}t_v\otimes I,& l_{v,\mathrm{lyap}}^{(i)}&=\begin{bmatrix}-1&-1\end{bmatrix}t_v\otimes I,\label{sv_cf_i}\\
s_{w,\mathrm{lyap}}^{(i)}&=t_w^{-1}\begin{bmatrix}-\beta_i&0\\0&-\overline{\beta}_i\end{bmatrix}t_w\otimes I,& l_{w,\mathrm{lyap}}^{(i)}&=\begin{bmatrix}-1&-1\end{bmatrix}t_w\otimes I,\label{sw_cf_i}
\end{align}where
\begin{align}
t_v&=\phi_v\begin{bmatrix}1&-j\\1&j\end{bmatrix}\begin{bmatrix}1&0\\\delta_v&\sqrt{1+\delta_v^2}\end{bmatrix},&
t_w&=\phi_w\begin{bmatrix}1&-j\\1&j\end{bmatrix}\begin{bmatrix}1&0\\\delta_w&\sqrt{1+\delta_w^2}\end{bmatrix}.
\end{align}
A notable advantage of this realification strategy is that the matrices $\hat{P}_{\mathrm{lyap}}^{(i)}$ and $\hat{Q}_{\mathrm{lyap}}^{(i)}$ become $\hat{P}_{\mathrm{lyap}}^{(i)} = I$ and $\hat{Q}_{\mathrm{lyap}}^{(i)} = I$, as noted in \cite{wolf2016adi}. Consequently, we obtain the simplified expressions:
\begin{align}
S_{v,\mathrm{lyap}}^{(i)}&=\begin{bmatrix}S_{v,\mathrm{lyap}}^{(i-1)}&(L_{v,\mathrm{lyap}}^{(i-1)})^\top l_{v,\mathrm{lyap}}^{(i)}\\0&s_{v,\mathrm{lyap}}^{(i)}\end{bmatrix},&
L_{v,\mathrm{lyap}}^{(i)}&=\begin{bmatrix}L_{v,\mathrm{lyap}}^{(i-1)}&l_{v,\mathrm{lyap}}^{(i)}\end{bmatrix},\label{SvLv}\\
\hat{A}_{1,i}^{\mathrm{lyap}} &= -\big(S_{v,\mathrm{lyap}}^{(i)}\big)^*,&  
\hat{B}_{1,i}^{\mathrm{lyap}} &= \big(L_{v,\mathrm{lyap}}^{(i)}\big)^\top,\label{A_lyapB_lyap}\\
S_{w,\mathrm{lyap}}^{(i)}&=\begin{bmatrix}S_{w,\mathrm{lyap}}^{(i-1)}&(L_{w,\mathrm{lyap}}^{(i-1)})^\top l_{w,\mathrm{lyap}}^{(i)}\\0&s_{w,\mathrm{lyap}}^{(i)}\end{bmatrix},& L_{w,\mathrm{lyap}}^{(i)}&=\begin{bmatrix}L_{w,\mathrm{lyap}}^{(i-1)}&l_{w,\mathrm{lyap}}^{(i)}\end{bmatrix},\label{SwLw}\\
\hat{A}_{2,i}^{\mathrm{lyap}} &= -S_{w,\mathrm{lyap}}^{(i)},&
\hat{C}_{2,i}^{\mathrm{lyap}} &= L_{w,\mathrm{lyap}}^{(i)}.\label{Aw_lyapBw_lyap}
\end{align}  
The same strategy is implicitly used for the realification of RADI in \cite{benner2018radi}.

For the realification of matrices in FADI, \cite{benner2014computing} proposes grouping the shifts as follows:
\begin{enumerate}
  \item Case I: $\alpha_i\in\mathbb{R}_{-}$ and $\beta_i\in\mathbb{R}_{-}$.
  \item Case II: $\alpha_i\in\mathbb{C}_{-}$, $\alpha_{i+1}=\overline{\alpha}_i$, $\beta_i\in\mathbb{C}_{-}$, and $\beta_{i+1}=\overline{\beta}_i$.
  \item Case III: $\alpha_i\in\mathbb{R}_{-}$, $\alpha_{i+1}\in\mathbb{R}_{-}$, $\beta_i\in\mathbb{C}_{-}$, and $\beta_{i+1}=\overline{\beta}_i$.
  \item Case IV: $\alpha_i\in\mathbb{C}_{-}$, $\alpha_{i+1}=\overline{\alpha}_i$, $\beta_i\in\mathbb{R}_{-}$, and $\beta_{i+1}\in\mathbb{R}_{-}$.
\end{enumerate}
For Case I, the matrices $s_{v,\mathrm{sylv}}^{(i)}$, $l_{v,\mathrm{sylv}}^{(i)}$, $s_{w,\mathrm{sylv}}^{(i)}$, and $l_{w,\mathrm{sylv}}^{(i)}$ are (implicitly) set in \cite{benner2014computing} as:
\begin{align}
s_{v,\mathrm{sylv}}^{(i)}&=-\alpha_iI,& l_{v,\mathrm{sylv}}^{(i)}&=-I,& s_{w,\mathrm{sylv}}^{(i)}&=-\beta_iI,& l_{w,\mathrm{sylv}}^{(i)}=-I.\label{sv_case1}
\end{align}
For Case II:
\begin{align}
s_{v,\mathrm{sylv}}^{(i)}&=\begin{bmatrix}-\mathrm{Re}(\alpha_i)I&\mathrm{Im}(-\alpha_i)I\\-\mathrm{Im}(-\alpha_i)I&-\mathrm{Re}(\alpha_i)I\end{bmatrix},&
l_{v,\mathrm{sylv}}^{(i)}&=\begin{bmatrix}-I&0\end{bmatrix},\label{sv_case2}\\
s_{w,\mathrm{sylv}}^{(i)}&=\begin{bmatrix}-\mathrm{Re}(\beta_i)I&\mathrm{Im}(\beta_i)I\\-\mathrm{Im}(\beta_i)I&-\mathrm{Re}(\beta_i)I\end{bmatrix},&
l_{w,\mathrm{sylv}}^{(i)}&=\begin{bmatrix}-I&0\end{bmatrix}.\label{sw_case2}
\end{align}
For Case III:
\begin{align}
s_{v,\mathrm{sylv}}^{(i)}&=\begin{bmatrix}-\alpha_iI&I\\0&-\alpha_{i+1}\end{bmatrix},& l_{v,\mathrm{sylv}}^{(i)}&=\begin{bmatrix}-I&0\end{bmatrix},\label{sv_case3}\\
s_{w,\mathrm{sylv}}^{(i)}&=\begin{bmatrix}-\mathrm{Re}(\beta_i)I&\mathrm{Im}(\beta_i)I\\-\mathrm{Im}(\beta_i)I&-\mathrm{Re}(\beta_i)I\end{bmatrix},&
l_{w,\mathrm{sylv}}^{(i)}&=\begin{bmatrix}-I&0\end{bmatrix}.\label{sw_case3}
\end{align}
For Case IV:
\begin{align}
s_{v,\mathrm{sylv}}^{(i)}&=\begin{bmatrix}-\mathrm{Re}(\alpha_i)I&\mathrm{Im}(-\alpha_i)I\\-\mathrm{Im}(-\alpha_i)I&-\mathrm{Re}(\alpha_i)I\end{bmatrix},&
l_{v,\mathrm{sylv}}^{(i)}&=\begin{bmatrix}-I&0\end{bmatrix},\label{sv_case4}\\
s_{w,\mathrm{sylv}}^{(i)}&=\begin{bmatrix}-\beta_iI&I\\0&-\beta_{i+1}I\end{bmatrix},&
l_{w,\mathrm{sylv}}^{(i)}&=\begin{bmatrix}-I&0\end{bmatrix}.\label{sw_case4}
\end{align}
The pseudo-code for the proposed Unified low-rank ADI algorithm (UADI), incorporating these realification strategies to simultaneously solve \eqref{lyap_p}–\eqref{ricc_q_sf} with only two linear solves per iteration, is given in Algorithm 4. For clarity, we solve the small-scale projected Lyapunov/Sylvester equations satisfied by the variables  
$\big(\hat{d}_{\mathrm{sylv}}^{(i)}\big)^{-1}$, $\big(\hat{p}_{\mathrm{ricc}}^{(i)}\big)^{-1}$,  
$\big(\hat{q}_{\mathrm{ricc}}^{(i)}\big)^{-1}$,  
$\big(\hat{p}_{\infty}^{(i)}\big)^{-1}$,  
$\big(\hat{q}_{\infty}^{(i)}\big)^{-1}$,  
$\big(\hat{p}_{\mathrm{pr}}^{(i)}\big)^{-1}$,  
$\big(\hat{q}_{\mathrm{pr}}^{(i)}\big)^{-1}$,
$\big(\hat{p}_{\mathrm{br}}^{(i)}\big)^{-1}$, and
$\big(\hat{q}_{\mathrm{br}}^{(i)}\big)^{-1}$ in UADI. Analytical expressions for these matrices can be found in \cite{benner2013reformulated,benner2014computing,benner2018radi}, allowing them to be computed directly without solving the corresponding projected Lyapunov/Sylvester equations. Nevertheless, even if those equations are solved, the computations remain small-scale and inexpensive. The remaining Sylvester equations in UADI can be solved efficiently using the algorithm proposed in \cite{MPIMD11-11}. The main computational cost in UADI is concentrated in Steps (\ref{step_main_1}) and (\ref{step_main_2}), which involve shifted linear solves to compute $v_i^{\mathrm{lyap}}$ and $w_i^{\mathrm{lyap}}$. All other steps are small-scale and can be executed efficiently.

\noindent\rule{\textwidth}{0.8pt}

\noindent\textbf{Algorithm 4:} UADI

\noindent\rule{\textwidth}{0.8pt}

\textbf{Inputs:} Matrices of linear matrix equations: \(A_1\), \(B_1\), \(C_1\), \(D_1\), \(E_1\), \(S_1\), \(\gamma_1\), \(A_2\), \(B_2\), \(C_2\), \(D_2\), \(E_2\), \(S_2\), \(\gamma_2\); \\
       ADI shifts: \(\{\alpha_i\}_{i=1}^k \in \mathbb{C}_{-}\) and \(\{\beta_i\}_{i=1}^k \in \mathbb{C}_{-}\).

\textbf{Outputs:} Approximations:  
\(P_1 \approx V_{\mathrm{lyap}}^{(k)} (V_{\mathrm{lyap}}^{(k)})^\top\),  
\(Q_2 \approx W_{\mathrm{lyap}}^{(k)} (W_{\mathrm{lyap}}^{(k)})^\top\),  
\(P_s \approx V_{\mathrm{lyap}}^{(k)} (I\otimes S_1) (V_{\mathrm{lyap}}^{(k)})^\top\),  
\(Q_s \approx W_{\mathrm{lyap}}^{(k)} (I\otimes S_2) (W_{\mathrm{lyap}}^{(k)})^\top\),  
\(P_{\mathrm{mp}} \approx V_{\mathrm{mp}}^{(i)}(V_{\mathrm{mp}}^{(i)})^\top\),  
\(Q_{\mathrm{mp}} \approx W_{\mathrm{mp}}^{(i)}(W_{\mathrm{mp}}^{(i)})^\top\),  
\(X_{\mathrm{sylv}} \approx V_{\mathrm{fadi}}^{(i)}D_{\mathrm{fadi}}^{(i)}(W_{\mathrm{fadi}}^{(i)})^\top\),  
\(P_{\mathrm{ricc}} \approx V_{\mathrm{radi}}^{(i)}\hat{P}_{\mathrm{ricc}}^{(i)}(V_{\mathrm{radi}}^{(i)})^\top\),  
\(Q_{\mathrm{ricc}} \approx W_{\mathrm{radi}}^{(i)}\hat{Q}_{\mathrm{ricc}}^{(i)}(W_{\mathrm{radi}}^{(i)})^\top\),  
\(P_{\infty} \approx V_{\infty}^{(i)}\hat{P}_{\infty}^{(i)}(V_{\infty}^{(i)})^\top\),  
\(Q_{\infty} \approx W_{\infty}^{(i)}\hat{Q}_{\infty}^{(i)}(W_{\infty}^{(i)})^\top\),  
\(P_{\mathrm{pr}} \approx V_{\mathrm{pr}}^{(i)}\hat{P}_{\mathrm{pr}}^{(i)}(V_{\mathrm{pr}}^{(i)})^\top\),  
\(Q_{\mathrm{pr}} \approx W_{\mathrm{pr}}^{(i)}\hat{Q}_{\mathrm{pr}}^{(i)}(W_{\mathrm{pr}}^{(i)})^\top\),  
\(P_{\mathrm{br}} \approx V_{\mathrm{br}}^{(i)}\hat{P}_{\mathrm{br}}^{(i)}(V_{\mathrm{br}}^{(i)})^\top\),  
\(Q_{\mathrm{br}} \approx W_{\mathrm{br}}^{(i)}\hat{Q}_{\mathrm{br}}^{(i)}(W_{\mathrm{br}}^{(i)})^\top\),  
\(P_{\mathrm{sf}} \approx \mathcal{V}_{\mathrm{sf}}^{(i)}\hat{\mathcal{P}}_{\mathrm{sf}}^{(i)}(\mathcal{V}_{\mathrm{sf}}^{(i)})^\top\),  
\(Q_{\mathrm{sf}} \approx \mathcal{W}_{\mathrm{sf}}^{(i)}\hat{\mathcal{Q}}_{\mathrm{sf}}^{(i)}(\mathcal{W}_{\mathrm{sf}}^{(i)})^\top\);
Residuals:  
\(R_{p,\mathrm{lyap}} = B_{\perp,k}^{\mathrm{lyap}} (B_{\perp,k}^{\mathrm{lyap}})^\top\),  
\(R_{q,\mathrm{lyap}} = (C_{\perp,k}^{\mathrm{lyap}})^\top C_{\perp,k}^{\mathrm{lyap}}\),  
\(R_{p,\mathrm{s}} = B_{\perp,k}^{\mathrm{lyap}} S_1(B_{\perp,k}^{\mathrm{lyap}})^\top\),  
\(R_{q,\mathrm{s}} = (C_{\perp,k}^{\mathrm{lyap}})^\top S_2C_{\perp,k}^{\mathrm{lyap}}\),  
\(R_{p,\mathrm{mp}} = B_{\perp,k}^{\mathrm{mp}} (B_{\perp,k}^{\mathrm{mp}})^\top\),  
\(R_{q,\mathrm{mp}} = (C_{\perp,k}^{\mathrm{mp}})^\top C_{\perp,k}^{\mathrm{mp}}\),  
\(R_{\mathrm{sylv}} = B_{\perp,k}^{\mathrm{sylv}} C_{\perp,k}^{\mathrm{sylv}}\),  
\(R_{p,\mathrm{ricc}} = B_{\perp,k}^{\mathrm{ricc}} (B_{\perp,k}^{\mathrm{ricc}})^\top\),  
\(R_{q,\mathrm{ricc}} = (C_{\perp,k}^{\mathrm{ricc}})^\top C_{\perp,k}^{\mathrm{ricc}}\),  
\(R_{p,\infty} = B_{\perp,k}^{\infty} (B_{\perp,k}^{\infty})^\top\),  
\(R_{q,\infty} = (C_{\perp,k}^{\infty})^\top C_{\perp,k}^{\infty}\),  
\(R_{p,\mathrm{pr}} = B_{\perp,k}^{\mathrm{pr}} (B_{\perp,k}^{\mathrm{pr}})^\top\),  
\(R_{q,\mathrm{pr}} = (C_{\perp,k}^{\mathrm{pr}})^\top C_{\perp,k}^{\mathrm{pr}}\),  
\(R_{p,\mathrm{br}} = B_{\perp,k}^{\mathrm{br}} (B_{\perp,k}^{\mathrm{br}})^\top\),  
\(R_{q,\mathrm{br}} = (C_{\perp,k}^{\mathrm{br}})^\top C_{\perp,k}^{\mathrm{br}}\),  
\(R_{p,\mathrm{sf}} = B_{\perp,k}^{\mathrm{sf}} (B_{\perp,k}^{\mathrm{sf}})^\top\),  
\(R_{q,\mathrm{sf}} = (C_{\perp,k}^{\mathrm{sf}})^\top C_{\perp,k}^{\mathrm{sf}}\).

\begin{enumerate}
\item Initialize \(i = 1\), \(B_{\perp,0}^{\mathrm{lyap}} = B_1\), \(C_{\perp,0}^{\mathrm{lyap}} = C_2\), \(B_{\perp,0}^{\mathrm{mp}} = B_1D_1^{-1}\), \(C_{\perp,0}^{\mathrm{mp}} = D_2^{-1}C_2\), \(B_{\perp,0}^{\mathrm{sylv}} = B_1\), \(C_{\perp,0}^{\mathrm{sylv}} = C_2\), \(B_{\perp,0}^{\mathrm{ricc}} = B_1\), \(C_{\perp,0}^{\mathrm{ricc}} = C_2\), \(B_{\perp,0}^{\infty} = B_1\), \(C_{\perp,0}^{\infty} = C_2\), \(B_{\perp,0}^{\mathrm{pr}} = B_1(D_1 + D_1^\top)^{-\frac{1}{2}}\), \(C_{\perp,0}^{\mathrm{pr}} = (D_2 + D_2^\top)^{-\frac{1}{2}}C_2\), \(B_{\perp,0}^{\mathrm{br}} = B_1\big(I + D_1^\top(I - D_1D_1^\top)^{-1}D_1\big)^{\frac{1}{2}}\), and \(C_{\perp,0}^{\mathrm{br}} = \big(I + D_2(I - D_2^\top D_2)^{-1}D_2^\top\big)^{\frac{1}{2}}C_2\).

\item \textbf{while} \(i \leq k\) \textbf{do}

\item\label{step_main_1} Solve \((A_1 + \alpha_i E_1) v_i^{\mathrm{lyap}} = B_{\perp,i-1}^{\mathrm{lyap}}\) for \(v_i^{\mathrm{lyap}}\).

\item\label{step_main_2} Solve \((A_2^\top + \beta_i E_2^\top) w_i^{\mathrm{lyap}} = (C_{\perp,i-1}^{\mathrm{lyap}})^\top\) for \(w_i^{\mathrm{lyap}}\).

\item \textbf{if} \(\alpha_i \in \mathbb{R}_{-} \land \beta_i \in \mathbb{R}_{-}\) \textbf{then}

\item Expand \(V_{\mathrm{lyap}}^{(i)} = \big[ V_{\mathrm{lyap}}^{(i-1)} \;\; \sqrt{-2\alpha_i}\, v_i^{\mathrm{lyap}} \big]\) and update \(B_{\perp,i}^{\mathrm{lyap}} = B_{\perp,i-1}^{\mathrm{lyap}} - 2\alpha_i E_1 v_i^{\mathrm{lyap}}\).

\item\label{uadi_step7} Set \(s_{v,\mathrm{lyap}}^{(i)}\) and \(l_{v,\mathrm{lyap}}^{(i)}\) as in \eqref{sv_cf_r} and expand \(S_{v,\mathrm{lyap}}^{(i)}\), \(L_{v,\mathrm{lyap}}^{(i)}\), \(\hat{A}_{1,i}^{\mathrm{lyap}}\), and \(\hat{B}_{1,i}^{\mathrm{lyap}}\) as in \eqref{SvLv} and \eqref{A_lyapB_lyap}.

\item Compute \(t_{v,\mathrm{ricc}}^{(i)}\) by solving the Sylvester equation \eqref{tv_ricc} and set \(v_i^{\mathrm{ricc}} = V_{\mathrm{lyap}}^{(i)}t_{v,\mathrm{ricc}}^{(i)}\).

\item Set \(s_{v,\mathrm{ricc}}^{(i)} = s_{v,\mathrm{lyap}}^{(i)}\) and \(l_{v,\mathrm{ricc}}^{(i)} = l_{v,\mathrm{lyap}}^{(i)}\) in \eqref{lyap_p_ricc}, and solve the Lyapunov equation to compute \(\big(\hat{p}_{\mathrm{ricc}}^{(i)}\big)^{-1}\).

\item Expand \(V_{\mathrm{radi}}^{(i)} = \big[ V_{\mathrm{radi}}^{(i-1)} \;\; v_i^{\mathrm{ricc}} \big]\) and \(\hat{P}_{\mathrm{ricc}}^{(i)} = \mathrm{blkdiag}\big(\hat{P}_{\mathrm{ricc}}^{(i-1)},\hat{p}_{\mathrm{ricc}}^{(i)}\big)\).

\item Update \(B_{\perp,i}^{\mathrm{ricc}} = B_{\perp,i-1}^{\mathrm{ricc}} - E_1 v_i^{\mathrm{ricc}}\hat{p}_{\mathrm{ricc}}^{(i)}(l_{v,\mathrm{lyap}}^{(i)})^\top\).

\item Compute \(t_{v,\infty}^{(i)}\) by solving the Sylvester equation \eqref{tv_inf} and set \(v_i^{\infty} = V_{\mathrm{lyap}}^{(i)}t_{v,\infty}^{(i)}\).

\item Compute \(\big(\hat{p}_{\infty}^{(i)}\big)^{-1}\) by solving the Lyapunov equation \eqref{p_inf}.

\item Expand \(V_{\infty}^{(i)} = \big[ V_{\infty}^{(i-1)} \;\; v_i^{\infty} \big]\) and \(\hat{P}_{\infty}^{(i)} = \mathrm{blkdiag}\big(\hat{P}_{\infty}^{(i-1)},\hat{p}_{\infty}^{(i)}\big)\).

\item Update \(B_{\perp,i}^{\infty} = B_{\perp,i-1}^{\infty} - E_1 v_i^{\infty}\hat{p}_{\infty}^{(i)}(l_{v,\mathrm{lyap}}^{(i)})^\top\).

\item Compute \(t_{v,\mathrm{br}}^{(i)}\) by solving the Sylvester equation \eqref{tv_br} and set \(v_i^{\mathrm{br}} = V_{\mathrm{lyap}}^{(i)}t_{v,\mathrm{br}}^{(i)}\).

\item Compute \(\big(\hat{p}_{\mathrm{br}}^{(i)}\big)^{-1}\) by solving the Lyapunov equation \eqref{p_br}.

\item Expand \(V_{\mathrm{br}}^{(i)} = \big[ V_{\mathrm{br}}^{(i-1)} \;\; v_i^{\mathrm{br}} \big]\) and \(\hat{P}_{\mathrm{br}}^{(i)} = \mathrm{blkdiag}\big(\hat{P}_{\mathrm{br}}^{(i-1)},\hat{p}_{\mathrm{br}}^{(i)}\big)\).

\item Update \(B_{\perp,i}^{\mathrm{br}} = B_{\perp,i-1}^{\mathrm{br}} - E_1 v_i^{\mathrm{br}}\hat{p}_{\mathrm{br}}^{(i)}(l_{v,\mathrm{lyap}}^{(i)})^\top\).

\item \textbf{if} \(m_1 = p_1\) \textbf{then}

\item Compute \(t_{v,\mathrm{mp}}^{(i)}\) and \(t_{v,\mathrm{pr}}^{(i)}\) by solving Sylvester equations \eqref{tv_mp} and \eqref{tv_pr}.

\item Set \(v_i^{\mathrm{mp}} = V_{\mathrm{lyap}}^{(i)}t_{v,\mathrm{mp}}^{(i)}\), \(v_i^{\mathrm{pr}} = V_{\mathrm{lyap}}^{(i)}t_{v,\mathrm{pr}}^{(i)}\), and compute \(\big(\hat{p}_{\mathrm{pr}}^{(i)}\big)^{-1}\) by solving the Lyapunov equation \eqref{p_pr}.

\item Expand: \(V_{\mathrm{mp}}^{(i)} = \big[ V_{\mathrm{mp}}^{(i-1)} \;\; v_i^{\mathrm{mp}} \big]\), \(V_{\mathrm{pr}}^{(i)} = \big[ V_{\mathrm{pr}}^{(i-1)} \;\; v_i^{\mathrm{pr}} \big]\), and \(\hat{P}_{\mathrm{pr}}^{(i)} = \mathrm{blkdiag}\big(\hat{P}_{\mathrm{pr}}^{(i-1)},\hat{p}_{\mathrm{pr}}^{(i)}\big)\).

\item Update: \(B_{\perp,i}^{\mathrm{mp}} = B_{\perp,i-1}^{\mathrm{mp}} - E_1v_i^{\mathrm{mp}}(l_{v,\mathrm{lyap}}^{(i)})^\top\), \(B_{\perp,i}^{\mathrm{pr}} = B_{\perp,i-1}^{\mathrm{pr}} - E_1v_i^{\mathrm{pr}}\hat{p}_{\mathrm{pr}}^{(i)}(l_{v,\mathrm{lyap}}^{(i)})^\top\).

\item \textbf{end if}

\item\label{uadi_step26} Expand \(T_{v,\mathrm{ricc}}^{(i)}\), \(T_{v,\mathrm{mp}}^{(i)}\), \(T_{v,\infty}^{(i)}\), \(T_{v,\mathrm{pr}}^{(i)}\), and \(T_{v,\mathrm{br}}^{(i)}\) as in \eqref{Tv_ricc} and \eqref{TvTw1}–\eqref{TvTw3}.

\item\label{uadi_step27} Expand \(W_{\mathrm{lyap}}^{(i)} = \big[ W_{\mathrm{lyap}}^{(i-1)} \;\; \sqrt{-2\beta_i}\, w_i^{\mathrm{lyap}} \big]\) and update \(C_{\perp,i}^{\mathrm{lyap}} = C_{\perp,i-1}^{\mathrm{lyap}} - 2\beta_i (w_i^{\mathrm{lyap}})^\top E_2\).

\item Set \(s_{w,\mathrm{lyap}}^{(i)}\) and \(l_{w,\mathrm{lyap}}^{(i)}\) as in \eqref{sv_cf_r} and expand \(S_{w,\mathrm{lyap}}^{(i)}\), \(L_{w,\mathrm{lyap}}^{(i)}\), \(\hat{A}_{2,i}^{\mathrm{lyap}}\), and \(\hat{C}_{2,i}^{\mathrm{lyap}}\) as in \eqref{SwLw} and \eqref{Aw_lyapBw_lyap}.

\item\label{uadi_step29} Compute \(t_{w,\mathrm{ricc}}^{(i)}\) by solving the Sylvester equation \eqref{tw_ricc} and set \(w_i^{\mathrm{ricc}} = W_{\mathrm{lyap}}^{(i)}t_{w,\mathrm{ricc}}^{(i)}\).

\item Compute \(\big(\hat{q}_{\mathrm{ricc}}^{(i)}\big)^{-1}\) by solving the Lyapunov equation \eqref{q_ricc}.

\item Expand \(W_{\mathrm{radi}}^{(i)} = \big[ W_{\mathrm{radi}}^{(i-1)} \;\; w_i^{\mathrm{ricc}} \big]\) and \(\hat{Q}_{\mathrm{ricc}}^{(i)} = \mathrm{blkdiag}\big(\hat{Q}_{\mathrm{ricc}}^{(i-1)},\hat{q}_{\mathrm{ricc}}^{(i)}\big)\).

\item Update \(C_{\perp,i}^{\mathrm{ricc}} = C_{\perp,i-1}^{\mathrm{ricc}} - l_{w,\mathrm{lyap}}^{(i)}\hat{q}_{\mathrm{ricc}}^{(i)}(w_i^{\mathrm{ricc}})^\top E_2\).

\item Compute \(t_{w,\infty}^{(i)}\) by solving the Sylvester equation \eqref{tw_inf} and set \(w_i^{\infty} = W_{\mathrm{lyap}}^{(i)}t_{w,\infty}^{(i)}\).

\item Compute \(\big(\hat{q}_{\infty}^{(i)}\big)^{-1}\) by solving the Lyapunov equation \eqref{q_inf}.

\item Expand \(W_{\infty}^{(i)} = \big[ W_{\infty}^{(i-1)} \;\; w_i^{\infty} \big]\) and \(\hat{Q}_{\infty}^{(i)} = \mathrm{blkdiag}\big(\hat{Q}_{\infty}^{(i-1)},\hat{q}_{\infty}^{(i)}\big)\).

\item Update \(C_{\perp,i}^{\infty} = C_{\perp,i-1}^{\infty} - l_{w,\mathrm{lyap}}^{(i)}\hat{q}_{\infty}^{(i)}(w_i^{\infty})^\top E_2\).

\item Compute \(t_{w,\mathrm{br}}^{(i)}\) by solving the Sylvester equation \eqref{tw_br} and set \(w_i^{\mathrm{br}} = W_{\mathrm{lyap}}^{(i)}t_{w,\mathrm{br}}^{(i)}\).

\item Compute \(\big(\hat{q}_{\mathrm{br}}^{(i)}\big)^{-1}\) by solving the Lyapunov equation \eqref{q_br}.

\item Expand \(W_{\mathrm{br}}^{(i)} = \big[ W_{\mathrm{br}}^{(i-1)} \;\; w_i^{\mathrm{br}} \big]\) and \(\hat{Q}_{\mathrm{br}}^{(i)} = \mathrm{blkdiag}\big(\hat{Q}_{\mathrm{br}}^{(i-1)},\hat{q}_{\mathrm{br}}^{(i)}\big)\).

\item Update \(C_{\perp,i}^{\mathrm{br}} = C_{\perp,i-1}^{\mathrm{br}} - l_{w,\mathrm{lyap}}^{(i)}\hat{q}_{\mathrm{br}}^{(i)}(w_i^{\mathrm{br}})^\top E_2\).

\item \textbf{if} \(m_2 = p_2\) \textbf{then}

\item Compute \(t_{w,\mathrm{mp}}^{(i)}\) and \(t_{w,\mathrm{pr}}^{(i)}\) by solving Sylvester equations \eqref{tw_mp} and \eqref{tw_pr}.

\item Set \(w_i^{\mathrm{mp}} = W_{\mathrm{lyap}}^{(i)}t_{w,\mathrm{mp}}^{(i)}\), \(w_i^{\mathrm{pr}} = W_{\mathrm{lyap}}^{(i)}t_{w,\mathrm{pr}}^{(i)}\), and compute \(\big(\hat{q}_{\mathrm{pr}}^{(i)}\big)^{-1}\) by solving the Lyapunov equation \eqref{q_pr}.

\item Expand: \(W_{\mathrm{mp}}^{(i)} = \big[ W_{\mathrm{mp}}^{(i-1)} \;\; w_i^{\mathrm{mp}} \big]\), \(W_{\mathrm{pr}}^{(i)} = \big[ W_{\mathrm{pr}}^{(i-1)} \;\; w_i^{\mathrm{pr}} \big]\).

\item Update: \(C_{\perp,i}^{\mathrm{mp}} = C_{\perp,i-1}^{\mathrm{mp}} - l_{w,\mathrm{lyap}}^{(i)}(w_i^{\mathrm{mp}})^\top E_2\), \(C_{\perp,i}^{\mathrm{pr}} = C_{\perp,i-1}^{\mathrm{pr}} - l_{w,\mathrm{lyap}}^{(i)}\hat{q}_{\mathrm{pr}}^{(i)}(w_i^{\mathrm{pr}})^\top E_2\).

\item \textbf{end if}

\item Expand \(T_{w,\mathrm{ricc}}^{(i)}\), \(T_{w,\mathrm{mp}}^{(i)}\), \(T_{w,\infty}^{(i)}\), \(T_{w,\mathrm{pr}}^{(i)}\), and \(T_{w,\mathrm{br}}^{(i)}\) as in \eqref{TvTw1}–\eqref{TvTw3}.

\item \textbf{if} \(A_1 = A_2\) \(\land\) \(B_1 = B_2\) \(\land\) \(C_1 = C_2\) \(\land\) \(D_1 = D_2\) \(\land\) \(E_1 = E_2\) \textbf{then}

\item Compute \(\mathcal{T}_{v,\mathrm{sf}}^{(i)}\) and \(\mathcal{T}_{w,\mathrm{sf}}^{(i)}\) by solving the Sylvester equations \eqref{Tv_sf} and \eqref{Tw_sf}.

\item Compute \(\mathcal{V}_{\mathrm{sf}}^{(i)} = V_{\mathrm{lyap}}^{(i)}\mathcal{T}_{v,\mathrm{sf}}^{(i)}\) and \(\mathcal{W}_{\mathrm{sf}}^{(i)} = W_{\mathrm{lyap}}^{(i)}\mathcal{T}_{w,\mathrm{sf}}^{(i)}\).

\item Compute \(\hat{\mathcal{P}}_{\mathrm{sf}}\) and \(\hat{\mathcal{Q}}_{\mathrm{sf}}\) by solving the Lyapunov equations \eqref{P_sf} and \eqref{Q_sf}.

\item Update \(B_{\perp,i}^{\mathrm{sf}} = B_1(D_1^\top D_1)^{-\frac{1}{2}} - E_1\mathcal{V}_{\mathrm{sf}}^{(i)}\hat{\mathcal{P}}_{\mathrm{sf}}\big(L_{v,\mathrm{lyap}}^{(i)}\big)^\top\) and \(C_{\perp,i}^{\mathrm{sf}} = (D_2 D_2^\top)^{-\frac{1}{2}}C_2 - L_{v,\mathrm{lyap}}^{(i)}\hat{\mathcal{Q}}_{\mathrm{sf}}(\mathcal{W}_{\mathrm{sf}}^{(i)})^\top E_2\).

\item\label{uadi_step53} \textbf{end if}

\item \textbf{if} \(m_1 = p_2\) \textbf{then}

\item Set \(s_{v,\mathrm{sylv}}^{(i)}\), \(l_{v,\mathrm{sylv}}^{(i)}\), \(s_{w,\mathrm{sylv}}^{(i)}\), and \(l_{w,\mathrm{sylv}}^{(i)}\) as in \eqref{sv_case1}.

\item\label{uadi_step56} Set \(s_v^{(i)} = s_{v,\mathrm{sylv}}^{(i)}\), \(l_v^{(i)} = l_{v,\mathrm{sylv}}^{(i)}\), \(s_w^{(i)} = s_{w,\mathrm{sylv}}^{(i)}\), and \(l_w^{(i)} = l_{w,\mathrm{sylv}}^{(i)}\) in \eqref{eq_sylv} and solve the Sylvester equation to compute \(d_{\mathrm{sylv}}^{(i)}\).

\item Compute \(t_{v,\mathrm{sylv}}^{(i)}\) and \(t_{w,\mathrm{sylv}}^{(i)}\) by solving the Sylvester equations \eqref{tv_sylv} and \eqref{tw_sylv}.

\item Set \(v_i^{\mathrm{sylv}} = V_{\mathrm{lyap}}^{(i)}t_{v,\mathrm{sylv}}^{(i)}\) and \(w_i^{\mathrm{sylv}} = W_{\mathrm{lyap}}^{(i)}t_{w,\mathrm{sylv}}^{(i)}\).

\item Expand \(V_{\mathrm{fadi}}^{(i)} = \big[ V_{\mathrm{fadi}}^{(i-1)} \;\; v_i^{\mathrm{sylv}} \big]\), \(W_{\mathrm{fadi}}^{(i)} = \big[ W_{\mathrm{fadi}}^{(i-1)} \;\; w_i^{\mathrm{sylv}} \big]\), and \(D_{\mathrm{fadi}}^{(i)} = \mathrm{blkdiag}\big(D_{\mathrm{fadi}}^{(i-1)},(d_{\mathrm{sylv}}^{(i)})^{-1}\big)\).

\item Update \(B_{\perp,i}^{\mathrm{sylv}} = B_{\perp,i-1}^{\mathrm{sylv}} - E_1 v_i^{\mathrm{sylv}}(d_{\mathrm{sylv}}^{(i)})^{-1}(l_{w,\mathrm{sylv}}^{(i)})^\top\), and \(C_{\perp,i}^{\mathrm{sylv}} = C_{\perp,i-1}^{\mathrm{sylv}} - l_{v,\mathrm{sylv}}^{(i)}(d_{\mathrm{sylv}}^{(i)})^{-1}(w_i^{\mathrm{sylv}})^\top E_2\).

\item\label{uadi_step61} Expand \(T_{v,\mathrm{sylv}}^{(i)}\) and \(T_{w,\mathrm{sylv}}^{(i)}\) as in \eqref{TvTw_sylv}.

\item \textbf{end if}

\item \(i \gets i + 1\)

\item \textbf{else if} \(\alpha_i \in \mathbb{C}_{-} \land \alpha_{i+1} = \overline{\alpha_i} \land \beta_i \in \mathbb{C}_{-} \land \beta_{i+1} = \overline{\beta_i}\) \textbf{then}

\item\label{uadi_step65} Set \(\phi_v = \sqrt{-\operatorname{Re}(\alpha_i)}\) and \(\delta_v = \dfrac{\operatorname{Re}(\alpha_i)}{\operatorname{Im}(\alpha_i)}\).

\item Expand \(V_{\mathrm{lyap}}^{(i)} = \big[ V_{\mathrm{lyap}}^{(i-1)} \;\; 2\phi_v\big(\operatorname{Re}(v_i^{\mathrm{lyap}}) + \delta_v \operatorname{Im}(v_i^{\mathrm{lyap}})\big) \;\; 2\phi_v \sqrt{1 + \delta_v^2}\, \operatorname{Im}(v_i^{\mathrm{lyap}}) \big]\).

\item Update \(B_{\perp,i}^{\mathrm{lyap}} = B_{\perp,i-1}^{\mathrm{lyap}} - 4\operatorname{Re}(\alpha_i) E_1 \big(\operatorname{Re}(v_i^{\mathrm{lyap}}) + \delta_v \operatorname{Im}(v_i^{\mathrm{lyap}})\big)\).

\item Set \(s_{v,\mathrm{lyap}}^{(i)}\) and \(l_{v,\mathrm{lyap}}^{(i)}\) as in \eqref{sv_cf_i}, and expand \(S_{v,\mathrm{lyap}}^{(i)}\), \(L_{v,\mathrm{lyap}}^{(i)}\), \(\hat{A}_{1,\mathrm{lyap}}^{(i)}\), and \(\hat{C}_{1,\mathrm{lyap}}^{(i)}\) as in \eqref{SvLv} and \eqref{A_lyapB_lyap}.

\item\label{uadi_step69} \textbf{Repeat} Steps~\ref{uadi_step7} to~\ref{uadi_step26}.

\item\label{uadi_step70} Set \(\phi_w = \sqrt{-\operatorname{Re}(\beta_i)}\) and \(\delta_w = \dfrac{\operatorname{Re}(\beta_i)}{\operatorname{Im}(\beta_i)}\).

\item Expand \(W_{\mathrm{lyap}}^{(i)} = \big[ W_{\mathrm{lyap}}^{(i-1)} \;\; 2\phi_w\big(\operatorname{Re}(w_i^{\mathrm{lyap}}) + \delta_w \operatorname{Im}(w_i^{\mathrm{lyap}})\big) \;\; 2\phi_w \sqrt{1 + \delta_w^2}\, \operatorname{Im}(w_i^{\mathrm{lyap}}) \big]\).

\item Update \(C_{\perp,i}^{\mathrm{lyap}} = C_{\perp,i-1}^{\mathrm{lyap}} - 4\operatorname{Re}(\beta_i) \big(\operatorname{Re}(w_i^{\mathrm{lyap}}) + \delta_w \operatorname{Im}(w_i^{\mathrm{lyap}})\big)^\top E_2\).

\item Set \(s_{w,\mathrm{lyap}}^{(i)}\) and \(l_{w,\mathrm{lyap}}^{(i)}\) as in \eqref{sw_cf_i} and expand \(S_{w,\mathrm{lyap}}^{(i)}\), \(L_{w,\mathrm{lyap}}^{(i)}\), \(\hat{A}_{2,\mathrm{lyap}}^{(i)}\), and \(\hat{C}_{2,\mathrm{lyap}}^{(i)}\) as in \eqref{SwLw} and \eqref{Aw_lyapBw_lyap}.

\item\label{uadi_step74} \textbf{Repeat} Steps~\ref{uadi_step29} to~\ref{uadi_step53}.

\item \textbf{if} \(m_1 = p_2\) \textbf{then}

\item Set \(s_{v,\mathrm{sylv}}^{(i)}\), \(l_{v,\mathrm{sylv}}^{(i)}\), \(s_{w,\mathrm{sylv}}^{(i)}\), and \(l_{w,\mathrm{sylv}}^{(i)}\) as in \eqref{sv_case2} and \eqref{sw_case2}.

\item \textbf{Repeat} Steps~\ref{uadi_step56} to~\ref{uadi_step61}.

\item \textbf{end if}

\item \(i \gets i + 2\)

\item \textbf{else if} \(\alpha_i \in \mathbb{R}_{-} \land \alpha_{i+1} \in \mathbb{R}_{-} \land \beta_i \in \mathbb{C}_{-} \land \beta_{i+1} = \overline{\beta_i}\) \textbf{then}

\item \textbf{Repeat} Steps~\ref{uadi_step7} to~\ref{uadi_step26}.

\item \(i \gets i + 1\)

\item \textbf{Repeat} Steps~\ref{uadi_step7} to~\ref{uadi_step26}.

\item \textbf{Repeat} Steps~\ref{uadi_step70} to~\ref{uadi_step74}.

\item \textbf{if} \(m_1 = p_2\) \textbf{then}

\item Set \(s_{v,\mathrm{sylv}}^{(i)}\), \(l_{v,\mathrm{sylv}}^{(i)}\), \(s_{w,\mathrm{sylv}}^{(i)}\), and \(l_{w,\mathrm{sylv}}^{(i)}\) as in \eqref{sv_case3} and \eqref{sw_case3}.

\item \textbf{Repeat} Steps~\ref{uadi_step56} to~\ref{uadi_step61}.

\item \textbf{end if}

\item \(i \gets i + 1\)

\item \textbf{else if} \(\alpha_i \in \mathbb{C}_{-} \land \alpha_{i+1} = \overline{\alpha_i} \land \beta_i \in \mathbb{R}_{-} \land \beta_{i+1} \in \mathbb{R}_{-}\) \textbf{then}

\item \textbf{Repeat} Steps~\ref{uadi_step65} to~\ref{uadi_step69}.

\item \textbf{Repeat} Steps~\ref{uadi_step27} to~\ref{uadi_step53}.

\item \(i \gets i + 1\)

\item \textbf{Repeat} Steps~\ref{uadi_step27} to~\ref{uadi_step53}.

\item \textbf{if} \(m_1 = p_2\) \textbf{then}

\item Set \(s_{v,\mathrm{sylv}}^{(i)}\), \(l_{v,\mathrm{sylv}}^{(i)}\), \(s_{w,\mathrm{sylv}}^{(i)}\), and \(l_{w,\mathrm{sylv}}^{(i)}\) as in \eqref{sv_case4} and \eqref{sw_case4}.

\item \textbf{Repeat} Steps~\ref{uadi_step56} to~\ref{uadi_step61}.

\item \textbf{end if}

\item \(i \gets i + 1\)

\item \textbf{end if}

\item \textbf{end while}

\end{enumerate}

\noindent\rule{\textwidth}{0.8pt}
\subsection{Pole-placement Property of ADI Methods}
Let $V_1$ be computed as in \eqref{v_kry} by solving the Sylvester equations \eqref{v_kry_sylv} and \eqref{v_kry_sylv_2}, respectively. Let the interpolation points be the mirror images of the ADI shifts used to approximate $P_1$, i.e., $\{\sigma_i\}_{i=1}^{k} = \{-\alpha_i\}_{i=1}^{k}$. Furthermore, assume that the pair $(-S_v, L_v)$ is observable and that $Q_s$ solves the following Lyapunov equation:
\begin{align}
-S_v^* Q_s - Q_s S_v + L_v^\top L_v = 0.
\end{align}
It is shown in \cite{wolf2016adi} that the approximations $P_1 \approx V_{\mathrm{lyap}}^{(k)}(V_{\mathrm{lyap}}^{(k)})^\top$ and $P_1 \approx V_1 Q_s^{-1} V_1^*$ are identical. Thus, in theory, UADI can be implemented using standard rational interpolation. However, a major disadvantage of standard interpolation is that, as the number of interpolation points increases, the pair $(-S_v, L_v)$ loses observability, and consequently $Q_s$ ceases to be invertible.

Recall that in both UADI and the realified version of CF-ADI, $\hat{P}_{\mathrm{lyap}}^{(i)} = I$, meaning the pair $(-S_{v,\mathrm{lyap}}^{(i)},$ $L_{v,\mathrm{lyap}}^{(i)})$ remains observable regardless of the number of ADI shifts. Moreover, since CF-ADI is a recursive interpolation algorithm, the projected matrix $\hat{A}_{1,i}^{\mathrm{lyap}}$ can be parameterized in terms of $\hat{B}_{1,i}^{\mathrm{lyap}}$ as
\[
\hat{A}_{1,i}^{\mathrm{lyap}} = S_{v,\mathrm{lyap}}^{(i)} - \hat{B}_{1,i}^{\mathrm{lyap}} L_{v,\mathrm{lyap}}^{(i)}.
\]
From the pole-placement perspective in control theory, the free parameter $\hat{B}_{1,i}^{\mathrm{lyap}}$ acts as a gain that places the poles of $\hat{A}_{1,i}^{\mathrm{lyap}}$ at desired locations. This pole-placement property holds only if the pair $(-S_{v,\mathrm{lyap}}^{(i)}, L_{v,\mathrm{lyap}}^{(i)})$ is observable \cite{rugh}. As explained in \cite{wolf2016adi}, CF-ADI sets $\hat{B}_{1,i}^{\mathrm{lyap}} = (L_{v,\mathrm{lyap}}^{(i)})^\top$, which places the poles of $\hat{A}_{1,i}^{\mathrm{lyap}}$ at $\{\overline{\alpha}_i\}_{i=1}^{k} \subset \mathbb{C}_{-}$, ensuring that $\hat{A}_{1,i}^{\mathrm{lyap}}$ remains Hurwitz. This property allows CF-ADI to avoid a common pitfall of other projection methods—namely, that the projected Lyapunov equation may have no solution because the projected $A$-matrix is not Hurwitz. By exploiting pole placement, CF-ADI guarantees that the projected Lyapunov equation \eqref{P_proj_adi} always admits the solution $\hat{P}_{\mathrm{lyap}}^{(i)} = I$.

As shown earlier, FADI is also a projection algorithm, where $D_{\mathrm{fadi}}^{(i)}$ solves the projected Sylvester equation \eqref{D_fadi_proj}. FADI places the poles of $\hat{A}_1^{(i)}$ at $\{\beta_i\}_{i=1}^{k}$ and those of $\hat{A}_2^{(i)}$ at $\{\alpha_i\}_{i=1}^{k}$ by choosing the free parameters as $\hat{B}_1^{(i)} = D_{\mathrm{fadi}}^{(i)} (L_{w,\mathrm{sylv}}^{(i)})^\top$ and $\hat{C}_2^{(i)} = L_{v,\mathrm{sylv}}^{(i)} D_{\mathrm{fadi}}^{(i)}$. Since $\alpha_i \neq -\beta_i$ in FADI, the projected Sylvester equation \eqref{D_fadi_proj} always has a unique solution. Like CF-ADI, FADI thus leverages the pole-placement property within a recursive interpolation framework to avoid a common failure mode of other projection methods—the lack of a unique solution to the projected Sylvester equation.

Similarly, RADI sets the free parameter $\hat{B}_1^{(i)} = \hat{P}_{\mathrm{ricc}}^{(i)} (L_{v,\mathrm{lyap}}^{(i)})^\top$ to place the poles of $\hat{A}_1^{(i)}$ at 
\[
\{\overline{\alpha}_i\}_{i=1}^{k} + \lambda_i\!\left(\hat{P}_{\mathrm{ricc}}^{(i)} (\hat{C}_1^{(i)})^\top \hat{C}_1^{(i)}\right),
\]
so that $\hat{A}_1^{(i)} - \hat{P}_{\mathrm{ricc}}^{(i)} (\hat{C}_1^{(i)})^\top \hat{C}_1^{(i)}$ has eigenvalues in $\{\overline{\alpha}_i\}_{i=1}^{k} \subset \mathbb{C}_{-}$. This ensures that the projected Riccati equation \eqref{proj_ricc} always admits a stabilizing solution. Again, RADI cleverly exploits the pole-placement capability inherent in ADI-based methods.

The pole-placement properties of UADI used to solve the linear matrix equations \eqref{lyap_p}–\eqref{ricc_q_sf} are summarized in Table \ref{tab01}. The matrix triplets \((\hat{A}_1^{(i)},\hat{B}_1^{(i)},\hat{C}_1^{(i)})\) and \((\hat{A}_2^{(i)},\hat{B}_2^{(i)},\hat{C}_2^{(i)})\) in Table \ref{tab01} define the respective projected linear matrix equations that ADI methods implicitly solve. The matrices $\hat{A}_{1,\mathrm{pr}}^{(i)}$, $\hat{C}_{1,\mathrm{pr}}^{(i)}$, $\hat{A}_{2,\mathrm{pr}}^{(i)}$, $\hat{B}_{2,\mathrm{pr}}^{(i)}$, $\hat{A}_{1,\mathrm{br}}^{(i)}$, $\hat{C}_{1,\mathrm{br}}^{(i)}$, $\hat{A}_{2,\mathrm{br}}^{(i)}$, $\hat{B}_{2,\mathrm{br}}^{(i)}$, $\hat{A}_{1,\mathrm{sf}}^{(i)}$, $\hat{C}_{1,\mathrm{sf}}^{(i)}$, $\hat{A}_{2,\mathrm{sf}}^{(i)}$, and $\hat{B}_{2,\mathrm{sf}}^{(i)}$ in Table \ref{tab01} are given below:
\begin{align}
\hat{A}_{1,\mathrm{pr}}^{(i)}&=\hat{A}_1^{(i)}-\hat{B}_1^{(i)}(D_1+D_1^\top)^{-1}\hat{C}_1^{(i)},& \hat{C}_{1,\mathrm{pr}}^{(i)}&=(D_1+D_1^\top)^{-\frac{1}{2}}\hat{C}_1^{(i)},\nonumber\\
\hat{A}_{2,\mathrm{pr}}^{(i)}&=\hat{A}_2^{(i)}-\hat{B}_2^{(i)}(D_2+D_2^\top)^{-1}\hat{C}_2^{(i)},& \hat{B}_{2,\mathrm{pr}}^{(i)}&=\hat{B}_2^{(i)}(D_2+D_2^\top)^{-\frac{1}{2}},\nonumber\\
\hat{A}_{1,\mathrm{br}}^{(i)}&=\hat{A}_1^{(i)}-\hat{B}_1^{(i)}D_1^\top(I-D_1D_1^\top)^{-1}\hat{C}_1^{(i)},& \hat{C}_{1,\mathrm{br}}^{(i)}&=(I-D_1D_1^\top)^{-\frac{1}{2}}\hat{C}_1^{(i)},\nonumber\\
\hat{A}_{2,\mathrm{br}}^{(i)}&=\hat{A}_2^{(i)}-\hat{B}_2^{(i)}(I-D_2^\top D_2)^{-1}D_2^\top\hat{C}_2^{(i)},& \hat{B}_{2,\mathrm{br}}^{(i)}&=\hat{B}_2^{(i)}(I-D_2^\top D_2)^{-\frac{1}{2}},\nonumber\\
\hat{A}_{1,\mathrm{sf}}^{(i)}&=\hat{A}_1^{(i)}-\hat{B}_1^{(i)}(D_1^\top D_1)^{-\frac{1}{2}}\hat{C}_{1,\mathrm{sf}}^{(i)},& \hat{C}_{1,\mathrm{sf}}^{(i)}&=(D_1^\top D_1)^{-\frac{1}{2}}\Big(B_1^\top W_{\mathrm{lyap}}^{(i)}(W_{\mathrm{lyap}}^{(i)})^\top \mathcal{V}_{\mathrm{sf}}^{(i)}+D_1^\top\hat{C}_1^{(i)}\Big),\nonumber\\
\hat{A}_{2,\mathrm{sf}}^{(i)}&=\hat{A}_2^{(i)}-\hat{B}_{2,\mathrm{sf}}^{(i)}(D_2 D_2^\top)^{-\frac{1}{2}}\hat{C}_2^{(i)},& \hat{B}_{2,\mathrm{sf}}^{(i)}&=\Big((\mathcal{W}_{\mathrm{sf}}^{(i)})^\top V_{\mathrm{lyap}}^{(i)}(V_{\mathrm{lyap}}^{(i)})^\top C_2^\top+\hat{B}_2^{(i)}D_2^\top\Big)(D_2 D_2^\top)^{-\frac{1}{2}}.\nonumber
\end{align}
\begin{table}[!h]
\centering
\caption{Pole Placement Properties of UADI}\label{tab01}
\begin{tabular}{|c|c|c|}\hline
Linear Matrix Equation&Projected Matrix & Poles Location\\\hline
$P_1$&$\hat{A}_1^{(i)}$&$\overline{\alpha}_i$\\
$Q_2$&$\hat{A}_2^{(i)}$&$\beta_i$\\
$P_s$&$\hat{A}_1^{(i)}$&$\overline{\alpha}_i$\\
$Q_s$&$\hat{A}_2^{(i)}$&$\beta_i$\\
$P_{\mathrm{mp}}$&$\hat{A}_1^{(i)}-\hat{B}_1^{(i)}D_1^{-1}\hat{C}_1^{(i)}$&$\overline{\alpha}_i$\\
$Q_{\mathrm{mp}}$&$\hat{A}_2^{(i)}-\hat{B}_2^{(i)}D_2^{-1}\hat{C}_2^{(i)}$&$\beta_i$\\
$X_\mathrm{sylv}$&$\hat{A}_1^{(i)}$&$\beta_i$\\
$X_\mathrm{sylv}$&$\hat{A}_2^{(i)}$&$\alpha_i$\\
$P_{\mathrm{ricc}}$&$\hat{A}_1^{(i)}-\hat{P}_{\mathrm{ricc}}^{(i)}(\hat{C}_1^{(i)})^\top \hat{C}_1^{(i)}$&$\overline{\alpha}_i$\\
$Q_{\mathrm{ricc}}$&$\hat{A}_2^{(i)}-\hat{B}_2^{(i)}(\hat{B}_2^{(i)})^\top \hat{Q}_{\mathrm{ricc}}^{(i)}$&$\beta_i$\\
$P_{\infty}$&$\hat{A}_1^{(i)}-\hat{P}_{\infty}^{(i)}(\hat{C}_1^{(i)})^\top \hat{C}_1^{(i)}$&$\overline{\alpha}_i$\\
$Q_{\infty}$&$\hat{A}_2^{(i)}-\hat{B}_2^{(i)}(\hat{B}_2^{(i)})^\top \hat{Q}_{\infty}^{(i)}$&$\beta_i$\\
$P_{\mathrm{pr}}$&$\hat{A}_{1,\mathrm{pr}}^{(i)}+\hat{P}_{\mathrm{pr}}^{(i)}(\hat{C}_{1,\mathrm{pr}}^{(i)})^\top \hat{C}_{1,\mathrm{pr}}^{(i)}$&$\overline{\alpha}_i$\\
$Q_{\mathrm{pr}}$&$\hat{A}_{2,\mathrm{pr}}^{(i)}+\hat{B}_{2,\mathrm{pr}}^{(i)}(\hat{B}_{2,\mathrm{pr}}^{(i)})^\top \hat{Q}_{\mathrm{pr}}^{(i)}$&$\beta_i$\\
$P_{\mathrm{br}}$&$\hat{A}_{1,\mathrm{br}}^{(i)}+\hat{P}_{\mathrm{br}}^{(i)}(\hat{C}_{1,\mathrm{br}}^{(i)})^\top \hat{C}_{1,\mathrm{br}}^{(i)}$&$\overline{\alpha}_i$\\
$Q_{\mathrm{br}}$&$\hat{A}_{2,\mathrm{br}}^{(i)}+\hat{B}_{2,\mathrm{br}}^{(i)}(\hat{B}_{2,\mathrm{br}}^{(i)})^\top \hat{Q}_{\mathrm{br}}^{(i)}$&$\beta_i$\\
$P_{\mathrm{sf}}$&$\hat{A}_{1,\mathrm{sf}}^{(i)}+\hat{P}_{\mathrm{sf}}^{(i)}(\hat{C}_{1,\mathrm{sf}}^{(i)})^\top \hat{C}_{1,\mathrm{sf}}^{(i)}$&$\overline{\alpha}_i$\\
$Q_{\mathrm{sf}}$&$\hat{A}_{2,\mathrm{sf}}^{(i)}+\hat{B}_{2,\mathrm{sf}}^{(i)}(\hat{B}_{2,\mathrm{sf}}^{(i)})^\top \hat{Q}_{\mathrm{sf}}^{(i)}$&$\beta_i$\\\hline
\end{tabular}
\end{table}

CF-ADI, FADI, and RADI preserve the pole-placement property due to the block-triangular structures of $S_{v,\mathrm{lyap}}^{(i)}$, $S_{v,\mathrm{sylv}}^{(i)}$, $S_{w,\mathrm{sylv}}^{(i)}$, and $S_{v,\mathrm{ricc}}^{(i)}$, together with a judicious choice of off-diagonal blocks. These off-diagonal blocks eliminate the observability dependence of the new pairs $(-s_{v,\mathrm{lyap}}^{(i)}, l_{v,\mathrm{lyap}}^{(i)})$, $(-s_{v,\mathrm{sylv}}^{(i)}, l_{v,\mathrm{sylv}}^{(i)})$, $(-s_{w,\mathrm{sylv}}^{(i)}, l_{w,\mathrm{sylv}}^{(i)})$, and $(-s_{v,\mathrm{ricc}}^{(i)}, l_{v,\mathrm{ricc}}^{(i)})$ on the previous blocks $(-S_{v,\mathrm{lyap}}^{(i-1)}, L_{v,\mathrm{lyap}}^{(i-1)})$, $(-S_{v,\mathrm{sylv}}^{(i-1)}, L_{v,\mathrm{sylv}}^{(i-1)})$, $(-S_{w,\mathrm{sylv}}^{(i-1)}$, $L_{w,\mathrm{sylv}}^{(i-1)})$, and $(-S_{v,\mathrm{ricc}}^{(i-1)}, L_{v,\mathrm{ricc}}^{(i-1)})$. Consequently, as long as each new pair is observable, the augmented pairs remain observable. In contrast, the diagonal structure of $S_v$ in standard interpolation is less robust: as the number of interpolation points grows, observability of $(-S_v, L_v)$ is typically lost, causing the pole-placement property to vanish and severely limiting applicability.

Beyond preserving observability, the off-diagonal blocks in $S_{v,\mathrm{lyap}}^{(i)}$, $S_{v,\mathrm{sylv}}^{(i)}$, $S_{w,\mathrm{sylv}}^{(i)}$, and $S_{v,\mathrm{ricc}}^{(i)}$ serve another purpose: they ensure that $\hat{P}_{\mathrm{lyap}}^{(i)}$, $D_{\mathrm{fadi}}^{(i)}$, and $\hat{P}_{\mathrm{ricc}}^{(i)}$ remain block-diagonal. This is a direct consequence of decoupling the new pairs from the previous ones, as described above. As a result, these quantities can be updated recursively—block by block—rather than requiring a full recomputation of the projected linear matrix equations at every iteration.

In \cite{zulfiqar2025data}, a workaround is proposed to enable standard interpolation to mimic the properties of CF-ADI and RADI. It is shown that when the ADI shifts are lightly damped, i.e.,
\begin{align}
\alpha_i = \left( \frac{\zeta_{i,\alpha} \omega_{i,\alpha}}{\sqrt{1 - \zeta_{i,\alpha}^2}} \right) + j \omega_{i,\alpha}, \quad
\beta_i = \left( \frac{\zeta_{i,\beta} \omega_{i,\beta}}{\sqrt{1 - \zeta_{i,\beta}^2}} \right) + j \omega_{i,\beta},
\label{adi_shifts}
\end{align}
with $\zeta_{i,\alpha} \ll 1$ and $\zeta_{i,\beta} \ll 1$, the pairs $(-S_v, L_v)$ and $(-S_w, L_w)$ remain observable. Moreover, the projected linear matrix equations become approximately block-diagonal, with off-diagonal blocks tending to zero as $\zeta_{i,\alpha}, \zeta_{i,\beta} \to 0$. This makes recursive, block-wise accumulation feasible. A unified ADI framework is then proposed, which can be viewed as a standard-interpolation-based approximation of UADI. While this shift selection is reasonable in the context of data-driven model order reduction—the focus of \cite{zulfiqar2025data}—it is extremely restrictive in most other settings. In contrast, UADI adopts a recursive interpolation framework and does not require such restrictive assumptions on the ADI shifts. Nevertheless, the results in \cite{zulfiqar2025data} motivated the research presented in this paper.
\subsection{UADI as a MOR Framework}
UADI can be used to implement low-rank versions of several BT algorithms, such as standard BT \cite{moore2003principal}, self-weighted BT \cite{zhou1995frequency}, LQG BT \cite{jonckheere2003new}, $\mathcal{H}_{\infty}$ BT \cite{mustafa2002controller}, positive-real BT \cite{phillips2002guaranteed}, bounded-real BT \cite{phillips2002guaranteed}, and stochastic BT \cite{green1988balanced}. However, the focus of this subsection is to show that UADI can also be viewed directly as a standalone MOR framework. UADI implicitly performs MOR and generates ROMs with important properties—such as guaranteed stability, minimum phase behavior, positive realness, and bounded realness.  

The results in this subsection are straightforward (yet important) generalizations of those in \cite{zulfiqar2025data}, which assume that the pairs $(-S_v,L_v)$ and $(-S_w,L_w)$ are observable. In UADI, the pairs $(-S_{v,\mathrm{lyap}}^{(i)},L_{v,\mathrm{lyap}}^{(i)})$ and $(-S_{w,\mathrm{lyap}}^{(i)},L_{w,\mathrm{lyap}}^{(i)})$ are always observable. Therefore, the results from \cite{zulfiqar2025data} for the standard interpolation-based unified ADI framework also apply to UADI.

Let $\hat{G}_1^{(i)}(s)=\hat{C}_1^{(i)}(sI-\hat{A}_1^{(i)})^{-1}\hat{B}_1^{(i)}+D_1$ be a ROM produced by UADI as follows:
\[
\hat{A}_1^{(i)}=S_{v,\mathrm{lyap}}^{(i)}-\hat{B}_1^{(i)}L_{v,\mathrm{lyap}}^{(i)}=-\big(S_{v,\mathrm{lyap}}^{(i)}\big)^\top,\quad \hat{B}_1^{(i)}=\big(L_{v,\mathrm{lyap}}^{(i)}\big)^\top, \quad \hat{C}_1^{(i)}=C_1V_{\mathrm{lyap}}^{(i)}.
\]
If $V_{\mathrm{lyap}}^{(i)}$ has full column rank, then $\hat{G}_1^{(i)}(s)$ interpolates $G_1(s)$ at the points $(-\alpha_1,\dots,-\alpha_i)$, regardless of the choice of the free parameter $\hat{B}_1^{(i)}$.

If the free parameter $\hat{B}_1^{(i)}$ is set to 
\[
\hat{B}_1^{(i)}=\big(\tilde{X}_{\mathrm{sylv}}^{(i)}\big)^{-1}\big(L_{w,\mathrm{lyap}}^{(i)}\big)^\top
=T_{v,\mathrm{sylv}}^{(i)}D_{\mathrm{fadi}}^{(i)}\big(T_{w,\mathrm{sylv}}^{(i)}\big)^\top\big(L_{w,\mathrm{lyap}}^{(i)}\big)^\top
\]
according to Proposition \ref{prop_x_sylv}, then $\hat{A}_1^{(i)}$ becomes
\[
\hat{A}_1^{(i)}=-\big(\tilde{X}_{\mathrm{sylv}}^{(i)}\big)^{-1}\big(S_{w,\mathrm{lyap}}^{(i)}\big)^\top \tilde{X}_{\mathrm{sylv}}^{(i)}.
\]
Thus, the poles of $\hat{A}_1^{(i)}$ are placed at $\{\overline{\beta}_l\}_{l=1}^{i}$. Note that $\tilde{X}_{\mathrm{sylv}}^{(i)}$ and $L_{w,\mathrm{lyap}}^{(i)}$ depend only on the ADI shifts $(\alpha_1,\dots,\alpha_i)$ and $(\beta_1,\dots,\beta_i)$. Consequently, UADI can construct an interpolant of $G_1(s)$ with prescribed poles $\overline{\beta}_i\in\mathbb{C}_{-}$. If these prescribed poles are a subset of the poles of $G_1(s)$, UADI preserves them in the ROM $\hat{G}_1^{(i)}(s)$. Pole-preserving MOR is particularly useful in power systems and structural dynamics \cite{scarciotti2016low,gawronski2004dynamics,werner2021structure}, and UADI is well-suited for such applications.

By choosing $\hat{B}_1^{(i)} = T_{v,\mathrm{ricc}}^{(i)}\hat{P}_{\mathrm{ricc}}^{(i)}\big(L_{v,\mathrm{lyap}}^{(i)}\big)^\top$, the reduced-order Kalman filter for state estimation becomes
\[
\dot{\hat{x}}(t)=\hat{A}_1^{(i)}\hat{x}(t)+\hat{B}_1^{(i)}u(t)+\tilde{P}_{\mathrm{ricc}}^{(i)}\big(\hat{C}_1^{(i)}\big)^\top\Big(y(t)-\hat{C}_1^{(i)}\hat{x}(t)-D_1u(t)\Big),
\]
where $u(t)$ and $y(t)$ are the input and output of $G_1(s)$, respectively, and $\hat{x}(t)$ estimates the most controllable states of the realization $(A_1,B_1,C_1,D_1,E_1)$ \cite{rugh}. Thus, UADI can also be used for reduced-order observer design in large-scale systems.

Similarly, setting $\hat{B}_1^{(i)} = T_{v,\infty}^{(i)}\hat{P}_{\infty}^{(i)}\big(L_{v,\mathrm{lyap}}^{(i)}\big)^\top$ yields the reduced-order $\mathcal{H}_\infty$ filter:
\[
\dot{\hat{x}}(t)=\hat{A}_1^{(i)}\hat{x}(t)+\hat{B}_1^{(i)}u(t)+T_{v,\infty}^{(i)}\hat{P}_{\infty}^{(i)}\big(T_{v,\infty}^{(i)}\big)^\top\big(\hat{C}_1^{(i)}\big)^\top\Big(y(t)-\hat{C}_1^{(i)}\hat{x}(t)-D_1u(t)\Big);
\]
cf. \cite{rugh}.

Assume now that $G_1(s)$ is a stable minimum-phase transfer function. Then its minimum-phase property can be preserved in $\hat{G}_1^{(i)}(s)$ by setting $\hat{B}_1^{(i)} = T_{v,\mathrm{mp}}^{(i)}\big(L_{v,\mathrm{lyap}}^{(i)}\big)^\top D_1$; see Theorem 2.7 of \cite{zulfiqar2025data}. This choice places the poles of $\hat{A}_1^{(i)} - \hat{B}_1^{(i)}D_1^{-1}\hat{C}_1^{(i)}$—which are the zeros of $\hat{G}_1^{(i)}(s)$—at $\{\overline{\alpha}_l\}_{l=1}^{i}$. Moreover, this ROM ensures a small relative error $\big(G_1(s)-\hat{G}_1^{(i)}(s)\big)\big(\hat{G}_1^{(i)}(s)\big)^{-1}$ \cite{zhou1995frequency}. Hence, UADI enables relative-error MOR while preserving the minimum-phase property of $G_1(s)$.

If $G_1(s)$ is positive-real, this property can be preserved in $\hat{G}_1^{(i)}(s)$ by choosing  
\[
\hat{B}_1^{(i)} = T_{v,\mathrm{pr}}^{(i)}\hat{P}_{\mathrm{pr}}^{(i)}\big(L_{v,\mathrm{lyap}}^{(i)}\big)^\top\big(D_1+D_1^\top\big)^{\frac{1}{2}},
\]
as stated in Theorem 2.3 of \cite{zulfiqar2025data}. Preserving positive realness is essential for passive interconnected circuits \cite{phillips2002guaranteed}, and UADI is effective for this purpose.

Similarly, if $G_1(s)$ is bounded-real, setting  
\[
\hat{B}_1^{(i)} = T_{v,\mathrm{br}}^{(i)}\hat{P}_{\mathrm{br}}^{(i)}\big(L_{v,\mathrm{lyap}}^{(i)}\big)^\top\big(I + D_1^\top(I - D_1D_1^\top)^{-1}D_1\big)^{-\frac{1}{2}}
\]
preserves bounded realness in the ROM (see Theorem 2.5 of \cite{zulfiqar2025data}), which is again critical in passive circuit modeling \cite{phillips2002guaranteed}.

Dually, let $\hat{G}_2^{(i)}(s)=\hat{C}_2^{(i)}(sI-\hat{A}_2^{(i)})^{-1}\hat{B}_2^{(i)}+D_2$ be a ROM produced by UADI as follows:
\[
\hat{A}_2^{(i)}=\big(S_{w,\mathrm{lyap}}^{(i)}\big)^\top-\big(L_{w,\mathrm{lyap}}^{(i)}\big)^\top\hat{C}_2^{(i)}=-S_{w,\mathrm{lyap}}^{(i)},\quad 
\hat{B}_2^{(i)}=\big(W_{\mathrm{lyap}}^{(i)}\big)^\top B_2, \quad 
\hat{C}_2^{(i)}=L_{w,\mathrm{lyap}}^{(i)}.
\]
If $W_{\mathrm{lyap}}^{(i)}$ has full column rank, then $\hat{G}_2^{(i)}(s)$ interpolates $G_2(s)$ at the points $(-\beta_1,\dots,-\beta_i)$, irrespective of the choice of the free parameter $\hat{C}_2^{(i)}$.

Choosing $\hat{C}_2^{(i)} = L_{v,\mathrm{lyap}}^{(i)}\big(\tilde{X}_{\mathrm{sylv}}^{(i)}\big)^{-1}$ (per Proposition \ref{prop_x_sylv}) yields
\[
\hat{A}_2^{(i)} = -\tilde{X}_{\mathrm{sylv}}^{(i)}S_{v,\mathrm{lyap}}^{(i)}\big(\tilde{X}_{\mathrm{sylv}}^{(i)}\big)^{-1},
\]
so the poles of $\hat{A}_2^{(i)}$ are placed at $\{\alpha_l\}_{l=1}^{i}$. Thus, UADI can construct an interpolant of $G_2(s)$ with prescribed poles $\alpha_i \in \mathbb{C}_{-}$.

Setting $\hat{C}_2^{(i)} = L_{w,\mathrm{lyap}}^{(i)}\hat{Q}_{\mathrm{ricc}}^{(i)}\big(T_{w,\mathrm{ricc}}^{(i)}\big)^\top$ allows computation of the reduced-order LQR gain as  
\[
K = \big(\hat{B}_2^{(i)}\big)^\top T_{w,\mathrm{ricc}}^{(i)}\hat{Q}_{\mathrm{ricc}}^{(i)}\big(T_{w,\mathrm{ricc}}^{(i)}\big)^\top
\]
showing that UADI can also be used for reduced-order controller design \cite{rugh}.

Similarly, with $\hat{C}_2^{(i)} = L_{w,\mathrm{lyap}}^{(i)}\hat{Q}_{\infty}^{(i)}\big(T_{w,\infty}^{(i)}\big)^\top$, the reduced-order $\mathcal{H}_\infty$ regulator gain is  
\[
K = \big(\hat{B}_2^{(i)}\big)^\top T_{w,\infty}^{(i)}\hat{Q}_{\infty}^{(i)}\big(T_{w,\infty}^{(i)}\big)^\top.
\]

Assume $G_2(s)$ is stable and minimum-phase. Then setting $\hat{C}_2^{(i)} = D_2L_{w,\mathrm{lyap}}^{(i)}\big(T_{w,\mathrm{mp}}^{(i)}\big)^\top$ preserves the minimum-phase property (Theorem 2.7 of \cite{zulfiqar2025data}) and places the zeros of $\hat{G}_2^{(i)}(s)$ at $\{\beta_l\}_{l=1}^{i}$.

For a positive-real $G_2(s)$, choosing  
\[
\hat{C}_2^{(i)} = \big(D_2 + D_2^\top\big)^{\frac{1}{2}}L_{w,\mathrm{lyap}}^{(i)}\hat{Q}_{\mathrm{pr}}^{(i)}\big(T_{w,\mathrm{pr}}^{(i)}\big)^\top
\]
preserves positive realness (Theorem 2.4 of \cite{zulfiqar2025data}).

For a bounded-real $G_2(s)$, setting  
\[
\hat{C}_2^{(i)} = \big(I + D_2(I - D_2^\top D_2)^{-1}D_2^\top\big)^{-\frac{1}{2}}L_{w,\mathrm{lyap}}^{(i)}\hat{Q}_{\mathrm{br}}^{(i)}\big(T_{w,\mathrm{br}}^{(i)}\big)^\top
\]
preserves bounded realness (Theorem 2.6 of \cite{zulfiqar2025data}).

All the ROMs mentioned above can be constructed non-intrusively, without knowledge of the state-space realizations \((A_1,B_1,C_1,D_1,E_1)\) and \((A_2,B_2,C_2,D_2,E_2)\), using the signal generator–based frameworks proposed in \cite{scarciotti2017data} and \cite{mao2022data}. A signal generator built from the pair \((S_{v,\mathrm{lyap}}^{(i)},L_{v,\mathrm{lyap}}^{(i)})\), which contains only the ADI shifts \(\alpha_i\), can be used to generate an input signal \(u_1(t)\). The corresponding steady-state response \(y_1(t)\) can then be measured and used—via the method in \cite{scarciotti2017data}, without modification—to estimate \(C_1V_{\mathrm{lyap}}^{(i)}\). Similarly, the matrix \(D_1\) can be approximated using a signal generator based on the pair \((\sigma I, I)\), where \(\sigma\) is a large positive real number. Dually, the signal generator setup proposed in \cite{mao2022data}, which uses the pair \((S_{w,\mathrm{lyap}}^{(i)},L_{w,\mathrm{lyap}}^{(i)})\) containing only the ADI shifts \(\beta_i\), can be applied without modification to approximate \(\big(W_{\mathrm{lyap}}^{(i)}\big)^\top B_2\) from time-domain data. Likewise, \(D_2\) can be approximated using a signal generator with the pair \((\sigma I, I)\), again with \(\sigma\) a large positive real number. Once the quantities \(C_1V_{\mathrm{lyap}}^{(i)}\), \(D_1\), \(\big(W_{\mathrm{lyap}}^{(i)}\big)^\top B_2\), and \(D_2\) are obtained from time-domain measurements, all other quantities needed to construct the ROMs in this subsection can be computed using only the ADI shifts \(\alpha_i\) and \(\beta_i\). Thus, a data-driven implementation of UADI is readily achieved within the signal generator–based framework of \cite{scarciotti2024interconnection}. Furthermore, by employing the signal generator framework in \cite{mao2024data}, UADI-based, data-driven, non-intrusive implementations of BT, self-weighted BT \cite{zhou1995frequency}, LQG BT \cite{jonckheere2003new}, \(\mathcal{H}_{\infty}\) BT \cite{mustafa2002controller}, positive-real BT \cite{phillips2002guaranteed}, bounded-real BT \cite{phillips2002guaranteed}, and stochastic BT \cite{green1988balanced} can also be obtained using the pairs \((S_{v,\mathrm{lyap}}^{(i)},L_{v,\mathrm{lyap}}^{(i)})\) and \((S_{w,\mathrm{lyap}}^{(i)},L_{w,\mathrm{lyap}}^{(i)})\) to construct signal generators; see Section 3.3 of \cite{zulfiqar2025compression} for details.

It is important to note that UADI is built on the CF-ADI algorithm. Thus, CF-ADI can in fact solve a wide range of problems—including Lyapunov, Sylvester, and Riccati equations—as well as various MOR tasks. However, these capabilities are not widely recognized, leading to underutilization of CF-ADI in practice. The key reason CF-ADI is so versatile is that the pair $(-S_{v,\mathrm{lyap}}^{(i)},L_{v,\mathrm{lyap}}^{(i)})$ remains observable at every iteration, allowing the free parameter $\hat{B}_1^{(i)}$ to be chosen freely to place poles as needed. An interesting future direction is to approach additional problems as pole-placement problems and solve them using CF-ADI. Moreover, CF-ADI (and by extension, UADI) can serve as a rational interpolation framework for other large-scale problems. For example, the matrix exponential product $e^{E_1^{-1}A_1t}E_1^{-1}B_1$ and the matrix logarithm product $\mathrm{ln}(E_1^{-1}A_1)E_1^{-1}B_1$ can be approximated as  
\[
e^{E_1^{-1}A_1t}E_1^{-1}B_1 \approx V_{\mathrm{lyap}}^{(i)}e^{-(S_{v,\mathrm{lyap}}^{(i)})^\top t}\big(L_{v,\mathrm{lyap}}^{(i)}\big)^\top,
\]
\[
\mathrm{ln}(E_1^{-1}A_1)E_1^{-1}B_1 \approx V_{\mathrm{lyap}}^{(i)}\mathrm{ln}\big(-(S_{v,\mathrm{lyap}}^{(i)})^\top\big)\big(L_{v,\mathrm{lyap}}^{(i)}\big)^\top.
\]
A key advantage is that $-\big(S_{v,\mathrm{lyap}}^{(i)}\big)^\top$ is always Hurwitz, which ensures better numerical stability for direct small-scale computations of $e^{-(S_{v,\mathrm{lyap}}^{(i)})^\top t}$ and $\mathrm{ln}\big(-(S_{v,\mathrm{lyap}}^{(i)})^\top\big)$ \cite{al2010new,fasi2018multiprecision}.
\subsection{Automatic Shift Generation}
There are three self-generating shift strategies proposed in \cite{benner2014self}. ``Self-generating shifts'' means that ADI shifts are not precomputed; instead, the ADI method generates the next shifts automatically during subsequent iterations once it has started. Among these strategies, two are computationally efficient; however, no theoretical justification was provided in \cite{benner2014self}. In this subsection, we provide a theoretical justification for these strategies from the perspective of rational interpolation. Furthermore, we revisit a shift generation strategy that was briefly mentioned but dismissed in \cite{benner2014self} due to concerns about increasing computational cost and other issues in shift selection. We note that the issues highlighted in \cite{benner2014self} can be easily addressed.

Define \( H_1(s) \) and \( H_{\perp,i}^{\mathrm{lyap}}(s) \) as follows:
\begin{align}
H_1(s)=C_1(sE_1-A_1)^{-1}B_1\quad \text{and} \quad H_{\perp,i}^{\mathrm{lyap}}(s)=C_1(sE_1-A_1)^{-1}B_{\perp,i}^{\mathrm{lyap}}.
\end{align}
Let the ROMs \( \hat{H}_1^{(i)}(s) \), \( \hat{H}_{\perp,i}^{\mathrm{lyap}}(s) \), and \( \tilde{H}_{\perp,i-1}^{\mathrm{lyap}}(s) \) be obtained by projecting \( H_1(s) \), \( H_{\perp,i}^{\mathrm{lyap}}(s) \), and \( H_{\perp,i-1}^{\mathrm{lyap}}(s) \) using \( V_1 = V_{\mathrm{lyap}}^{(i)} \) as follows:
\begin{align}
\hat{H}_1^{(i)}(s)&=C_1V_1\Big(sV_1^\top E_1V_1 - V_1^\top A_1V_1\Big)^{-1}V_1^\top B_1,\\
\hat{H}_{\perp,i}^{\mathrm{lyap}}(s)&=C_1V_1\Big(sV_1^\top E_1V_1 - V_1^\top A_1V_1\Big)^{-1}V_1^\top B_{\perp,i}^{\mathrm{lyap}},\\
\tilde{H}_{\perp,i-1}^{\mathrm{lyap}}(s)&=C_1V_1\Big(sV_1^\top E_1V_1 - V_1^\top A_1V_1\Big)^{-1}V_1^\top B_{\perp,i-1}^{\mathrm{lyap}}.
\end{align}
Assuming that \( V_{\mathrm{lyap}}^{(i)} \) has full column rank, the Sylvester equations \eqref{v_lyap}, \eqref{cfadi_v_sylv1}, and \eqref{cfadi_v_sylv2} imply the following interpolatory properties from a rational interpolation viewpoint:
\begin{enumerate}
  \item \( \hat{H}_1^{(i)}(s) \) interpolates \( H_1(s) \) at the points \( (-\alpha_1,\dots,-\alpha_i) \).
  \item \( \hat{H}_{\perp,i}^{\mathrm{lyap}}(s) \) interpolates \( H_{\perp,i}^{\mathrm{lyap}}(s) \) at the points \( (\alpha_1,\dots,\alpha_i) \).
  \item \( \tilde{H}_{\perp,i-1}^{\mathrm{lyap}}(s) \) interpolates \( H_{\perp,i-1}^{\mathrm{lyap}}(s) \) at the points \( (\alpha_1,\dots,\alpha_{i-1}, -\alpha_i) \).
\end{enumerate}
Moreover, note that \( H_1(s) \), \( H_{\perp,i}^{\mathrm{lyap}}(s) \), and \( H_{\perp,i-1}^{\mathrm{lyap}}(s) \) share the same poles. Similarly, their reduced counterparts \( \hat{H}_1^{(i)}(s) \), \( \hat{H}_{\perp,i}^{\mathrm{lyap}}(s) \), and \( \tilde{H}_{\perp,i-1}^{\mathrm{lyap}}(s) \) also share the same poles. The transfer functions \( H_{\perp,i}^{\mathrm{lyap}}(s) \) and \( H_{\perp,i-1}^{\mathrm{lyap}}(s) \) differ from \( H_1(s) \) only in their \( B \)-matrices. Although this was not recognized in \cite{wolf2016adi}, replacing \( B_1 \) with \( B_{\perp,i}^{\mathrm{lyap}} \) or \( B_{\perp,i-1}^{\mathrm{lyap}} \) has a rich history in the literature on large-scale eigenvalue problems \cite{saad2011numerical,rommes2008modal}, where it is known as “deflation.” Deflation effectively flattens the peaks in the frequency-domain plot of \( H_1(s) \) that correspond to dominant poles already captured by the ROM \( \hat{H}_1^{(i)}(s) \). Consequently, \( H_{\perp,i}^{\mathrm{lyap}}(s) \) retains only the peaks of \( H_1(s) \) that have not yet been captured by \( \hat{H}_1^{(i)}(s) \). In other words, the dominant poles of \( H_1(s) \) captured by \( \hat{H}_1^{(i)}(s) \) become poorly controllable in \( H_{\perp,i}^{\mathrm{lyap}}(s) \); see Theorem 3.3.1 of \cite{rommes2007methods}. Dominant pole approximation methods \cite{rommes2008modal} use deflation to avoid repeatedly capturing the same pole, since deflation ensures that an already captured pole is no longer dominant. These methods typically apply deflation only after a dominant pole has been captured. In contrast, CF-ADI applies deflation in every iteration after the first. Nevertheless, conceptually, replacing \( B_1 \) with \( B_{\perp,i-1}^{\mathrm{lyap}} \) in CF-ADI serves the same purpose as in dominant pole approximation methods \cite{rommes2008modal}.
\subsection{Existing Self-Generating Shift Strategies}
In \cite{benner2014self}, one self-generating shift strategy—proposed without theoretical justification—involves projecting the matrices \( A_1 \) and \( E_1 \) using \( V_1 = \mathrm{orth}\big(B_{\perp,i-1}^{\mathrm{lyap}}\big) \) and selecting the Ritz values with respect to \( A_1 \) and \( E_1 \) as the next ADI shifts \( \alpha_i \). We refer to this approach as the ``Projection-I'' shift generation strategy throughout the remainder of this paper. If \( E_1 = I \), projecting \( H_{\perp,i-1}^{\mathrm{lyap}}(s) \) via \( V_1 = \mathrm{orth}\big(B_{\perp,i-1}^{\mathrm{lyap}}\big) \) to obtain \( \tilde{H}_{\perp,i-1}^{\mathrm{lyap}}(s) \) enforces interpolation at \( s \to \infty \) \cite{wolfthesis}. Thus, CF-ADI with the Projection-I strategy can be interpreted as a pole approximation algorithm for \( H_1(s) \) (since \( H_{\perp,i-1}^{\mathrm{lyap}}(s) \) is a deflated version of \( H_1(s) \) with the same poles), but it uses a restrictive choice of interpolation point. Note that dominant poles are a good choice as ADI shifts; see Section 4.3.3 of \cite{saak2009efficient}. Moreover, most pole approximation algorithms are interpolatory in nature and interpolate at Ritz values until convergence; see Table 7.1 of \cite{mengi2022large}. When \( E_1 \neq I \), however, the Projection-I strategy loses the interpolation property at \( s \to \infty \), making its theoretical foundation questionable. Even when the interpolatory property holds, repeatedly using the same interpolation point is overly restrictive. It is therefore not surprising that its performance is not consistent—even in the numerical results reported in \cite{benner2014self}.

Another self-generating shift strategy proposed in \cite{benner2014self}—also without theoretical justification—involves projecting \( A_1 \) and \( E_1 \) using \( V_1 = \mathrm{orth}\big(v_i^{\mathrm{lyap}}\big) \) and taking the Ritz values with respect to \( A_1 \) and \( E_1 \) as the next ADI shifts \( \alpha_i \). We refer to this as the ``Projection-II'' shift generation strategy. Projecting \( H_1^{(i)}(s) \), \( H_{\perp,i}^{\mathrm{lyap}}(s) \), and \( H_{\perp,i-1}^{\mathrm{lyap}}(s) \) via \( V_1 = \mathrm{orth}\big(v_i^{\mathrm{lyap}}\big) \) to obtain \( \hat{H}_1^{(i)}(s) \), \( \hat{H}_{\perp,i}^{\mathrm{lyap}}(s) \), and \( \tilde{H}_{\perp,i-1}^{\mathrm{lyap}}(s) \) enforces interpolation at \( -\alpha_i \), \( \alpha_i \), and \( -\alpha_i \), respectively. This interpolatory behavior closely resembles that of the deflation-based dominant pole approximation (DPA) algorithm \cite{rommes2008modal,martins1996computing}. Hence, CF-ADI with the Projection-II strategy can be viewed as a variant of DPA (with deflation) applied to \( H_1(s) \), since both \( H_{\perp,i-1}^{\mathrm{lyap}}(s) \) and \( H_{\perp,i}^{\mathrm{lyap}}(s) \) are deflated versions of \( H_1(s) \) with identical poles. Thus, the Projection-II strategy is theoretically sound, which explains its strong numerical performance in \cite{benner2014self}.

A significant limitation of the Shift‑I and Shift‑II strategies is that, when \( m_1 = 1 \), they keep producing real‑valued shifts. This means that if the dominant poles are complex‑valued, they cannot be captured by these strategies. The proposed self‑generating shift, explained in the sequel, does not face this problem.
\subsection{Proposed Self-generating Shift Strategy}
In \cite{benner2014self}, the choice of setting the projection matrix \( V_1 = \mathrm{orth}\big(V_{\mathrm{lyap}}^{(i)}\big) \) for projecting \( A_1 \) and \( E_1 \) was dismissed for the following reasons: as the number of columns of \( V_{\mathrm{lyap}}^{(i)} \) grows,
\begin{enumerate}
  \item The eigenvalue decomposition of \( (V_1^\top E_1 V_1)^{-1} V_1^\top A_1 V_1 \) to compute Ritz values becomes computationally demanding.
  \item Orthogonalizing \( V_{\mathrm{lyap}}^{(i)} \) becomes computationally demanding.
  \item Many Ritz values are produced, with no clear criterion for selecting the most relevant ones.
\end{enumerate}
At the time, the interpolatory and deflation properties of ADI methods were not recognized. Consequently, the choice \( V_1 = \mathrm{orth}\big(V_{\mathrm{lyap}}^{(i)}\big) \) was dismissed, despite the observed steep decline in residuals reported in \cite{saak2009efficient,benner2010galerkin}.

In the large-scale eigenvalue problem literature, using the history of previous shifted linear solves together with the current one is known as ``subspace acceleration'', which significantly accelerates convergence to dominant poles. This idea forms the core innovation of the Subspace Accelerated DPA (SADPA) algorithm proposed in \cite{rommes2006efficient}. To keep the number of columns of the projection matrix within an allowable limit, SADPA employs ``implicit restart''—a technique widely used in large-scale eigenvalue computations \cite{saad2011numerical}. In our context, this means discarding the accumulated history of shifted linear solves $(v_1^{\mathrm{lyap}},\cdots,v_{i-1}^{\mathrm{lyap}})$ once \( V_1 \) exceeds the maximum allowable size and beginning to collect a new history from the current shifted linear solve $v_i^{\mathrm{lyap}}$ onward. Pole selection then becomes straightforward: the most dominant pole of the deflated transfer function \( \hat{H}_{\perp,i}^{\mathrm{lyap}}(s) \) is chosen as the next shift, as done in both SADPA and the algorithm in \cite{mengi2022large}. Thus, the lack of clarity noted in \cite{benner2014self} regarding Ritz value selection is resolved by the results presented in this subsection.

In the sequel, we propose a self-generating shift strategy that leverages the interpolatory properties of CF-ADI to mimic those of SADPA identified in \cite{mengi2022large}, as well as the shift generation approach from \cite{rommes2006efficient}. The resulting CF-ADI method—equipped with this subspace-accelerated shift strategy and implicit restart—can be viewed as a variant of SADPA.
\subsubsection{For Lyapunov Equations}\label{3.9.1}
Let us project the deflated transfer function \( H_{\perp,i}^{\mathrm{lyap}}(s) \) using  
\( V_1 = \mathrm{orth}\big(\begin{bmatrix}v_1^{\mathrm{lyap}} & \cdots & v_i^{\mathrm{lyap}}\end{bmatrix}\big) \), together with an implicit restart scheme to keep the number of columns of \( V_1 \) within an allowable limit, yielding:
\begin{align}
\hat{E}_1 = V_1^\top E_1 V_1,\quad \hat{A}_1 = V_1^\top A_1 V_1,\quad \hat{B}_{\perp,i}^{\mathrm{lyap}} = V_1^\top B_{\perp,i}^{\mathrm{lyap}}.
\end{align}

It was mentioned earlier in this subsection that this ROM interpolates the deflated transfer function \( H_{\perp,i}^{\mathrm{lyap}}(s) \) at the interpolation points \( (\alpha_1, \dots, \alpha_i) \), where the ADI shifts \( \alpha_i \) are approximate dominant poles of \( H_{\perp,i}^{\mathrm{lyap}}(s) \). As shown in \cite{mengi2022large}, SADPA similarly interpolates its corresponding deflated transfer function at its approximate dominant poles; see Table 7.1 of \cite{mengi2022large}.

Now, let us compute the eigenvalue decomposition of \( \hat{E}_1^{-1}\hat{A}_1 \) as  
\( \hat{E}_1^{-1}\hat{A}_1 = \hat{T}\,\mathrm{diag}\big(\hat{\lambda}_1,\dots,\hat{\lambda}_k\big)\hat{T}^{-1} \).  
Define \( \hat{r}_{b,l} = \hat{T}^{-1}(l,:)\,\hat{B}_{\perp,i}^{\mathrm{lyap}} \).  
The most controllable pole of \( \hat{E}_1^{-1}\hat{A}_1 \) is the pole \( \hat{\lambda}_l \) associated with the largest value of  
\[
\hat{\phi}_l = \frac{\|\hat{r}_{b,l}\|_2^2}{|\mathrm{Re}(\hat{\lambda}_l)|},
\]
as in \cite{rommes2007methods,mengi2022large}.  
We then sort the columns of \( \hat{T} \) and the eigenvalues \( \hat{\lambda}_l \) in descending order of \( \hat{\phi}_l \).  
Consequently, the pole \( \hat{\lambda}_1 \) with the largest \( \hat{\phi}_l \) can be used as the next ADI shift.

Dually, the ADI shifts \( \beta_i \) can be chosen as the most observable approximate poles of the deflated transfer function \( C_{\perp,i}^{\mathrm{lyap}}(sE_2 - A_2)^{-1}B_2 \).  
Let us set the projection matrix \( W_2 \) as  
\( W_2 = \mathrm{orth}\big(\begin{bmatrix}w_1^{\mathrm{lyap}} & \cdots & w_i^{\mathrm{lyap}}\end{bmatrix}\big) \),  
again using an implicit restart scheme to limit the number of columns, resulting in:
\begin{align}
\hat{E}_2 = W_2^\top E_2 W_2,\quad \hat{A}_2 = W_2^\top A_2 W_2,\quad \hat{C}_{\perp,i}^{\mathrm{lyap}} = C_{\perp,i}^{\mathrm{lyap}} W_2. \label{proj_mat_2}
\end{align}

Next, consider the eigenvalue decomposition of \( \hat{A}_2 \hat{E}_2^{-1} \):  
\( \hat{A}_2 \hat{E}_2^{-1} = \bar{T}\,\mathrm{diag}\big(\bar{\lambda}_1,\dots,\bar{\lambda}_r\big)\bar{T}^{-1} \).  
Define \( \bar{r}_{c,l} = \hat{C}_{\perp,i}^{\mathrm{lyap}}\,\bar{T}(:,l) \).  
The most observable pole of \( \hat{A}_2 \hat{E}_2^{-1} \) is the pole \( \bar{\lambda}_l \) corresponding to the largest  
\[
\bar{\phi}_l = \frac{\|\bar{r}_{c,l}\|_2^2}{|\mathrm{Re}(\bar{\lambda}_l)|},
\]
as in \cite{rommes2007methods,mengi2022large}. We then sort the columns of \( \bar{T} \) and the eigenvalues \( \bar{\lambda}_l \) in descending order of \( \bar{\phi}_l \). Thus, the pole \( \bar{\lambda}_1 \) with the largest \( \bar{\phi}_l \) can be used as the next ADI shift.
\subsubsection{For Low-rank BT}\label{3.9.2}
When the CF-ADI method is used to implement low-rank BT, the residuals \( R_{p,\mathrm{lyap}} \) and \( R_{q,\mathrm{lyap}} \) are often not good indicators of the accuracy of the ROM; see \cite{saak2012goal} and \cite{zulfiqar2024balanced}. The main reason is that a pole which is strongly controllable may be poorly observable, and vice versa, as illustrated by an example in \cite{zulfiqar2024balanced}. Thus, the Galerkin projection strategy discussed so far may not be suitable if the goal is to use the low-rank solutions of \( P_1 \) and \( Q_2 \) to implement BT—which aims to preserve states that are simultaneously the most controllable and the most observable.

Assume that \( A_1 = A_2 \), \( B_1 = B_2 \), \( C_1 = C_2 \), \( D_1 = D_2 \), and \( E_1 = E_2 \). Then \( V_1 \) and \( W_2 \) can be set as follows:
\[
V_1 = \mathrm{orth}\big(\begin{bmatrix}v_1^{\mathrm{lyap}} & \cdots & v_i^{\mathrm{lyap}}\end{bmatrix}\big)
\quad\text{and}\quad
W_2 = \mathrm{orth}\big(\begin{bmatrix}w_1^{\mathrm{lyap}} & \cdots & w_i^{\mathrm{lyap}}\end{bmatrix}\big),
\]
with an implicit restart mechanism, yielding the following ROM:
\[
\hat{E}_1 = W_2^\top E_1 V_1,\quad 
\hat{A}_1 = W_2^\top A_1 V_1,\quad 
\hat{B}_{\perp,i}^{\mathrm{lyap}} = W_2^\top B_{\perp,i}^{\mathrm{lyap}},\quad  
\hat{C}_{\perp,i}^{\mathrm{lyap}} = C_{\perp,i}^{\mathrm{lyap}} V_1.
\]

Next, solve \( \hat{E}_1 \tilde{A} = \hat{A}_1 \) and \( \hat{E}_1 \tilde{B}_{\perp} = \hat{B}_{\perp,i}^{\mathrm{lyap}} \) for \( \tilde{A} \) and \( \tilde{B}_{\perp} \), respectively. If $m_1=p_2$, $\tilde{A}=(\hat{E}_1)^{-1}\hat{A}_1$ and $\tilde{B}_{\perp}=(\hat{E}_1)^{-1}\hat{B}_{\perp,i}^{\mathrm{lyap}}$. Then compute the eigenvalue decomposition of \( \tilde{A} \): $\tilde{A} = \tilde{T}\,\mathrm{diag}\big(\tilde{\lambda}_1, \dots, \tilde{\lambda}_r\big)\tilde{T}^{-1}$. Define \( \tilde{r}_{b,l} = \tilde{T}^{-1}(l,:) \tilde{B}_{\perp} \) and \( \tilde{r}_{c,l} = \hat{C}_{\perp,i}^{\mathrm{lyap}} \tilde{T}(:,l) \).  
The dominant pole of \( \tilde{A} \) is the pole \( \tilde{\lambda}_l \) associated with the largest value of
\[
\tilde{\phi}_l = \frac{\|\tilde{r}_{c,l}\|_2 \, \|\tilde{r}_{b,l}\|_2}{|\mathrm{Re}(\tilde{\lambda}_l)|},
\]
as defined in \cite{rommes2007methods,mengi2022large}.  
Sort the columns of \( \tilde{T} \) and the eigenvalues \( \tilde{\lambda}_l \) in descending order of \( \tilde{\phi}_l \).  
Then, the pole \( \tilde{\lambda}_1 \) with the largest \( \tilde{\phi}_l \) can be used as the ADI shift, setting \( \alpha_i = \beta_i \).
\begin{remark}\label{remark5}
While the proposed subspace-accelerated shift generation strategy is inspired by SADPA~\cite{rommes2006efficient} and closely mimics it, it is not yet clear whether this strategy also offers a good chance of rapid convergence to the dominant poles. An in-depth convergence analysis would constitute a research topic in its own right and lies beyond the scope of this paper—though it is certainly an interesting direction to pursue. If this strategy indeed proves effective at capturing dominant poles, the applicability of UADI could be extended to dominant pole estimation as well. On the other hand, successive deflation in UADI might negatively affect convergence to dominant poles, as the peaks associated with those poles are flattened prematurely. In many cases, a peak is effectively captured by interpolation in a neighborhood close to the dominant pole; afterward, the peak is flattened, and the proposed shift strategy may cease to target that pole further. Repeated shifts in ADI methods generally lead to slower residual decay, so this premature deflation is actually beneficial from that perspective—it avoids using numerically very close shifts repeatedly. However, from a pole estimation standpoint, UADI with the proposed subspace-accelerated shift generation strategy appears unlikely to compete with SADPA~\cite{rommes2006efficient}, as the inherent premature deflation may prevent further refinement of the pole estimate. A detailed mathematical analysis could determine whether these initial qualitative assessments are rigorous. In the next section, numerical experiments demonstrate that the proposed shift generation strategy was indeed able to capture the dominant poles in the examples considered.
\end{remark}
\subsubsection{For Sylvester Equations}\label{3.9.3}
In \cite{benner2014self}, the Projection-I and Projection-II shift generation strategies are generalized for FADI by suggesting the choices  
\( V_1 = \mathrm{orth}\big(B_{\perp,i-1}^{\mathrm{sylv}}\big) \), \( W_2 = \mathrm{orth}\big(C_{\perp,i-1}^{\mathrm{sylv}}\big) \) for Projection-I, and  
\( V_1 = \mathrm{orth}\big(v_i^{\mathrm{sylv}}\big) \), \( W_2 = \mathrm{orth}\big(w_i^{\mathrm{sylv}}\big) \) for Projection-II.  
The theoretical rationale for this choice is the same as for Lyapunov equations.  
Along similar lines, we can generalize the proposed subspace-accelerated Galerkin projection scheme—originally developed for Lyapunov equations—to Sylvester equations by setting the projection matrices as
\[
V_1 = \mathrm{orth}\big(\begin{bmatrix}v_1^{\mathrm{sylv}} & \cdots & v_i^{\mathrm{sylv}}\end{bmatrix}\big)
\quad\text{and}\quad
W_2 = \mathrm{orth}\big(\begin{bmatrix}w_1^{\mathrm{sylv}} & \cdots & w_i^{\mathrm{sylv}}\end{bmatrix}\big).
\]
However, we now argue that these extensions of Projection-I, Projection-II, and the subspace-accelerated shift generation strategies may not be suitable for most Sylvester equations.

From an approximate integration perspective, FADI approximates \( X_{\mathrm{sylv}} \) as follows:
\begin{align}
X_{\mathrm{sylv}} \approx \frac{j}{2\pi} V_{\mathrm{fadi}}^{(i)} \Bigg[ \int_{-\infty}^{\infty} \Big(j\omega I + (S_{w,\mathrm{sylv}}^{(i)})^* \Big)^{-1} (L_{w,\mathrm{sylv}}^{(i)})^\top L_{v,\mathrm{sylv}}^{(i)} \Big(j\omega I + S_{v,\mathrm{sylv}}^{(i)} \Big)^{-*} d\omega \Bigg]^{-1} (W_{\mathrm{fadi}}^{(i)})^*. \label{fadi_int}
\end{align}
Assume for simplicity that \( A_2 E_2^{-1} = \mathrm{diag}(\lambda_1, \dots, \lambda_{n_2}) \), and define \( C_2 E_2^{-1} = [c_1, \dots, c_{n_2}] \).  
Then \( X_{\mathrm{sylv}} \) can be expressed as samples of \( (s E_1 - A_1)^{-1} B_1 \) at \( -\lambda_i^* \) in the direction \( c_i \), i.e.,
\[
X_{\mathrm{sylv}} = \begin{bmatrix}
(-\lambda_1^* E_1 - A_1)^{-1} B_1 c_1 & \cdots & (-\lambda_{n_2}^* E_1 - A_1)^{-1} B_1 c_{n_2}
\end{bmatrix};
\]
see \cite{MPIMD11-11}.  
From an interpolation theory standpoint, interpolating at the mirror images of the most observable poles of \( A_2 E_2^{-1} \) can yield a good low-rank approximation of \( X_{\mathrm{sylv}} \), as suggested by the integral expression \eqref{fadi_int}. After all, the quality of the approximation of the integrand directly affects the accuracy of the numerical integration. Thus, in the FADI method, the shifts \( \alpha_i = \beta_i \) should be chosen as the most observable poles of \( A_2 E_2^{-1} \)—that is, the poles \( \lambda_i \) with the largest residuals \( c_i \)—to obtain a good approximation according to \eqref{fadi_int} and the numerical integration perspective.

Dually, assume for simplicity that \( E_1^{-1} A_1 = \mathrm{diag}(\lambda_1, \dots, \lambda_{n_1}) \), and define \( E_1^{-1} B_1 = [b_1, \dots, b_{n_1}] \).  
Then \( X_{\mathrm{sylv}} \) can be written as samples of \( (s E_2 - A_2)^{-1} C_1 \) at \( -\lambda_i^* \) in the direction \( b_i \), i.e.,
\[
X_{\mathrm{sylv}} = 
\begin{bmatrix}
b_1 C_2 (-\lambda_1^* E_2 - A_2)^{-1} \\
\vdots \\
b_{n_1} C_2 (-\lambda_{n_1}^* E_2 - A_2)^{-1}
\end{bmatrix}.
\]
Again, from an interpolation perspective, interpolating at the mirror images of the most controllable poles of \( E_1^{-1} A_1 \) can lead to a good low-rank approximation of \( X_{\mathrm{sylv}} \).  
Hence, in FADI, the shifts \( \alpha_i = \beta_i \) should be the most controllable poles of \( E_1^{-1} A_1 \)—that is, the poles \( \lambda_i \) with the largest residuals \( b_i \)—to achieve a good approximation.

Note, however, that choosing \( \alpha_i \) as the most observable poles of \( A_2 E_2^{-1} \) and \( \beta_i \) as the most controllable poles of \( E_1^{-1} A_1 \) may not be a good strategy from a numerical integration standpoint if \( \alpha_i \neq \beta_i \), because the integrand
\[
(sE_1 - A_1)^{-1} B_1 C_2 (sE_2 - A_2)^{-*}
\]
is not jointly interpolated when \( \alpha_i \neq \beta_i \). Instead, only \( (sE_1 - A_1)^{-1} B_1 \) is interpolated at \( -\alpha_i \), and \( C_2 (sE_2 - A_2)^{-1} \) is interpolated at \( -\beta_i \). This issue is illustrated in the following example.\\

\noindent\textbf{Illustrative Example:} Consider the following state-space realizations:
\begin{align}
E_1 &=\begin{bsmallmatrix}0.2498 &  0  &  0.0002  &  0 &    0.0001  &  0\\
   0   & 0.2498   & 0 &   0.0002  &  0 &    0.0001\\
    0.0002 &   0 &   0.2499  &  0.0001  &  0   0\\
    0 &   0.0002  &  0.0001  &  0.2499  &  0&   0 \\
    0.0001  &  0 &   0 &   0 &    0.2500 &   0 \\
    0 &    0.0001 &   0 &   0 &   0 &   0.2500\end{bsmallmatrix},&
A_1 &= \begin{bsmallmatrix}-0.2508 &  24.4816 &  -0.5095 &  -0.4723  & -0.5112 &  -0.4814\\
  -25.4822  & -0.2508  & -0.5283  & -0.4908 &  -0.5188  & -0.4888\\
   -0.4908 &  -0.4723  & -0.2497  & 49.4833 &  -0.5046  & -0.4861\\
   -0.5283 &  -0.5095 & -50.4837  & -0.2497  & -0.5141 &  -0.4952\\
   -0.4888 &  -0.4814 &  -0.4952 &  -0.4861  & -0.2499 &  99.4954\\
   -0.5188 &  -0.5112 &  -0.5141&   -0.5046 &-100.4955 &  -0.2499\end{bsmallmatrix},\nonumber\\
B_1&=\begin{bsmallmatrix}0.5025& 0.4965 & -0.0051& 0.0035& 0.0133& -0.0116\end{bsmallmatrix}^\top,&
C_1 &=\begin{bsmallmatrix}0.4965&    0.5025 &   0.0035  & -0.0051 &  -0.0116  &  0.0133\end{bsmallmatrix},\quad D_1=0,\nonumber\\
E_2 &=\begin{bsmallmatrix}0.2498 &  0 &    0.0002 &   0 &   0.0001 &   0\\
   0&    0.2498 &   0 &    0.0002  &  0 &   0.0001\\
    0.0002 &   0 &  0.2499 &   0.0001  &  0 &    0\\
    0 &    0.0002 &   0.0001  &  0.2499 &   0 &    0\\
    0.0001 &   0 &   0 &  0 &    0.2500  &  0\\
    0 &   0.0001  &  0 &   0 &   0 &    0.2500\end{bsmallmatrix},&
A_2 &= \begin{bsmallmatrix}-0.2508 &  24.4816 &  -0.5095 &  -0.4723 &  -0.5112  & -0.4814\\
  -25.4822  & -0.2508 &  -0.5283 &  -0.4908  & -0.5188 &  -0.4888\\
   -0.4908  & -0.4723  & -0.2497 &  49.4833 &  -0.5046 &  -0.4861\\
   -0.5283  & -0.5095 & -50.4837 &  -0.2497 &  -0.5141  & -0.4952\\
   -0.4888  & -0.4814  & -0.4952 &  -0.4861 &  -0.2499  & 99.4954\\
   -0.5188  & -0.5112  & -0.5141 &  -0.5046 & -100.4955 &  -0.2499\end{bsmallmatrix},\nonumber\\
B_2 &=\begin{bsmallmatrix}-0.0029& 0.0042& 0.0118& -0.0129 & 0.4866& 0.5131\end{bsmallmatrix}^\top,&
C_2&=\begin{bsmallmatrix}0.0042&-0.0029 &  -0.0129  &  0.0118 &   0.5131 &   0.4866\end{bsmallmatrix},\quad D_2=0.\nonumber
\end{align} Both \( G_1(s) \) and \( G_2(s) \) have identical poles at \( -1 \pm j100 \), \( -1 \pm j200 \), and \( -1 \pm j400 \). Their frequency responses are shown in Figure~\ref{fig1}.
\begin{figure}[!h]
  \centering
  \includegraphics[width=12cm]{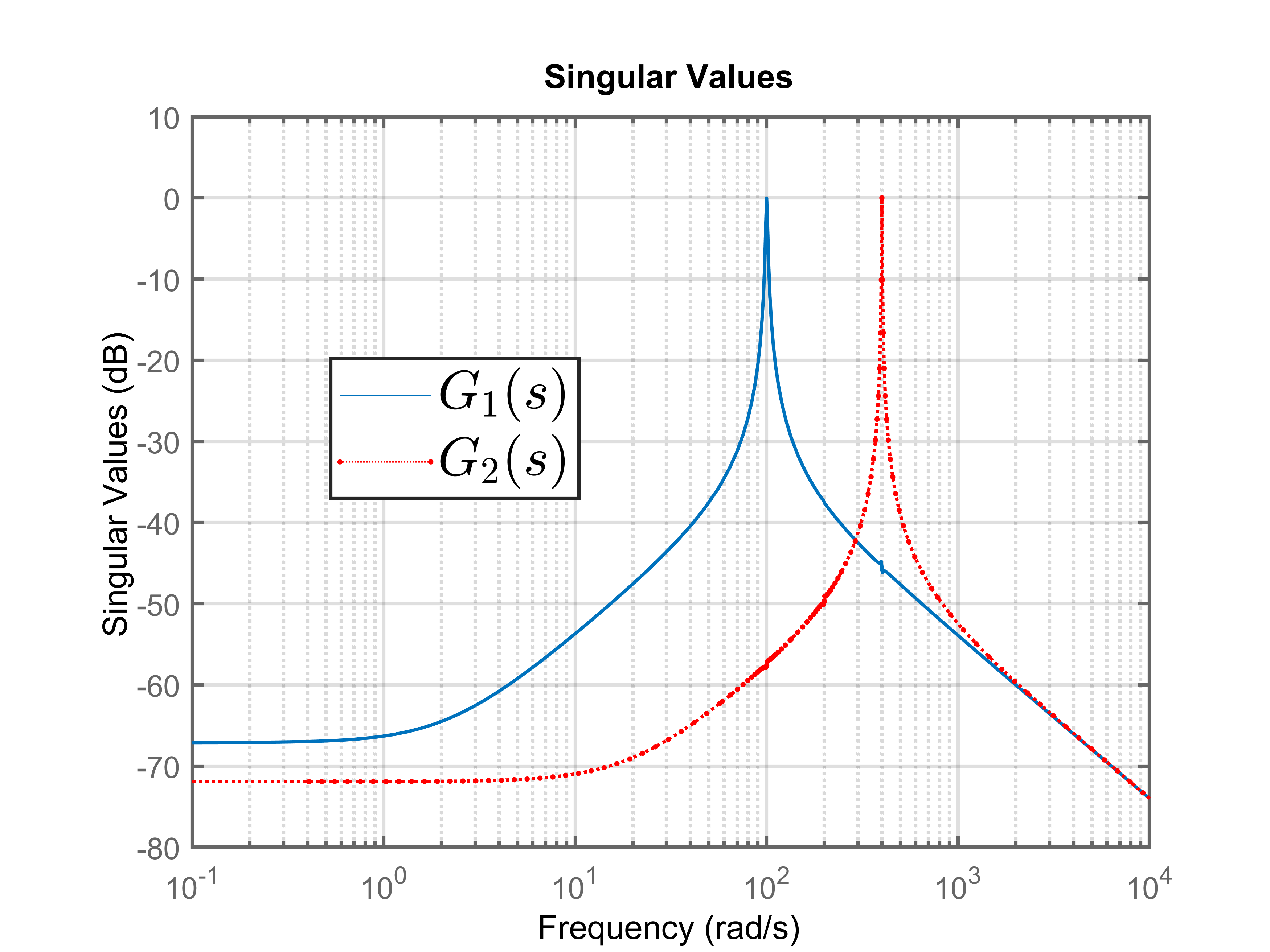}
  \caption{Frequency domain plot of $G_1(s)$ and $G_2(s)$}\label{fig1}
\end{figure}
\( G_1(s) \) exhibits a peak at 100 rad/sec (corresponding to \( -1 \pm j100 \)), while \( G_2(s) \) peaks at 400 rad/sec (corresponding to \( -1 \pm j400 \)). Table~\ref{tab1} lists various choices of ADI shifts and the corresponding normalized residuals \( \frac{\|R_{\mathrm{sylv}}\|_2}{\|B_1 C_2\|_2} \).
\begin{table}[!h]
\centering
\caption{Normalized Residual for Various ADI Shifts}\label{tab1}
\begin{tabular}{|c|c|c|}\hline
$\alpha_i$ & $\beta_i$ & $\frac{\|R_{\mathrm{sylv}}\|_2}{\|B_1C_2\|_2}$\\\hline
$-1\pm j100$& $-1\pm j400$ & $3.51\times 10^{4}$ \\
$-1\pm j400$& $-1\pm j100$ & $12.2839$ \\
$-1\pm j100$& $-1\pm j100$ & $0.0412$ \\
$-1\pm j400$& $-1\pm j400$ & $0.0411$ \\
\hline
\end{tabular}
\end{table}
Although setting \( \alpha_i = -1 \pm j100 \) captures the peak of \( G_1(s) \) and \( \beta_i = -1 \pm j400 \) captures the peak of \( G_2(s) \), this combination yields the worst approximation. Choosing \( \alpha_i \) as the most observable poles and \( \beta_i \) as the most controllable poles gives a better result, but the best approximations are obtained when \( \alpha_i = \beta_i \)—either both equal to the most controllable poles of \( E_1^{-1} A_1 \) or both equal to the most observable poles of \( A_2 E_2^{-1} \). This observation aligns with the earlier discussion. Consequently, the shift strategy based on Ritz values of \( (E_1, A_1) \) and \( (E_2, A_2) \) proposed in \cite{benner2014self} is unlikely to yield good numerical accuracy when the most controllable poles of \( E_1^{-1} A_1 \) and the most observable poles of \( A_2 E_2^{-1} \) are very different. Therefore, we propose a different shift generation strategy for Sylvester equations.

We propose selecting  
\( V_1 = \mathrm{orth}\big(\begin{bmatrix} v_1^{\mathrm{sylv}} & \cdots & v_i^{\mathrm{sylv}} \end{bmatrix}\big) \)  
with an implicit restart mechanism, and projecting as follows:
\[
\hat{E}_1 = V_1^\top E_1 V_1, \quad
\hat{A}_1 = V_1^\top A_1 V_1, \quad
\hat{B}_{\perp,i}^{\mathrm{sylv}} = V_1^\top B_{\perp,i}^{\mathrm{sylv}}.
\]
We then use the most controllable pole of these projected matrices as the ADI shifts \( \alpha_i \) and \( \beta_i \), ensuring \( \alpha_i = \beta_i \). In the subsequent iteration, we select  
\( W_2 = \mathrm{orth}\big(\begin{bmatrix} w_1^{\mathrm{sylv}} & \cdots & w_i^{\mathrm{sylv}} \end{bmatrix}\big) \)  
with implicit restart, and project as:
\[
\hat{E}_2 = W_2^\top E_2 W_2, \quad
\hat{A}_2 = W_2^\top A_2 W_2, \quad
\hat{C}_{\perp,i}^{\mathrm{sylv}} = C_{\perp,i}^{\mathrm{sylv}} W_2,
\]
and use the most observable pole of these projected matrices as the common ADI shift \( \alpha_i = \beta_i \).

An additional advantage of enforcing \( \alpha_i = \beta_i \) is that the FADI-based approximation \( V_{\mathrm{fadi}}^{(i)} D_{\mathrm{fadi}}^{(i)} (W_{\mathrm{fadi}}^{(i)})^\top \) coincides with \( V_{\mathrm{lyap}}^{(i)} (W_{\mathrm{lyap}}^{(i)})^\top \), and the residual \( R_{\mathrm{sylv}} = B_{\perp,i}^{\mathrm{sylv}} C_{\perp,i}^{\mathrm{sylv}} \) matches \( B_{\perp,i}^{\mathrm{lyap}} C_{\perp,i}^{\mathrm{lyap}} \).  
Thus, when \( \alpha_i = \beta_i \), there is no need to extract \( V_{\mathrm{fadi}}^{(i)} \) and \( W_{\mathrm{fadi}}^{(i)} \) separately from \( V_{\mathrm{lyap}}^{(i)} \) and \( W_{\mathrm{lyap}}^{(i)} \).
\subsubsection{For Riccati Equations}
Just as for Lyapunov equations, \( (sE_1 - A_1)^{-1} B_{\perp,i}^{\mathrm{ricc}} \) and \( C_{\perp,i}^{\mathrm{ricc}} (sE_2 - A_2)^{-*} \) can be projected to generate subsequent shifts for RADI. The projection matrix \( V_1 \) can be set as  
\( V_1 = \mathrm{orth}\big(\begin{bmatrix} v_1^{\mathrm{ricc}} & \cdots & v_i^{\mathrm{ricc}} \end{bmatrix}\big) \) with an implicit restart, and similarly, \( W_2 \) can be set as \( W_2 = \mathrm{orth}\big(\begin{bmatrix} w_1^{\mathrm{ricc}} & \cdots & w_i^{\mathrm{ricc}} \end{bmatrix}\big)\) with an implicit restart. Again, if the goal is to use the low-rank solution to implement generalizations of BT—such as positive-real BT or bounded-real BT—then shifts should be generated using Petrov–Galerkin projection rather than Galerkin projection, as discussed earlier in Section~\ref{3.9.2}.
\subsubsection{For UADI Framework}
Note that all ADI methods interpolate at the mirror images of the ADI shifts and therefore share similar interpolatory properties.  
Consequently, shift generation using  
\( V_1 = \mathrm{orth}\big(\begin{bmatrix} v_1^{\mathrm{lyap}} & \cdots & v_i^{\mathrm{lyap}} \end{bmatrix}\big) \),  
\( V_1 = \mathrm{orth}\big(\begin{bmatrix} v_1^{\mathrm{sylv}} & \cdots & v_i^{\mathrm{sylv}} \end{bmatrix}\big) \), or  
\( V_1 = \mathrm{orth}\big(\begin{bmatrix} v_1^{\mathrm{ricc}} & \cdots & v_i^{\mathrm{ricc}} \end{bmatrix}\big) \)  
yields nearly identical numerical performance in terms of residual decay. Therefore, within the UADI framework, we may set the projection matrices as  
\( V_1 = \mathrm{orth}\big(\begin{bmatrix} v_1^{\mathrm{lyap}} & \cdots & v_i^{\mathrm{lyap}} \end{bmatrix}\big) \) and  
\( W_2 = \mathrm{orth}\big(\begin{bmatrix} w_1^{\mathrm{lyap}} & \cdots & w_i^{\mathrm{lyap}} \end{bmatrix}\big) \),  
both equipped with implicit restart mechanisms. However, the actual generation of ADI shifts using these projection matrices depends on the goal of applying the UADI framework. For instance, if the aim is to implement low-rank versions of BT and its variants, shifts should be generated via the Petrov–Galerkin projection discussed earlier. If, on the other hand, the goal is to solve a Sylvester equation, an alternating shift generation strategy should be used—one that ensures \( \alpha_i = \beta_i \) via the Galerkin projection approach described previously.
\subsection{A MATLAB-based Implementation of UADI}
The MATLAB implementation of UADI is publicly available at \cite{mycode}. It is not optimized for memory usage, as it is written with the readers of this paper in mind, allowing them to verify most of the mathematical results. Consequently, several intermediate matrices are stored and returned as outputs, even though they could be omitted to reduce memory usage and improve computational speed. The initial ADI shifts $\alpha_1$ and $\beta_1$ are set to $0.001$ by default. If any shift-generation strategy produces shifts with positive real parts, they are made negative by multiplying them by $-1$. Although academic in nature, the implementation is still efficient enough to handle problems where $n_1$ is in the millions without memory issues; we have tested it for sizes up to $10^7$.

For practical applications where computational speed is a priority, and where best practices for selecting initial shifts and treating shifts with positive real parts are relevant, we recommend implementing UADI using the latest version of the Matrix Equation Sparse Solver (M-M.E.S.S.) library \cite{saak2021mm}. Such an implementation will be more efficient and effective than the academic-oriented version we provide at \cite{mycode}. For example, although our sparse–dense Sylvester solver in \cite{mycode} is considerably faster than MATLAB’s built-in ``\textit{lyap}'' command, it is still not as fast as the corresponding M-M.E.S.S. routine, which is state-of-the-art and applies best coding practices to exploit sparsity and save memory. However, for academic purposes, the source code of M-M.E.S.S. is not straightforward to follow. Therefore, for reproducibility, verification of the mathematical results in this paper, and a clearer understanding of the ADI properties explored here, we recommend using our implementation at \cite{mycode}, whose source code is written specifically with these goals in mind. 
\section{Numerical Simulation}
The numerical performance and computational efficiency of ADI methods are well documented in the literature. Therefore, we avoid reporting similar results in this section. The main focus of this section is to test the self-generating shift strategies proposed in the previous section. All tests are performed using MATLAB R2021b on a laptop with 16 GB of random access memory (RAM) and a 2 GHz Intel i7 processor. The shifted linear systems are solved using MATLAB’s backslash operator `$\backslash$'. The MATLAB codes and data to reproduce the results of this section are available at \cite{mycode}.

Three numerical examples are considered. The model in the first example is of modest order, allowing a posteriori error analysis to be performed on the ROMs produced by BT. The models in the remaining two examples are large-scale. The first test focuses on evaluating the shift strategy proposed in the previous section for BT. The second test focuses on the shift strategy for the Sylvester equation. The third example shows that subspace acceleration improves convergence speed, repaying the initial effort through shorter overall computation time.
\subsection{RLC Circuit Network}
This example considers a two-port passive RLC network (shown in Figure \ref{fig2}) with 2 inputs and 2 outputs. Each RLC ladder consists of 400 repeated segments. The resistance, capacitance, and inductance values in the segments are as follows:  
\(\bar{R}_1 = 0.1\,\Omega\), \(\bar{R}_2 = 1\,\Omega\), \(\bar{L}_1 = 0.1\,\text{H}\), \(\bar{C}_1 = 0.1\,\text{F}\), \(\bar{R}_3 = 0.5\,\Omega\), \(\bar{R}_4 = 3\,\Omega\), \(\bar{L}_2 = 0.2\,\text{H}\), \(\bar{C}_2 = 0.2\,\text{F}\), and \(\bar{R}_s = 0.2\,\Omega\).
\begin{figure}[!h]
  \centering
  \includegraphics[width=10cm]{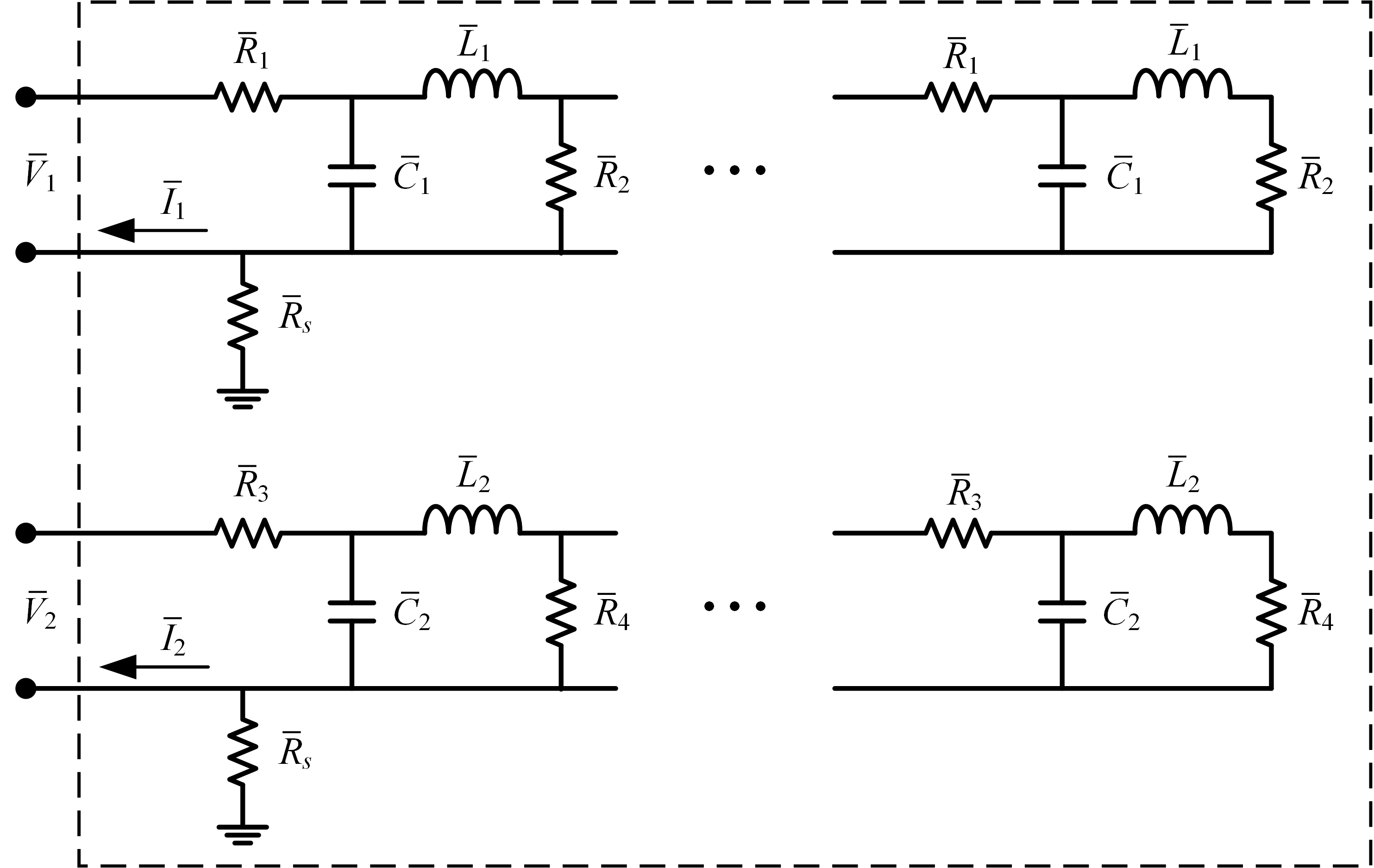}
  \caption{A Two-port Passive RLC Network}\label{fig2}
\end{figure}
The state-space matrices \(A_1 \in \mathbb{R}^{1600 \times 1600}\), \(B_1 \in \mathbb{R}^{1600 \times 2}\), \(C_1 \in \mathbb{R}^{2 \times 1600}\), \(D_1 \in \mathbb{R}^{2 \times 2}\), and \(E_1 \in \mathbb{R}^{1600 \times 1600}\) of this $1600^{th}$-order model are available in \cite{mycode}. In this example, \(G_1(s)\) and \(G_2(s)\) are identical, as the focus is on the applicability of UADI in implementing the BT family. Since the model is square, stable, minimum phase, positive-real, and bounded-real, all the linear matrix equations \eqref{lyap_p}-\eqref{ricc_q_sf} considered in the paper are well defined, and UADI can compute their low-rank solutions simultaneously. The remaining matrices and parameters for this example are:  
\(S_1 = \begin{bmatrix}1 & 1 \\ 0 & -1\end{bmatrix}\), \(S_2 = \begin{bmatrix}1 & 1 \\ 1 & -1\end{bmatrix}\), \(\gamma_1 = 2\), and \(\gamma_2 = 3\). The maximum number of ADI iterations is set to \(k = 50\). The shift generation strategy proposed in Section \ref{3.9.2} is used. The projection matrices \(V_1\) and \(W_2\) are implicitly restarted once their number of columns reaches $10$. Consequently, the projected eigenvalue problem never involves computing more than $10$ eigenvalues—ensuring that shift generation remains an inexpensive operation.

Figure \ref{fig3} shows that the proposed shift generation strategy yields excellent performance in terms of the steep decline of the normalized residuals (in terms of $L_2$ norm) for the linear matrix equations \eqref{lyap_p}–\eqref{ricc_q_sf}.
\begin{figure}[!h]
  \centering
  \includegraphics[width=12cm]{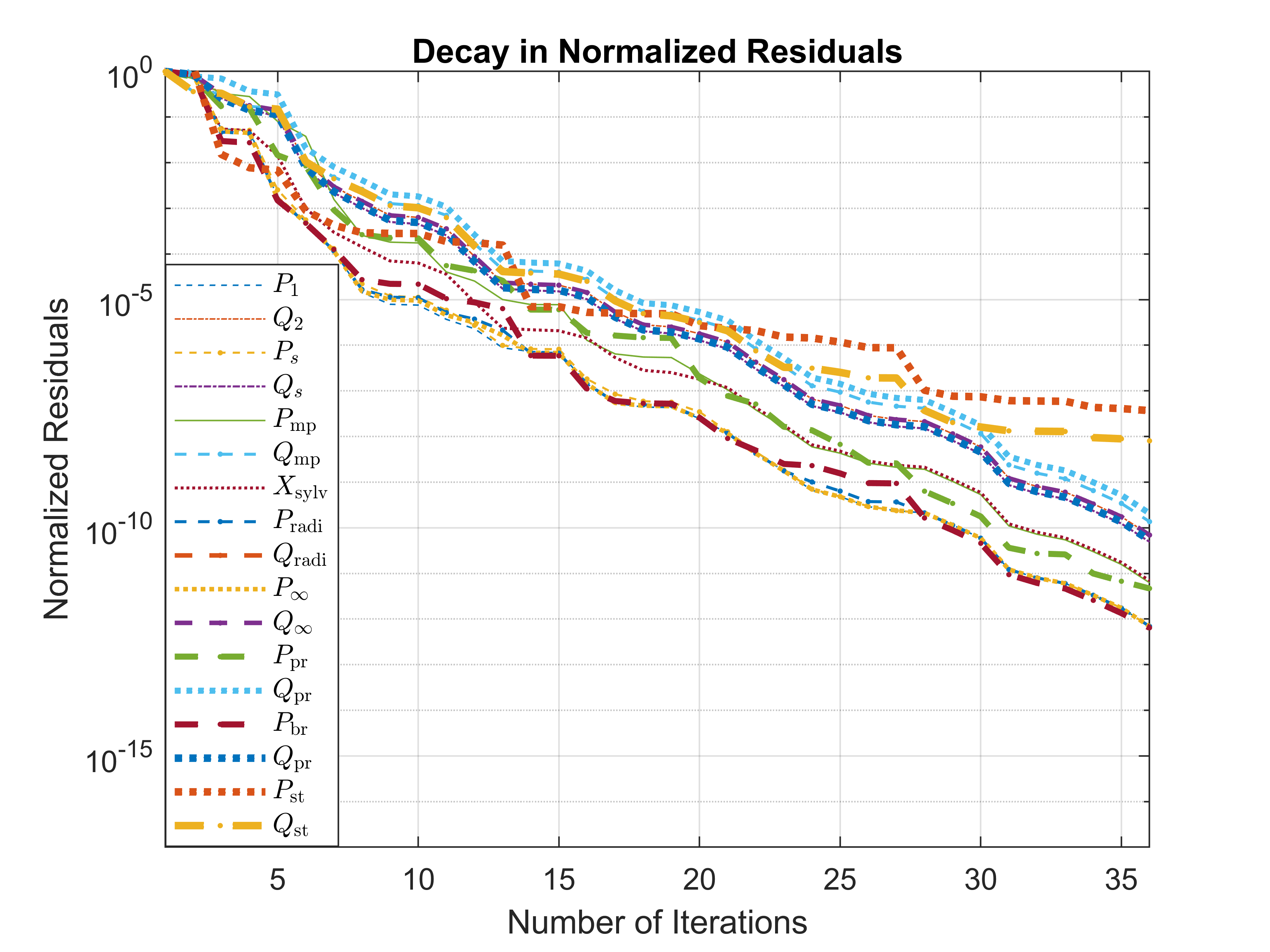}
  \caption{Decay in Normalized Residuals}\label{fig3}
\end{figure}
The most dominant pole of \(G_1(s)\) is \(-98.9898\). By examining the 49 shifts generated by the proposed strategy, we found that the fourth shift is \(-98.3214\). Note that the initial shift \(\alpha_1 = -0.001\) was not in the neighbourhood of \(-98.9898\), yet the proposed strategy quickly identified a shift near the dominant pole within just four iterations. The fifth shift produced by the strategy is \(-3.9099 + 6.9746i\), indicating that the dominant pole \(-98.9898\) was prematurely deflated by UADI—confirming the expectation expressed in Remark \ref{remark5}.

Next, low-rank BT is performed using the approximations \(V_{\mathrm{lyap}}^{(i)}\) and \(W_{\mathrm{lyap}}^{(i)}\). As shown in Figure \ref{fig4}, the low-rank BT—based on low-rank square-root factors of \(P_1\) and \(Q_2\)—preserves the most significant Hankel singular values of the original system.
\begin{figure}[!h]
  \centering
  \includegraphics[width=12cm]{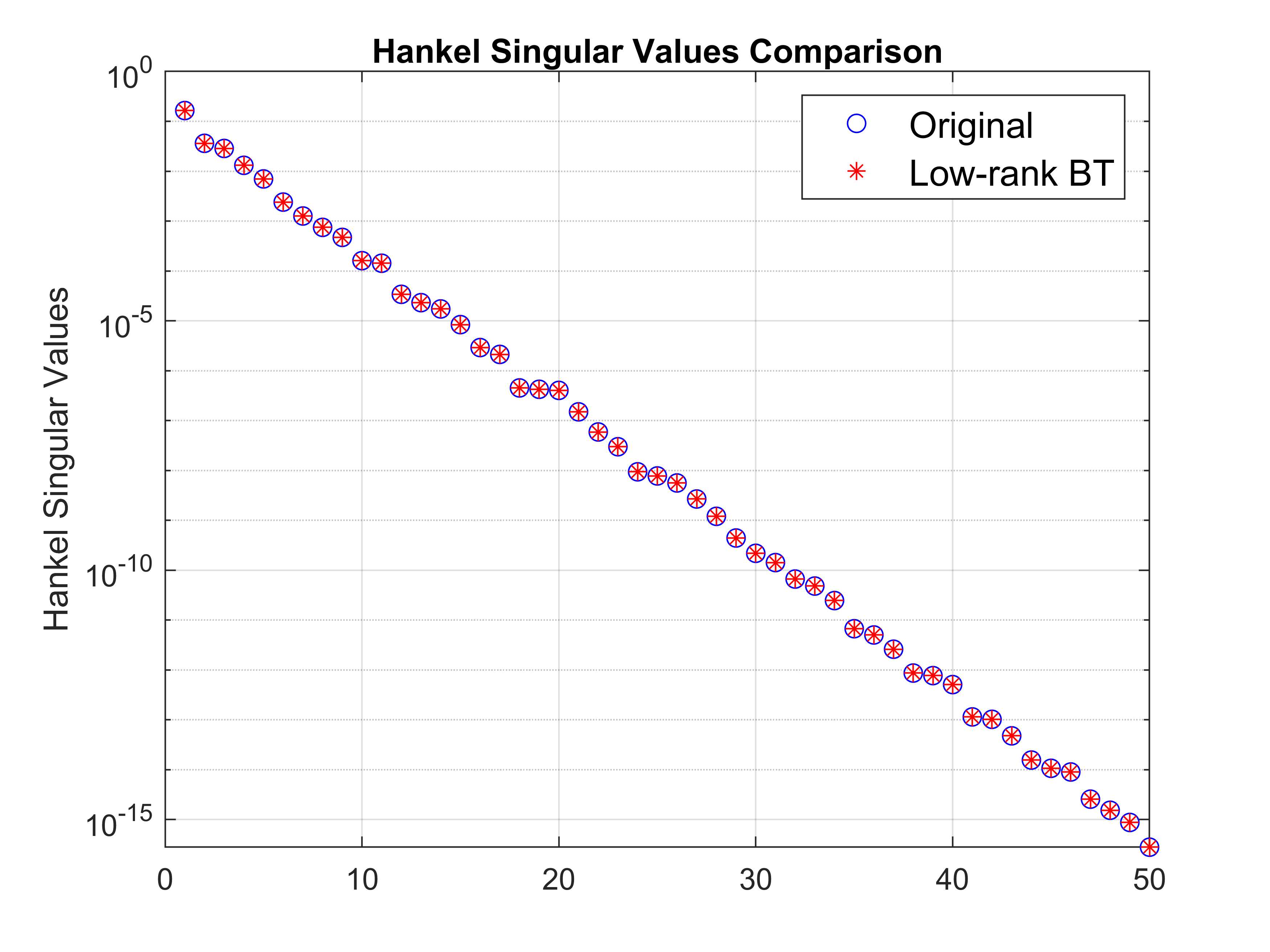}
  \caption{Hankel Singular Values Comparison}\label{fig4}
\end{figure}
Finally, Figure \ref{fig5} shows the frequency-domain responses of the original system and the $10^{th}$-order ROMs produced by full-rank and low-rank BT. The frequency responses of the original system and the ROMs are indistinguishable, confirming that the shift strategy successfully ensured an accurate ROM. Thus, the small residuals in \(P_1\) and \(Q_2\) were indeed translated into high ROM accuracy.
\begin{figure}[!h]
  \centering
  \includegraphics[width=12cm]{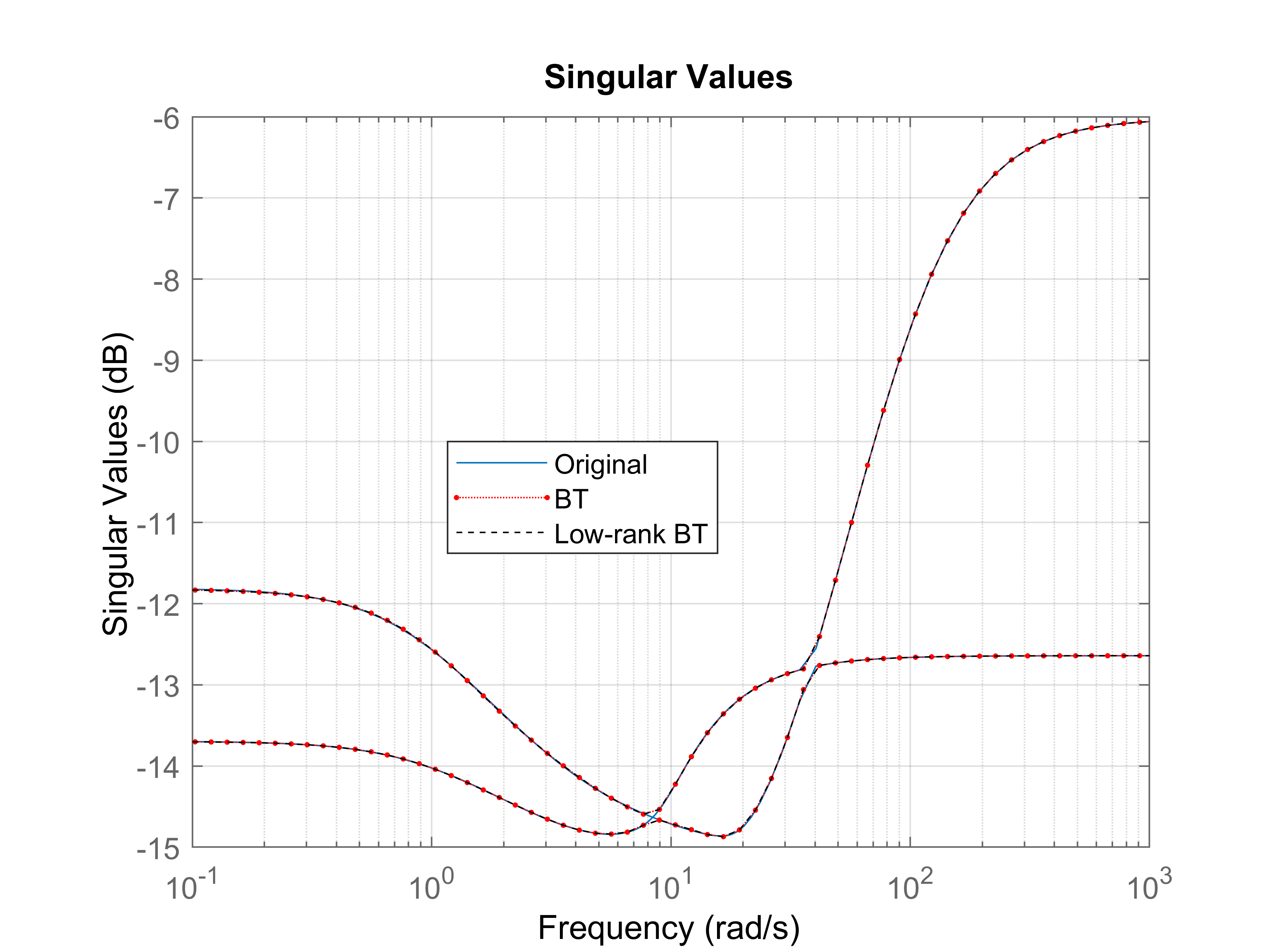}
  \caption{Singular values of $G_1(s)$ and ROM}\label{fig5}
\end{figure}
\subsection{Penzl's Triple Peak Model}
This constructive example generalizes Penzl’s procedure for building a dynamical system with three peaks in the frequency-domain plot \cite{morPen06,slicot_fom,chahlaoui2005benchmark}. The MATLAB function to generate such a system of arbitrary order is provided in Appendix I, and the frequencies at which the peaks occur can be specified as inputs. Using this function, \(G_1(s)\) and \(G_2(s)\) are constructed with the following state-space matrix dimensions: \(E_1, A_1 \in \mathbb{R}^{10^6 \times 10^6}\), \(B_1 \in \mathbb{R}^{10^6 \times 1}\), \(C_1 \in \mathbb{R}^{1 \times 10^6}\), \(D_1 = 0\), \(E_2, A_2 \in \mathbb{R}^{10^6 \times 10^6}\), \(B_2 \in \mathbb{R}^{10^6 \times 1}\), \(C_2 \in \mathbb{R}^{1 \times 10^6}\), \(D_2 = 0\). Unlike in the original Penzl procedure \cite{morPen06}, the matrices \(E_1\) and \(E_2\) are not identity matrices. \(G_1(s)\) has peaks at $10$ rad/s, $20$ rad/s, and $30$ rad/s, corresponding to the most controllable poles \(-1 \pm j10\), \(-1 \pm j20\), and \(-1 \pm j30\), respectively. The remaining poles are all real and located at \(-1, -2, \dots, -9, 99, 994\); by construction in Penzl’s triple-peak models, these are significantly less controllable \cite{morPen06,slicot_fom,chahlaoui2005benchmark}. Similarly, \(G_2(s)\) has peaks at $40$ rad/s, $50$ rad/s, and $60$ rad/s, corresponding to the most observable poles \(-1 \pm j40\), \(-1 \pm j50\), and \(-1 \pm j60\), respectively. Its remaining poles are the same real poles \(-1, -2, \dots, -9, 99, 994\), which are significantly less observable by design. The purpose of using this constructive example is that, despite its large scale, the pole locations in Penzl’s triple-peak models are known. This helps us assess the success or failure of various shift strategies. Importantly, this pole information is not used to pre-specify shifts; rather, we investigate whether the proposed shift generation strategy can automatically capture these dominant poles.

For the implicit restart in the shift generation strategy of Section \ref{3.9.1}, the maximum number of columns in \(V_1\) and \(W_2\) is set to $20$. The maximum number of ADI iterations is set to \(k = 70\). The normalized residuals for \(P_1\) and \(Q_2\) are plotted in Figure \ref{fig6}.
\begin{figure}[!h]
  \centering
  \includegraphics[width=12cm]{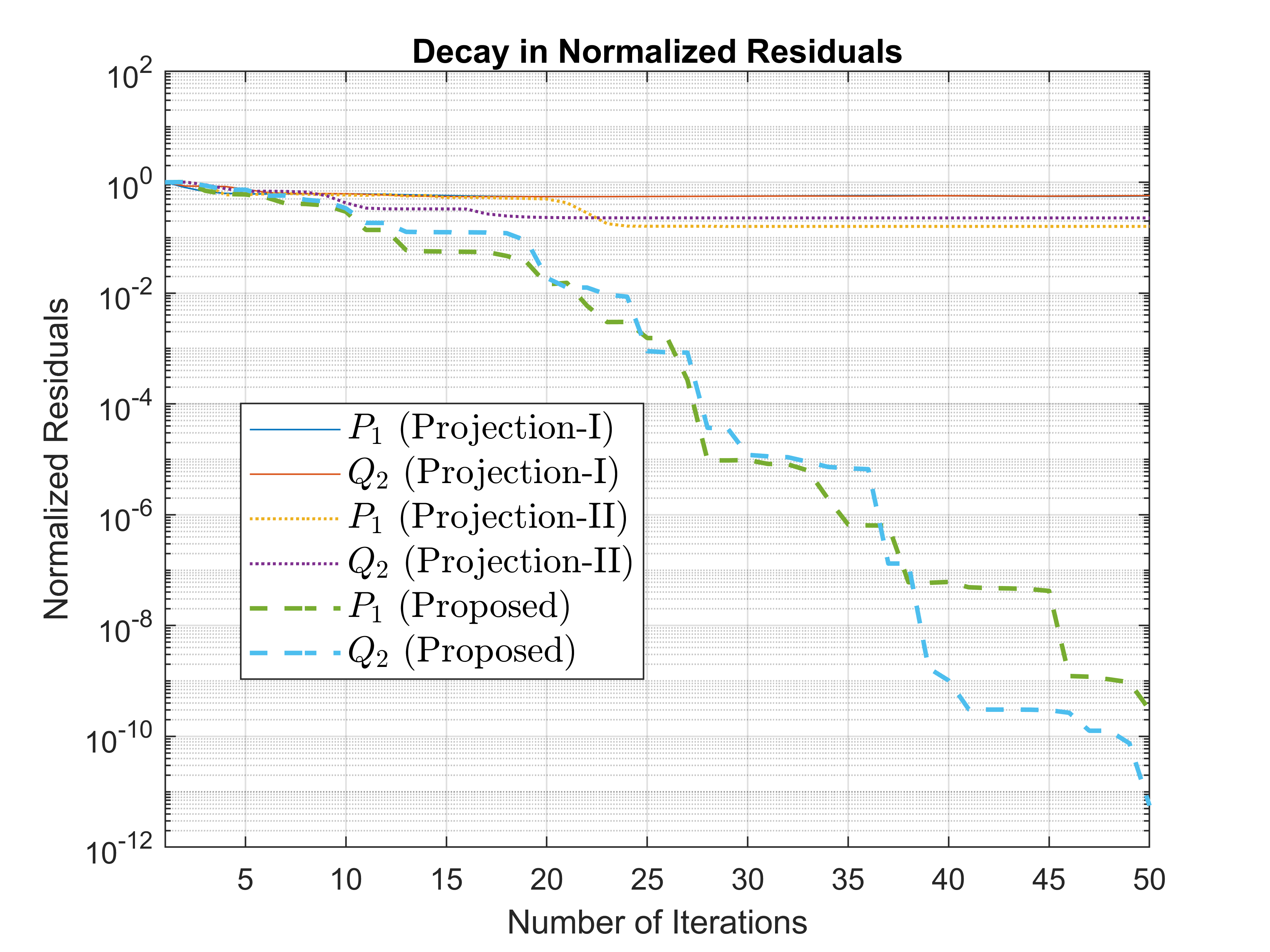}
  \caption{Decay in Normalized Residuals}\label{fig6}
\end{figure}
The proposed shift generation strategy yields a steep decline in the normalized residuals, significantly outperforming the Projection-I and Projection-II strategies. Figure \ref{fig7} shows the shifts produced by the proposed strategy. It clearly captures the three most dominant poles of \(G_1(s)\) and the three most dominant poles of \(G_2(s)\), which explains the high accuracy achieved.
\begin{figure}[!h]
  \centering
  \includegraphics[width=12cm]{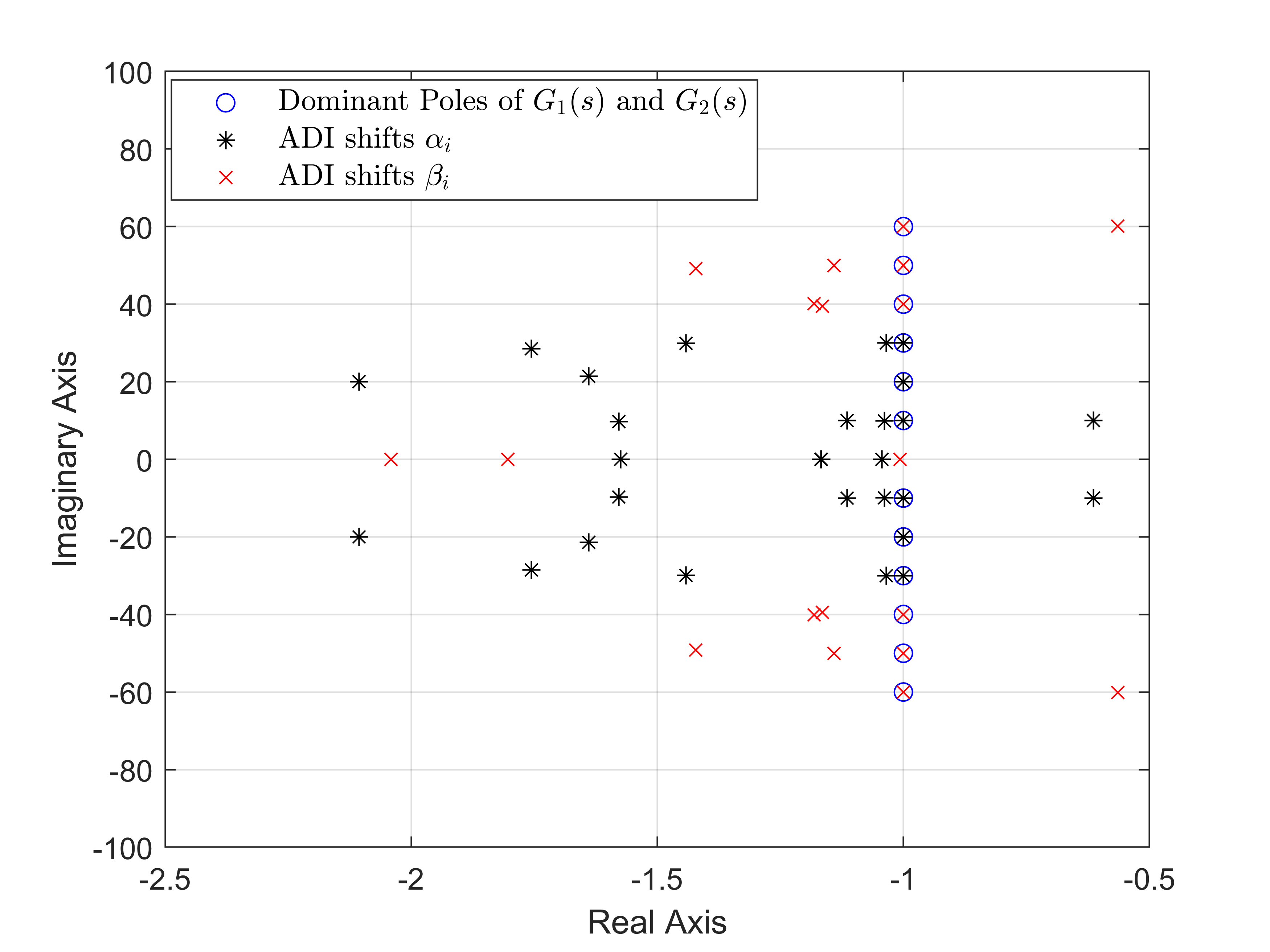}
  \caption{Dominant Poles and ADI Shifts}\label{fig7}
\end{figure}
Figure \ref{fig8} shows the normalized residual \(\frac{\|R_{\mathrm{sylv}}^{(i)}\|_2}{\|B_1 C_2\|_2}\) for these shifts.
\begin{figure}[!h]
  \centering
  \includegraphics[width=12cm]{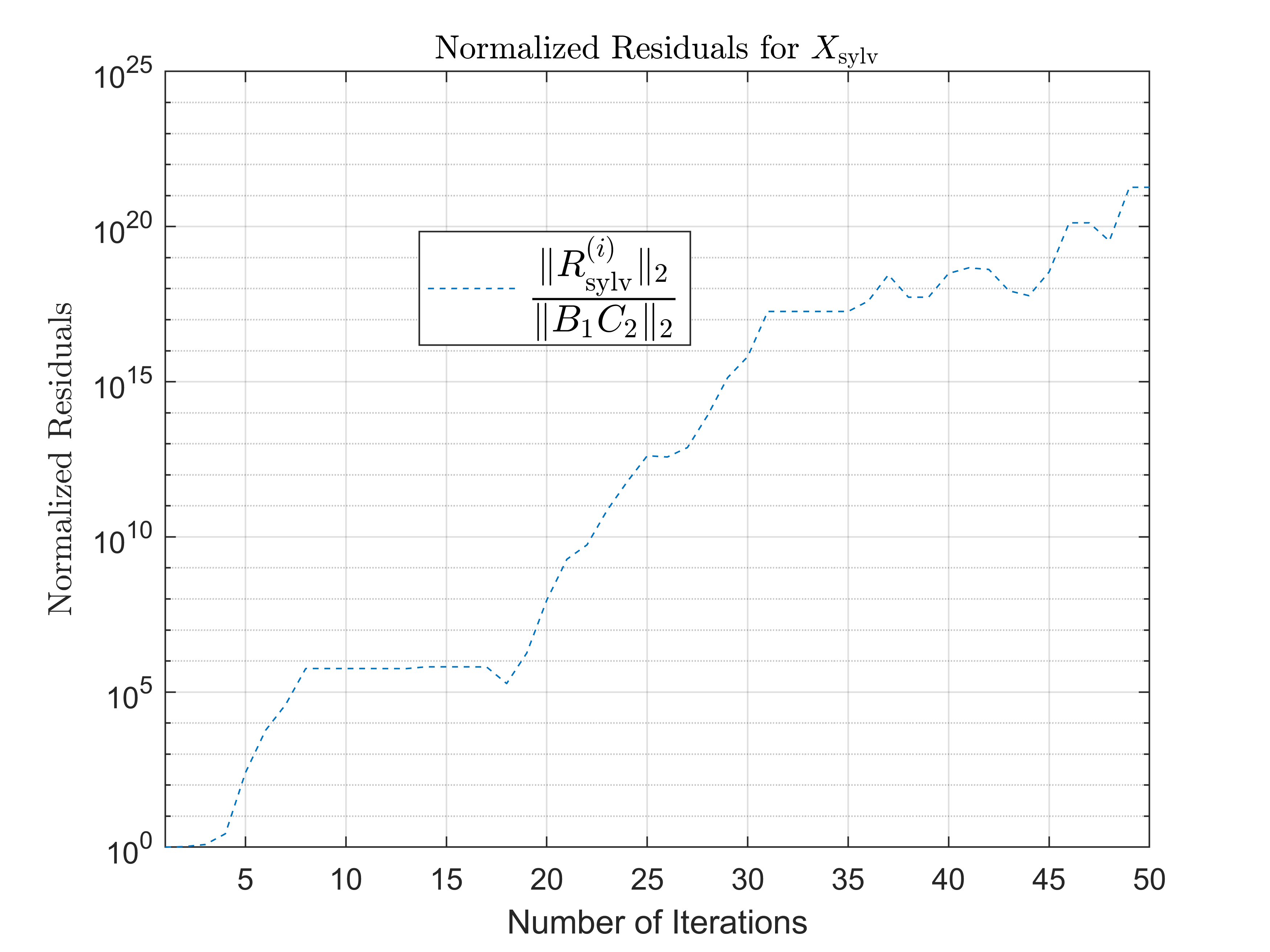}
  \caption{Normalized Residual for \(X_{\mathrm{sylv}}\)}\label{fig8}
\end{figure}
Interestingly, the shifts that produced excellent approximations of \(P_1\) and \(Q_2\) yielded a poor approximation of \(X_{\mathrm{sylv}}\); in fact, the residual \(R_{\mathrm{sylv}}\) increased rather than decreased—consistent with the observation in Section \ref{3.9.3}.

To prioritize the approximation of \(X_{\mathrm{sylv}}\), we instead use the shift generation strategy from Section \ref{3.9.3}, which enforces \(\alpha_i = \beta_i\). Figure \ref{fig9} shows the normalized residuals for \(P_1\), \(Q_2\), and \(X_{\mathrm{sylv}}\).
\begin{figure}[!h]
  \centering
  \includegraphics[width=12cm]{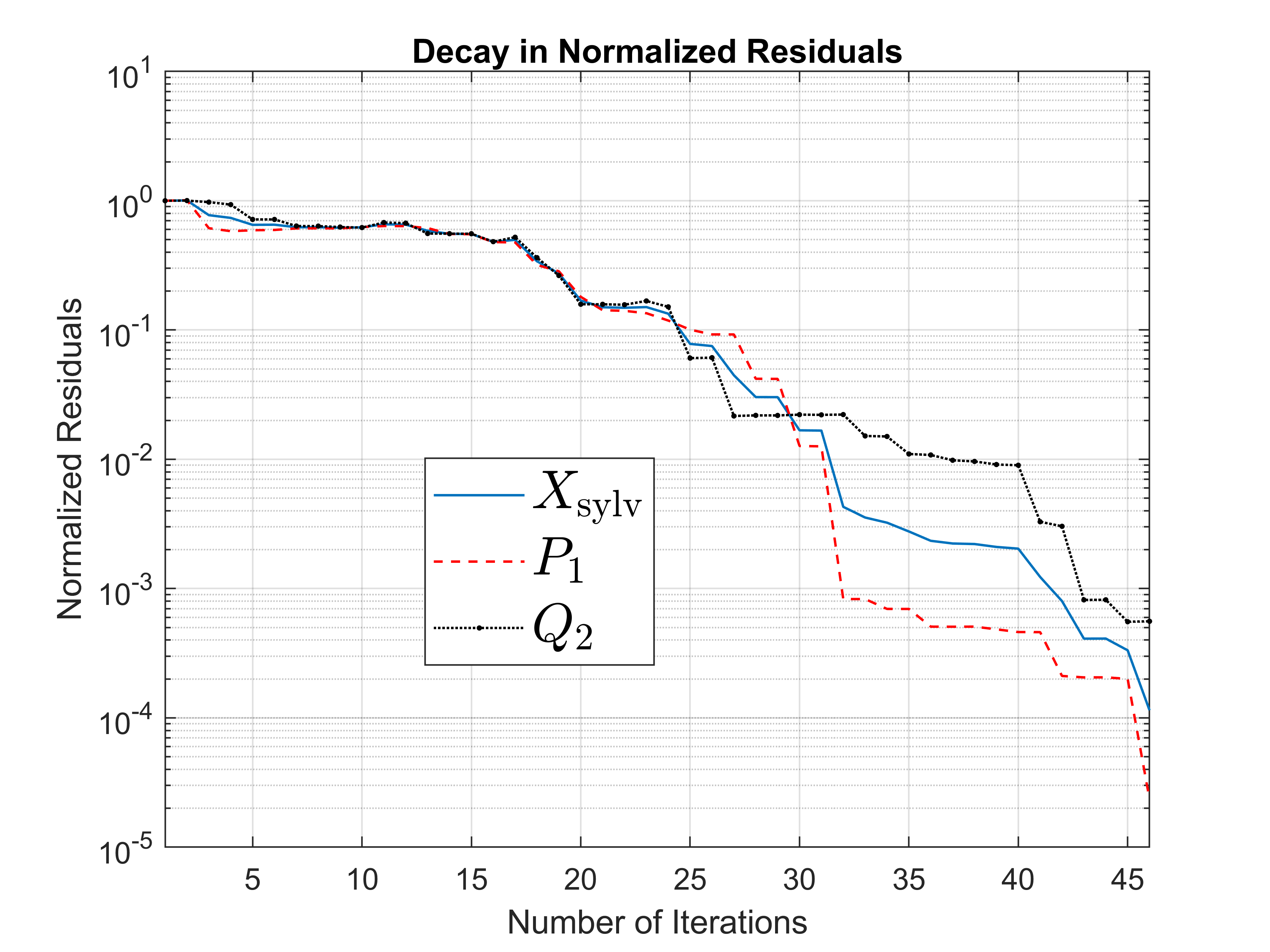}
  \caption{Normalized Residuals for \(P_1\), \(Q_2\), and \(X_{\mathrm{sylv}}\)}\label{fig9}
\end{figure}
This strategy delivers an excellent approximation for \(X_{\mathrm{sylv}}\) as well as for \(P_1\) and \(Q_2\), in agreement with the findings in Section \ref{3.9.3}. However, because the approximation of \(X_{\mathrm{sylv}}\) is prioritized, the decay of the residuals for \(P_1\) and \(Q_2\) is less steep compared to the strategy in Section \ref{3.9.1}.
\subsection{Steel Profile Model}
This benchmark is a semi-discretized heat transfer problem for optimal cooling of steel profiles \cite{benner2005semi}, also known as the rail model. The dynamical system \(G_1(s)\) is a $317,377^{th}$-order steel profile model taken from \cite{saak2021mm}. The dimensions of the matrices are as follows: \(E_1 \in \mathbb{R}^{317{,}377 \times 317{,}377}\), \(A_1 \in \mathbb{R}^{317{,}377 \times 317{,}377}\), \(B_1 \in \mathbb{R}^{317{,}377 \times 7}\), and \(C_1 \in \mathbb{R}^{6 \times 317{,}377}\). For implicit restart in the proposed shift generation strategy, the maximum number of columns in \(V_1\) is set to 21. The maximum number of ADI shifts \(k\) is set to $100$. When the normalized residual \(\frac{\|R_1^{(i)}\|_2}{\|B_1 B_1^\top\|_2}\) drops below the tolerance \(10^{-8}\), the approximation \(P_1 \approx V_{\mathrm{lyap}}^{(i)} (V_{\mathrm{lyap}}^{(i)})^\top\) is considered converged, and UADI stops. For visual clarity, only \(P_1\) is approximated in this experiment to avoid clutter in the figure. Figure \ref{fig10} shows that the proposed subspace-accelerated shift generation strategy reaches convergence in the fewest iterations.
\begin{figure}[!h]
  \centering
  \includegraphics[width=12cm]{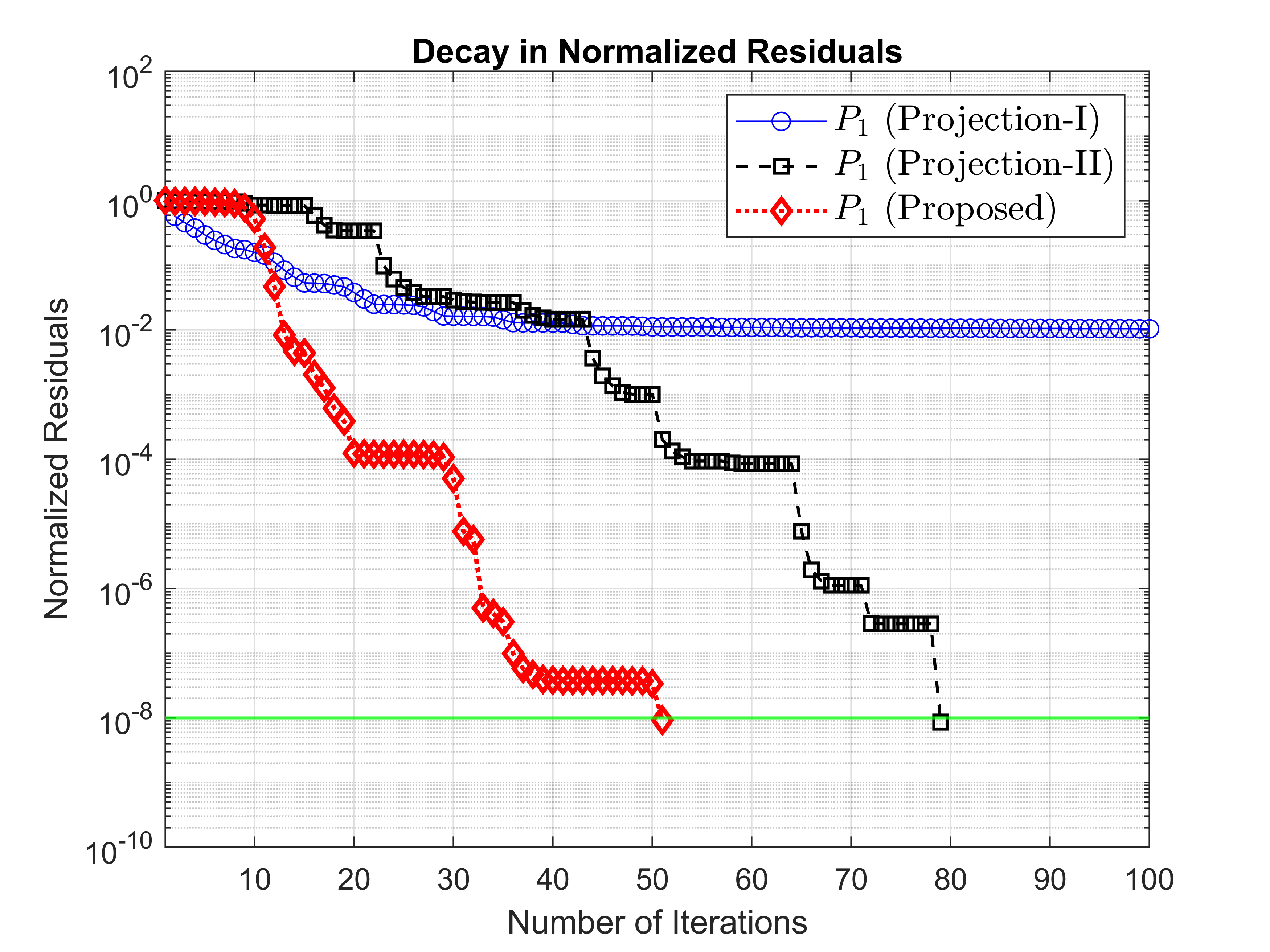}
  \caption{Normalized Residual for \(P_1\)}\label{fig10}
\end{figure}
In contrast, the Projection-I strategy fails to converge within the maximum allowable number of iterations, while the Projection-II strategy converges but requires more iterations than the proposed approach. The elapsed time for this experiment is tabulated in Table \ref{tab2}.
\begin{table}[!h]
\centering
\caption{Simulation Time Comparison}\label{tab2}
\begin{tabular}{|c|c|}\hline
Shift Generation Strategy & Elapsed Time (sec)\\\hline
Projection-I& $293.4014$\\
Projection-II&$219.9676$\\
Proposed& $189.8744$\\
\hline
\end{tabular}
\end{table}
The proposed strategy achieves convergence in the least amount of time overall. Note that the Projection-I and Projection-II strategies generate shifts more cheaply and, in this example, produce $7$ ADI shifts at a time—thus they are used more sparingly. In contrast, the proposed strategy generates only one or two shifts per step, requiring orthogonalization and eigenvalue decomposition more frequently. Nevertheless, this additional upfront cost pays off by significantly reducing the total computational time through faster convergence and fewer ADI iterations.
\section{Conclusion}
The paper shows that low-rank ADI methods for Lyapunov, Sylvester, and Riccati equations perform recursive interpolation at the mirror images of the ADI shifts. The methods differ only in their pole-placement behavior. By placing the poles of the projected matrices at chosen locations, ADI ensures that the projected linear matrix equation it implicitly solves has a unique solution. Thus, unlike several Krylov subspace methods—which may yield projected equations without unique solutions—ADI avoids this issue while still being projection-based. Unlike standard rational Krylov interpolation, ADI retains its pole-placement property as the number of interpolation points grows. Yet, as in standard interpolation, the projected matrices admit parametrizations, and the free parameters can be adjusted to relocate poles. This allows low-rank solutions of various linear matrix equations to be recovered from the low-rank solution of a standard Lyapunov equation by modifying the free parameter. No additional shifted linear solves are required; the solves from the Lyapunov ADI iterations can be reused. Since the shifted solves dominate the cost of low-rank ADI, this yields significant savings. Computing the free parameter requires only small-scale operations and is inexpensive, enabling multiple Lyapunov, Sylvester, and Riccati equations to be solved with the same linear solves. The paper further shows that the proposed UADI algorithm implicitly performs MOR and preserves several key properties of the original transfer function. The ROMs, which interpolate the transfer function at the mirror images of the ADI shifts, can be accumulated recursively. Two existing self-generating shift strategies are reviewed, and theoretical justification—not provided in the original work—is given using the interpolation properties of ADI methods identified in this paper. A subspace-accelerated self-generating shift strategy is proposed, inspired by SADPA, which estimates dominant poles and uses them as ADI shifts. This makes UADI fully automatic. Several problem-specific modifications to shift generation are also proposed. Numerical results show that the proposed strategy outperforms existing approaches and provides an effective and efficient way to automate ADI methods.
\section*{Acknowledgement}
The first author thanks Prof. Peter Benner of the Max Planck Institute for Dynamics of Complex Technical Systems, Magdeburg, Germany, for his time and valuable feedback. The first author also thanks Prof. Patrick Kürschner of Leipzig University of Applied Sciences for his patient and detailed responses to our many questions about the ADI method. This work was supported by the National Natural Science Foundation of China under Grant No. 62350410484.
\section*{Appendix A}
\begin{enumerate}
  \item Pre-multiplying \eqref{eq_sylv} by $(d_{\mathrm{sylv}}^{(i)})^{-1}$ yields:
  \begin{align}
  (d_{\mathrm{sylv}}^{(i)})^{-1}(-s_w^{(i)})^* d_{\mathrm{sylv}}^{(i)} -s_v^{(i)} +(d_{\mathrm{sylv}}^{(i)})^{-1}(l_w^{(i)})^\top l_v^{(i)}  &= 0\nonumber\\
  (d_{\mathrm{sylv}}^{(i)})^{-1}(-s_w^{(i)})^* d_{\mathrm{sylv}}^{(i)}&=s_v^{(i)} -(d_{\mathrm{sylv}}^{(i)})^{-1}(l_w^{(i)})^\top l_v^{(i)}\nonumber\\
   -(d_{\mathrm{sylv}}^{(i)})^{-1}(s_w^{(i)})^* d_{\mathrm{sylv}}^{(i)}&=\hat{a}_1^{(i)}.\nonumber
  \end{align}
  \item Post-multiplying \eqref{eq_sylv} by $(d_{\mathrm{sylv}}^{(i)})^{-1}$ gives:
  \begin{align}
     (-s_w^{(i)})^* + d_{\mathrm{sylv}}^{(i)} (-s_v^{(i)})(d_{\mathrm{sylv}}^{(i)})^{-1} +(l_w^{(i)})^\top l_v^{(i)}(d_{\mathrm{sylv}}^{(i)})^{-1}  &= 0\nonumber\\
  d_{\mathrm{sylv}}^{(i)} (-s_v^{(i)})(d_{\mathrm{sylv}}^{(i)})^{-1}   &= (s_w^{(i)})^*-(l_w^{(i)})^\top l_v^{(i)}(d_{\mathrm{sylv}}^{(i)})^{-1}\nonumber\\
    -d_{\mathrm{sylv}}^{(i)} s_v^{(i)}(d_{\mathrm{sylv}}^{(i)})^{-1}   &= \hat{a}_2^{(i)}.\nonumber
     \end{align}
\item Observe that:
\begin{align}
&\hat{a}_1^{(i)}(d_{\mathrm{sylv}}^{(i)})^{-1}+(d_{\mathrm{sylv}}^{(i)})^{-1}\hat{a}_2^{(i)}+\hat{b}_1^{(i)}\hat{c}_2^{(i)}\nonumber\\
&= -(d_{\mathrm{sylv}}^{(i)})^{-1}(s_w^{(i)})^*-s_v^{(i)}(d_{\mathrm{sylv}}^{(i)})^{-1}+(d_{\mathrm{sylv}})^{-1}(l_w^{(i)})^\top l_v^{(i)}(d_{\mathrm{sylv}})^{-1}\nonumber\\
&=(d_{\mathrm{sylv}}^{(i)})^{-1}\Big((-s_w^{(i)})^* d_{\mathrm{sylv}}^{(i)} + d_{\mathrm{sylv}}^{(i)} (-s_v^{(i)}) +(l_w^{(i)})^\top l_v^{(i)}\Big)(d_{\mathrm{sylv}}^{(i)})^{-1}\nonumber\\
&=0.\nonumber
\end{align}
\item Substituting $s_v^{(i)}=-\alpha_iI$, $l_v^{(i)}=-I$, $s_w^{(i)}=-\overline{\beta_i}I$, and $l_w^{(i)}=-I$ into \eqref{eq_sylv} gives:
 \begin{align}
 \beta_i d_{\mathrm{sylv}}^{(i)} + d_{\mathrm{sylv}}^{(i)}\alpha_i +I  &= 0\nonumber\\
 (\alpha_i+\beta_i)d_{\mathrm{sylv}}^{(i)}&=-I\nonumber\\
 d_{\mathrm{sylv}}^{(i)}&=-\frac{1}{\alpha_i+\beta_i}I\nonumber\\
( d_{\mathrm{sylv}}^{(i)})^{-1}&=-(\alpha_i+\beta_i)I.\nonumber
 \end{align}
\end{enumerate}
\section*{Appendix B}
\begin{enumerate}
  \item Observe that:
  \begin{align}
  \hat{B}_1^{(i)}&=\begin{bmatrix}\hat{b}_1^{(1)}\\\vdots\\\hat{b}_1^{(i)}\end{bmatrix}=\begin{bmatrix}(d_{\mathrm{sylv}}^{(1)})^{-1}(l_w^{(1)})^\top\\\vdots\\(d_{\mathrm{sylv}}^{(i)})^{-1}(l_w^{(i)})^\top\end{bmatrix}
  =\Big(\mathrm{blkdiag}\big(d_{\mathrm{sylv}}^{(1)},\cdots,d_{\mathrm{sylv}}^{(i)}\big)\Big)^{-1}\begin{bmatrix}l_w^{(1)}&\cdots&l_w^{(i)}\end{bmatrix}^\top\nonumber\\
  &=(D_{\mathrm{sylv}}^{(i)})^{-1}(L_w^{(i)})^\top.\nonumber
  \end{align}
 Similarly:
  \begin{align}
  \hat{C}_2^{(i)}&=\begin{bmatrix}\hat{c}_2^{(1)}&\cdots&\hat{c}_2^{(i)}\end{bmatrix}=\begin{bmatrix}l_v^{(1)}(d_\mathrm{sylv}^{(1)})^{-1}&\cdots&l_v^{(i)}(d_\mathrm{sylv}^{(i)})^{-1}\end{bmatrix}\nonumber\\
  &=\begin{bmatrix}l_v^{(1)}&\cdots&l_v^{(i)}\end{bmatrix}\Big(\mathrm{blkdiag}\big(d_{\mathrm{sylv}}^{(1)},\cdots,d_{\mathrm{sylv}}^{(i)}\big)\Big)^{-1}=L_v^{(i)}(D_{\mathrm{sylv}})^{-1}.\nonumber
  \end{align}
  \item Note the recursive structure:
  \begin{align}
  S_v^{(i)}&=\begin{bmatrix}S_v^{(i-1)}&(D_{\mathrm{sylv}}^{(i-1)})^{-1}(L_w^{(i-1)})^\top l_v^{(i)}\\0&s_v^{(i)}\end{bmatrix},& L_v^{(i)}&=\begin{bmatrix}L_v^{(i-1)}&l_v^{(i)}\end{bmatrix},\nonumber\\
  S_w^{(i)}&=\begin{bmatrix}S_w^{(i-1)}&(D_{\mathrm{sylv}}^{(i-1)})^{-*}(L_v^{(i-1)})^\top l_w^{(i)}\\0&s_w^{(i)}\end{bmatrix},& L_w^{(i)}&=\begin{bmatrix}L_w^{(i-1)}&l_w^{(i)}\end{bmatrix}.\nonumber
  \end{align}
  Then:
  \begin{align}
  &\begin{bmatrix}-(S_w^{(i-1)})^*&0\\-(l_w^{(i)})^\top L_v^{(i-1)}(D_{\mathrm{sylv}}^{(i-1)})^{-1}&-(s_w^{(i)})^*\end{bmatrix}\begin{bmatrix}D_{\mathrm{sylv}}^{(i-1)}&0\\0&d_{\mathrm{sylv}}^{(i)}\end{bmatrix}+\nonumber\\
  &\hspace*{2cm}\begin{bmatrix}D_{\mathrm{sylv}}^{(i-1)}&0\\0&d_{\mathrm{sylv}}^{(i)}\end{bmatrix}\begin{bmatrix}-S_v^{(i-1)}&-(D_{\mathrm{sylv}}^{(i-1)})^{-1}(L_w^{(i-1)})^\top l_v^{(i)}\\0&-s_v^{(i)}\end{bmatrix}+\begin{bmatrix}(L_w^{(i-1)})^\top\\(l_w^{(i)})^\top\end{bmatrix}\begin{bmatrix}L_v^{(i-1)}&l_v^{(i)}\end{bmatrix}\nonumber
  \end{align}
  \begin{align}
  &=\begin{bmatrix}-(S_w^{(i-1)})^*D_{\mathrm{sylv}}^{(i-1)}&0\\-(l_w^{(i)})^\top L_v^{(i-1)}&-(s_w^{(i)})^*d_{\mathrm{sylv}}^{(i)}\end{bmatrix}+
  \begin{bmatrix}-D_{\mathrm{sylv}}^{(i-1)}S_v^{(i-1)}&-(L_w^{(i-1)})^\top l_v^{(i)}\\0&-d_{\mathrm{sylv}}^{(i)}s_v^{(i)}\end{bmatrix}+\nonumber\\
  &\hspace*{8cm}\begin{bmatrix}(L_w^{(i-1)})^\top L_v^{(i-1)}&(L_w^{(i-1)})^\top l_v^{(i)}\\(l_w^{(i)})^\top L_v^{(i-1)}&(l_w^{(i)})^\top l_v^{(i)}\end{bmatrix}\nonumber\\
  &=\begin{bmatrix}-(S_w^{(i-1)})^*D_{\mathrm{sylv}}^{(i-1)}-D_{\mathrm{sylv}}^{(i-1)}S_v^{(i-1)}+(L_w^{(i-1)})^\top L_v^{(i-1)}&0\\0&-(s_w^{(i)})^*d_{\mathrm{sylv}}^{(i)}-d_{\mathrm{sylv}}^{(i)}s_v^{(i)}+(l_w^{(i)})^\top l_v^{(i)}\end{bmatrix}.\nonumber
  \end{align}
 Increasing \(i\) recursively from \(i=1\), it follows by induction that
  \begin{align}
  \begin{bmatrix}-(S_w^{(i-1)})^*D_{\mathrm{sylv}}^{(i-1)}-D_{\mathrm{sylv}}^{(i-1)}S_v^{(i-1)}+(L_w^{(i-1)})^\top L_v^{(i-1)}&0\\0&-(s_w^{(i)})^*d_{\mathrm{sylv}}^{(i)}-d_{\mathrm{sylv}}^{(i)}s_v^{(i)}+(l_w^{(i)})^\top l_v^{(i)}\end{bmatrix}=0.\nonumber
  \end{align}
  \item The proofs of (\ref{3_of_prop2}), (\ref{4_of_prop2}), and (\ref{5_of_prop2}) follow the same steps as the proofs of (\ref{1_of_prop1}), (\ref{2_of_prop1}), and (\ref{3_of_prop1}) in Proposition~\ref{prop1}, and are omitted for brevity.
\end{enumerate}
\section*{Appendix C}
\begin{enumerate}
  \item Substituting $s_{v,\mathrm{ricc}}^{(i)}$, $l_{v,\mathrm{ricc}}^{(i)}$, and $\hat{c}_1^{(i)}$ from \eqref{radi_free} into \eqref{lyap_p_ricc} yields:
  \begin{align}
    (\alpha_i I)^*(\hat{p}_{\mathrm{ricc}}^{(i)})^{-1}+(\hat{p}_{\mathrm{ricc}}^{(i)})^{-1}(\alpha_i I)&=-I-(v_i^{\mathrm{ricc}})^*C_1^\top C_1v_i^{\mathrm{ricc}}\nonumber\\
    2\mathrm{Re}(\alpha_i)(\hat{p}_{\mathrm{ricc}}^{(i)})^{-1}&=-\big[I+(v_i^{\mathrm{ricc}})^*C_1^\top C_1v_i^{\mathrm{ricc}}\big]\nonumber\\
    \hat{p}_{\mathrm{ricc}}^{(i)}&=-2\mathrm{Re}(\alpha_i)\big[I+(v_i^{\mathrm{ricc}})^*C_1^\top C_1v_i^{\mathrm{ricc}}\big]^{-1}.\nonumber
  \end{align}
  \item Note that:
  \begin{align}
  &-\begin{bmatrix}
        S_{v,\mathrm{ricc}}^{(i-1)} & \hat{P}_{\mathrm{ricc}}^{(i-1)}\Big(\big(L_{v,\mathrm{ricc}}^{(i-1)}\big)^\top l_{v,\mathrm{ricc}}^{(i)}\ + \big(\hat{C}_1^{(i-1)}\big)^* \hat{c}_1^{(i)}\Big) \\
        0 & s_{v,\mathrm{ricc}}^{(i)}\end{bmatrix}^*\begin{bmatrix}\big(\hat{P}_{\mathrm{ricc}}^{(i-1)}\big)^{-1}&0\\0&(\hat{p}_{\mathrm{ricc}}^{(i)})^{-1}\end{bmatrix}\nonumber\\
        &\hspace*{2cm}-\begin{bmatrix}\big(\hat{P}_{\mathrm{ricc}}^{(i-1)}\big)^{-1}&0\\0&(\hat{p}_{\mathrm{ricc}}^{(i)})^{-1}\end{bmatrix}\begin{bmatrix}
        S_{v,\mathrm{ricc}}^{(i-1)} & \hat{P}_{\mathrm{ricc}}^{(i-1)}\Big(\big(L_{v,\mathrm{ricc}}^{(i-1)}\big)^\top l_{v,\mathrm{ricc}}^{(i)}\ + \big(\hat{C}_1^{(i-1)}\big)^* \hat{c}_1^{(i)}\Big) \\
        0 & s_{v,\mathrm{ricc}}^{(i)}\end{bmatrix}\nonumber\\
        &\hspace*{2cm}+\begin{bmatrix}\big(L_{v,\mathrm{ricc}}^{(i-1)}\big)^\top\\(l_{v,\mathrm{ricc}}^{(i)})^\top\end{bmatrix}\begin{bmatrix}L_{v,\mathrm{ricc}}^{(i-1)}&l_{v,\mathrm{ricc}}^{(i)}\end{bmatrix}+\begin{bmatrix}\big(\hat{C}_1^{(i-1)}\big)^*\\(\hat{c}_1^{(i)})^*\end{bmatrix}\begin{bmatrix}\hat{C}_1^{(i-1)}&\hat{c}_1^{(i)}\end{bmatrix}\nonumber\\
        &=\begin{bmatrix}P_{11}&0\\0& p_{22}\end{bmatrix},\nonumber
        \end{align}where
        \begin{align}
        P_{11}&=-\big(S_{v,\mathrm{ricc}}^{(i-1)}\big)^*\big(\hat{P}_{\mathrm{ricc}}^{(i-1)}\big)^{-1}-\big(\hat{P}_{\mathrm{ricc}}^{(i-1)}\big)^{-1}S_{v,\mathrm{ricc}}^{(i-1)}+(L_{v,\mathrm{ricc}}^{(i-1)})^\top L_{v,\mathrm{ricc}}^{(i-1)} +\big(\hat{C}_1^{(i-1)}\big)^*\hat{C}_1^{(i-1)},\nonumber\\
        p_{22}&=(-s_{v,\mathrm{ricc}}^{(i)})^*(\hat{p}_{\mathrm{ricc}}^{(i)})^{-1}+(\hat{p}_{\mathrm{ricc}}^{(i)})^{-1}(-s_{v,\mathrm{ricc}}^{(i)})+(l_{v,\mathrm{ricc}}^{(i)})^\top l_{v,\mathrm{ricc}}^{(i)}+(\hat{c}_1^{(i)})^*\hat{c}_1^{(i)}.\nonumber
        \end{align}
        By increasing $i$ recursively from $i = 1$, it follows by induction that $P_{11}=0$ and $p_{22}=0$.
        \item Pre-multiplying \eqref{lyap_p_ricc_recc} by $\hat{P}_{\mathrm{ricc}}^{(i)}$ gives:
        \begin{align}
         \hat{P}_{\mathrm{ricc}}^{(i)}\big(-S_{v,\mathrm{ricc}}^{(i)}\big)^*(\hat{P}_{\mathrm{ricc}}^{(i)})^{-1}-S_{v,\mathrm{ricc}}^{(i)}+\hat{P}_{\mathrm{ricc}}^{(i)}\big(L_{v,\mathrm{ricc}}^{(i)}\big)^\top L_{v,\mathrm{ricc}}^{(i)}+\hat{P}_{\mathrm{ricc}}^{(i)}\big(\hat{C}_1^{(i)}\big)^* \hat{C}_1^{(i)}=0\nonumber\\
             \hat{P}_{\mathrm{ricc}}^{(i)}\big(-S_{v,\mathrm{ricc}}^{(i)}\big)^*\big(\hat{P}_{\mathrm{ricc}}^{(i)}\big)^{-1}+\hat{P}_{\mathrm{ricc}}^{(i)}\big(\hat{C}_1^{(i)}\big)^* \hat{C}_1^{(i)}=S_{v,\mathrm{ricc}}^{(i)}-\hat{P}_{\mathrm{ricc}}^{(i)}\big(L_{v,\mathrm{ricc}}^{(i)}\big)^\top L_{v,\mathrm{ricc}}^{(i)}\nonumber\\
             \hat{P}_{\mathrm{ricc}}^{(i)}\Big(\big(-S_{v,\mathrm{ricc}}^{(i)}\big)^*+\big(\hat{C}_1^{(i)}\big)^* \hat{C}_1^{(i)}\hat{P}_{\mathrm{ricc}}^{(i)}\Big)(\hat{P}_{\mathrm{ricc}}^{(i)})^{-1} =\hat{A}_1^{(i)}.\nonumber
        \end{align}
        \item Since $\hat{A}_1^{(i)}-\hat{P}_{\mathrm{ricc}}^{(i)}\big(\hat{C}_1^{(i)}\big)^* \hat{C}_1^{(i)}=\hat{P}_{\mathrm{ricc}}^{(i)}\big(-S_{v,\mathrm{ricc}}^{(i)}\big)^*\big(\hat{P}_{\mathrm{ricc}}^{(i)}\big)^{-1}$ and $S_{v,\mathrm{ricc}}^{(i)}$ is block triangular, the eigenvalues of $\hat{A}_1^{(i)}-\hat{P}_{\mathrm{ricc}}^{(i)}\big(\hat{C}_1^{(i)}\big)^* \hat{C}_1^{(i)}$ are $\alpha_1^*,\dots,\alpha_i^*$. As $\alpha_i$ have negative real parts, $\hat{A}_1^{(i)}-\hat{P}_{\mathrm{ricc}}^{(i)}\big(\hat{C}_1^{(i)}\big)^* \hat{C}_1^{(i)}$ is Hurwitz.
        \item Substituting $\hat{A}_1^{(i)}=\hat{P}_{\mathrm{ricc}}^{(i)}\Big(\big(-S_{v,\mathrm{ricc}}^{(i)}\big)^*+\big(\hat{C}_1^{(i)}\big)^*\hat{C}_1^{(i)}\hat{P}_{\mathrm{ricc}}^{(i)}\Big)\big(\hat{P}_{\mathrm{ricc}}^{(i)}\big)^{-1}$,
            and $\hat{B}_1^{(i)}=\hat{P}_{\mathrm{ricc}}^{(i)}\big(L_{v,\mathrm{ricc}}^{(i)}\big)^\top$ into the left-hand side of \eqref{proj_ricc} yields:
        \begin{align}
 \hat{P}_{\mathrm{ricc}}^{(i)}\big(-S_{v,\mathrm{ricc}}^{(i)}\big)^*-S_{v,\mathrm{ricc}}^{(i)}\hat{P}_{\mathrm{ricc}}^{(i)}+\hat{P}_{\mathrm{ricc}}^{(i)}\big(L_{v,\mathrm{ricc}}^{(i)}\big)^\top L_{v,\mathrm{ricc}}^{(i)}\hat{P}_{\mathrm{ricc}}^{(i)}+\hat{P}_{\mathrm{ricc}}^{(i)}\big(\hat{C}_1^{(i)}\big)^*\hat{C}_1^{(i)}\hat{P}_{\mathrm{ricc}}^{(i)}.\label{interim_ricc}
        \end{align}
        Pre- and post-multiplying \eqref{interim_ricc} by $\big(\hat{P}_{\mathrm{ricc}}^{(i)}\big)^{-1}$ gives:
        \begin{align}
 \big(-S_{v,\mathrm{ricc}}^{(i)}\big)^*\big(\hat{P}_{\mathrm{ricc}}^{(i)}\big)^{-1}-\big(\hat{P}_{\mathrm{ricc}}^{(i)}\big)^{-1}S_{v,\mathrm{ricc}}^{(i)}+\big(L_{v,\mathrm{ricc}}^{(i)}\big)^\top L_{v,\mathrm{ricc}}^{(i)}+\big(\hat{C}_1^{(i)}\big)^*\hat{C}_1^{(i)}=0.\nonumber
        \end{align}
        \item Note that:
        \begin{align}
        B_{\perp,1}^{\mathrm{ricc}}&=B_1-2\mathrm{Re}(\alpha_1)E_1v_1^{\mathrm{ricc}}\big[I+(v_1^{\mathrm{ricc}})^*C_1^\top C_1v_1^{\mathrm{ricc}}\big]^{-1}\nonumber\\
        &=B_1-E_1v_1^{\mathrm{ricc}}\hat{p}_{\mathrm{ricc}}^{(1)}(l_{v,\mathrm{ricc}}^{(1)})^\top\nonumber\\
        B_{\perp,2}^{\mathrm{ricc}}&=B_{\perp,1}^{\mathrm{ricc}}-2\mathrm{Re}(\alpha_2)E_1v_2^{\mathrm{ricc}}\big[I+(v_2^{\mathrm{ricc}})^*C_1^\top C_1v_2^{\mathrm{ricc}}\big]^{-1}\nonumber\\
        &=B_1-E_1v_1^{\mathrm{ricc}}\hat{p}_{\mathrm{ricc}}^{(1)}(l_{v,\mathrm{ricc}}^{(1)})^\top-E_1v_2^{\mathrm{ricc}}\hat{p}_{\mathrm{ricc}}^{(2)}(l_{v,\mathrm{ricc}}^{(2)})^\top\nonumber\\
        &=B_1-E_1V_{\mathrm{ricc}}^{(2)}\hat{P}_{\mathrm{ricc}}^{(2)}\big(L_{v,\mathrm{ricc}}^{(2)}\big)^\top.\nonumber
        \end{align}
        Continuing recursively yields $B_{\perp,i}^{\mathrm{ricc}}=B_1-E_1V_{\mathrm{ricc}}^{(i)}\hat{P}_{\mathrm{ricc}}^{(i)}\big(L_{v,\mathrm{ricc}}^{(i)}\big)^\top$.
        \item Consider the left-hand side of \eqref{radi_sylv1}:
        \begin{align}
        &A_1V_\mathrm{radi}^{(i)}-E_1V_\mathrm{radi}^{(i)} S_{v,\mathrm{ricc}}^{(i)}+B_1L_{v,\mathrm{ricc}}^{(i)}\nonumber\\
        &=A_1\begin{bmatrix}V_\mathrm{radi}^{(i-1)}&v_i^{\mathrm{ricc}}\end{bmatrix}-E_1\begin{bmatrix}V_\mathrm{radi}^{(i-1)}&v_i^{\mathrm{ricc}}\end{bmatrix} \begin{bmatrix}
        S_{v,\mathrm{ricc}}^{(i-1)} & \hat{P}_{\mathrm{ricc}}^{(i-1)}\Big(\big(L_{v,\mathrm{ricc}}^{(i-1)}\big)^\top l_{v,\mathrm{ricc}}^{(i)}\ + \big(\hat{C}_1^{(i-1)}\big)^* \hat{c}_1^{(i)}\Big) \\
        0 & s_{v,\mathrm{ricc}}^{(i)}\end{bmatrix}\nonumber\\
        &\hspace*{2.9cm}+B_1\begin{bmatrix}L_{v,\mathrm{ricc}}^{(i-1)}&l_{v,\mathrm{ricc}}^{(i)}\end{bmatrix}\nonumber\\
        &=\begin{bmatrix}p_{v,1}&p_{v,2}\end{bmatrix},\nonumber
        \end{align}where
        \begin{align}
        p_{v,1}&=A_1V_\mathrm{radi}^{(i)}-E_1V_\mathrm{radi}^{(i)} S_{v,\mathrm{ricc}}^{(i)}+B_1L_{v,\mathrm{ricc}}^{(i)},\nonumber\\
        p_{v,2}&=A_1v_i^{\mathrm{ricc}}-E_1V_{\mathrm{radi}}^{(i-1)}\hat{P}_{\mathrm{ricc}}^{(i-1)}\big(L_{v,\mathrm{ricc}}^{(i-1)}\big)^\top l_{v,\mathrm{ricc}}^{(i)}\ - E_1V_{\mathrm{radi}}^{(i-1)}\hat{P}_{\mathrm{ricc}}^{(i-1)}\big(\hat{C}_1^{(i-1)}\big)^*C_1v_i^{\mathrm{ricc}}\nonumber\\
        &\hspace*{3cm}-E_1v_i^{\mathrm{ricc}}s_{v,\mathrm{ricc}}^{(i)}+B_1l_{v,\mathrm{ricc}}^{(i)}\nonumber\\
        &=\Big(A_1-E_1V_{\mathrm{radi}}^{(i-1)}\hat{P}_{\mathrm{ricc}}^{(i-1)}\big(V_{\mathrm{radi}}^{(i-1)}\big)^*C_1^\top C_1\Big)v_i^{\mathrm{ricc}}-E_1v_i^{\mathrm{ricc}}s_{v,\mathrm{ricc}}^{(i)}+B_{\perp,i-1}^{\mathrm{ricc}}l_{v,\mathrm{ricc}}^{(i)}\nonumber.
        \end{align}
        Since $v_i^{\mathrm{ricc}}=\Big(A_1-E_1V_{\mathrm{radi}}^{(i-1)}\hat{P}_{\mathrm{ricc}}^{(i-1)}\big(V_{\mathrm{radi}}^{(i-1)}\big)^*C_1^\top C_1+\alpha_iE_1\Big)^{-1}B_{\perp,i-1}^{\mathrm{ricc}}$, we have $p_{v,2}=0$. Increasing $i$ recursively shows $p_{v,1}=0$ and $p_{v,2}=0$.
        
        Now consider the left-hand side of \eqref{radi_sylv2}:
        \begin{align}
         &A_1V_\mathrm{radi}^{(i)}-E_1V_\mathrm{radi}^{(i)} \big(\hat{E}_1^{(i)}\big)^{-1}\hat{A}_1^{(i)}+B_{\perp,i}^{\mathrm{ricc}}L_{v,\mathrm{ricc}}^{(i)}\nonumber\\
         &=A_1\begin{bmatrix}V_\mathrm{radi}^{(i-1)}&v_i^{\mathrm{ricc}}\end{bmatrix}-E_1\begin{bmatrix}V_\mathrm{radi}^{(i-1)}&v_i^{\mathrm{ricc}}\end{bmatrix}\begin{bmatrix}\hat{A}_1^{(i-1)}&\hat{P}_{\mathrm{ricc}}^{(i-1)}\big(\hat{C}_1^{(i-1)}\big)^* \hat{c}_1^{(i)}\\-\hat{b}_1^{(i)}L_{v,\mathrm{ricc}}^{(i-1)}&\hat{a}_1^{(i)}\end{bmatrix}+B_{\perp,i}^{\mathrm{ricc}}\begin{bmatrix}L_{v,\mathrm{ricc}}^{(i-1)}&l_{v,\mathrm{ricc}}^{(i)}\end{bmatrix}\nonumber\\
          &=\begin{bmatrix}p_{v,3}&p_{v,4}\end{bmatrix},\nonumber
        \end{align}
        where
        \begin{align}
        p_{v,3}&=A_1V_\mathrm{radi}^{(i-1)}-E_1V_\mathrm{radi}^{(i-1)}\hat{A}_1^{(i-1)}+E_1v_i^{\mathrm{ricc}}\hat{p}_{\mathrm{ricc}}(l_{v,\mathrm{ricc}}^{(i)})^\top L_{v,\mathrm{ricc}}^{(i-1)}+B_{\perp,i}^{\mathrm{ricc}}L_{v,\mathrm{ricc}}^{(i-1)}\nonumber\\
        &=A_1V_\mathrm{radi}^{(i-1)}-E_1V_\mathrm{radi}^{(i-1)}\hat{A}_1^{(i-1)}+B_{\perp,i-1}^{\mathrm{ricc}}L_{v,\mathrm{ricc}}^{(i-1)},\nonumber\\
        p_{v,4}&=A_1v_i^{\mathrm{ricc}}-E_1V_\mathrm{radi}^{(i-1)}\hat{P}_{\mathrm{ricc}}^{(i-1)}\big(V_\mathrm{radi}^{(i-1)}\big)^*C_1^\top C_1v_i^{\mathrm{ricc}}-E_1v_i^{\mathrm{ricc}}\hat{a}_1^{(i)}+B_{\perp,i}^{\mathrm{ricc}}l_{v,\mathrm{ricc}}^{(i)}\nonumber\\
        &=A_1v_i^{\mathrm{ricc}}-E_1V_\mathrm{radi}^{(i-1)}\hat{P}_{\mathrm{ricc}}^{(i-1)}\big(V_\mathrm{radi}^{(i-1)}\big)^*C_1^\top C_1v_i^{\mathrm{ricc}}-E_1v_i^{\mathrm{ricc}}\hat{a}_1^{(i)}+B_{\perp,i-1}^{\mathrm{ricc}}l_{v,\mathrm{ricc}}^{(i)}\nonumber\\
        &-E_1v_i^{\mathrm{ricc}}\hat{p}_{\mathrm{ricc}}^{(i)}(l_{v,\mathrm{ricc}}^{(i)})^\top l_{v,\mathrm{ricc}}^{(i)}\nonumber\\
        &=\Big(A_1-E_1V_\mathrm{radi}^{(i-1)}\hat{P}_{\mathrm{ricc}}^{(i-1)}\big(V_\mathrm{radi}^{(i-1)}\big)^*C_1^\top C_1\Big)v_i^{\mathrm{ricc}}-E_1v_i^{\mathrm{ricc}}s_{v,\mathrm{ricc}}^{(i)}+B_{\perp,i-1}^{\mathrm{ricc}}l_{v,\mathrm{ricc}}^{(i)}\nonumber\\
        &=0.\nonumber
        \end{align}
       Recursion over $i$ shows $p_{v,3}=0$ and $p_{v,4}=0$.
\end{enumerate}
\section*{Appendix D}
Since \( w_i^{\mathrm{sylv}} = W_{\mathrm{lyap}}^{(i)} t_{w,\mathrm{sylv}}^{(i)} \) and \( W_{\mathrm{fadi}}^{(i)} = W_{\mathrm{lyap}}^{(i)} T_{w,\mathrm{sylv}}^{(i)} \) are duals of \( v_i^{\mathrm{sylv}} = V_{\mathrm{lyap}}^{(i)} t_{v,\mathrm{sylv}}^{(i)} \) and \( V_{\mathrm{fadi}}^{(i)} = V_{\mathrm{lyap}}^{(i)} T_{v,\mathrm{sylv}}^{(i)} \), respectively, we restrict ourselves to proving  
\( v_i^{\mathrm{sylv}} = V_{\mathrm{lyap}}^{(i)} t_{v,\mathrm{sylv}}^{(i)} \) and \( V_{\mathrm{fadi}}^{(i)} = V_{\mathrm{lyap}}^{(i)} T_{v,\mathrm{sylv}}^{(i)} \).

Note that
\begin{align}
\begin{bmatrix}\hat{A}_{1,i-1}^{\mathrm{lyap}}+\overline{\alpha}_iI&0\\-2\mathrm{Re}(\alpha_i)L_{v,\mathrm{lyap}}^{(i-1)}&2\mathrm{Re}(\alpha_i)I\end{bmatrix}
\begin{bmatrix}t_{1,v,\mathrm{sylv}}^{(i)}\\t_{2,v,\mathrm{sylv}}^{(i)}\end{bmatrix}
&=\begin{bmatrix}\hat{B}_{1,i-1}^{\mathrm{lyap}}-T_{v,\mathrm{sylv}}^{(i-1)}D_{\mathrm{fadi}}^{(i-1)}(L_{w,\mathrm{sylv}}^{(i-1)})^\top\\2\mathrm{Re}(\alpha_i)I\end{bmatrix}.\nonumber
\end{align}
Thus, \( t_{2,v,\mathrm{sylv}}^{(i)} = I + L_{v,\mathrm{lyap}}^{(i-1)} t_{1,v,\mathrm{sylv}}^{(i)} \). To prove \( v_i^{\mathrm{sylv}} = V_{\mathrm{lyap}}^{(i)} t_{v,\mathrm{sylv}}^{(i)} \), it suffices to show that
\begin{align}
(A_1+\alpha_iE_1)v_i^{\mathrm{sylv}}=(A_1+\alpha_iE_1)V_{\mathrm{lyap}}^{(i)}t_{v,\mathrm{sylv}}^{(i)}=B_{\perp,i-1}^{\mathrm{sylv}}=B_1-E_1V_{\mathrm{fadi}}^{(i-1)}D_{\mathrm{fadi}}^{(i-1)}(L_{w,\mathrm{sylv}}^{(i-1)})^\top.\nonumber
\end{align}
Consider the following:
\begin{align}
(A_1&+\alpha_iE_1)V_{\mathrm{lyap}}^{(i)}t_{v,\mathrm{sylv}}^{(i)}\nonumber\\
&=(A_1+\alpha_iE_1)\begin{bmatrix}V_{\mathrm{lyap}}^{(i-1)}&v_i^{\mathrm{lyap}}\end{bmatrix}\begin{bmatrix}t_{1,v,\mathrm{sylv}}^{(i)}\\t_{2,v,\mathrm{sylv}}^{(i)}\end{bmatrix}\nonumber\\
&=(A_1+\alpha_iE_1)\big(V_{\mathrm{lyap}}^{(i-1)}t_{1,v,\mathrm{sylv}}^{(i)}+v_i^{\mathrm{lyap}}t_{2,v,\mathrm{sylv}}^{(i)}\big)\nonumber\\
&=(A_1+\alpha_iE_1)V_{\mathrm{lyap}}^{(i-1)}t_{1,v,\mathrm{sylv}}^{(i)}+(A_1+\alpha_iE_1)v_i^{\mathrm{lyap}}t_{2,v,\mathrm{sylv}}^{(i)}\nonumber\\
&=A_1V_{\mathrm{lyap}}^{(i-1)}t_{1,v,\mathrm{sylv}}^{(i)}+\alpha_iE_1V_{\mathrm{lyap}}^{(i-1)}t_{1,v,\mathrm{sylv}}^{(i)}+B_{\perp,i-1}^{\mathrm{lyap}}\big(I+L_{v,\mathrm{lyap}}^{(i-1)}t_{1,v,\mathrm{sylv}}^{(i)}\big)\nonumber\\
&=\big(A_1V_{\mathrm{lyap}}^{(i-1)}+B_{\perp,i-1}^{\mathrm{lyap}}L_{v,\mathrm{lyap}}^{(i-1)}\big)t_{1,v,\mathrm{sylv}}^{(i)}+\alpha_iE_1V_{\mathrm{lyap}}^{(i-1)}t_{1,v,\mathrm{sylv}}^{(i)}+B_{\perp,i-1}^{\mathrm{lyap}}\nonumber\\
&=E_1V_{\mathrm{lyap}}^{(i-1)}\hat{A}_{1,i-1}^{\mathrm{lyap}}t_{1,v,\mathrm{sylv}}^{(i)}+\alpha_iE_1V_{\mathrm{lyap}}^{(i-1)}t_{1,v,\mathrm{sylv}}^{(i)}+B_1-E_1V_{\mathrm{lyap}}^{(i-1)}\hat{B}_{1,i-1}^{\mathrm{lyap}}\nonumber\\
&=E_1V_{\mathrm{lyap}}^{(i-1)}\Big(\hat{A}_{1,i-1}^{\mathrm{lyap}}t_{1,v,\mathrm{sylv}}^{(i)}-s_v^{(i)}t_{1,v,\mathrm{sylv}}^{(i)}+\hat{B}_{1,i-1}^{\mathrm{lyap}}l_v^{(i)}\Big)+B_1\nonumber\\
&=B_1-E_1V_{\mathrm{lyap}}^{(i-1)}T_{v,\mathrm{sylv}}^{(i-1)}D_{\mathrm{fadi}}^{(i-1)}(L_{w,\mathrm{sylv}}^{(i-1)})^\top.\nonumber
\end{align}
For $i=1$, $V_{\mathrm{lyap}}^{(i)}=v_1^{\mathrm{lyap}}$, $T_{v,\mathrm{sylv}}^{(i-1)}=[$ $]$, $D_{\mathrm{fadi}}^{(i-1)}=[$ $]$, $L_{w,\mathrm{sylv}}^{(i-1)}=[$ $]$. Hence,
\begin{align}
(A_1+\alpha_1E_1)v_1^{\mathrm{lyap}}t_{v,\mathrm{sylv}}^{(1)}=(A_1+\alpha_1E_1)v_1^{\mathrm{lyap}}=(A_1+\alpha_1E_1)v_1^{\mathrm{sylv}}=B_1.\nonumber
\end{align} Clearly, \( t_{v,\mathrm{sylv}}^{(1)} = I \), which is consistent since \( v_1^{\mathrm{lyap}} = v_1^{\mathrm{sylv}} \).

For \( i = 2 \), \( V_{\mathrm{lyap}}^{(i)} = \begin{bmatrix} v_1^{\mathrm{lyap}} & v_2^{\mathrm{lyap}} \end{bmatrix} \), \( T_{v,\mathrm{sylv}}^{(i-1)} = I \), \( D_{\mathrm{fadi}}^{(i-1)} = -(\alpha_1 + \beta_1) I \), and \( L_{w,\mathrm{sylv}}^{(i-1)} = -I \). Thus,
\[
(A_1+\alpha_2E_1)V_{\mathrm{lyap}}^{(2)}t_{v,\mathrm{sylv}}^{(2)}=B_1-(\alpha_1+\beta_1)E_1v_1^{\mathrm{sylv}}=B_{\perp,i-1}^{\mathrm{sylv}}.
\] It follows immediately that \( V_{\mathrm{lyap}}^{(2)} t_{v,\mathrm{sylv}}^{(2)} = v_2^{\mathrm{sylv}} \). By induction, the identity \( v_i^{\mathrm{sylv}} = V_{\mathrm{lyap}}^{(i)} t_{v,\mathrm{sylv}}^{(i)} \) holds for \( i = 3, 4, \dots \).

Next, observe that
\begin{align}
V_{\mathrm{lyap}}^{(i)}T_{v,\mathrm{sylv}}^{(i)}&=\begin{bmatrix}V_{\mathrm{lyap}}^{(i-1)}&v_i^{\mathrm{lyap}}\end{bmatrix}\begin{bmatrix}T_{v,\mathrm{sylv}}^{(i-1)}&t_{1,v,\mathrm{sylv}}^{(i)}\\
0&t_{2,v,\mathrm{sylv}}^{(i)}\end{bmatrix}=\begin{bmatrix}V_{\mathrm{lyap}}^{(i-1)}T_{v,\mathrm{sylv}}^{(i-1)}&V_{\mathrm{lyap}}^{(i)}t_{v,\mathrm{sylv}}^{(i)}\end{bmatrix}.\nonumber
\end{align}
For $i=1$, $V_{\mathrm{lyap}}^{(i-1)}=[$ $]$ and $T_{v,\mathrm{sylv}}^{(i-1)}=[$ $]$, so 
\begin{align}
V_{\mathrm{lyap}}^{(1)}T_{v,\mathrm{sylv}}^{(1)}=V_{\mathrm{lyap}}^{(1)}t_{v,\mathrm{sylv}}^{(1)}=v_1^{\mathrm{sylv}}=V_{\mathrm{fadi}}^{(1)}.\nonumber
\end{align}
For \( i = 2 \), \( T_{v,\mathrm{sylv}}^{(1)} = t_{v,\mathrm{sylv}}^{(1)} \), and thus
\begin{align}
V_{\mathrm{lyap}}^{(2)}T_{v,\mathrm{sylv}}^{(2)}&=\begin{bmatrix}V_{\mathrm{lyap}}^{(1)}t_{v,\mathrm{sylv}}^{(1)}&V_{\mathrm{lyap}}^{(2)}t_{v,\mathrm{sylv}}^{(2)}\end{bmatrix}
=\begin{bmatrix}v_1^{\mathrm{sylv}}&v_2^{\mathrm{sylv}}\end{bmatrix}=V_{\mathrm{fadi}}^{(2)}.\nonumber
\end{align}
By induction, \( V_{\mathrm{fadi}}^{(i)} = V_{\mathrm{lyap}}^{(i)} T_{v,\mathrm{sylv}}^{(i)} \) holds for all \( i = 3, 4, \dots \).
\section*{Appendix E}
Recall that $t_{2,v,\mathrm{sylv}}^{(i)}=I+L_{v,\mathrm{lyap}}^{(i-1)}t_{1,v,\mathrm{sylv}}^{(i)}$. Thus, $t_{2,v,\mathrm{sylv}}^{(i)}=I-\sum_{k=1}^{i-1}t_{v,k}=I-M$. If the matrix \(M\) does not have an eigenvalue equal to \(1\), then \(\big(t_{2,v,\mathrm{sylv}}^{(i)}\big)^{-1}\) has no eigenvalue at \(0\), and hence \(t_{2,v,\mathrm{sylv}}^{(i)}\) is invertible.
\section*{Appendix F}
\begin{enumerate}
  \item Note that
\begin{align}
A_1V_{\mathrm{lyap}}^{(i)}T_{v,\mathrm{sylv}}^{(i)}-E_1V_{\mathrm{lyap}}^{(i)}T_{v,\mathrm{sylv}}^{(i)}S_{v,\mathrm{sylv}}^{(i)}+B_1L_{v,\mathrm{sylv}}^{(i)}=0.\label{e147}
\end{align}
Post-multiplying \eqref{e147} by $\big(T_{v,\mathrm{sylv}}^{(i)}\big)^{-1}$ yields:
\begin{align}
A_1V_{\mathrm{lyap}}^{(i)}-E_1V_{\mathrm{lyap}}^{(i)}T_{v,\mathrm{sylv}}^{(i)}S_{v,\mathrm{sylv}}^{(i)}\big(T_{v,\mathrm{sylv}}^{(i)}\big)^{-1}+B_1L_{v,\mathrm{sylv}}^{(i)}\big(T_{v,\mathrm{sylv}}^{(i)}\big)^{-1}=0.\nonumber
\end{align}
By uniqueness of the solution to \eqref{cfadi_v_sylv1}, we have $S_{v,\mathrm{lyap}}^{(i)}=T_{v,\mathrm{sylv}}^{(i)}S_{v,\mathrm{sylv}}^{(i)}\big(T_{v,\mathrm{sylv}}^{(i)}\big)^{-1}$ and $L_{v,\mathrm{lyap}}^{(i)}=L_{v,\mathrm{sylv}}^{(i)}\big(T_{v,\mathrm{sylv}}^{(i)}\big)^{-1}$.
\item Note that
\begin{align}
A_2^\top W_{\mathrm{lyap}}^{(i)}T_{w,\mathrm{sylv}}^{(i)}-E_2^\top W_{\mathrm{lyap}}^{(i)}T_{w,\mathrm{sylv}}^{(i)} S_{w,\mathrm{sylv}}^{(i)}+C_2^\top L_{w,\mathrm{sylv}}^{(i)}=0.\label{e148}
\end{align} Post-multiplying \eqref{e148} by $\big(T_{w,\mathrm{sylv}}^{(i)}\big)^{-1}$ yields:
\begin{align}
A_2^\top W_{\mathrm{lyap}}^{(i)}-E_2^\top W_{\mathrm{lyap}}^{(i)}T_{w,\mathrm{sylv}}^{(i)} S_{w,\mathrm{sylv}}^{(i)}\big(T_{w,\mathrm{sylv}}^{(i)}\big)^{-1}+C_2^\top L_{w,\mathrm{sylv}}^{(i)}\big(T_{w,\mathrm{sylv}}^{(i)}\big)^{-1}=0.\nonumber
\end{align}
By uniqueness of the solution to \eqref{cfadi_w_sylv1}, we have $S_{w,\mathrm{lyap}}^{(i)}=T_{w,\mathrm{sylv}}^{(i)}S_{w,\mathrm{sylv}}^{(i)}\big(T_{w,\mathrm{sylv}}^{(i)}\big)^{-1}$ and $L_{w,\mathrm{lyap}}^{(i)}=L_{w,\mathrm{sylv}}^{(i)}\big(T_{w,\mathrm{sylv}}^{(i)}\big)^{-1}$.
\item Now consider
\begin{align}
-\big(S_{w,\mathrm{sylv}}^{(i)}\big)^*\big(D_{\mathrm{fadi}}^{(i)}\big)^{-1}-\big(D_{\mathrm{fadi}}^{(i)}\big)^{-1}S_{v,\mathrm{sylv}}^{(i)}+(L_{w,\mathrm{sylv}}^{(i)})^\top L_{v,\mathrm{sylv}}^{(i)}=0.\label{e149}
\end{align}
Pre-multiplying \eqref{e149} by $\big(T_{w,\mathrm{sylv}}^{(i)}\big)^{-*}$ and post-multiplying by $\big(T_{v,\mathrm{sylv}}^{(i)}\big)^{-1}$ gives
\begin{align}
-\big(S_{w,\mathrm{lyap}}^{(i)}\big)^*\big(T_{w,\mathrm{sylv}}^{(i)}\big)^{-*}\big(D_{\mathrm{fadi}}^{(i)}\big)^{-1}\big(T_{v,\mathrm{sylv}}^{(i)}\big)^{-1}-\big(T_{w,\mathrm{sylv}}^{(i)}\big)^{-*}\big(D_{\mathrm{fadi}}^{(i)}\big)^{-1}\big(T_{v,\mathrm{sylv}}^{(i)}\big)^{-1}S_{v,\mathrm{lyap}}^{(i)}\nonumber\\
+(L_{w,\mathrm{lyap}}^{(i)})^\top L_{v,\mathrm{lyap}}^{(i)}=0\nonumber\\
-\big(S_{w,\mathrm{lyap}}^{(i)}\big)^*\tilde{X}_{\mathrm{sylv}}^{(i)}-\tilde{X}_{\mathrm{sylv}}^{(i)}S_{v,\mathrm{lyap}}^{(i)}+(L_{w,\mathrm{lyap}}^{(i)})^\top L_{v,\mathrm{lyap}}^{(i)}=0.\nonumber
\end{align}
The proofs of \eqref{prop_x_sylv_2}, \eqref{prop_x_sylv_3}, and \eqref{prop_x_sylv_4} in Proposition~\ref{prop_x_sylv} follow similarly to \eqref{1_of_prop1}, \eqref{2_of_prop1}, and \eqref{3_of_prop1} in Proposition~\ref{prop1}, respectively, and are therefore omitted for brevity.
\end{enumerate}
\section*{Appendix G}
First, observe that
\begin{align}
&\begin{bmatrix}\hat{A}_{1,i-1}^{\mathrm{lyap}}+\alpha_iI-T_{v,\mathrm{ricc}}^{(i-1)}\hat{P}_{\mathrm{ricc}}^{(i-1)}(V_{\mathrm{radi}}^{(i-1)})^*C_1^\top C_1V_{\mathrm{lyap}}^{(i-1)}&-T_{v,\mathrm{ricc}}^{(i-1)}\hat{P}_{\mathrm{ricc}}^{(i-1)}(V_{\mathrm{radi}}^{(i-1)})^*C_1^\top C_1v_{i}^{\mathrm{lyap}}\\-2\mathrm{Re}(\alpha_i)L_{v,\mathrm{lyap}}^{(i-1)}&2\mathrm{Re}(\alpha_i)I\end{bmatrix}\begin{bmatrix}t_{1,v,\mathrm{ricc}}^{(i)}\\t_{2,v,\mathrm{ricc}}^{(i)}\end{bmatrix}\nonumber\\
&=\begin{bmatrix}\hat{B}_{1,i-1}^{\mathrm{lyap}}-T_{v,\mathrm{ricc}}^{(i-1)}\hat{P}_{\mathrm{ricc}}^{(i-1)}(L_{v,\mathrm{ricc}}^{(i-1)})^\top\\2\mathrm{Re}(\alpha_i)I\end{bmatrix},\nonumber
\end{align}
which implies $t_{2,v,\mathrm{ricc}}^{(i)}=I+L_{v,\mathrm{lyap}}^{(i-1)}t_{1,v,\mathrm{ricc}}^{(i)}$.

To show that \(v_i^{\mathrm{ricc}} = V_{\mathrm{lyap}}^{(i)} t_{v,\mathrm{ricc}}^{(i)}\), it suffices to verify that  
\begin{align}
\Big(A_1&+\alpha_iE_1-E_1V_{\mathrm{radi}}^{(i-1)}\hat{P}_{\mathrm{ricc}}^{(i-1)}(V_{\mathrm{radi}}^{(i-1)})^*C_1^\top C_1\Big)v_i^{\mathrm{ricc}}\nonumber\\
&=\Big(A_1+\alpha_iE_1-E_1V_{\mathrm{radi}}^{(i-1)}\hat{P}_{\mathrm{ricc}}^{(i-1)}(V_{\mathrm{radi}}^{(i-1)})^*C_1^\top C_1\Big)V_{\mathrm{lyap}}^{(i)}t_{v,\mathrm{ricc}}^{(i)}\nonumber\\
&=B_1-E_1V_{\mathrm{radi}}^{(i-1)}\hat{P}_{\mathrm{ricc}}^{(i-1)}(L_{v,\mathrm{ricc}}^{(i-1)})^\top\nonumber\\
&=B_{\perp,i-1}^{\mathrm{ricc}}.\nonumber
\end{align}
Consider the expression
\begin{align}
\Big(A_1&+\alpha_iE_1-E_1V_{\mathrm{lyap}}^{(i-1)}T_{v,\mathrm{ricc}}^{(i-1)}\hat{P}_{\mathrm{ricc}}^{(i-1)}(V_{\mathrm{radi}}^{(i-1)})^*C_1^\top C_1\Big)V_{\mathrm{lyap}}^{(i)}t_{v,\mathrm{ricc}}^{(i)}\nonumber\\
&=\Big(A_1+\alpha_iE_1-E_1V_{\mathrm{lyap}}^{(i-1)}T_{v,\mathrm{ricc}}^{(i-1)}\hat{P}_{\mathrm{ricc}}^{(i-1)}(V_{\mathrm{radi}}^{(i-1)})^*C_1^\top C_1\Big)\begin{bmatrix}V_{\mathrm{lyap}}^{(i-1)}&v_i^{\mathrm{lyap}}\end{bmatrix}\begin{bmatrix}t_{1,v,\mathrm{ricc}}^{(i)}\\t_{2,v,\mathrm{ricc}}^{(i)}\end{bmatrix}\nonumber\\
&=\Big(A_1+\alpha_iE_1-E_1V_{\mathrm{lyap}}^{(i-1)}T_{v,\mathrm{ricc}}^{(i-1)}\hat{P}_{\mathrm{ricc}}^{(i-1)}(V_{\mathrm{radi}}^{(i-1)})^*C_1^\top C_1\Big)\Big(V_{\mathrm{lyap}}^{(i-1)}t_{1,v,\mathrm{ricc}}^{(i)}+v_i^{\mathrm{lyap}}t_{2,v,\mathrm{ricc}}^{(i)}\Big)\nonumber\\
&=(A_1+\alpha_iE_1)v_i^{\mathrm{lyap}}t_{2,v,\mathrm{ricc}}^{(i)}-E_1V_{\mathrm{lyap}}^{(i-1)}T_{v,\mathrm{ricc}}^{(i-1)}\hat{P}_{\mathrm{ricc}}^{(i-1)}(V_{\mathrm{radi}}^{(i-1)})^*C_1^\top C_1v_i^{\mathrm{lyap}}t_{2,v,\mathrm{ricc}}^{(i)}\nonumber\\
&+(A_1+\alpha_iE_1)V_{\mathrm{lyap}}^{(i-1)}t_{1,v,\mathrm{ricc}}^{(i)}-E_1V_{\mathrm{lyap}}^{(i-1)}T_{v,\mathrm{ricc}}^{(i-1)}\hat{P}_{\mathrm{ricc}}^{(i-1)}(V_{\mathrm{radi}}^{(i-1)})^*C_1^\top C_1V_{\mathrm{lyap}}^{(i-1)}t_{1,v,\mathrm{ricc}}^{(i)}\nonumber\\
&=B_{\perp,i-1}^{\mathrm{lyap}}t_{2,v,\mathrm{ricc}}^{(i)}-E_1V_{\mathrm{lyap}}^{(i-1)}T_{v,\mathrm{ricc}}^{(i-1)}\hat{P}_{\mathrm{ricc}}^{(i-1)}(V_{\mathrm{radi}}^{(i-1)})^*C_1^\top C_1v_i^{\mathrm{lyap}}t_{2,v,\mathrm{ricc}}^{(i)}\nonumber\\
&+E_1V_{\mathrm{lyap}}^{(i-1)}\hat{A}_{1,i-1}^{\mathrm{lyap}}t_{1,v,\mathrm{ricc}}^{(i)}-B_{\perp,i-1}^{\mathrm{lyap}}L_{v,i-1}^{\mathrm{lyap}}t_{1,v,\mathrm{ricc}}^{(i)}+\alpha_iE_1V_{\mathrm{lyap}}^{(i-1)}t_{1,v,\mathrm{ricc}}^{(i)}\nonumber\\
&-E_1V_{\mathrm{lyap}}^{(i-1)}T_{v,\mathrm{ricc}}^{(i-1)}\hat{P}_{\mathrm{ricc}}^{(i-1)}(V_{\mathrm{radi}}^{(i-1)})^*C_1^\top C_1V_{\mathrm{lyap}}^{(i-1)}t_{1,v,\mathrm{ricc}}^{(i)}\nonumber
\end{align}
\begin{align}
&=B_{\perp,i-1}^{\mathrm{lyap}}\big(I+L_{v,\mathrm{lyap}}^{(i-1)}t_{1,v,\mathrm{ricc}}^{(i)}\big)+E_1V_{\mathrm{lyap}}^{(i-1)}\Big(-T_{v,\mathrm{ricc}}^{(i-1)}\hat{P}_{\mathrm{ricc}}^{(i-1)}(V_{\mathrm{radi}}^{(i-1)})^*C_1^\top C_1v_i^{\mathrm{lyap}}t_{2,v,\mathrm{ricc}}^{(i)}\nonumber\\
&+\hat{A}_{1,i-1}^{\mathrm{lyap}}t_{1,v,\mathrm{ricc}}^{(i)}-T_{v,\mathrm{ricc}}^{(i-1)}\hat{P}_{\mathrm{ricc}}^{(i-1)}(V_{\mathrm{radi}}^{(i-1)})^*C_1^\top C_1V_{\mathrm{lyap}}^{(i-1)}t_{1,v,\mathrm{ricc}}^{(i)}-t_{1,v,\mathrm{ricc}}^{(i)}s_{v,\mathrm{ricc}}^{(i)}\Big)-B_{\perp,i-1}^{\mathrm{lyap}}L_{v,i-1}^{\mathrm{lyap}}t_{1,v,\mathrm{ricc}}^{(i)}\nonumber\\
&=B_1-E_1V_{\mathrm{lyap}}^{(i-1)}\hat{B}_{1,i-1}^{\mathrm{lyap}}+E_1V_{\mathrm{lyap}}^{(i-1)}\hat{B}_{1,i-1}^{\mathrm{lyap}}-E_1V_{\mathrm{lyap}}^{(i-1)}T_{v,\mathrm{ricc}}^{(i-1)}\hat{P}_{\mathrm{ricc}}^{(i-1)}(L_{v,\mathrm{ricc}}^{(i-1)})^\top\nonumber\\
&=B_1-E_1V_{\mathrm{lyap}}^{(i-1)}T_{v,\mathrm{ricc}}^{(i-1)}\hat{P}_{\mathrm{ricc}}^{(i-1)}(L_{v,\mathrm{ricc}}^{(i-1)})^\top.\nonumber
\end{align}
For $i=1$, we have $V_{\mathrm{lyap}}^{(i-1)}=[$ $]$, $T_{v,\mathrm{ricc}}^{(i-1)}=[$ $]$, $\hat{P}_{\mathrm{ricc}}^{(i-1)}=[$ $]$, so
\begin{align}
(A_1+\alpha_1E_1)v_1^{\mathrm{lyap}}t_{v,\mathrm{ricc}}^{(1)}=(A_1+\alpha_1E_1)v_1^{\mathrm{lyap}}=(A_1+\alpha_1E_1)v_1^{\mathrm{ricc}}=B_1,\nonumber
\end{align}
which gives \(t_{v,\mathrm{ricc}}^{(1)} = I\), consistent with \(v_1^{\mathrm{lyap}} = v_1^{\mathrm{ricc}}\).

For $i=2$, note that $V_{\mathrm{lyap}}^{(i-1)}=v_1^{\mathrm{ricc}}$, $T_{v,\mathrm{ricc}}^{(i-1)}=I$, $\hat{P}_{\mathrm{ricc}}^{(i-1)}=-2\mathrm{Re}(\alpha_1)\big[I+(v_1^{\mathrm{ricc}})^*C_1^\top C_1v_1^{\mathrm{ricc}}\big]^{-1}$, so
\begin{align}
\Big(A_1+\alpha_iE_1-E_1v_1^{\mathrm{ricc}}\hat{P}_{\mathrm{ricc}}^{(1)}(v_1^{\mathrm{ricc}})^*C_1^\top C_1\Big)V_{\mathrm{lyap}}^{(i)}t_{v,\mathrm{ricc}}^{(i)}&=B_1-2\mathrm{Re}(\alpha_1)E_1v_1^{\mathrm{ricc}}\big[I+(v_1^{\mathrm{ricc}})^*C_1^\top C_1v_1^{\mathrm{ricc}}\big]^{-1}=B_{\perp,1}^{\mathrm{ricc}},\nonumber
\end{align}
and thus \(V_{\mathrm{lyap}}^{(2)} t_{v,\mathrm{ricc}}^{(2)} = v_2^{\mathrm{ricc}}\). By induction, the identity \(v_i^{\mathrm{ricc}} = V_{\mathrm{lyap}}^{(i)} t_{v,\mathrm{ricc}}^{(i)}\) holds for all \(i = 3, 4, \dots\).

Finally, note that 
\begin{align}
V_{\mathrm{lyap}}^{(i)}T_{v,\mathrm{ricc}}^{(i)}&=\begin{bmatrix}V_{\mathrm{lyap}}^{(i-1)}&v_i^{\mathrm{lyap}}\end{bmatrix}\begin{bmatrix}T_{v,\mathrm{ricc}}^{(i-1)}&t_{1,v,\mathrm{ricc}}^{(i)}\\
0&t_{2,v,\mathrm{ricc}}^{(i)}\end{bmatrix}=\begin{bmatrix}V_{\mathrm{lyap}}^{(i-1)}T_{v,\mathrm{ricc}}^{(i-1)}&V_{\mathrm{lyap}}^{(i)}t_{v,\mathrm{ricc}}^{(i)}\end{bmatrix}.\nonumber
\end{align}
For \(i = 1\), this reduces to \(V_{\mathrm{lyap}}^{(1)} T_{v,\mathrm{ricc}}^{(1)} = v_1^{\mathrm{ricc}}\).  
For \(i = 2\), since \(T_{v,\mathrm{ricc}}^{(1)} = t_{v,\mathrm{ricc}}^{(1)}\), we obtain  
\begin{align}
V_{\mathrm{lyap}}^{(2)}T_{v,\mathrm{ricc}}^{(2)}&=\begin{bmatrix}V_{\mathrm{lyap}}^{(1)}t_{v,\mathrm{ricc}}^{(1)}&V_{\mathrm{lyap}}^{(2)}t_{v,\mathrm{ricc}}^{(2)}\end{bmatrix}
=\begin{bmatrix}v_1^{\mathrm{ricc}}&v_2^{\mathrm{ricc}}\end{bmatrix}.\nonumber
\end{align}
By induction, it follows that \(V_{\mathrm{radi}}^{(i)} = V_{\mathrm{lyap}}^{(i)} T_{v,\mathrm{ricc}}^{(i)}\) for all \(i \geq 1\).
\section*{Appendix H}
Note that
\begin{align}
A_1V_{\mathrm{lyap}}^{(i)}T_{v,\mathrm{ricc}}^{(i)}-E_1V_{\mathrm{lyap}}^{(i)}T_{v,\mathrm{ricc}}^{(i)}S_{v,\mathrm{ricc}}^{(i)}+B_1L_{v,\mathrm{ricc}}^{(i)}=0.\label{e158}
\end{align}
Post-multiplying \eqref{e158} by $\big(T_{v,\mathrm{ricc}}^{(i)}\big)^{-1}$ gives:
\begin{align}
A_1V_{\mathrm{lyap}}^{(i)}-E_1V_{\mathrm{lyap}}^{(i)}T_{v,\mathrm{ricc}}^{(i)}S_{v,\mathrm{ricc}}^{(i)}\big(T_{v,\mathrm{ricc}}^{(i)}\big)^{-1}+B_1L_{v,\mathrm{ricc}}^{(i)}\big(T_{v,\mathrm{ricc}}^{(i)}\big)^{-1}=0.\nonumber
\end{align}
By the uniqueness of the solution to \eqref{cfadi_v_sylv1}, it follows that $S_{v,\mathrm{lyap}}^{(i)}=T_{v,\mathrm{ricc}}^{(i)}S_{v,\mathrm{ricc}}^{(i)}\big(T_{v,\mathrm{ricc}}^{(i)}\big)^{-1}$ and $L_{v,\mathrm{lyap}}^{(i)}=L_{v,\mathrm{ricc}}^{(i)}\big(T_{v,\mathrm{ricc}}^{(i)}\big)^{-1}$.

Now, pre-multiplying \eqref{lyap_p_ricc_recc} by $\big(T_{v,\mathrm{ricc}}^{(i)}\big)^{-*}$ and post-multiplying by $\big(T_{v,\mathrm{ricc}}^{(i)}\big)^{-1}$ yields:
\begin{align}
-\big(T_{v,\mathrm{ricc}}^{(i)}\big)^{-*}\big(S_{v,\mathrm{ricc}}^{(i)}\big)^*\big(\hat{P}_{\mathrm{ricc}}^{(i)}\big)^{-1}\big(T_{v,\mathrm{ricc}}^{(i)}\big)^{-1}-\big(T_{v,\mathrm{ricc}}^{(i)}\big)^{-*}\big(\hat{P}_{\mathrm{ricc}}^{(i)}\big)^{-1}S_{v,\mathrm{ricc}}^{(i)}\big(T_{v,\mathrm{ricc}}^{(i)}\big)^{-1}\nonumber\\
+\big(L_{v,\mathrm{lyap}}^{(i)}\big)^\top L_{v,\mathrm{lyap}}^{(i)}+\big(V_{\mathrm{lyap}}^{(i)}\big)^*C_1^\top C_1 V_{\mathrm{lyap}}^{(i)}=0\nonumber\\
-\big(S_{v,\mathrm{lyap}}^{(i)}\big)^*\big(T_{v,\mathrm{ricc}}^{(i)}\big)^{-*}\big(\hat{P}_{\mathrm{ricc}}^{(i)}\big)^{-1}\big(T_{v,\mathrm{ricc}}^{(i)}\big)^{-1}-\big(T_{v,\mathrm{ricc}}^{(i)}\big)^{-*}\big(\hat{P}_{\mathrm{ricc}}^{(i)}\big)^{-1}\big(T_{v,\mathrm{ricc}}^{(i)}\big)^{-1}S_{v,\mathrm{lyap}}^{(i)}\nonumber\\
+\big(L_{v,\mathrm{lyap}}^{(i)}\big)^\top L_{v,\mathrm{lyap}}^{(i)}+\big(V_{\mathrm{lyap}}^{(i)}\big)^*C_1^\top C_1 V_{\mathrm{lyap}}^{(i)}=0\nonumber\\
-\big(S_{v,\mathrm{lyap}}^{(i)}\big)^*\big(\tilde{P}_{\mathrm{ricc}}^{(i)}\big)^{-1}-\big(\tilde{P}_{\mathrm{ricc}}^{(i)}\big)^{-1}S_{v,\mathrm{lyap}}^{(i)}+\big(L_{v,\mathrm{lyap}}^{(i)}\big)^\top L_{v,\mathrm{lyap}}^{(i)}+\big(V_{\mathrm{lyap}}^{(i)}\big)^*C_1^\top C_1 V_{\mathrm{lyap}}^{(i)}=0.\nonumber
\end{align}
The proofs of statements \eqref{prop_pt_ricc_3} and \eqref{prop_pt_ricc_4} in Theorem \ref{prop_pt_ricc} follow similarly to those of \eqref{th_radi_2} and \eqref{th_radi_3} in Theorem \ref{th_radi}, and are therefore omitted for brevity.
\section*{Appendix I}
\begin{verbatim}
function [E,A,B,C] = tripple_peak(n,w1,w2,w3)

a1=[-1 w1; -w1 -1]; a2=[-1 w2; -w2 -1]; a3=[-1 w3; -w3 -1];
aa=blkdiag(a1,a2,a3); bb=10*ones(6,1); cc=10*ones(1,6);
pp=lyap(aa,bb*bb'); qq=lyap(aa',cc'*cc);
e=qq*pp; a=qq*aa*pp; b=qq*bb; c=cc*pp;
e4=speye(n-6); a4=-spdiags((1:n-6)',0,n-6,n-6); b4=ones(n-6,1); c4=ones(1,n-6);
E=blkdiag(e,e4); A=blkdiag(a,a4); B=[b;b4]; C=[c c4];

end
\end{verbatim}

\end{document}